\newcommand{\mR}{\ensuremath{{\mathbb R}}}
\newcommand{\mRquer}{\ensuremath{\overline{\mR}}}
\newcommand{\mZ}{\ensuremath{{\mathbb Z}}}
\newcommand{\mN}{\ensuremath{{\mathbb N}}}
\newcommand{\mE}{\ensuremath{{\mathbb E}}}
\newcommand{\mP}{\ensuremath{{\mathbb P}}}
\newcommand{\eps}{\ensuremath{\varepsilon}}
\newcommand{\ind}{\ensuremath{\mathbbm{1}}}
\newcommand{\floor}[1]{\lfloor#1\rfloor}
\newcommand{\dlim}{\ensuremath{\Rightarrow}}
\newcommand{\plim}{\ensuremath{\overset{p}{\longrightarrow}}}
\newcommand{\argmin}{\operatorname{argmin}}
\newcommand{\sign}{\text{\rm sign}}
\newcommand{\diag}{\text{\rm diag}}
\newcommand{\AL}{{\scriptstyle\text{\rm\tiny AL}}}
\newcommand{\betaAL}{\ensuremath{\hat\beta_\AL}}
\newcommand{\betaALj}{\ensuremath{\hat\beta_{\AL,j}}}
\newcommand{\betaALi}{\ensuremath{\hat\beta_{\AL,i}}}
\newcommand{\cA}{\mathcal{A}}
\newcommand{\cI}{\mathcal{I}}
\newcommand{\cJ}{\mathcal{J}}
\newcommand{\betaALA}{\ensuremath{\hat\beta_{\AL,\cA}}}
\newcommand{\betaALAc}{\ensuremath{\hat\beta_{\AL,\cA^c}}}
\newcommand{\betaALAz}{\ensuremath{\hat\beta_{\AL,\cA_0}}}
\newcommand{\betaALAcz}{\ensuremath{\hat\beta_{\AL,\cA_0^c}}}
\newcommand{\Mhateps}{\widehat{\mathcal{M}}_T(\eps)}
\newcommand{\Mhatzero}{\widehat{\mathcal{M}}_T(0)}
\newcommand{\Mc}{\mathcal{M}^c}
\newtheorem{remark}{Remark}
\newtheorem{assumption}{Assumption}
\newtheorem{corollary}{Corollary}
\newtheorem{proposition}{Proposition}
\newtheorem{lemma}{Lemma}
\newtheorem{theorem}{Theorem}
\setlist[enumerate]{label=(\alph*),ref=(\alph*)}
\author{Karsten Reichold\footnote{Correspondence to: Karsten Reichold,
Institute of Statistics and Mathematical Methods in Economics, TU
Wien, Wiedner Hauptstr.~8--10, A--1040 Vienna, Austria.\newline E-mail
addresses:
\href{mailto:karsten.reichold@tuwien.ac.at}{karsten.reichold@tuwien.ac.at}, 
\href{mailto:ulrike.schneider@tuwien.ac.at}{ulrike.schneider@tuwien.ac.at}}\ } 
\author{Ulrike Schneider} 
\affil{\small{Institute of Statistics and Mathematical Methods in Economics, TU Wien, 
Vienna, Austria}}
\title{Beyond the Oracle Property: Adaptive LASSO in
Cointegrating Regressions with Local-to-Unity Regressors}
\date{\vspace{-0.2cm} \today \vspace{0.2cm}}
\begin{document}
		
\maketitle

\vspace{-1cm}


\begin{abstract} 
\noindent This paper derives new asymptotic results for the adaptive
LASSO estimator in cointegrating regressions, allowing for uncertainty
about whether the regressors are exact unit root processes. We study
model selection probabilities, estimator consistency, and limiting
distributions under standard and moving-parameter asymptotics. We
further derive uniform convergence rates and the fastest local-to-zero
rates detectable by the estimator under conservative and consistent
tuning. For consistent tuning, we construct confidence regions that
are easy to implement, uniformly valid over the parameter space, and
achieve sure asymptotic coverage without requiring knowledge or
estimation of local-to-unity or long-run covariance parameters.
Simulation results reveal that the finite-sample distribution of the
adaptive LASSO estimator can deviate substantially from the oracle
property, whereas moving-parameter asymptotics provide much more
accurate approximations. Consequently, in addition to being infeasible
in applications due to their dependence on non-estimable nuisance
parameters, oracle-based confidence regions are often too small to
achieve adequate coverage in empirically relevant scenarios with small
but non-zero coefficients. In contrast, the proposed confidence
regions are always feasible and deliver reliable coverage across the
parameter space. An empirical application to predicting the U.S.
unemployment rate illustrates their practical usefulness for
quantifying uncertainty around adaptive LASSO estimates.

\smallskip

\noindent\emph{\textbf{Keywords:} Adaptive LASSO, Confidence regions,
Local-to-unity regressors, Moving-parameter asymptotics, Shrinkage
estimation, Variable selection}
	
\smallskip

\noindent\emph{\textbf{JEL classification:} C22, C51, C52, C61}

\end{abstract}

\section{Introduction} \label{sec:intro}

In recent years, the availability of large datasets comprising
numerous economic and financial variables has become the rule rather
than the exception. Consequently, practitioners using traditional
methods are frequently confronted with the challenge of selecting a
small number of relevant variables from an extensive pool of potential
covariates. In this context, statistical methods that simultaneously
perform estimation and variable selection, such as variants of the
least absolute shrinkage and selection operator (LASSO) introduced in
\citet{Tibshirani96}, are becoming increasingly popular in
econometrics.

While contributions such as \cite{WangEtAl07b}, \cite{RenZhang10},
\citet{MedeirosMendes16}, \citet{AdamekEtAl23}, and \citet{ChenEtAl25}
examine the use of LASSO-type estimators in models with (locally)
stationary time series, a growing body of recent research considers
models with highly persistent and endogenous regressors exhibiting
unit root or local-to-unit root behavior. For example,
\cite{LiaoPhillips15} propose adaptive shrinkage methods to estimate
vector error correction models. \citet{KooEtAl20} and \citet{MeiShi24}
consider so-called predictive regressions with high-dimensional
stationary and unit root regressors and derive certain asymptotic
properties of LASSO-type procedures in this context.
\citet{SmeekesWijler21} consider asymptotic properties of a LASSO-type
estimator in a high-dimensional error correction model.
\citet{Schweikert22} proposes an adaptive group LASSO method to
estimate structural breaks in cointegrating regressions and
\citet{TuXie23} follow a similar agenda in predictive regressions with
a fixed number of highly persistent regressors. \citet{LeeEtAl22}
derive asymptotic properties of LASSO-type estimators in regressions
with a fixed number of stationary and (potentially cointegrated)
local-to-unity regressors. \citet{GonzaloPitarakis25} propose a test
for cointegration in a regression model with a fixed number of
regressors based on the residuals of the adaptive LASSO estimator.

Typically, these papers consider regression models where the
regressors can be split into a set of relevant regressors, i.e., those
with a non-zero regression coefficient, and a set of irrelevant
regressors, i.e., those with a regression coefficient being exactly
equal to zero. Then these articles, among other things, often derive
model selection probabilities and sometimes also the limiting
distribution of the LASSO-type estimator for the set of non-zero
coefficients. If the procedure identifies zero coefficients with
probability approaching one and if the limiting distribution for the
non-zero coefficients coincides with that of the ordinary least
squares (OLS) estimator applied to the true model, the estimator is
said to possess the ``oracle property'' \citep{FanLi01}. While the
oracle property certainly appears convenient, it has to be interpreted
with extreme caution. The oracle property primarily characterizes the
asymptotic behavior of the penalized estimation method under the
assumption that coefficients are either exactly zero or sufficiently
large (in absolute value, relative to the sample size). It thus offers
very limited guidance in empirically relevant situations where some
coefficients are small (in absolute value, relative to sample size),
but not exactly zero. To study the asymptotic properties when
coefficients are allowed to be small but unequal to zero, one has to
let the true coefficients depend on sample size.

The contributions of \citet{LeeEtAl22} and \citet{TuXie23} take a step
forward by providing such moving-parameter asymptotic properties of
the LASSO-type procedures under consideration. Their results, however,
are restricted to specific sequences, as they focus on particular
rates at which the true coefficients go to zero, and also couple these
rates to the choice of the tuning parameter. We discuss the
implications of these restrictions at several points in this paper.

As a first illustration that these restrictions are not innocuous, we
refer to \citet{Kock16}, who analyzes the asymptotic properties of the
adaptive LASSO estimator in stationary and non-stationary
autoregressions. For example, in the AR(1) case
$$
\Delta y_t = \rho_T y_{t-1} + \eps_t,
$$
with i.i.d.~errors $\eps_t$, the estimator's behavior critically
depends on both the value of $\rho_T$ as well as the choice of the
tuning parameter. In particular, when $\rho_T$ approaches zero at rate
$1/T$, the procedure fails to detect the coefficient as non-zero if
the tuning parameter diverges, whereas it succeeds with positive
probability if the tuning parameter remains bounded. Moreover, if the
tuning parameter converges to zero as well, the procedure detects the
coefficient as non-zero with probability approaching one. What remains
unresolved, however, is the cut-off rate for the local-to-zero
coefficient that can still be detected by the estimator when the
tuning parameter diverges.

These so-called local-to-zero rates are motivated not only by
theoretical considerations but also by their relevance in empirical
applications. For example, they provide a natural framework for
modeling weak signal-to-noise ratios, which play an important role in,
e.g., the analysis of stock return predictability \citep[see,
e.g.,][]{CampbellYogo06,Campbell08,Phillips15,DemetrescuEtAl22}. In
this context, a detailed analysis of the properties of LASSO-type
estimators under general sequences at which the true coefficients are
allowed to go to zero, reveals the fastest local-to-zero rates that
can still be detected by the estimator.

In this paper, we provide a comprehensive analysis of the asymptotic
properties of the adaptive LASSO estimator \citep{Zou06} applied to
cointegrating regressions with potentially local-to-unity regressors.
Allowing for deviations from exact unit roots is important, as
macroeconomic variables often display high persistence without being
unit-root processes, see, e.g., \citet{Jensen09} for evidence on
inflation, and \citet{HwangValdes24} for further discussion. In
particular, we derive model selection probabilities, estimator
consistency, and limiting distributions, while allowing the true
coefficients to freely move through the parameter space along
arbitrary sequences. In addition, we derive uniform convergence rates
and the local-to-zero rates of the true coefficients that can still be
detected by the estimator. These findings will shed light on, e.g.,
the signal-to-noise ratios practitioners can accept when employing the
adaptive LASSO estimator in empirical applications. We complete our
theoretical analysis with providing uniformly valid confidence regions
in the typical regime when the tuning parameter diverges.

Before we discuss our results in more detail, note that in the context
of classical linear regression models with non-stochastic regressors,
\citet{PoetscherSchneider09} provide a comprehensive analysis of the
adaptive LASSO estimator, examining both its asymptotic behavior and
finite-sample properties. In particular, they derive uniform
convergence rates and highlight how the choice of tuning parameters
influences the performance of the estimation procedure. In contrast to
the analysis of \cite{PoetscherSchneider09}, we have to overcome
several difficulties to derive the results of this paper. First, we
are dealing with different convergence rates of the estimators and
second, we have to account for the stochastic and non-stationary
nature of the regressors, which leads to the occurrence of stochastic
integrals in the limit. Moreover, the second-order bias terms in the
limiting distribution of the OLS estimator also affect the limiting
distribution of the adaptive LASSO estimator. Finally, we also provide
results for the multivariate case which is not addressed in the
aforementioned article.

Based on the asymptotic study of model selection probabilities, we
distinguish between two regimes determined by the large-sample
behavior of the tuning parameter: consistent model selection (or
``consistent tuning''), where zero coefficients are found with
asymptotic probability equal to one, and conservative model selection
(``conservative tuning''), where zero coefficients are detected with
asymptotic probability less than one. The asymptotic properties of the
adaptive LASSO estimator differ substantially between these two cases,
with the main massages as follows: In the conservatively tuned case,
the estimator is uniformly $T$-consistent for parameter estimation and
the cut-off rate for local-to-zero coefficients that can be detected
by the procedure is $1/T$. In the consistently tuned case, the uniform
convergence rate depends on the tuning parameter and is slower than
$1/T$. Deviations of the true parameter from zero of rate $1/T$ cannot
be discovered by the estimator. The fastest local-to-zero rate that is
still detectable with positive probability again depends on the tuning
parameter and is slower than $1/T$. Moreover, in the consistently
tuned case, the detailed theoretical analysis of the adaptive LASSO
estimator allows us to construct confidence regions that have coverage
probability approaching one uniformly over the parameter space,
without requiring any knowledge or estimation of local-to-unity or
long-run covariance parameters. Although inspired by the
non-stochastic regressor case considered in \cite{AmannSchneider23},
extending the construction of such regions to the stochastic
regressors present in the unit-root or local-to-unity setting
substantially changes the nature of the problem.

The theoretical analysis is complemented by an extensive simulation
study. The results show that the finite-sample distribution of the
adaptive LASSO estimator often deviates substantially from what is
suggested by the oracle property, whereas the limiting distributions
derived under moving-parameter asymptotics capture the finite-sample
properties of the procedure more closely. Moreover, the poor
approximation quality of the oracle property to the finite-sample
distribution of the adaptive LASSO estimator is also reflected in the
performance of the confidence regions based on the oracle property. In
particular, we find that the oracle-based confidence regions are often
too small to achieve adequate coverage in empirically relevant
scenarios with small but non-zero coefficients, whereas the confidence
regions proposed in this paper achieve adequate coverage across the
entire parameter space.

Taken together, the theoretical and simulation results indicate that
the oracle property provides an incomplete characterization of both
the asymptotic and finite-sample properties of the adaptive LASSO
estimator. A full understanding instead requires a moving-parameter
framework of the type considered in this paper. Moving beyond the
oracle property in this way enables the construction of uniformly
valid confidence regions under consistent tuning.

Finally, an empirical application to predicting the U.S. unemployment
rate illustrates the usefulness of the proposed confidence regions for
quantifying uncertainty around adaptive LASSO estimates.

The paper is organized as follows. Section~\ref{sec:setting}
introduces the model and states the assumptions.
Section~\ref{sec:asymptotics} contains our theoretical contributions:
Section~\ref{subsec:fixed} derives the large-sample properties of the
adaptive LASSO estimator in a fixed-parameter asymptotic framework in
the univariate regressor case, Section~\ref{subsec:unif} considers a
moving-parameter asymptotic framework in the univariate regressor
case, and Section~\ref{subsec:multi} extends the results to the
multivariate case. Section~\ref{subsec:ci} then derives the uniform
confidence region based on the adaptive LASSO estimator.
Section~\ref{sec:simulation} presents the simulation results and
Section~\ref{sec:emp} contains the empirical application.
Section~\ref{sec:conclusion} summarizes and concludes. All proofs are
provided in the appendix, which also contains additional simulation
and empirical results.

We use the following notation: $\floor{x}$ denotes the integer part of
$x \in \mR$, $L$ is the backward-shift operator, $\text{diag}(\cdot)$
denotes a diagonal matrix with elements specified throughout, and
$\mRquer \coloneqq \mR \cup \{-\infty,\infty\}$. With $\dlim$ and
$\plim$ we denote weak convergence and convergence in probability,
respectively, and all limits apply as the sample size $T$ tends to
infinity. We denote a normal distribution with mean $\mu$ and
covariance matrix $\Sigma$ as $\mathcal{N}\left(\mu,\Sigma\right)$.
The symbol $I_k$ denotes the $k$-dimensional identity matrix. For any
event $E$, the indicator function $\ind\{E\}$ equals one if $E$ occurs
and zero otherwise. If a sequence $a_T$ is identical to $a\in\mR$ for
all $T$, we write $a_T \equiv a$. By $\omega$ we denote an element of
the sample space of the underlying probability space and $(\omega)$
attached to a random variable denotes its realization for this
particular $\omega$.

\section{Setting and Assumptions} \label{sec:setting}

As motivated in the introduction, we consider a cointegrating
regression model with local-to-unity regressors of the form
\begin{align}
y_t &= x_t'\beta_T + u_t,\label{eq:y}\\
x_t &= \left(I_k - T^{-1}c\right) x_{t-1} + v_t, \label{eq:x}
\end{align}
for $t = 1,\dots,T$, where $c \coloneqq \diag(c_1,\ldots,c_k)$ with
$c_j\geq0$, and $x_0 = O_{\mP}(1)$. For $c=0$, the model encompasses
classical cointegrating regressions with unit root regressors.
Following \cite{LeeEtAl22}, we treat the number of regressors $k$ as
fixed. For notational brevity, we
exclude deterministic components from~\eqref{eq:y}. For $\{w_t\}_{t
\in \mZ} \coloneqq \{[u_t,v_t']'\}_{t \in \mZ}$ we impose the
following assumption.
\begin{assumption} \label{ass:w1}
Let $w_t = \Psi(L) \eps_t = \sum_{j=0}^\infty \Psi_j \eps_{t-j}$, with
$\sum_{j=0}^\infty j \Vert \Psi_j\Vert < \infty$ and $\det(\Psi(1)) \neq 0$,
where $\{\eps_t\}_{t \in \mZ}$ is a $(1+k)$-dimensional strictly
stationary ergodic martingale difference sequence with natural
filtration $\mathcal{F}_t \coloneqq
\sigma\left(\{\eps_s\}_{-\infty}^t\right)$, conditional covariance
matrix $\Sigma \coloneqq \mE(\eps_t\eps_t'| \mathcal{F}_{t-1}) >
0$, and $\sup_{t \geq 1}\mE(\Vert\eps_t\Vert^r|\mathcal{F}_{t-1})
< \infty$ a.s.\ for some $r > 4$.
\end{assumption}
Conditions similar to Assumption~\ref{ass:w1} are common in the
cointegrating regression literature, see, e.g., \cite{WagnerHong16}
for a detailed discussion. In particular, Assumption~\ref{ass:w1} allows for regressor
endogeneity and error serial correlation, but excludes cointegration among
the elements of $x_t$.\footnote{As shown by \cite{LeeEtAl22}, allowing for cointegration among the regressors requires a different estimation strategy, termed the twin adaptive LASSO. Extending our analysis to this estimator in the more general setting is left for future research.} Under Assumption~\ref{ass:w1}, the process
$\{w_t\}_{t \in \mZ}$ fulfills a functional central limit theorem of
the form
\begin{align} \label{eq:FCLT}
T^{-1/2} \sum_{t=1}^{\floor{rT}} w_t \Rightarrow B(r) = 
\begin{bmatrix} B_u(r) \\ B_v(r) \end{bmatrix} = 
\Omega^{1/2} W(r), \quad 0 \leq r \leq 1,
\end{align}
where $W(r) = [W_{u \cdot v}(r),W_v(r)']'$ is a $(1 + k)$-dimensional
vector of independent standard Brownian motions and $\Omega \coloneqq
\sum_{h = -\infty}^\infty \mE(w_0 w_h')>0$ denotes the long-run
covariance matrix of $\{w_t\}_{t \in \mZ}$. Note that the results in
this paper also hold under alternative sets of assumptions  as long as
they imply the functional central limit theorem for $\{w_t\}_{t \in
\mZ}$ given in \eqref{eq:FCLT}, see, e.g., \cite{IbragimovPhillips08} and 
\cite{deJong03UP} for possible other conditions.

The target of our investigation, the adaptive LASSO estimator
\citep{Zou06} of $\beta_T$ in~\eqref{eq:y}, is defined as
\begin{align} \label{eq:ALASSO}
\betaAL \coloneqq \argmin_{b \in \mR^k} \left\{\sum_{t=1}^T \left(y_t - x_t'b\right)^2 + 
\lambda_T \sum_{j=1}^k|\hat\beta_j^0|^{-\gamma} |b_j|\right\},
\end{align}
where $\lambda_T > 0$ and $\gamma \geq 1$ are tuning parameters. In
contrast to the classical LASSO estimator of \cite{Tibshirani96}, the
penalty term for the $j$-th coefficient in \eqref{eq:ALASSO} contains
the reciprocal of the absolute value of a preliminary estimator
$\hat\beta_j^0$ of $\beta_{T,j}$, where $\beta_{T,j}$ denotes the
$j$-th element of $\beta_T$. Its aim is to increase the penalty term
if $\beta_{T,j}$ seems small to encourage shrinking, and to penalize
less if $\beta_{T,j}$ appears to be large in order to reduce the bias.
In practice, $\gamma$ is often chosen as 1 or 2 and $\lambda_T$ is
typically selected based on cross-validation or information criteria.
In line with the recommendations in \cite{LeeEtAl22}, we set $\gamma =
1$ and $\hat\beta^0 = \hat\beta$, where $\hat\beta$ denotes the
OLS estimator of $\beta_T$ in~\eqref{eq:y}.
Under Assumption~\ref{ass:w1}, the limiting distribution of
$\hat\beta$ is given by
\begin{align}\label{eq:OLS}
T\left(\hat\beta - \beta_T\right) \Rightarrow \mathcal{Z}^c \coloneqq 
\left(\int_0^1 J_v^c(r)J_v^c(r)'dr\right)^{-1}\left(\int_0^1 J_v^c(r)dB_u(r) + \Delta_{vu}\right), 
\end{align}
where $J_v^c(r) \coloneqq \int_0^r e^{(r-s)c}dB_v(s)$ and $\Delta_{vu}
\coloneqq \sum_{h=0}^\infty \mE(v_0u_h)$, see, \cite{Phillips88}. To
simplify notation, we define $\zeta_{vv}^c \coloneqq \int_0^1
J_v^c(r)J_v^c(r)'dr$.

If at least one regressor is endogenous, the limiting distribution of
the OLS estimator is contaminated by second-order bias terms. In
contrast to the classical cointegrating regressions with unit root
regressors, where such bias terms can be addressed, for example, using
the fully modified approach of \cite{PhillipsHansen90}, in the
local-to-unity regressor case, these bias terms additionally depend on
the unknown local-to-unity parameters $c_j$. Since these parameters
are not consistently estimable, the resulting bias is difficult to
correct, see, e.g., \citet{Phillips23} and the references therein for
a detailed discussion. Consequently, constructing asymptotically valid
confidence intervals or hypothesis tests in cointegrating regressions
with local-to-unity regressors is non-trivial, see, e.g.,
\citet{MagdalinosPhillips09} for an instrumental variables approach
and \citet{HwangValdes24} for a modified low-frequency transformed and
augmented OLS method. In Section~\ref{subsec:ci}, we show that
asymptotically valid uniform confidence regions can nevertheless be
constructed from the consistently tuned adaptive LASSO estimator which
are straightforward to implement, and do not require any knowledge of
local-to-unity parameters.

\begin{remark} \label{rem:PRED}
Our results also extend to the predictive regression setting
\citep[see, e.g.,][]{KooEtAl20,MeiShi24}, where the regressor at time
$t$ is $x_{t-1}$ rather than $x_t$. In this setting, the limiting
distribution of the OLS estimator coincides with $\mathcal{Z}^c$,
except that $\Delta_{vu}$ needs to be replaced by $\sum_{h=1}^\infty
\mE(v_0u_h)$.
\end{remark}

Before deriving the asymptotic properties of the adaptive LASSO
estimator in the multivariate case, where no closed-form solution of
the minimization problem in~\eqref{eq:ALASSO} is available, we
consider the univariate case, where the minimization problem has an
explicit solution of the form
\begin{equation} \label{eq:betaAL}
\betaAL = \begin{cases}
\hat\beta - \tilde{\lambda}_T \hat\beta^{-1} & \text{if }|\hat\beta| > \sqrt{\tilde\lambda_T} \\
0 & \text{otherwise},
\end{cases}
\end{equation}
with $\tilde\lambda_T \coloneqq 0.5\lambda_T (\sum_{t=1}^T
x_t^2)^{-1}$ and $\sum_{t=1}^T x_t^2 = O_\mP(T^{-2})$, compare
\cite{PoetscherSchneider09}. Analyzing the univariate case in detail
facilitates a transparent derivation of the estimator’s asymptotic
properties and provides insights into the underlying  mechanisms,
helping to clarify the multivariate results.

Equation~\eqref{eq:betaAL} reveals that in the univariate case the
adaptive LASSO estimator can be represented solely in terms of the OLS
estimator and that the tuning parameter $\lambda_T$ only affects the
estimator in its ``standardized'' version $\tilde\lambda_T$, where the
term $\sum_{t=1}^T x_t^2$ can be viewed as a measure of variation in
the regressor. This explains why for the asymptotic study in the
subsequent section, the large-sample behavior of $\lambda_T$ relative
to $T^{-2}$ is important -- which is the rate at which $\sum_{t=1}^T
x_t^2$ stabilizes.

Figure~\ref{fig:AL} illustrates the
relationship between $\betaAL$ and $\hat\beta$ for different values of
$\tilde\lambda_T$.
\begin{figure}[!ht]	
\begin{center}
\includegraphics[trim={0cm 0.2cm 1cm 0.8cm},width=0.4\textwidth,clip]{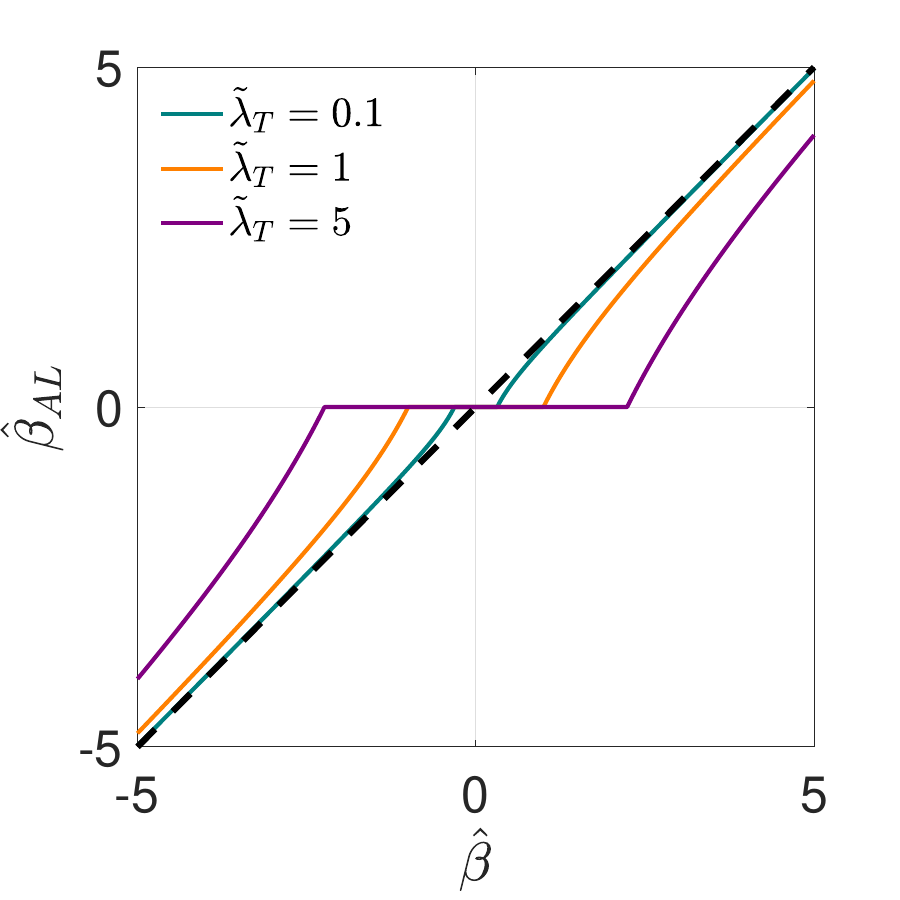}
\end{center} \vspace{-0.2cm}
\caption{\label{fig:AL} Relationship between $\betaAL$ and $\hat\beta$
for different values of $\tilde{\lambda}_T$.}
\end{figure}
%

\section{Asymptotic Theory} \label{sec:asymptotics}

In this section, we investigate the large-sample behavior of the
adaptive LASSO estimator under two different asymptotic regimes
regarding the model selection properties of the procedure. We speak of
consistent model selection (or ``consistent tuning'') if all zero
coefficients are detected with asymptotic probability equal to one,
whereas the case where at least one zero coefficient is set to zero by
the estimator with limiting probability strictly less than one is
referred to as conservative model selection (``conservative tuning'').
Formally, this definition only relates to zero coefficients and poses
no requirement on the non-zero coefficients in the model. Which regime
applies depends on the limiting behavior of the tuning parameter
sequence $\lambda_T$, as will be clarified below.

We first consider the univariate regressor case. In
Section~\ref{subsec:fixed}, we set $\beta_T \equiv \beta$ and study
how the behavior of the tuning parameter sequence $\lambda_T$ affects
both model selection and parameter estimation. In addition, we derive
the asymptotic distribution of the estimator when $\beta_T \equiv
\beta$ is fixed under both model selection regimes. The insights from
Section~\ref{subsec:fixed} into which limiting behavior of $\lambda_T$
leads to what type of model selection regime then serve as a starting
point for the detailed analysis of the large-sample behavior of the
adaptive LASSO estimator in Section~\ref{subsec:unif}. In that
section, we adopt a moving-parameter framework in which the true
parameter $\beta_T$ may vary with the sample size $T$. This allows to
determine local-to-zero and uniform convergence rates, and to derive
asymptotic distributions that, as will become apparent in the
simulation study in Section~\ref{subsec:sim_fs}, more accurately
capture the estimator’s finite-sample properties. The analysis is
again conducted under both conservative and consistent model
selection.

After utilizing the explicit expression of the adaptive LASSO
estimator in the univariate case, we turn to the multivariate case in
Section~\ref{subsec:multi}. In this case, the absence of a closed-form
solution necessitates different techniques for deriving the asymptotic
properties. Unlike in the univariate case, we do not separately
present results under fixed-parameter asymptotics, since these are
encompassed by the moving-parameter framework and do not provide
additional insights beyond those already established in the univariate
analysis. As before, we study the asymptotic behavior of the estimator
under both conservative and consistent model selection.

Finally, in Section~\ref{subsec:ci}, we use the insights from
Section~\ref{subsec:multi} to construct asymptotically valid uniform
confidence regions for the consistently tuned adaptive LASSO
estimator.

\subsection{Fixed-Parameter Asymptotics in the Univariate Case} \label{subsec:fixed}

As outlined above, we begin by deriving asymptotic results for the
adaptive LASSO estimator in the univariate regressor case under a
fixed-parameter framework, i.e., by setting $\beta_T \equiv \beta$
in~\eqref{eq:y} with $\beta \in \mR$ fixed. We first examine the
large-sample properties of the estimator with respect to model
selection.

\begin{proposition}[Model selection] \label{prop:ms-fixed}
Let $\{y_t\}_{t\in\mZ}$ and $\{x_t\}_{t\in\mZ}$ be generated
by~\eqref{eq:y}~and~\eqref{eq:x} with $k=1$ and $\beta_T\equiv\beta$, and let
$\{w_t\}_{t\in\mZ}$ satisfy Assumption~\ref{ass:w1}. 
\begin{enumerate}
		
\item \label{enum:basic-noFN} Let $\beta\neq 0$. If $T^{-2}\lambda_T \to 0$,
then $\mP\left(\betaAL = 0\right) \to 0$.
		
\item Let $\beta=0$. 
\begin{enumerate}
			
\item[(b1)] If $\lambda_T \to \lambda_0$, $0 \leq
\lambda_0 < \infty$, then
$$
\mP\left(\betaAL = 0\right) \to 
\mP\left((\zeta_{vv}^c)^{1/2}\left|\mathcal{Z}^c\right| \leq \sqrt{\frac{\lambda_0}{2}}\right) 
< 1.
$$
			
\item[(b2)] If $\lambda_T \to \infty$, then $\mP\left(\betaAL = 0\right) \to 1$.
			
\end{enumerate}
		
\end{enumerate}
	
\end{proposition}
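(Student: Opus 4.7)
The plan is to exploit the closed-form representation~\eqref{eq:betaAL} available in the univariate case: the event $\{\betaAL = 0\}$ coincides with $\{|\hat\beta| \leq \sqrt{\tilde\lambda_T}\}$, and multiplying by $\sum_{t=1}^T x_t^2$ this is equivalent to $\{Q_T \leq \lambda_T/2\}$, where $Q_T \coloneqq \hat\beta^{2}\sum_{t=1}^T x_t^2$. All three assertions thus reduce to comparing the stochastic magnitude of $Q_T$ with the deterministic threshold $\lambda_T/2$. Two weak convergences drive the entire argument: $T^{-2}\sum_{t=1}^T x_t^2 \dlim \zeta_{vv}$, obtained from the FCLT~\eqref{eq:FCLT} and the continuous mapping theorem, and $T(\hat\beta-\beta) \dlim \mathcal{Z}$ from~\eqref{eq:OLS}; both arise as continuous functionals of the same partial-sum process and therefore hold jointly.

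For part~(a), I fix $\beta \neq 0$. Then $\hat\beta \plim \beta$ gives $\hat\beta^{2} \plim \beta^{2} > 0$, and together with $T^{-2}\sum_{t=1}^T x_t^2 \dlim \zeta_{vv} > 0$ a.s.\ I obtain $T^{-2}Q_T \dlim \beta^{2}\zeta_{vv}$, a strictly positive limit. Since the hypothesis forces $T^{-2}\lambda_T/2 \to 0$, Slutsky's theorem yields $\mP(\betaAL = 0) = \mP(T^{-2}Q_T \leq T^{-2}\lambda_T/2) \to 0$, as desired.

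For part~(b), I fix $\beta = 0$ and factor $Q_T = (T\hat\beta)^{2}\cdot(T^{-2}\sum_{t=1}^T x_t^2)$. The joint weak convergence $(T\hat\beta, T^{-2}\sum_{t=1}^T x_t^2) \dlim (\mathcal{Z}, \zeta_{vv})$ together with the continuous mapping theorem yields $Q_T \dlim \mathcal{Z}^{2}\zeta_{vv}$. In case~(b1), $\lambda_T/2 \to \lambda_0/2$, which is a continuity point of the continuous limit distribution, so $\mP(\betaAL = 0) \to \mP(\mathcal{Z}^{2}\zeta_{vv} \leq \lambda_0/2) = \mP(\zeta_{vv}^{1/2}|\mathcal{Z}| \leq \sqrt{\lambda_0/2})$. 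In case~(b2), $Q_T = O_{\mP}(1)$ while $\lambda_T/2 \to \infty$, so the probability tends to $1$.

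The one genuinely delicate step is establishing that the limit probability in~(b1) is strictly less than one. For $\lambda_0 = 0$ this is automatic because $\mathcal{Z}$ has a continuous distribution, so $\mP(\mathcal{Z}=0)=0$. For $\lambda_0 > 0$, I would decompose $B_u$ into its projection onto $B_v$ and an orthogonal Brownian component proportional to $W_{u\cdot v}$, which is independent of $B_v$ by construction. Conditional on $B_v$, the stochastic integral $\int_0^1 B_v\,dB_u$ then acquires a centred Gaussian component whose variance is a strictly positive multiple of $\zeta_{vv}$, so the conditional support of $\zeta_{vv}^{1/2}|\mathcal{Z}|$ equals $[0,\infty)$ almost surely. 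Integrating over $B_v$ gives $\mP(\zeta_{vv}^{1/2}|\mathcal{Z}| > \sqrt{\lambda_0/2}) > 0$, which completes the argument.
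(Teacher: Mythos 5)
Your proposal is correct and follows essentially the same route as the paper's proof: it uses the explicit thresholding representation \eqref{eq:betaAL} to recast $\{\betaAL=0\}$ as a comparison of $\hat\beta^2\sum_{t=1}^T x_t^2$ with $\lambda_T/2$, and then applies the joint weak convergence of $T(\hat\beta-\beta)$ and $T^{-2}\sum_{t=1}^T x_t^2$ together with Slutsky/continuous mapping in each regime. The only difference is cosmetic: for the strict inequality in (b1) you spell out the conditional-Gaussian decomposition of $\int_0^1 B_v\,dB_u$ given $B_v$, where the paper simply invokes the unbounded support of $\mathcal{Z}$.
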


Proposition~\ref{prop:ms-fixed} reveals the role of the tuning
parameter sequence $\lambda_T$ for model selection of the adaptive
LASSO: In case $\lambda_T \to \lambda_0$ with $0 \leq \lambda_0 <
\infty$, the estimator detects zero coefficients with probability
smaller than one asymptotically, resulting in conservative model
selection. In contrast, when $\lambda_T \to \infty$, the estimator
sets zero coefficients equal to zero with probability approaching one
and consequently leads to consistent model selection. In the
following, we therefore refer to the case $\lambda_T \to \lambda_0$,
$0 \leq \lambda_0 < \infty$, as \emph{conservative tuning}, whereas
the case $\lambda_T\to\infty$ is termed \emph{consistent
tuning}.\footnote{Under conservative tuning with $\lambda_0 = 0$, the
adaptive LASSO estimator is equivalent to OLS in the sense that
$T(\betaAL - \hat\beta) = o_\mP(1)$. This follows directly from the
proof of
Theorem~\ref{thm:param_consist-multi}\ref{enum:rate_conserv-unif} in
Section~\ref{subsec:multi}. The result also holds in a
moving-parameter framework and extends to the multivariate case.}
Moreover, the condition $T^{-2}\lambda_T \to 0$ is a basic requirement
for the tuning parameter as it ensures that the probability of the
adaptive LASSO estimator incorrectly setting the coefficient to zero
vanishes asymptotically. While this condition is automatically
fulfilled under conservative tuning, it controls the rate at which
$\lambda_T$ may diverge under consistent tuning. We will assume this
condition in all subsequent statements in this section.

We now derive the asymptotic properties of the adaptive LASSO
estimator with respect to parameter estimation.

\begin{proposition}[Parameter estimation] \label{prop:param_consist-fixed}	
Let $\{y_t\}_{t \in \mZ}$ and $\{x_t\}_{t \in \mZ}$ be generated
by~\eqref{eq:y}~and~\eqref{eq:x} with $k=1$ and $\beta_T \equiv \beta$, let $\{w_t\}_{t \in \mZ}$ satisfy Assumption~\ref{ass:w1}, and assume $T^{-2}\lambda_T \to 0$.
\begin{enumerate}
		
\item \label{enum:param_consist-fixed} $\betaAL - \beta = o_\mP(1)$.
		
\item \label{enum:rateT-fixed} If $T^{-1}\lambda_T \to
\tilde\lambda_0$, $0 \leq \tilde\lambda_0 < \infty$, then $T(\betaAL -
\beta) = O_\mP(1)$.

\item \label{enum:slower_rate-fixed} If $T^{-1}\lambda_T \to \infty$ then
$\lambda_T^{-1}T^2(\betaAL - \beta) = O_\mP(1)$.
		
\end{enumerate}
	
\end{proposition}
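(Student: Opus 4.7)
\emph{Approach.} The engine is the closed-form~\eqref{eq:betaAL}, combined with two scaling facts. From the functional CLT we have $T^{-2}\sum_{t=1}^T x_t^2 \dlim \zeta_{vv}$ with $\zeta_{vv}>0$ a.s., so the reciprocal is tight. Hence $T\tilde\lambda_T = \tfrac{1}{2}(T^{-1}\lambda_T)(T^{-2}\sum_{t=1}^T x_t^2)^{-1}$ is $O_\mP(1)$ whenever $T^{-1}\lambda_T\to\tilde\lambda_0<\infty$, while $\lambda_T^{-1}T^2\tilde\lambda_T = \tfrac{1}{2}(T^{-2}\sum_{t=1}^T x_t^2)^{-1}\dlim 1/(2\zeta_{vv})$; the standing assumption $T^{-2}\lambda_T\to 0$ in particular implies $\tilde\lambda_T\plim 0$. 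From~\eqref{eq:OLS}, $T(\hat\beta-\beta)=O_\mP(1)$. I shall work throughout with the decomposition
\begin{equation*}
\betaAL - \beta = \bigl[(\hat\beta-\beta) - \tilde\lambda_T/\hat\beta\bigr]\ind\{|\hat\beta|>\sqrt{\tilde\lambda_T}\} - \beta\,\ind\{|\hat\beta|\le\sqrt{\tilde\lambda_T}\},
\end{equation*}
which follows directly from~\eqref{eq:betaAL}.

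\emph{Part (a).} If $\beta\neq 0$, Proposition~\ref{prop:ms-fixed}\ref{enum:basic-noFN} drives $\mP(|\hat\beta|\le\sqrt{\tilde\lambda_T})\to 0$; on the complement, $\tilde\lambda_T/\hat\beta\plim 0$ since $\hat\beta\plim\beta\neq 0$ and $\tilde\lambda_T\plim 0$, so $\betaAL\plim\beta$. If $\beta=0$, either $\betaAL=0$ or $|\tilde\lambda_T/\hat\beta|<\sqrt{\tilde\lambda_T}$, and then $|\betaAL|\le|\hat\beta|+\sqrt{\tilde\lambda_T}\plim 0$.

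\emph{Part (b).} Multiplying the decomposition by $T$, on $\{|\hat\beta|>\sqrt{\tilde\lambda_T}\}$ we have $T(\betaAL-\beta)=T(\hat\beta-\beta)-T\tilde\lambda_T/\hat\beta$. When $\beta\neq 0$ that event has probability tending to one, and both summands are $O_\mP(1)$: the first by~\eqref{eq:OLS}, the second because $T\tilde\lambda_T=O_\mP(1)$ and $\hat\beta\plim\beta\neq 0$. When $\beta=0$ I split on the tuning. If $\lambda_T\to\infty$, Proposition~\ref{prop:ms-fixed}(b2) delivers $\mP(\betaAL=0)\to 1$, so $T(\betaAL-\beta)=0$ with probability tending to one. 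If instead $\lambda_T$ is bounded, then $\tilde\lambda_T=O_\mP(T^{-2})$, so $T\sqrt{\tilde\lambda_T}=O_\mP(1)$, and on $\{|\hat\beta|>\sqrt{\tilde\lambda_T}\}$ the crude bound $|T\tilde\lambda_T/\hat\beta|<T\sqrt{\tilde\lambda_T}$ combined with $T\hat\beta=O_\mP(1)$ yields the claim. This subcase split is the one genuinely delicate step: the generic bound $T\sqrt{\tilde\lambda_T}$ is only of order $\sqrt{\lambda_T}$, which fails to be $O_\mP(1)$ once $\lambda_T$ diverges, so the divergent branch must instead be rescued by the consistent-tuning conclusion of Proposition~\ref{prop:ms-fixed}(b2).

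\emph{Part (c).} Under $T^{-1}\lambda_T\to\infty$, scaling by $\lambda_T^{-1}T^2$ produces, on $\{|\hat\beta|>\sqrt{\tilde\lambda_T}\}$,
\begin{equation*}
\lambda_T^{-1}T^2(\betaAL-\beta) = \frac{T}{\lambda_T}\,T(\hat\beta-\beta) - \frac{\lambda_T^{-1}T^2\tilde\lambda_T}{\hat\beta}.
\end{equation*}
For $\beta\neq 0$ the event has probability $\to 1$, the first summand is $o_\mP(1)$ since $T/\lambda_T\to 0$ and $T(\hat\beta-\beta)=O_\mP(1)$, and the second converges in probability to $1/(2\zeta_{vv}\beta)$. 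For $\beta=0$, $\lambda_T\to\infty$ is automatic, so Proposition~\ref{prop:ms-fixed}(b2) again forces $\mP(\betaAL=0)\to 1$ and the claim is trivial.
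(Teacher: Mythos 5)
Your proof is correct, but it takes a genuinely different route from the paper's. The paper obtains part~(a) as a special case of the uniform-consistency result (Theorem~\ref{thm:param_consist-unif}\ref{enum:param_consist-unif}) and deduces parts~(b) and~(c) directly from the limiting distributions in Proposition~\ref{prop:ls_dist-fixed} (parts (a), (b1) and (b2)), which are themselves derived later from the closed form~\eqref{eq:betaAL} together with the model-selection probabilities. You instead prove stochastic boundedness directly from~\eqref{eq:betaAL}, never invoking the limit distributions: all you use is $T(\hat\beta-\beta)=O_\mP(1)$, tightness of $\bigl(T^{-2}\sum_{t=1}^T x_t^2\bigr)^{-1}$, and Proposition~\ref{prop:ms-fixed}. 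This makes your argument more elementary and self-contained, at the cost of a case analysis in (b); the paper's route is shorter given that Proposition~\ref{prop:ls_dist-fixed} is proved anyway, and it additionally delivers the exact limits, which show that the rates in (b) and (c) are sharp.

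Two small points. First, in (b) with $\beta=0$ your dichotomy ``$\lambda_T\to\infty$ or $\lambda_T$ bounded'' is not exhaustive: under $T^{-1}\lambda_T\to\tilde\lambda_0<\infty$ the sequence $\lambda_T$ may oscillate between bounded and divergent values. This is harmless --- pass to subsequences along which $\lambda_T$ converges in $\mRquer$, exactly as the paper does in Remark~\ref{rem:sequences} (and as is implicitly needed for its own citation of Proposition~\ref{prop:ls_dist-fixed}(a) and (b1)) --- but it should be said; alternatively the split can be avoided entirely, since on $\{|\hat\beta|>\sqrt{\tilde\lambda_T}\}$ one has $\tilde\lambda_T/|\hat\beta|<|\hat\beta|$, so for $\beta=0$ the bound $|T\betaAL|\le 2T|\hat\beta|=O_\mP(1)$ holds irrespective of the tuning. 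Second, in (c) the quantity $\lambda_T^{-1}T^2\tilde\lambda_T/\hat\beta$ has the random limit $1/(2\zeta_{vv}\beta)$, so the convergence is weak convergence ($\dlim$), not convergence in probability; only tightness is needed for the claim, and it does hold, so this is a terminological slip rather than a gap.
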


From Proposition~\ref{prop:param_consist-fixed}\ref{enum:param_consist-fixed}, 
we learn that the basic condition $T^{-2}\lambda_T \to 0$, which
ensures that non-zero coefficients are not falsely put to zero as
shown in Proposition~\ref{prop:ms-fixed}\ref{enum:basic-noFN}, also
guarantees that the procedure is consistent for $\beta$ with respect
to parameter estimation.
Proposition~\ref{prop:param_consist-fixed}\ref{enum:rateT-fixed} shows
that in case of conservative tuning or for consistent tuning with a
slowly diverging tuning parameter sequence (in the sense that
$T^{-1}\lambda_T$ stays bounded), the convergence rate of the adaptive
LASSO estimator is $T^{-1}$ and coincides with the rate of OLS.
However, when the estimator is tuned consistently and $\lambda_T$
tends to infinity fast enough so that $T^{-1}\lambda_T$ diverges also,
Proposition~\ref{prop:ms-fixed}(c) reveals that the convergence rate
of the adaptive LASSO estimator is only $T^{-2}\lambda_T$, which is
slower than $T^{-1}$ in this case.

We now derive the limiting distribution of the adaptive LASSO
estimator.

\begin{proposition}[Limiting distribution]\label{prop:ls_dist-fixed}
Let $\{y_t\}_{t \in \mZ}$ and $\{x_t\}_{t \in \mZ}$ be generated
by~\eqref{eq:y}~and~\eqref{eq:x} with $k=1$ and $\beta_T \equiv \beta$, let $\{w_t\}_{t \in \mZ}$ satisfy Assumption~\ref{ass:w1}, and assume $T^{-2}\lambda_T \to 0$.
\begin{enumerate}
		
\item \label{enum:ls_dist_conserv-fixed} Let $\lambda_T \to \lambda_0$, $0 \leq \lambda_0 < \infty$.
		
\begin{itemize}

\item[(a1)] If $\beta \neq 0$, then $T(\betaAL - \beta) \Rightarrow
\mathcal{Z}^c$.
			
\item[(a2)] If $\beta = 0$, then
\begin{align*}
T(\betaAL - \beta) \Rightarrow \ind\left\{(\zeta_{vv}^c)^{1/2}\left|\mathcal{Z}^c\right| >
\sqrt{\frac{\lambda_0}{2}}\right\}\left(\mathcal{Z}^c -
\frac{\lambda_0}{2\zeta_{vv}^c}(\mathcal{Z}^c)^{-1}\right).
\end{align*} 

\end{itemize}
		
\item \label{enum:ls_dist_consist-fixed} Let $\lambda_T \to \infty$.

\begin{itemize}

\item[(b1)] If $T^{-1} \lambda_T \to
\tilde\lambda_0$, $0 \leq \tilde\lambda_0 < \infty$, then
\begin{align*}
T(\betaAL - \beta) \Rightarrow \ind\left\{\beta \neq 0\right\} 
\left(\mathcal{Z}^c - (\zeta_{vv}^c)^{-1}\frac{\tilde\lambda_0}{2\beta}\right).
\end{align*}

\item[(b2)] If $T^{-1} \lambda_T \to \infty$, then 
\begin{align*}
\lambda_T^{-1}T^2(\betaAL - \beta) \Rightarrow - \ind\left\{\beta \neq 0\right\} 
(\zeta_{vv}^c)^{-1}\frac{1}{2\beta}.
\end{align*}

\end{itemize}
		
\end{enumerate}
	
\end{proposition}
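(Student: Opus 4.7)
The plan is to exploit the explicit expression~\eqref{eq:betaAL} together with the joint weak convergence
\[
\bigl(T(\hat\beta - \beta),\, T^{-2}\textstyle\sum_{t=1}^T x_t^2\bigr) \dlim (\mathcal{Z},\zeta_{vv}),
\]
which follows from the FCLT in~\eqref{eq:FCLT} and the continuous mapping theorem, together with the identity $T^2\tilde\lambda_T = 0.5\,\lambda_T\big/\bigl(T^{-2}\sum_{t=1}^T x_t^2\bigr)$. Throughout, I would split according to whether the event $\mathcal{E}_T\coloneqq \{|\hat\beta|>\sqrt{\tilde\lambda_T}\}$ occurs: on $\mathcal{E}_T$ one has $\betaAL - \beta = (\hat\beta - \beta) - \tilde\lambda_T/\hat\beta$, while $\betaAL=0$ on $\mathcal{E}_T^c$. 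The four cases are then essentially Slutsky arguments applied to this decomposition, with the choice of normalization dictated by the relative sizes of $\lambda_T$ and $T$.

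For \emph{case (a1)} ($\lambda_T \to \lambda_0<\infty$, $\beta\neq 0$), $\hat\beta \plim \beta$ and $\sqrt{\tilde\lambda_T}=O_\mP(T^{-1})\plim 0$ give $\mP(\mathcal{E}_T)\to 1$; writing $T(\betaAL-\beta)=T(\hat\beta-\beta)-T\tilde\lambda_T/\hat\beta$ and noting $T\tilde\lambda_T=O_\mP(\lambda_T/T)=o_\mP(1)$ yields the limit $\mathcal{Z}$. For \emph{case (a2)} ($\beta=0$), rewrite $\mathcal{E}_T$ as $\{|T\hat\beta|>\sqrt{T^2\tilde\lambda_T}\}$; since $T\hat\beta \dlim \mathcal{Z}$ and $T^2\tilde\lambda_T \dlim \lambda_0/(2\zeta_{vv})$ jointly, the indicator converges to $\ind\{\zeta_{vv}^{1/2}|\mathcal{Z}|>\sqrt{\lambda_0/2}\}$, and on $\mathcal{E}_T$
\[
T\betaAL \;=\; T\hat\beta \;-\; \frac{T^2\tilde\lambda_T}{T\hat\beta} \;\dlim\; \mathcal{Z}-\frac{\lambda_0}{2\zeta_{vv}\mathcal{Z}}.
\]
Combining via the continuous mapping theorem gives the stated expression.

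For \emph{case (b1)} with $\beta=0$ and \emph{case (b2)} with $\beta=0$, Proposition~\ref{prop:ms-fixed}(b2) yields $\mP(\betaAL=0)\to 1$, so the corresponding scaled quantity converges in probability (hence in distribution) to $0$, matching the indicator $\ind\{\beta\neq 0\}$. For \emph{case (b1)} with $\beta\neq 0$, $\tilde\lambda_T=O_\mP(\lambda_T/T^2)=O_\mP(1/T)\plim 0$ gives $\mP(\mathcal{E}_T)\to 1$, and $T\tilde\lambda_T = 0.5(\lambda_T/T)/(T^{-2}\sum_{t=1}^T x_t^2) \dlim \tilde\lambda_0/(2\zeta_{vv})$, which combined with $\hat\beta\plim \beta$ delivers the claim. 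For \emph{case (b2)} with $\beta\neq 0$, on $\mathcal{E}_T$
\[
\lambda_T^{-1}T^2(\betaAL-\beta) \;=\; \frac{T}{\lambda_T}\,T(\hat\beta-\beta) \;-\; \frac{\lambda_T^{-1}T^2\tilde\lambda_T}{\hat\beta},
\]
where the first term is $o_\mP(1)$ since $T/\lambda_T\to 0$ and $T(\hat\beta-\beta)=O_\mP(1)$, while $\lambda_T^{-1}T^2\tilde\lambda_T = 0.5/(T^{-2}\sum_{t=1}^T x_t^2) \dlim 1/(2\zeta_{vv})$, so the limit is $-1/(2\zeta_{vv}\beta)$.

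The main technical obstacle is the boundary argument in case (a2): the functional $(z,v)\mapsto \ind\{v^{1/2}|z|>\sqrt{\lambda_0/2}\}(z - \lambda_0/(2vz))$ is discontinuous on $\{v^{1/2}|z|=\sqrt{\lambda_0/2}\}$, so the continuous mapping theorem requires the joint distribution of $(\mathcal{Z},\zeta_{vv})$ to assign zero mass to this set. This follows because, conditional on the path of $B_v$, $\mathcal{Z}$ is an absolutely continuous random variable (an affine function of the Gaussian $\int_0^1 B_v\, dB_{u\cdot v}$ together with the bias term $\Delta_{vu}$, scaled by $\zeta_{vv}^{-1}$), so the boundary event has probability zero after conditioning, hence unconditionally. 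The remaining verifications — $\mP(\mathcal{E}_T)\to 1$ in the non-boundary cases, joint convergence of the relevant pairs, and the continuous mapping/Slutsky step — are routine once the closed form~\eqref{eq:betaAL} and the rate identity $T^2\tilde\lambda_T = 0.5\lambda_T/(T^{-2}\sum_{t=1}^T x_t^2)$ are in place.
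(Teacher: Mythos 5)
Your proposal is correct and follows essentially the same route as the paper: the closed-form expression~\eqref{eq:betaAL}, joint weak convergence of $\bigl(T(\hat\beta-\beta),\,T^{-2}\sum_{t=1}^T x_t^2\bigr)$, and Slutsky/continuous-mapping arguments case by case, which is exactly how the paper proves (b1)--(b2) and, via the finite-sample decompositions in Lemma~\ref{lem:fs_results} underlying Theorem~\ref{thm:ls_dist-unif}\ref{enum:ls_dist_conserv-unif}, parts (a1)--(a2) as well. The only differences are organizational -- you prove (a1)--(a2) directly instead of invoking the moving-parameter theorem -- and you are in fact more explicit than the paper in verifying that the limit $(\mathcal{Z},\zeta_{vv})$ puts no mass on the discontinuity set $\{\zeta_{vv}^{1/2}|\mathcal{Z}|=\sqrt{\lambda_0/2}\}$ via the conditional absolute continuity of $\mathcal{Z}$ given $B_v$.
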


Under conservative tuning,
Proposition~\ref{prop:ls_dist-fixed}\ref{enum:ls_dist_conserv-fixed}
reveals that the asymptotic distribution of the adaptive LASSO
estimator coincides with the one of OLS if $\beta\neq 0$. In case
$\beta = 0$, however, the limiting distribution consists of an atomic
part at zero incurred by the positive probability of the estimator
being equal to zero, and an absolutely continuous part for when the
estimator is not equal to zero.

Proposition~\ref{prop:ls_dist-fixed}\ref{enum:ls_dist_consist-fixed}
further shows that under consistent tuning, the limiting distribution
of the adaptive LASSO estimator fully collapses to pointmass at zero
whenever $\beta = 0$.\footnote{For completeness, note that the proofs
of (b1) and (b2) in Proposition~\ref{prop:ls_dist-fixed} reveal that
$\delta_T(\betaAL - \beta) \Rightarrow 0$ for any sequence $\delta_T \to \infty$
if $\beta = 0$ and $\lambda_T \to \infty$.} In case $\beta \neq 0$,
both the convergence rate of the adaptive LASSO estimator and its
limiting distribution depend on how $\lambda_T$ diverges in relation
to $T$. If $\lambda_T$ tends to infinity slower than $T$, the limiting
distribution of $T(\betaAL - \beta)$ coincides with the one of OLS. If
$\lambda_T$ diverges at rate $T$, rate-$T$ consistency is still
maintained as is already shown in
Proposition~\ref{prop:param_consist-fixed}(b1), but the asymptotic
distribution now deviates from the one of OLS by a random shift that
is inversely proportional to and has the opposite sign of $\beta$.
This random shift in the limiting distribution of the adaptive LASSO
estimator is not detected in \citet{LeeEtAl22}, as their assumptions
imply that $\tilde\lambda_0 = 0$. Moreover, in contrast to the second
order bias term in the limiting distribution of the OLS estimator, the
random shift in the limiting distribution of the adaptive LASSO
estimator does not vanish if the regressor is exogenous. Lastly, if
$\lambda_T$ diverges faster than $T$, the convergence rate becomes
slower than $T$, as shown in
Proposition~\ref{prop:param_consist-fixed}(b2). In this case, for the
appropriately scaled estimator $\lambda_T^{-1}T^2(\betaAL - \beta)$,
the term ``corresponding to OLS'' no longer appears in the limit.
Finally, Proposition~\ref{prop:ls_dist-fixed} confirms that the rates
established in Proposition~\ref{prop:param_consist-fixed} are indeed
sharp.

The results in this section can be used to explicitly formulate a
condition for the so-called ``oracle property'' of the adaptive LASSO
estimator, a term coined by \cite{FanLi01}, established for the
adaptive LASSO estimator in a classical linear regression model in
\cite{Zou06}.

\begin{corollary}[``Oracle property'']\label{cor:summary_fixed}
Let $\{y_t\}_{t \in \mZ}$ and $\{x_t\}_{t \in \mZ}$ be generated
by~\eqref{eq:y}~and~\eqref{eq:x} with $k=1$ and $\beta_T \equiv
\beta$, let $\{w_t\}_{t \in \mZ}$ satisfy Assumption~\ref{ass:w1}, and
assume $T^{-1} \lambda_T + \lambda_T^{-1} \to 0$. Then
$\mP\left(\betaAL = 0\right) \to \ind\{\beta=0\}$ and $T(\betaAL
-\beta) \Rightarrow \ind\left\{\beta \neq 0\right\}\mathcal{Z}^c$.
\end{corollary}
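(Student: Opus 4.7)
The plan is to show that the corollary is a direct consequence of Propositions~\ref{prop:ms-fixed} and~\ref{prop:ls_dist-fixed} once one has translated the assumption $T^{-1}\lambda_T + \lambda_T^{-1} \to 0$ into the appropriate regime. Since $\lambda_T^{-1} \to 0$ forces $\lambda_T \to \infty$, we are in the consistent-tuning regime; since $T^{-1}\lambda_T \to 0$ with $\tilde\lambda_0 = 0$ in the notation of Proposition~\ref{prop:ls_dist-fixed}, the tuning parameter diverges strictly slower than $T$. Note also that $T^{-1}\lambda_T \to 0$ immediately implies $T^{-2}\lambda_T \to 0$, so the ``basic'' side condition appearing in Propositions~\ref{prop:ms-fixed}--\ref{prop:ls_dist-fixed} is satisfied.

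For the model selection claim $\mP(\betaAL = 0) \to \ind\{\beta = 0\}$, I would split into the two cases. If $\beta \neq 0$, Proposition~\ref{prop:ms-fixed}\ref{enum:basic-noFN} applies (using $T^{-2}\lambda_T \to 0$) and yields $\mP(\betaAL = 0) \to 0 = \ind\{\beta = 0\}$. If $\beta = 0$, then $\lambda_T \to \infty$ places us in Proposition~\ref{prop:ms-fixed}(b2), which gives $\mP(\betaAL = 0) \to 1 = \ind\{\beta = 0\}$.

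For the distributional claim $T(\betaAL - \beta) \Rightarrow \ind\{\beta \neq 0\}\mathcal{Z}$, I would again distinguish the two cases. If $\beta = 0$, then by the model selection result just established, $\mP(\betaAL = 0) \to 1$, so $T(\betaAL - \beta) = T\betaAL$ equals zero with probability tending to one, hence converges to zero in probability (and therefore in distribution), matching $\ind\{\beta \neq 0\}\mathcal{Z} = 0$. If $\beta \neq 0$, both conditions $\lambda_T \to \infty$ and $T^{-1}\lambda_T \to \tilde\lambda_0 = 0$ are satisfied, so Proposition~\ref{prop:ls_dist-fixed}(b1) applied with $\tilde\lambda_0 = 0$ delivers
\[
T(\betaAL - \beta) \Rightarrow \ind\{\beta \neq 0\}\left(\mathcal{Z} - \zeta_{vv}^{-1}\frac{0}{2\beta}\right) = \mathcal{Z} = \ind\{\beta \neq 0\}\mathcal{Z}.
\]

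There is no real obstacle here: the corollary is essentially a bookkeeping statement isolating the sub-case $\tilde\lambda_0 = 0$ of Proposition~\ref{prop:ls_dist-fixed}(b1) combined with Proposition~\ref{prop:ms-fixed}. The only thing to verify carefully is that the two-sided condition $T^{-1}\lambda_T + \lambda_T^{-1} \to 0$ simultaneously places us in the consistent-tuning regime and in the sub-regime where the random shift $\zeta_{vv}^{-1}\tilde\lambda_0/(2\beta)$ in the limiting distribution vanishes, which together reproduce precisely the oracle-property conclusion.
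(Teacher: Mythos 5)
Your proposal is correct and follows essentially the same route as the paper, which likewise deduces the corollary directly from Proposition~\ref{prop:ms-fixed}\ref{enum:basic-noFN} and (b2) together with Proposition~\ref{prop:ls_dist-fixed}(b1) applied with $\tilde\lambda_0 = 0$. Your separate treatment of the case $\beta = 0$ via the model selection result is harmless extra bookkeeping, since Proposition~\ref{prop:ls_dist-fixed}(b1) already covers that case through the indicator $\ind\{\beta \neq 0\}$.
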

Corollary~\ref{cor:summary_fixed} states that, under consistent
tuning with $\lambda_T$ diverging more slowly than $T$, the adaptive
LASSO estimator identifies non-zero coefficients and sets null
coefficients to zero with probability approaching one. Moreover, its
limiting distribution coincides with that of the OLS estimator
whenever $\beta\neq 0$.
	
Results similar to Corollary~\ref{cor:summary_fixed} often constitute
the main asymptotic findings in the literature on LASSO-type
estimators across various contexts, see, e.g.,
\citet[Theorem~1]{MedeirosMendes16},
\citet[Corollary~1]{SmeekesWijler21}, \citet[Theorem~4]{Schweikert22},
\citet[Theorems~1 and 3]{LeeEtAl22}, \citet[Theorem~3.2]{TuXie23}, and
\citet[Theorem~4.2]{ChenEtAl25}. However, although such results may
seem convenient, they have to be interpreted with extreme caution.
While the ``oracle property'' represents the large-sample performance
of the estimator in situations where regression coefficients are
either equal to zero or ``relatively large'' (in absolute value and in
relation to sample size), it does not shed light on the empirically
relevant case where some coefficients are ``relatively small'' rather
than being exactly equal to zero. In Section~\ref{subsec:unif}, we
analyze the large-sample properties of the adaptive LASSO estimator
within an asymptotic framework that also accommodates this case.

\subsection{Moving-Parameter Asymptotics in the Univariate Case} \label{subsec:unif}

In this section, we study the asymptotic behavior of the adaptive
LASSO estimator in the univariate regressor case within a
moving-parameter framework, where the unknown coefficient $\beta_T$
may vary with $T$. This framework overcomes the limitations of the
fixed-parameter setting, in which the true coefficient is restricted
to being either exactly zero or asymptotically large relative to
sample size. Such a dichotomy is unsatisfactory, since in finite
samples the coefficient may be non-zero yet small, especially when the
signal-to-noise ratio is low. The smaller the non-zero coefficient that
an estimator can still reliably detect, the better its performance. A
key advantage of the moving-parameter framework is that it reveals the
local-to-zero rate at which the estimator can detect non-zero
coefficients. 

In the following theorem, we derive the model selection
probabilities of the adaptive LASSO under both conservative and
consistent tuning.

\begin{theorem}[Model selection] \label{thm:ms-unif}

Let $\{y_t\}_{t \in \mZ}$ and $\{x_t\}_{t \in \mZ}$ be generated
by~\eqref{eq:y}~and~\eqref{eq:x} with $k=1$, and let $\{w_t\}_{t \in \mZ}$ satisfy
Assumption~\ref{ass:w1}.
\begin{enumerate}

\item \label{enum:ms_conserv-unif} If $\lambda_T \to \lambda_0$, $0
\leq \lambda_0< \infty$, and $T\beta_T \to \beta_0 \in \mRquer$, then
$$
\mP\left(\betaAL = 0\right) \to \mP\left((\zeta_{vv}^c)^{1/2}\left|\mathcal{Z}^c +
\beta_0\right| \leq \sqrt{\frac{\lambda_0}{2}}\right) < 1.
$$

\item \label{enum:ms_consist-unif} If $\lambda_T \to \infty$ and
$\lambda_T^{-1/2}T\beta_T \to \tilde\beta_0 \in \mRquer$, then
$$
\mP\left(\betaAL = 0\right) \to \mP\left((\zeta_{vv}^c)^{1/2} \leq
\frac{1}{\sqrt{2}}\left| \tilde\beta_0 \right|^{-1}\right).
$$

\end{enumerate}

\end{theorem}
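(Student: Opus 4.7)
The plan is to exploit the closed-form representation in~(\ref{eq:betaAL}), which shows that $\{\betaAL = 0\}$ coincides with $\{|\hat\beta| \leq \sqrt{\tilde\lambda_T}\}$, i.e., with $\{\hat\beta^2 \sum_{t=1}^T x_t^2 \leq \lambda_T/2\}$. In each regime, I would rescale this scalar inequality so that every random factor on the left and the constant on the right stabilize as $T \to \infty$, and then invoke the continuous mapping theorem together with Portmanteau.

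For part~\ref{enum:ms_conserv-unif}, multiplying and dividing by $T^2$ rewrites the event as $(T\hat\beta)^2 \cdot (T^{-2}\sum_{t=1}^T x_t^2) \leq \lambda_T/2$. Using the decomposition $T\hat\beta = T(\hat\beta - \beta_T) + T\beta_T$ together with~(\ref{eq:OLS}) and the joint weak convergence $(T(\hat\beta-\beta_T), T^{-2}\sum_{t=1}^T x_t^2) \Rightarrow (\mathcal{Z}, \zeta_{vv})$ implied by the FCLT in~(\ref{eq:FCLT}), combined with the deterministic convergence $T\beta_T \to \beta_0$ via Slutsky, yields $(T\hat\beta)^2 \cdot T^{-2}\sum_{t=1}^T x_t^2 \Rightarrow (\mathcal{Z}+\beta_0)^2 \zeta_{vv}$ for finite $\beta_0$. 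Since $\lambda_T/2 \to \lambda_0/2$ and the limit distribution is continuous at $\lambda_0/2$, the conclusion follows and the equivalent form $\mP(\zeta_{vv}^{1/2}|\mathcal{Z}+\beta_0| \leq \sqrt{\lambda_0/2})$ is obtained by taking square roots. The strict inequality $\mP(\cdot) < 1$ follows because $|\mathcal{Z}+\beta_0|$ has unbounded support.

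For part~\ref{enum:ms_consist-unif}, I would instead divide the defining inequality by $\lambda_T$, obtaining $(\lambda_T^{-1/2} T \hat\beta)^2 \cdot (T^{-2} \sum_{t=1}^T x_t^2) \leq 1/2$. Because $T(\hat\beta - \beta_T) = O_\mP(1)$ by~(\ref{eq:OLS}) while $\lambda_T^{-1/2} \to 0$, the decomposition $\lambda_T^{-1/2} T \hat\beta = \lambda_T^{-1/2} T \beta_T + \lambda_T^{-1/2} T(\hat\beta - \beta_T)$ shows that this quantity converges to $\tilde\beta_0$ in probability (for finite $\tilde\beta_0$). Joint with $T^{-2} \sum_{t=1}^T x_t^2 \Rightarrow \zeta_{vv}$, the continuous mapping theorem delivers $\mP(\betaAL = 0) \to \mP(\tilde\beta_0^2 \zeta_{vv} \leq 1/2) = \mP(\zeta_{vv}^{1/2} \leq \tfrac{1}{\sqrt{2}}|\tilde\beta_0|^{-1})$.

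The main obstacle I anticipate is handling the extended-real-valued limits $\beta_0 \in \{-\infty,+\infty\}$ and $\tilde\beta_0 \in \{-\infty,+\infty\}$, where the above continuous-mapping arguments do not directly apply; in these cases I would argue separately that the rescaled left-hand side diverges in probability, so that $\mP(\betaAL = 0) \to 0$, which matches the stated formulas under the natural conventions $\mP(|\mathcal{Z}+\infty| \leq c) = 0$ and $1/\infty = 0$. A secondary technicality is verifying that the threshold $\lambda_0/2$ (resp.\ $1/2$) is a continuity point of the limiting distribution; in part~\ref{enum:ms_conserv-unif} this follows from the absolute continuity of $\mathcal{Z}$ conditional on $\zeta_{vv}$ inherited from the Gaussian structure of $W_{u \cdot v}$ in~(\ref{eq:OLS}), and in part~\ref{enum:ms_consist-unif} from the absolute continuity of $\zeta_{vv}$ whenever $\tilde\beta_0 \neq 0$ (if $\tilde\beta_0 = 0$, the limit probability is trivially one).
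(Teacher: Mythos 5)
Your proposal is correct and follows essentially the same route as the paper: its Lemma~\ref{lem:fs_results} is exactly your algebraic rewriting of the event $\{\betaAL = 0\}$ as $\{\zeta_{vv,T}^{1/2}|\mathcal{Z}_T + \beta_{0,T}| \leq \sqrt{\lambda_T/2}\}$ (and the $\lambda_T$-rescaled version for consistent tuning), after which the paper likewise lets $(\mathcal{Z}_T,\zeta_{vv,T})$ and the deterministic sequences settle at their limits, noting $\lambda_T^{-1/2}\mathcal{Z}_T = o_\mP(1)$ in part~\ref{enum:ms_consist-unif} and the unbounded support of $\mathcal{Z}$ for the strict inequality in part~\ref{enum:ms_conserv-unif}. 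Your explicit handling of the continuity points and of the $\beta_0,\tilde\beta_0 = \pm\infty$ cases just spells out details the paper leaves implicit.
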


\begin{remark} \label{rem:sequences}
Theorem~\ref{thm:ms-unif} describes the asymptotic behavior of the
model selection probabilities for \emph{arbitrary} sequences of
$\beta_T$ in the sense that all accumulation points of the selection
probabilities can be obtained in the following way: apply the result
to subsequences and observe that for every such subsequence, we can
select a further subsequence such that relevant quantities, i.e.,
$T\beta_T$ or $\lambda_T^{-1/2}T\beta_T$, converge to a limit in
$\mRquer$. A similar comment also applies to
Theorems~\ref{thm:param_consist-unif}--\ref{thm:ls_dist-unif} below.
\end{remark}

Part~(a) of Theorem~\ref{thm:ms-unif} shows that under
conservative tuning, if $\beta_T$ is bounded away from zero or
converges to zero at rate slower than $T^{-1}$, i.e., $|\beta_0| =
\infty$, the estimator can detect the coefficient as non-zero with
asymptotic probability equal to one. If $\beta_T \equiv 0$ or $\beta_T$
converges to zero at rate $T^{-1}$ or faster, i.e., $\beta_0 \in
\mR$, the estimator will set the coefficient equal to zero with positive
probability less than one even asymptotically.

To interpret the results in (b) of Theorem~\ref{thm:ms-unif} in a
meaningful way, we suppose that the basic condition $T^{-2}\lambda_T
\to 0$ from Section~\ref{subsec:fixed} holds, which we also assume to
hold in all subsequent statements in this section. Part~(b) of
Theorem~\ref{thm:ms-unif} then reveals that if $\beta_T$ is bounded
away from zero or converges to zero at rate slower than
$T^{-1}\lambda_T^{1/2}$, i.e., $|\tilde\beta_0| = \infty$, the
estimator can detect the coefficient as non-zero with asymptotic
probability equal to one. If $\beta_T$ converges to zero exactly at
rate $T^{-1}\lambda_T^{1/2}$, i.e., $\tilde\beta_0 \in \mR$,
$\tilde\beta_0 \neq 0$, the estimator will set the coefficient equal
to zero with positive probability less than one asymptotically.
Finally, if $\beta_T \equiv 0$ or $\beta_T$ converges to zero with
rate faster than $T^{-1}\lambda_T^{1/2}$, the estimator will set the
coefficient equal to zero with asymptotic probability equal to one.

\begin{remark} \label{rem:lee-local_to_zero}
As discussed above, Theorem~\ref{thm:ms-unif} shows that in the
consistently tuned case, the relevant local-to-zero rate is
$T^{-1}\lambda_T^{1/2}$ in the sense that coefficients that converge
to zero slower than that will be detected as non-zero with asymptotic
probability equal to one and coefficients that converge to zero faster
will be detected as non-zero with asymptotic probability equal to
zero. In \cite{LeeEtAl22}, it appears that coefficients converging to
zero at rate $T^{-\delta}$ for any $\delta \in (0,1)$ pose no
difficulty for the consistently tuned adaptive Lasso in the sense that
they will be detected as non-zero with asymptotic probability equal to
one. This, however, is made possible by an assumption that links the
true coefficient to the tuning parameter (through the parameter
$\delta$), thereby masking the dependence of the local-to-zero rate on
the tuning parameter.
\end{remark}

Next, we analyze estimation consistency of the adaptive
LASSO estimator for the parameter $\beta_T$.

\begin{theorem}[Parameter estimation] \label{thm:param_consist-unif}
Let $\{y_t\}_{t \in \mZ}$ and $\{x_t\}_{t \in \mZ}$ be generated
by~\eqref{eq:y}~and~\eqref{eq:x} with $k=1$, let $\{w_t\}_{t \in \mZ}$ satisfy
Assumption~\ref{ass:w1}, and assume $T^{-2}\lambda_T \to 0$.

\begin{enumerate}

\item \label{enum:param_consist-unif} It holds that $\betaAL - \beta_T = o_\mP(1)$. 
		
\item \label{enum:rate_conserv-unif} If $\lambda_T \to \lambda_0$, $0 \leq
\lambda_0 < \infty$, then $T(\betaAL - \beta_T) = O_\mP(1)$.
		
\item \label{enum:rate_consist-unif} If $\lambda_T \to \infty$, then
$\lambda_T^{-1/2}T(\betaAL - \beta_T) = O_\mP(1)$.

\end{enumerate}

\end{theorem}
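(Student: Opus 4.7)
The plan is to exploit the explicit representation of $\betaAL$ given in \eqref{eq:betaAL} together with the fact that $\hat\beta - \beta_T = (\sum_{t=1}^T x_t^2)^{-1} \sum_{t=1}^T x_t u_t$ does not depend on $\beta_T$ at all, so the uniform version of the OLS rate is immediate: $T(\hat\beta-\beta_T) \Rightarrow \mathcal{Z}$ holds verbatim in the moving-parameter framework, giving $\hat\beta-\beta_T = O_\mP(T^{-1})$ uniformly in $\beta_T$.

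The central observation is a deterministic, $\beta_T$-free bound on how far the adaptive LASSO estimator can move away from OLS. From \eqref{eq:betaAL} one sees that in the thresholding region $|\hat\beta| \leq \sqrt{\tilde\lambda_T}$ we have $|\betaAL - \hat\beta| = |\hat\beta| \leq \sqrt{\tilde\lambda_T}$, while on the event $|\hat\beta| > \sqrt{\tilde\lambda_T}$ we have $|\betaAL - \hat\beta| = \tilde\lambda_T/|\hat\beta| < \sqrt{\tilde\lambda_T}$. Hence in either case
\[
|\betaAL - \hat\beta| \leq \sqrt{\tilde\lambda_T}.
\]
Combined with $\tilde\lambda_T = \tfrac{1}{2}\lambda_T (\sum_{t=1}^T x_t^2)^{-1}$ and the standard limit $T^{-2}\sum_{t=1}^T x_t^2 \Rightarrow \zeta_{vv} > 0$ (a consequence of the functional central limit theorem \eqref{eq:FCLT} and the continuous mapping theorem), this yields $\sqrt{\tilde\lambda_T} = O_\mP(T^{-1}\lambda_T^{1/2})$.

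From here the three parts follow by triangle inequality. For part \ref{enum:param_consist-unif}, $|\betaAL-\beta_T|\le |\betaAL-\hat\beta|+|\hat\beta-\beta_T| = O_\mP(T^{-1}\lambda_T^{1/2}) + O_\mP(T^{-1}) = o_\mP(1)$ since $T^{-2}\lambda_T \to 0$ forces $T^{-1}\lambda_T^{1/2} \to 0$. For part \ref{enum:rate_conserv-unif}, $\lambda_T \to \lambda_0 < \infty$ makes $\sqrt{\tilde\lambda_T} = O_\mP(T^{-1})$, so $T(\betaAL-\beta_T) = O_\mP(1)$. For part \ref{enum:rate_consist-unif}, multiplying by $\lambda_T^{-1/2}T$ gives $\lambda_T^{-1/2}T|\betaAL-\hat\beta| = O_\mP(1)$ and $\lambda_T^{-1/2}T|\hat\beta-\beta_T| = \lambda_T^{-1/2}O_\mP(1) = o_\mP(1)$ since $\lambda_T \to \infty$, completing the bound.

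There is no real obstacle here; the one subtlety worth flagging is that the uniformity in $\beta_T$ (central to the moving-parameter framework) is automatic because neither $\hat\beta - \beta_T$ nor $\tilde\lambda_T$ (a function only of the regressors and $\lambda_T$) carries any dependence on $\beta_T$, so all stochastic $O_\mP$ bounds hold along arbitrary sequences $\{\beta_T\}$ in the spirit of Remark~\ref{rem:sequences}. The only ingredient one must verify independently of the fixed-parameter arguments of Section~\ref{subsec:fixed} is the deterministic inequality $|\betaAL - \hat\beta| \leq \sqrt{\tilde\lambda_T}$; everything else is triangle inequality and substitution.
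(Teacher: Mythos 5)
Your proof is correct, and it takes a genuinely different route from the paper's univariate argument. The paper proves part (a) directly from the explicit formula \eqref{eq:betaAL} by splitting on the thresholding event $\{|\hat\beta|\le\sqrt{\tilde\lambda_T}\}$ and controlling the term $\ind\{|\hat\beta|\le\sqrt{\tilde\lambda_T}\}|\beta_T|$ with a separate probability argument, and it then obtains parts (b) and (c) as consequences of the limiting-distribution results in Theorem~\ref{thm:ls_dist-unif} (via the finite-sample representations of Lemma~\ref{lem:fs_results} and the subsequence device of Remark~\ref{rem:sequences}). You instead derive everything from the single pathwise inequality $|\betaAL-\hat\beta|\le\sqrt{\tilde\lambda_T}$, which is valid in both branches of \eqref{eq:betaAL}, combined with the fact that $\hat\beta-\beta_T=(\sum_{t=1}^T x_t^2)^{-1}\sum_{t=1}^T x_t u_t$ and $\tilde\lambda_T$ are free of $\beta_T$, so uniformity over arbitrary sequences is automatic. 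Your key inequality is exactly the univariate specialization of Lemma~\ref{lem:KKT} (which in the $k=1$ case reads $(\betaAL-\hat\beta)^2\sum_{t=1}^T x_t^2\le\lambda_T/2$), so your route mirrors the paper's own proof of the multivariate Theorem~\ref{thm:param_consist-multi} rather than its univariate proof. What each approach buys: yours is more elementary, avoids any appeal to the weak-convergence machinery of Theorem~\ref{thm:ls_dist-unif} (and hence any subsequence argument), and makes the uniformity in $\beta_T$ transparent; the paper's route for (b) and (c) is essentially free once Theorem~\ref{thm:ls_dist-unif} is available and, together with that theorem, additionally shows the rates are sharp, which your bound alone does not. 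One cosmetic point: you correctly use $T^{-2}\sum_{t=1}^T x_t^2\Rightarrow\zeta_{vv}>0$ and its stochastically bounded reciprocal, which is what is actually needed (the main text's ``$\sum_{t=1}^T x_t^2=O_\mP(T^{-2})$'' is a typo in the paper, not in your argument).
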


Theorem~\ref{thm:param_consist-unif}(a) shows that if $T^{-2}\lambda_T
\to 0$, the adaptive LASSO estimator is not only consistent (cf.\
Proposition~\ref{prop:param_consist-fixed}), but also uniformly
consistent. Parts (b) and (c) reveal that the uniform convergence rate
depends on the tuning regime. Under conservative tuning, the estimator
is rate-$T$ consistent, whereas under consistent tuning, it is only
rate-$T\lambda_T^{-1/2}$ consistent.

We now derive the limiting distribution of the adaptive LASSO
estimator under arbitrary sequences of $\beta_T$.

\begin{theorem}[Limiting distribution] \label{thm:ls_dist-unif}

Let $\{y_t\}_{t \in \mZ}$ and $\{x_t\}_{t \in \mZ}$ be generated
by~\eqref{eq:y}~and~\eqref{eq:x} with $k=1$, let $\{w_t\}_{t \in \mZ}$ satisfy
Assumption~\ref{ass:w1}, and assume $T^{-2}\lambda_T \to 0$.
	
\begin{enumerate}

\item \label{enum:ls_dist_conserv-unif} If $\lambda_T \to \lambda_0$, $0 \leq \lambda_0
< \infty$, and $T\beta_T \to \beta_0 \in \mRquer$,
then
\begin{align*}
T(\betaAL - \beta_T) \Rightarrow & \ind\left\{(\zeta_{vv}^c)^{1/2}\left|\mathcal{Z}^c +
\beta_0\right| > \sqrt{\frac{\lambda_0}{2}}\right\} 
\left(\mathcal{Z}^c - \frac{\lambda_0}{2\zeta_{vv}^c}(\mathcal{Z}^c + \beta_0)^{-1}\right) \\ 
& - \ind\left\{(\zeta_{vv}^c)^{1/2}\left|\mathcal{Z}^c +
\beta_0\right| \leq \sqrt{\frac{\lambda_0}{2}}\right\}\beta_0.
\end{align*}
		
\item \label{enum:ls_dist_consist-unif} If $\lambda_T \to \infty$ and
$\lambda_T^{-1/2}T\beta_T \to \tilde\beta_0 \in \mRquer$, then $\lambda_T^{-1/2}T(\betaAL - 
\beta_T) \Rightarrow$ 
\begin{align*}
\begin{cases}
-\ind\left\{(\zeta_{vv}^c)^{1/2} > a_0\right\}(2\tilde\beta_0\zeta_{vv}^c)^{-1}
-\ind\left\{(\zeta_{vv}^c)^{1/2} \leq a_0\right\}\tilde\beta_0 & 
\text{if } 0 < |\tilde\beta_0| < \infty\\
0 & \text{ otherwise},
\end{cases}
\end{align*}
where $a_0 \coloneqq 1/(\sqrt{2}|\tilde\beta_0|)$.

\end{enumerate}
	
\end{theorem}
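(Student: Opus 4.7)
The plan is to build on the explicit representation~\eqref{eq:betaAL} of $\betaAL$, which allows us to write
\begin{align*}
\betaAL - \beta_T = \ind\{|\hat\beta| > \sqrt{\tilde\lambda_T}\}\bigl(\hat\beta - \beta_T - \tilde\lambda_T \hat\beta^{-1}\bigr) - \ind\{|\hat\beta| \leq \sqrt{\tilde\lambda_T}\}\beta_T,
\end{align*}
and then simply identify the joint limit of every piece after multiplying by the appropriate rate. The key inputs are the joint weak convergence $T(\hat\beta - \beta_T) \Rightarrow \mathcal{Z}$ and $T^{-2}\sum_{t=1}^T x_t^2 \Rightarrow \zeta_{vv}$ (from Assumption~\ref{ass:w1} and the FCLT in~\eqref{eq:FCLT}), together with the observation that $|\hat\beta|>\sqrt{\tilde\lambda_T}$ is equivalent to $(T\hat\beta)^2 > T^2\tilde\lambda_T$.

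For part~\ref{enum:ls_dist_conserv-unif}, I multiply by $T$ and rewrite everything in terms of $T\hat\beta = T(\hat\beta-\beta_T)+T\beta_T$ and $T^2\tilde\lambda_T = 0.5\lambda_T/(T^{-2}\sum x_t^2)$. Assuming first $\beta_0 \in \mR$, joint convergence gives $T\hat\beta \Rightarrow \mathcal{Z}+\beta_0$ and $T^2\tilde\lambda_T \Rightarrow \lambda_0/(2\zeta_{vv})$, and hence by the continuous mapping theorem
\begin{align*}
\ind\{|\hat\beta|>\sqrt{\tilde\lambda_T}\} \Rightarrow \ind\bigl\{\zeta_{vv}^{1/2}|\mathcal{Z}+\beta_0|>\sqrt{\lambda_0/2}\bigr\},\qquad T\tilde\lambda_T\hat\beta^{-1} = \frac{T^2\tilde\lambda_T}{T\hat\beta} \Rightarrow \frac{\lambda_0}{2\zeta_{vv}(\mathcal{Z}+\beta_0)};
\end{align*}
the absolutely continuous distribution of $\mathcal{Z}$ given $\zeta_{vv}$ ensures that the boundary event carries zero probability, so the continuous mapping theorem applies. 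Combining the three pieces yields the claimed limit. The case $|\beta_0|=\infty$ is handled separately: then the first indicator tends to one in probability, $T\beta_T$ diverges, so the second term is negligible after multiplication with the indicator of a vanishing event, and $T\tilde\lambda_T\hat\beta^{-1}\plim 0$, giving the stated limit (the formula being interpreted with the convention that $(\mathcal{Z}+\beta_0)^{-1}=0$).

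For part~\ref{enum:ls_dist_consist-unif}, the correct scaling is $\lambda_T^{-1/2}T$, and the central quantity is $\lambda_T^{-1/2}T\hat\beta = \lambda_T^{-1/2}T(\hat\beta-\beta_T) + \lambda_T^{-1/2}T\beta_T \Rightarrow 0+\tilde\beta_0$, since $T(\hat\beta-\beta_T)=O_\mP(1)$ and $\lambda_T\to\infty$. Rewriting the indicator as $\ind\{(\lambda_T^{-1/2}T\hat\beta)^2>T^2\tilde\lambda_T/\lambda_T\}$ with $T^2\tilde\lambda_T/\lambda_T \Rightarrow 1/(2\zeta_{vv})$, continuous mapping (assuming $0<|\tilde\beta_0|<\infty$) yields $\ind\{\zeta_{vv}^{1/2}>a_0\}$. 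The OLS term $\lambda_T^{-1/2}T(\hat\beta-\beta_T)$ vanishes, and
\begin{align*}
\lambda_T^{-1/2}T\tilde\lambda_T\hat\beta^{-1} = \frac{T^2\tilde\lambda_T/\lambda_T}{\lambda_T^{-1/2}T\hat\beta} \Rightarrow \frac{1}{2\zeta_{vv}\tilde\beta_0}.
\end{align*}
Combining the pieces gives the stated formula. The degenerate boundary cases $\tilde\beta_0=0$ and $|\tilde\beta_0|=\infty$ are dealt with by noting that then $a_0=\infty$ or $a_0=0$ respectively, so one of the two indicators becomes the degenerate one and the contribution of the other term explicitly vanishes.

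The main obstacle is not computational but careful handling of the continuous mapping theorem across the discontinuity set of the indicators and across the full range of $\beta_0,\tilde\beta_0\in\mRquer$. As flagged in Remark~\ref{rem:sequences}, the cleanest way to justify the cases $|\beta_0|=\infty$ or $|\tilde\beta_0|=\infty$ is by a subsequence argument: from any subsequence one extracts a further subsequence along which $T\hat\beta/|T\beta_T|$ (respectively $\lambda_T^{-1/2}T\hat\beta/|\lambda_T^{-1/2}T\beta_T|$) converges to one in probability, so that the relevant indicator becomes constant in the limit. Verifying tightness and uniform integrability-type arguments to pass from in-probability convergence of the indicator to weak convergence of the full product is the only genuinely delicate step.
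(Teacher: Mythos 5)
Your proposal is correct and follows essentially the same route as the paper: the decomposition you write down from the closed-form solution~\eqref{eq:betaAL} is exactly the finite-sample representation the paper isolates in Lemma~\ref{lem:fs_results}, and the limit is then obtained, as in the paper, by rewriting the selection event in terms of $\mathcal{Z}_T$, $\zeta_{vv,T}$, $\beta_{0,T}$ (resp.\ $\tilde\beta_{0,T}$) and letting these settle at their joint weak limits, with the infinite cases handled via $\mP(\betaAL=0)\to 0$ or $\to 1$. Your explicit remarks on the zero-probability boundary of the indicator sets and the subsequence argument for $|\beta_0|=\infty$, $|\tilde\beta_0|=\infty$ only make explicit details the paper treats tersely.
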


From Theorem~\ref{thm:ls_dist-unif}\ref{enum:ls_dist_conserv-unif}, we
learn that under conservative tuning, if $\beta_T$ is bounded away
from zero or converges to zero at rate slower than $T^{-1}$, i.e.,
$|\beta_0| = \infty$, the limiting distribution of $T(\betaAL -
\beta_T)$ is $\mathcal{Z}^c$ and thus coincides with the limiting
distribution of OLS. If $\beta_T \equiv 0$ or $\beta_T$ converges to
zero at rate $T^{-1}$ or faster, i.e., $\beta_0 \in \mR$, the limiting
distribution consists of an atomic as well as an absolutely continuous
part.

Part~\ref{enum:ls_dist_consist-unif} of the above theorem shows that
under consistent tuning and using the correct scaling factor, the
limit of $\lambda_T^{-1/2}T(\betaAL - \beta_T)$ collapses to zero if
$\beta_T \equiv 0$ or $\beta_T$ converges to zero faster than
$T\lambda_T^{-1/2}$, i.e. $\tilde\beta_0 = 0$, or if $\beta_T$ is
bounded away from zero or converges to zero slower than
$T\lambda_T^{-1/2}$, i.e., $|\tilde\beta_0| = \infty$. However, if
$\beta_T$ converges to zero exactly at rate $T^{-1}\lambda_T^{1/2}$,
i.e., $0 < |\tilde\beta_0| < \infty$, the limit is random and contains
an atomic as well as an absolutely continuous part. Interestingly, all
remaining randomness originates from the regressor $x_t$, but not from
the errors $u_t$. The dependence on $u_t$ disappears because the
influence of the rate-$T$ consistent OLS estimator vanishes
asymptotically if $\hat \beta - \beta_T$ is scaled by
$\lambda_T^{-1/2}T$ rather than $T$, whereas the dependence on $x_t$
appears in the limit through $\tilde\lambda_T$.\footnote{For more
details, we refer to Lemma~\ref{lem:fs_results}(c) in
Appendix~\ref{app:prep}.}
		
\begin{remark} \label{rem:ls_dist_consist_rateT-unif}
Theorem~\ref{thm:ls_dist-unif}(b) shows that the uniform convergence
rate for the adaptive Lasso estimator under consistent tuning is,
indeed, $T^{-1}\lambda_T^{1/2}$ and that scaling the estimation error
with the larger factor $T$ will result in a stochastically unbounded
sequence if $\beta_T$ converges to zero at rate
$T^{-1}\lambda_T^{1/2}$. For completeness, we also list the limiting
distribution of $T(\betaAL - \beta_T)$ for arbitrary sequences of
$\beta_T$: If $\lambda_T \to \infty$, such that $T^{-2}\lambda_T \to
0$, and $\lambda_T^{-1/2}T\beta_T \to \tilde\beta_0 \in \mRquer$, then
\begin{align*}
T(\betaAL - \beta_T) \Rightarrow \begin{cases}
-\beta_0 & \text{ if } \tilde\beta_0 = 0 \\
-\sign(\tilde\beta_0)\infty & \text{ if } 0 < |\tilde\beta_0| < \infty \\
\mathcal{Z}^c - 0.5(\zeta_{vv}^c\bar\beta_0)^{-1} & \text{ if } |\tilde\beta_0| = \infty,
\end{cases}
\end{align*}
where $T\beta_T \to \beta_0 \in \mRquer$ and $\lambda_T^{-1}T\beta_T
\to \bar\beta_0 \in \mRquer$. Hence, $T(\betaAL - \beta_T)$ collapses
to pointmass at $-\beta_0$ whenever $\beta_T \equiv 0$ or $\beta_T$
converges to zero at rate $T^{-1}$ or faster, i.e., $\beta_0 \in \mR$
and $\tilde\beta_0 = 0$. The limiting distribution of $T(\betaAL -
\beta_T)$ is random if $\beta_T$ is bounded away from zero or
converges to zero at rate $T^{-1}\lambda_T$ or slower, i.e.,
$\bar\beta_0 \neq 0$ and $|\tilde\beta_0| = \infty$. In this case, the
limiting distribution coincides with the one of OLS if $\beta_T$
converges to zero slower than $T^{-1}\lambda_T$, i.e., $|\bar\beta_0|
= |\tilde\beta_0| = \infty$. However, if $|\bar\beta_0| < \infty$, the
limiting distribution of the adaptive LASSO estimator deviates from
the one of OLS by a random shift that is inversely proportional to and
has the opposite sign of $\bar\beta_0$, analogously to what we have
seen under fixed-parameter asymptotics in
Proposition~\ref{prop:ls_dist-fixed}(b1). Again, the random shift is
not detected in \citet{LeeEtAl22}, and, in contrast to the second
order bias term in the limiting distribution of the OLS estimator, it
does not vanish if the regressor is exogeneous. In all other cases,
the total mass of $T(\betaAL - \beta_T)$ escapes to $-\infty$ or
$\infty$.
\end{remark}

\begin{remark} \label{rem:lee-ls_dist}
Remark~\ref{rem:ls_dist_consist_rateT-unif} illustrates that in the
consistently tuned case, while the adaptive Lasso estimator can detect
local-to-zero rates of any order greater than $T^{-1}\lambda_T^{1/2}$,
in order to obtain the same limiting distribution as OLS, the true
coefficient must be of even larger order of magnitude, i.e., greater
than $T^{-1}\lambda_T$. In the setting of \cite{LeeEtAl22} with
$\beta_T = \beta T^{-\delta}$ for $\delta \in (0,1)$, $\beta \neq 0$,
and $T^{-(1-\delta)}\lambda_T \to 0$, it automatically holds that
$\lambda_T^{-1}T\beta_T \to \infty$.
\end{remark}

\subsection{The Multivariate Case} \label{subsec:multi}

We now turn to the multivariate regressor case and investigate the
asymptotic properties of the adaptive LASSO estimator within a
moving-parameter framework. Since detailed results for the
fixed-parameter framework have already been presented and discussed in
the univariate case, our focus here remains on the more general
setting, where the true coefficients are allowed to vary with sample
size, see also the discussion at the beginning of
Section~\ref{sec:asymptotics}. With respect to notation, please note
that the subscript $j$ continues to denote the $j$-th element of the
vector to which it is attached, e.g., $\betaALj$ denotes the $j$-th
component of $\betaAL$.

We start by deriving model selection probabilities for the adaptive
LASSO estimator under conservative as well as consistent tuning for
certain relevant sequences of $\beta_T$.

\begin{theorem}[Model selection] \label{thm:ms-multi}

Let $\{y_t\}_{t \in \mZ}$ and $\{x_t\}_{t \in \mZ}$ be generated
by~\eqref{eq:y}~and~\eqref{eq:x}, and let $\{w_t\}_{t \in \mZ}$ satisfy
Assumption~\ref{ass:w1}.

\begin{enumerate}

\item \label{enum:ms_conserv-multi} If $\lambda_T \to \lambda_0$, $0
\leq \lambda_0 < \infty$, and $T\beta_T \to \beta_0 \in \mRquer^k$, then
$$
\mP\left(\betaALj = 0\right) \to  0,
$$
if $|\tilde\beta_{0,j}| = \infty$.

\item \label{enum:ms_consist-multi} If $\lambda_T \to \infty$ and
$\lambda_T^{-1/2}T\beta_T \to \tilde\beta_0 \in \mRquer^k$ then
$$
\mP\left(\betaALj = 0\right) \to \begin{cases}
1 & \text{ if } \tilde\beta_{0,j} = 0 \\
0 & \text{ if } |\tilde\beta_{0,j}| = \infty.
\end{cases}
$$

\end{enumerate}

\end{theorem}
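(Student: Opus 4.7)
The proof plan centers on the Karush--Kuhn--Tucker (KKT) characterization of the adaptive LASSO minimizer. A subgradient argument applied to the objective in~\eqref{eq:ALASSO} gives, for each coordinate $j$, the equivalence
$$
\betaALj = 0 \;\Longleftrightarrow\; \Bigl|2\sum_{t=1}^T x_{t,j}(y_t - x_t'\betaAL)\Bigr| \leq \lambda_T|\hat\beta_j|^{-1},
$$
which, using $\sum_t x_t y_t = S_T\hat\beta$ with $S_T \coloneqq \sum_t x_t x_t'$ (so $T^{-2}S_T \dlim \zeta_{vv}$), can be rewritten as $|2[S_T(\hat\beta - \betaAL)]_j| \leq \lambda_T|\hat\beta_j|^{-1}$. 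Throughout, I plan to invoke the relevant uniform rate of convergence for $\betaAL$ in the multivariate analog of Theorem~\ref{thm:param_consist-unif} -- namely $\hat\beta - \betaAL = O_\mP(T^{-1})$ under conservative tuning and $\hat\beta - \betaAL = O_\mP(\lambda_T^{1/2}T^{-1})$ under consistent tuning -- to control the left-hand side.

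For part~\ref{enum:ms_conserv-multi}, I would argue on $\{\betaALj = 0\}$ under $|\beta_{0,j}| = \infty$ by decomposing $[S_T(\hat\beta - \betaAL)]_j$ into the diagonal contribution $S_{T,jj}\hat\beta_j$ and the off-diagonal contributions $\sum_{l \neq j} S_{T,jl}(\hat\beta_l - \hat\beta_{\AL,l})$. The off-diagonal terms are $O_\mP(T)$ since each entry of $S_T$ is $O_\mP(T^2)$ and each $\hat\beta_l - \hat\beta_{\AL,l}$ is $O_\mP(T^{-1})$. The diagonal term has magnitude of exact order $T^2|\beta_{T,j}|$ because $T^{-2}S_{T,jj} \dlim \zeta_{vv,jj} > 0$ and $\hat\beta_j/\beta_{T,j} \plim 1$ when $T|\beta_{T,j}| \to \infty$. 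Since $\lambda_T|\hat\beta_j|^{-1}$ is of order $|\beta_{T,j}|^{-1}$, the ratio of right-hand to left-hand side is $O_\mP(1/(T^2\beta_{T,j}^2)) \to 0$, so the KKT inequality cannot hold, yielding $\mP(\betaALj = 0) \to 0$.

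For part~\ref{enum:ms_consist-multi}, the case $|\tilde\beta_{0,j}| = \infty$ again implies $T|\beta_{T,j}| \to \infty$, and the same diagonal/off-diagonal bookkeeping on $\{\betaALj = 0\}$ produces a left-hand side of order $T^2|\beta_{T,j}|$ and a right-hand side of order $\lambda_T/|\beta_{T,j}|$; the ratio equals $(\lambda_T^{-1/2}T\beta_{T,j})^{-2} \to 0$, so again $\mP(\betaALj = 0) \to 0$. The case $\tilde\beta_{0,j} = 0$ is the opposite, and here I would verify the KKT inequality directly. The uniform rate yields a left-hand side of order $T\lambda_T^{1/2}$, and for the right-hand side I would split into (i) $T|\beta_{T,j}|$ bounded, in which case $T\hat\beta_j = O_\mP(1)$ via~\eqref{eq:OLS}, so $|\hat\beta_j|^{-1}$ is of exact order $T$ and the right-hand side is of order $T\lambda_T$, and (ii) $T|\beta_{T,j}| \to \infty$, in which case $|\hat\beta_j|^{-1} = |\beta_{T,j}|^{-1}(1+o_\mP(1))$ and the right-hand side is of order $\lambda_T/|\beta_{T,j}|$. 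In both subcases the right-hand side exceeds the left-hand side by a factor of order $\lambda_T^{1/2} \to \infty$, so the KKT inequality holds with probability approaching one and $\mP(\betaALj = 0) \to 1$.

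The main obstacle I expect is securing the multivariate uniform rate for $\betaAL$ that drives the previous paragraphs. Unlike in the univariate case, no closed-form expression like~\eqref{eq:betaAL} is available, so the rates $\hat\beta - \betaAL = O_\mP(T^{-1})$ and $O_\mP(\lambda_T^{1/2}T^{-1})$ must come from a separate convexity/M-estimation argument applied to the recentered and rescaled adaptive LASSO objective, exploiting the strict convexity of its quadratic part together with a uniform bound on the gradient of the penalty. Once these rates are in hand, the diagonal/off-diagonal accounting above is essentially routine, though some care is still needed in the boundary regime where $T\beta_{T,j}$ has a nontrivial finite limit: there, $\hat\beta_j$ is dominated by the OLS fluctuation $T^{-1}\mathcal{Z}_j$ rather than by $\beta_{T,j}$, so $|\hat\beta_j|^{-1}$ must be handled via the distributional limit in~\eqref{eq:OLS} rather than via a deterministic approximation.
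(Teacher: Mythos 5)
Your proposal is correct in substance and rests on the same two pillars as the paper's own proof: the multivariate uniform rates and the coordinate-wise KKT conditions. Note first that the rates you flag as the ``main obstacle'' are exactly Theorem~\ref{thm:param_consist-multi}, which the paper proves separately (via the KKT bound in Lemma~\ref{lem:KKT} and a smallest-eigenvalue argument) and then simply cites in the proof of Theorem~\ref{thm:ms-multi}; you may do the same, so no additional convexity/M-estimation argument is needed here. Where your route genuinely differs is in the ``probability tending to zero'' claims: the paper dispenses with KKT entirely there, observing that $\{\betaALj=0\}$ implies $|\betaALj-\beta_{T,j}|=|\beta_{T,j}|$, whence $\mP(\betaALj=0)\le\mP\bigl(2|T(\betaALj-\beta_{T,j})|>|T\beta_{T,j}|\bigr)\to0$ under conservative tuning (and the analogous bound with scaling $\lambda_T^{-1/2}T$ under consistent tuning). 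Your diagonal/off-diagonal bookkeeping of $[S_T(\hat\beta-\betaAL)]_j$ reaches the same conclusion and is correct -- the off-diagonal terms are $O_\mP(T)$ resp.\ $O_\mP(T\lambda_T^{1/2})$ and are dominated by the diagonal term of order $T^2|\beta_{T,j}|$ precisely because $T|\beta_{T,j}|\to\infty$ resp.\ $\lambda_T^{-1/2}T|\beta_{T,j}|\to\infty$ -- but it is considerably more laborious than the paper's one-line argument; you also, correctly, read the condition in part~\ref{enum:ms_conserv-multi} as $|\beta_{0,j}|=\infty$, as the paper's proof does. For the case $\tilde\beta_{0,j}=0$ your argument is essentially the paper's (scaled KKT relation plus Theorem~\ref{thm:param_consist-multi}\ref{enum:rate_consist-multi}).

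One logical slip should be repaired in that last case: the displayed ``equivalence'' is not an equivalence. If $\betaALj\neq0$, the KKT condition holds with \emph{equality}, which is compatible with your weak inequality, so showing that the inequality holds with probability tending to one does not by itself give $\mP(\betaALj=0)\to1$. The fix is immediate from your own estimates: you show the right-hand side exceeds the left-hand side by a factor diverging in probability, so the inequality is eventually \emph{strict} with probability tending to one, and strict inequality is incompatible with $\betaALj\neq0$; equivalently, argue as the paper does directly on $\{\betaALj\neq0\}$, where the KKT equality scaled by $\lambda_T^{-1/2}T^{-1}$ has an $O_\mP(1)$ left side while the right side $1/|\lambda_T^{-1/2}\mathcal{Z}_{T,j}+\tilde\beta_{0,T,j}|$ diverges when $\tilde\beta_{0,j}=0$. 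Two smaller points: in your subcase (ii) the domination factor is $\lambda_T^{1/2}/(T|\beta_{T,j}|)$, which diverges because $\lambda_T^{-1/2}T|\beta_{T,j}|\to0$, not because it is of order $\lambda_T^{1/2}$; and the split into $T|\beta_{T,j}|$ bounded versus divergent requires the usual subsequence reduction (cf.\ Remark~\ref{rem:sequences}), which you should state explicitly since only $\lambda_T^{-1/2}T\beta_T$ is assumed to converge in part~\ref{enum:ms_consist-multi}.
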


Before we discuss the results in Theorem~\ref{thm:ms-multi} in detail,
we extend the statement in
Theorem~\ref{thm:ms-multi}\ref{enum:ms_conserv-multi} in the following
remark to obtain a more comprehensive picture for the model selection
properties in the conservatively tuned case.

\begin{remark} \label{rem:ms_conserv-multi}
We point out two additional special cases for the model selection
properties in the conservatively tuned case. Let $\lambda_T \to
\lambda_0$, $0 \leq \lambda_0 < \infty$. Then:

\begin{enumerate}

\item \label{enum:ms_conserv_ge0-multi} For $T\beta_T \to \beta_0 \in
\mRquer^k$ with $\beta_{0,j} = 0$ we have
$$
\liminf_{T \rightarrow \infty} \mP\left(\betaALj = 0\right) > 0.
$$

\item  \label{enum:ms_conserv_le1-multi} For $\beta_T \equiv \beta \in
\mR^k$, $\cA \coloneqq \{j: \beta_j \neq 0\}$, and $\hat\cA \coloneqq
\{j: \betaALj \neq 0\}$ we have
$$
\limsup_{T \rightarrow \infty} \mP\left(\hat\cA = \cA \right) < 1.
$$

\end{enumerate}

\end{remark}

In line with the results from the univariate case,
Theorem~\ref{thm:ms-multi}\ref{enum:ms_conserv-multi} shows that under
conservative tuning, the estimator can detect coefficients with
local-to-zero rates of order greater than $T^{-1}$ as non-zero with
asymptotic probability equal to one. Importantly, in the multivariate
case, this property depends solely on the rate of the coefficient
under consideration and is unaffected by the behavior of the other
components of $\beta_T$. Moreover, coefficients converging to zero
with rate faster than $T^{-1}$ will be set to zero with positive
asymptotic probability as can be seen in
Remark~\ref{rem:ms_conserv-multi}\ref{enum:ms_conserv_ge0-multi}. The
smallest detectable local-to-zero rate under conservative tuning
therefore remains $T^{-1}$.
Remark~\ref{rem:ms_conserv-multi}\ref{enum:ms_conserv_le1-multi}
illustrates that this tuning regime is indeed conservative, also in
the multivariate case.

For the consistently tuned case, a meaningful interpretation again
requires the basic condition $T^{-2}\lambda_T \to 0$, which we assume
to hold throughout this section. As in the univariate case,
Theorem~\ref{thm:ms-multi}\ref{enum:ms_consist-multi} then reveals
that the estimator detects coefficients with local-to-zero rates of
order greater than $T^{-1}\lambda_T^{1/2}$ as non-zero with asymptotic
probability equal to one, while coefficients converging to zero with
rate faster than $T^{-1}\lambda_T^{1/2}$ will always be set to zero
with asymptotic probability equal to one. As under conservative
tuning, these properties only depend on the rate of the coefficient
under consideration and are not affected by the behavior of the
remaining components of $\beta_T$.  Hence, in the multivariate setting
the smallest detectable local-to-zero rate under consistent tuning
continues to be $T^{-1}\lambda_T^{1/2}$.

We turn to parameter estimation consistency in the following theorem.

\begin{theorem}[Parameter estimation] \label{thm:param_consist-multi}

Let $\{y_t\}_{t \in \mZ}$ and $\{x_t\}_{t \in \mZ}$ be generated
by~\eqref{eq:y}~and~\eqref{eq:x}, let $\{w_t\}_{t \in \mZ}$ satisfy
Assumption~\ref{ass:w1}, and assume $T^{-2}\lambda_T \to 0$.

\begin{enumerate}

\item \label{enum:param_consist-multi} It holds that $\betaAL - \beta_T = o_\mP(1)$.

\item \label{enum:rate_conserv-multi} If $\lambda_T \to \lambda_0$, $0
\leq \lambda_0< \infty$, then $T(\betaAL - \beta_T) = O_\mP(1)$.

\item \label{enum:rate_consist-multi}  If $\lambda_T \to \infty$, then  $\lambda_T^{-1/2}T(\betaAL -
\beta_T) = O_\mP(1)$.

\end{enumerate}

\end{theorem}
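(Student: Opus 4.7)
The strategy is to derive both rate statements from the optimality inequality $Q_T(\betaAL) \leq Q_T(\beta_T)$, where $Q_T(\cdot)$ denotes the adaptive LASSO objective in~\eqref{eq:ALASSO}. Part~\ref{enum:param_consist-multi} then follows immediately from~\ref{enum:rate_conserv-multi} and~\ref{enum:rate_consist-multi}: since $T^{-2}\lambda_T \to 0$ forces $T^{-1}\lambda_T^{1/2} = (T^{-2}\lambda_T)^{1/2} \to 0$, both rates imply consistency, and a subsequence argument handles arbitrary sequences $\lambda_T$ that do not cleanly fall into one of the two regimes.

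Expanding $Q_T(\betaAL) - Q_T(\beta_T) \leq 0$ and rearranging gives the basic inequality
\begin{equation*}
(\betaAL - \beta_T)' X_T (\betaAL - \beta_T) \leq 2 (\betaAL - \beta_T)' S_T + \lambda_T \sum_{j=1}^k |\hat\beta_j|^{-1}\bigl(|\beta_{T,j}| - |\betaALj|\bigr),
\end{equation*}
with $X_T \coloneqq \sum_t x_t x_t'$ and $S_T \coloneqq \sum_t x_t u_t$. Since $|\betaALj| \geq 0$, the penalty term on the right is bounded above by $\lambda_T \sum_j |\hat\beta_j|^{-1}|\beta_{T,j}|$. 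The three relevant quantities are then controlled as follows. By the FCLT~\eqref{eq:FCLT} and continuous mapping, $T^{-2}X_T \Rightarrow \zeta_{vv}$ with $\zeta_{vv} > 0$ a.s., so for every $\eta > 0$ there exists $c > 0$ such that $\lambda_{\min}(T^{-2}X_T) \geq c$ on an event of probability at least $1-\eta$ for $T$ large. The same FCLT gives $T^{-1}S_T = O_\mP(1)$. For the penalty, one establishes the uniform-in-$\beta_T$ bound $|\beta_{T,j}|/|\hat\beta_j| = O_\mP(1)$ whenever $\beta_{T,j} \neq 0$; this follows from the $T$-consistency~\eqref{eq:OLS} of OLS -- which holds uniformly in $\beta_T$ since the limit $\mathcal{Z}$ is free of $\beta_T$ -- by treating separately the cases where $T|\beta_{T,j}|$ is bounded or unbounded along a subsequence and invoking the absolute continuity of $\mathcal{Z}_j$. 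Summing over $j$ yields $\lambda_T \sum_j |\hat\beta_j|^{-1}|\beta_{T,j}| = O_\mP(\lambda_T)$.

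Inserting these bounds into the basic inequality and applying Cauchy--Schwarz to the cross term gives, on the event $\{\lambda_{\min}(T^{-2}X_T) \geq c\}$,
\begin{equation*}
c\, T^2 \|\betaAL - \beta_T\|^2 \leq C_1\, T\, \|\betaAL - \beta_T\| + C_2 \lambda_T,
\end{equation*}
where $C_1, C_2 = O_\mP(1)$. Writing $Z_T \coloneqq T \|\betaAL - \beta_T\|$ and solving the quadratic inequality $c Z_T^2 - C_1 Z_T - C_2 \lambda_T \leq 0$ yields $Z_T \leq O_\mP(1) + O_\mP(\lambda_T^{1/2})$. Under conservative tuning ($\lambda_T = O(1)$) this reduces to $Z_T = O_\mP(1)$, proving~\ref{enum:rate_conserv-multi}; under consistent tuning ($\lambda_T \to \infty$) the second summand dominates, giving $\lambda_T^{-1/2}Z_T = O_\mP(1)$ and hence~\ref{enum:rate_consist-multi}.

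The main technical obstacle is the uniform bound $|\beta_{T,j}|/|\hat\beta_j| = O_\mP(1)$ on the penalty weights. When $T|\beta_{T,j}|\to\infty$ this is immediate from $\hat\beta_j - \beta_{T,j} = o_\mP(\beta_{T,j})$, but when $T\beta_{T,j}$ is bounded one must exploit that $T\hat\beta_j = T\beta_{T,j} + T(\hat\beta_j - \beta_{T,j})$ converges in distribution to a random variable with no atoms, so that the probability of $|T\hat\beta_j|$ falling into the vanishing neighbourhood of zero where the ratio would blow up can be controlled uniformly in $\beta_T$. Once this is handled, the remainder of the argument reduces to the FCLT, the uniform $T$-consistency of OLS, and the solution of a scalar quadratic inequality.
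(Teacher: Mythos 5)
Your proof is correct, but it follows a genuinely different route from the paper. You work with the basic inequality $Q_T(\betaAL)\leq Q_T(\beta_T)$, centred at the true parameter, which forces you to control the random penalty weights: the key auxiliary step is the bound $|\beta_{T,j}|/|\hat\beta_j|=O_\mP(1)$ along arbitrary sequences $\beta_T$, obtained by splitting into the cases where $T|\beta_{T,j}|$ stays bounded or diverges (along subsequences) and using that $T\hat\beta_j$ converges to $\mathcal{Z}_j+b_j$, which has no atom at zero. The paper instead centres at the OLS estimator: it writes $\betaAL-\beta_T=(\betaAL-\hat\beta)+(\hat\beta-\beta_T)$ and proves, via the Karush--Kuhn--Tucker conditions (Lemma~\ref{lem:KKT}, following Amann and Schneider, 2023), the bound $(\betaAL-\hat\beta)'X'X(\betaAL-\hat\beta)\leq \tfrac{k}{2}\lambda_T$, which holds surely and does not involve the weights $|\hat\beta_j|^{-1}$ at all --- the trick being that either $|\betaALj-\hat\beta_j|\leq|\hat\beta_j|$, so the weight cancels, or the signs align and the corresponding term is nonpositive. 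Combining this with $\mu_{\min,T}^{-1}=O_\mP(1)$ for the smallest eigenvalue of $T^{-2}X'X$ and with $T(\hat\beta-\beta_T)=O_\mP(1)$ gives both rates immediately, and part \ref{enum:param_consist-multi} follows from \ref{enum:rate_conserv-multi} and \ref{enum:rate_consist-multi} exactly as in your argument. What each approach buys: the paper's KKT bound is deterministic and free of $\beta_T$, so uniformity over moving parameters is automatic and no distributional argument for the weights is needed; your basic-inequality route is the more classical LASSO argument and is self-contained, but it carries the extra technical burden of the weight-control lemma (the case analysis plus the no-atoms property of $\mathcal{Z}_j$), which is precisely the step the paper's lemma is designed to avoid. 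Also note that your master bound $T\|\betaAL-\beta_T\|=O_\mP(1)+O_\mP(\lambda_T^{1/2})$ holds for any admissible $\lambda_T$, so the subsequence argument for tuning sequences that fall in neither regime is not actually needed for part \ref{enum:param_consist-multi}.
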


Theorem~\ref{thm:param_consist-multi} shows that if $T^{-2}\lambda_T
\to 0$, the adaptive LASSO estimator is uniformly consistent also in
the multivariate case and its uniform convergence rate depends on the
tuning regime. Under conservative tuning, the estimator is rate-$T$
consistent, whereas under consistent tuning, the rate decreases to
$T\lambda_T^{-1/2}$, just as in the univariate case.

We now derive the limiting distribution of the adaptive LASSO
estimator under arbitrary sequences of $\beta_T$.

\begin{theorem}[Limiting distribution] \label{thm:ls_dist-multi}

Let $\{y_t\}_{t \in \mZ}$ and $\{x_t\}_{t \in \mZ}$ be generated
by~\eqref{eq:y}~and~\eqref{eq:x}, let $\{w_t\}_{t \in \mZ}$ satisfy
Assumption~\ref{ass:w1}, and assume $T^{-2}\lambda_T \to 0$.

\begin{enumerate}

\item \label{enum:ls_dist_conserv-multi} If $\lambda_T \to \lambda_0$,
$0 \leq \lambda_0 < \infty$ and $T\beta_T \to \beta_0 \in \mRquer^k$,
then $T(\betaAL - \beta_T) \Rightarrow \argmin_{z \in \mR^k} V_{\beta_0}^c(z)$,
where 
$$
V_{\beta_0}^c(z) \coloneqq z'\zeta_{vv}^cz - 2z'\left(\int_0^1 J_v^c(r)dB_u(r) 
+ \Delta_{vu}\right) + \lambda_0 \sum_{j=1}^k A_j(z_j,\beta_{0,j})
$$
and
$$
A_j(z_j,\beta_{0,j}) \coloneqq \begin{cases}
0 & \text{ if } |\beta_{0,j}| = \infty \text{ or } z_j = 0  \\
\frac{|z_j|}{|\mathcal{Z}^c_j|} & \text{ if } \beta_{0,j} = 0 \text{ and } z_j\neq 0 \\
\frac{|\beta_{0,j} + z_j| - |\beta_{0,j}|}{|\beta_{0,j} + \mathcal{Z}^c_j|} & \text{ otherwise.}
\end{cases}
$$

\item \label{enum:ls_dist_consist-multi} If $\lambda_T \to \infty$ and
$\lambda_T^{-1/2}T\beta_T \to \tilde\beta_0 \in \mRquer^k$, then
$\lambda_T^{-1/2}T(\betaAL - \beta_T) \Rightarrow \argmin_{z \in \mR^k}
\tilde{V}_{\tilde\beta_0}^c(z)$, where 
\begin{align*}
\tilde{V}_{\tilde\beta_0}^c(z) \coloneqq z'\zeta_{vv}^cz 
+ \sum_{j=1}^k \tilde{A}_j(z_j,\tilde\beta_{0,j})
\end{align*}
and
$$
\tilde{A}_j(z_j,\tilde\beta_{0,j}) \coloneqq \begin{cases}
0 & \text{ if } |\tilde\beta_{0,j}| = \infty \text{ or } z_j = 0 \\
\infty & \text{ if }\tilde\beta_{0,j} = 0 \text{ and } z_j \neq 0 \\
\frac{|z_j + \tilde\beta_{0,j}|}{|\tilde\beta_{0,j}|} - 1 & \text{ otherwise.}
\end{cases}
$$

\item \label{enum:ls_dist_consist_rateT-multi} If $\lambda_T \to
\infty$, $T\beta_T \to \beta_0 \in \mRquer^k$, and
$\lambda_T^{-1}T\beta_T \to \bar\beta_0 \in \mRquer^k$, then
$T(\betaAL - \beta_T) \Rightarrow \argmin_{z \in \mR^k} \bar{V}_{\bar\beta_0}^c(z)$,
where 
$$
\bar{V}_{\bar\beta_0}^c(z) \coloneqq z'\zeta_{vv}^cz 
- 2z'\left(\int_0^1 J_v^c(r)dB_u(r) + \Delta_{vu}\right) 
+ \sum_{j=1}^k \bar{A}_j(z_j,\beta_{0,j},\bar\beta_{0,j})
$$
and
$$
\bar{A}_j(z_j,\beta_{0,j},\bar{\beta}_{0,j}) \coloneqq \begin{cases}
0 & \text{ if } |\bar\beta_{0,j}| = \infty \text{ or } z_j = 0 \\
\infty & \text{ if } \bar\beta_{0,j} = \beta_{0,j} = 0  \text{ and } z_j \neq 0 \\
\sign(z_j + 2\beta_{0,j})\sign(z_j)\infty & \text{ if } \bar\beta_{0,j} = 0, \, 0 < |\beta_{0,j}| < \infty, 
\text{ and } z_j \neq 0 \\
\sign(z_j)\sign(\beta_{0,j})\infty & \text{ if } \bar\beta_{0,j} = 0, \, |\beta_{0,j}| = \infty, 
\text{ and } z_j \neq 0 \\
\frac{-\sign(\bar\beta_{0,j})z_j}{|\bar\beta_{0,j}|} & \text{ otherwise.}
\end{cases}
$$
\end{enumerate}

\end{theorem}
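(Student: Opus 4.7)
My plan is to prove all three parts via a unified argmin argument along the lines of Knight and Fu (2000) and Geyer (1994), adapted to the unit-root setting. Writing the adaptive LASSO objective as $Q_T(b) \coloneqq \sum_{t=1}^T(y_t-x_t'b)^2 + \lambda_T\sum_{j=1}^k|\hat\beta_j|^{-1}|b_j|$, the rescaled estimation error $c_T(\betaAL-\beta_T)$ is the unique minimizer over $z\in\mR^k$ of the convex random function
\begin{align*}
\Phi_T^{c_T}(z) &\coloneqq Q_T(\beta_T + z/c_T) - Q_T(\beta_T) \\
&= -\frac{2z'}{c_T}\sum_{t=1}^T x_t u_t + \frac{z'}{c_T^2}\Bigl(\sum_{t=1}^T x_t x_t'\Bigr) z \\
&\quad{}+ \lambda_T\sum_{j=1}^k |\hat\beta_j|^{-1}\bigl(|\beta_{T,j} + z_j/c_T| - |\beta_{T,j}|\bigr).
\end{align*}
I would take $c_T = T$ for parts~\ref{enum:ls_dist_conserv-multi} and~\ref{enum:ls_dist_consist_rateT-multi}, and $c_T = \lambda_T^{-1/2}T$ for part~\ref{enum:ls_dist_consist-multi} (additionally dividing $\Phi_T^{c_T}$ by $\lambda_T$, which leaves the argmin unchanged). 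The argmin continuous mapping theorem (CMT) for convex stochastic processes then reduces each statement to (i)~identifying the finite-dimensional limit of $\Phi_T^{c_T}$ and (ii)~checking a.s.~uniqueness of its minimizer.

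The quadratic and linear parts of $\Phi_T^{c_T}$ are handled directly via~\eqref{eq:FCLT}: jointly, $T^{-2}\sum_t x_tx_t' \dlim \int_0^1 B_v B_v'\dd r$ and $T^{-1}\sum_t x_t u_t \dlim \int_0^1 B_v\dd B_u + \Delta_{vu}$, while~\eqref{eq:OLS} gives $T\hat\beta_j = T\beta_{T,j} + O_\mP(1)$. The $j$-th penalty summand can be rewritten as $\lambda_T(|T\beta_{T,j}+z_j|-|T\beta_{T,j}|)/(T|\hat\beta_j|)$ in parts~\ref{enum:ls_dist_conserv-multi} and~\ref{enum:ls_dist_consist_rateT-multi}, and, multiplying numerator and denominator by $T/\lambda_T^{1/2}$, as $(|\lambda_T^{-1/2}T\beta_{T,j}+z_j|-|\lambda_T^{-1/2}T\beta_{T,j}|)/(\lambda_T^{-1/2}T|\hat\beta_j|)$ in part~\ref{enum:ls_dist_consist-multi}. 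The branches of the case formulas for $A_j$, $\tilde A_j$, and $\bar A_j$ then emerge as separate limits in each regime of $\beta_{0,j}$ (respectively $\tilde\beta_{0,j}$, $\bar\beta_{0,j}$), with the key inputs being $T\hat\beta_j \dlim \beta_{0,j}+\mathcal{Z}_j$ when $|\beta_{0,j}|<\infty$ and $T|\hat\beta_j|/|T\beta_{T,j}|\plim 1$ when $|\beta_{0,j}|=\infty$, and analogously on the $\lambda_T^{-1/2}T$- and $\lambda_T^{-1}T$-scales. Uniqueness of the argmin is guaranteed by the a.s.~positive definiteness of $\int_0^1 B_v B_v'\dd r$, which makes the quadratic part of each limit strictly convex on any subspace pinned down by $+\infty$ penalty branches.

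The main obstacle is part~\ref{enum:ls_dist_consist_rateT-multi}, where $\bar A_j$ takes values in $\{-\infty,+\infty\}$ on $\{z_j\neq 0\}$ for some regimes, so off-the-shelf argmin CMTs for real-valued convex limits do not apply directly. I would address this with a two-step decomposition: first, use (a suitable extension of) Theorem~\ref{thm:ms-multi}\ref{enum:ms_consist-multi} to the scaling $\lambda_T^{-1}T\beta_T \to \bar\beta_0$ to conclude that coordinates with $\bar\beta_{0,j}=0$ are set to zero by $\betaAL$ with probability tending to one, matching the $+\infty$ branches of $\bar A_j$; second, restrict the argmin analysis to the remaining coordinates and separately identify the components that diverge to $\pm\infty$ by means of the sign structure of $\bar A_j$ and the KKT conditions of the adaptive LASSO. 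A secondary technicality in part~\ref{enum:ls_dist_consist-multi} is that $\tilde V$ is only weakly convex along coordinates with $|\tilde\beta_{0,j}|=\infty$ (where $\tilde A_j\equiv 0$), but uniqueness of the argmin is again supplied by strict convexity of the quadratic part.
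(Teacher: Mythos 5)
Your skeleton is the same as the paper's: recenter the objective, note that $c_T(\betaAL-\beta_T)$ minimizes the convex process $\Phi_T^{c_T}$ (divided by $\lambda_T$ in part~\ref{enum:ls_dist_consist-multi}), establish finite-dimensional weak convergence using the joint limit of $T^{-2}\sum_t x_tx_t'$ and $T^{-1}\sum_t x_tu_t$ together with the behavior of $T\hat\beta_j$, and pass to the argmin; the paper does exactly this, invoking Geyer's convex argmin results for part~\ref{enum:ls_dist_conserv-multi} and arguments in the style of Amann and Schneider (2023, Theorem~7) for parts~\ref{enum:ls_dist_consist-multi} and~\ref{enum:ls_dist_consist_rateT-multi}, where the pointwise limits are not finite on open sets.

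There are, however, two concrete problems in how you handle those infinite-valued limits. First, your step for part~\ref{enum:ls_dist_consist_rateT-multi} rests on the claim that an extension of Theorem~\ref{thm:ms-multi}\ref{enum:ms_consist-multi} to the scaling $\lambda_T^{-1}T\beta_T\to\bar\beta_0$ gives $\mP(\betaALj=0)\to 1$ whenever $\bar\beta_{0,j}=0$. That is false: the model-selection cut-off is $T^{-1}\lambda_T^{1/2}$, not $T^{-1}\lambda_T$. If, say, $\beta_{T,j}=T^{-1}\lambda_T^{3/4}$, then $\bar\beta_{0,j}=0$ but $|\tilde\beta_{0,j}|=\infty$, and the coordinate is detected as non-zero with probability tending to one. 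The only $+\infty$ branch of $\bar A_j$ is the case $\bar\beta_{0,j}=\beta_{0,j}=0$, where indeed $\tilde\beta_{0,j}=0$ and the existing Theorem~\ref{thm:ms-multi}\ref{enum:ms_consist-multi} already applies (no extension needed); the remaining $\bar\beta_{0,j}=0$ branches take the values $\pm\infty$ depending on signs, correspond to mass of $T(\betaAL-\beta_T)$ escaping to infinity rather than to selection of zero, and cannot be dispatched by a selection argument -- this is precisely the ``tedious case-by-case analysis'' the paper performs (note also that where $\bar A_j=-\infty$ on an interval, your appeal to a.s.\ uniqueness of the argmin of the limit criterion breaks down, so a direct analysis of the finite-sample criterion is unavoidable there). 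Second, you misdiagnose part~\ref{enum:ls_dist_consist-multi}: the difficulty is not weak convexity along coordinates with $|\tilde\beta_{0,j}|=\infty$ (the quadratic part covers that), but that $\tilde V$ equals $+\infty$ on the open set $\{z_j\neq 0\}$ for every $j$ with $\tilde\beta_{0,j}=0$, so the off-the-shelf convex argmin CMT you rely on does not apply there either; the two-step device you propose for part~\ref{enum:ls_dist_consist_rateT-multi} (selection of those coordinates via Theorem~\ref{thm:ms-multi}\ref{enum:ms_consist-multi}, then a restricted argmin argument) would repair this, but your write-up does not say so, whereas the paper handles it explicitly via the Amann--Schneider-type argument.
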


The limiting distributions presented in
Theorem~\ref{thm:ls_dist-multi} are defined implicitly. While we
cannot explicitly minimize $V_{\beta_0}^c(z)$,
$\tilde{V}_{\tilde\beta_0}^c(z)$, and $\bar{V}_{\bar\beta_0}^c(z)$ for
fixed $\beta_0$, $\tilde\beta_0$, and $\bar\beta_0$ in general, there
are a number of special cases worth pointing out. First, in line with
Theorem~\ref{thm:ls_dist-unif}\ref{enum:ls_dist_conserv-unif},
Theorem~\ref{thm:ls_dist-multi}\ref{enum:ls_dist_conserv-multi} shows
that under conservative tuning with either $\lambda_0 = 0$ or
$|\beta_{0,j}| = \infty$ for all $j = 1,\ldots,k$, the limiting
distribution of $T(\betaAL - \beta_T)$ is $\mathcal{Z}^c$ and thus
coincides with the limiting distribution of OLS. Second, in line with
Theorem~\ref{thm:ls_dist-unif}\ref{enum:ls_dist_consist-unif},
part~\ref{enum:ls_dist_consist-multi} shows that under consistent
tuning, the limit of $\lambda_T^{-1/2}T(\betaAL - \beta_T)$ collapses
to zero whenever $\tilde\beta_{0,j} = 0$ or $|\tilde\beta_{0,j}| = \infty$
for all $j = 1,\ldots,k$. Moreover, whenever the limit is stochastic,
all randomness originates from the regressors $x_t$, but not from the
errors $u_t$ (see also the discussion in the univariate case).
Finally, part~\ref{enum:ls_dist_consist_rateT-multi} reveals that
under consistent tuning, the limiting distribution of $T(\betaAL -
\beta_T)$ coincides with the limiting distribution of OLS if
$|\bar{\beta}_{0,j}| = \infty$ for all $j = 1,\ldots,k$. Conversely,
the limit of $T(\betaAL - \beta_T)$ collapses to zero whenever
$|\bar\beta_{0,j}| = 0$ for all $j = 1,\ldots,k$. Both results are
consistent with Remark~\ref{rem:ls_dist_consist_rateT-unif}.

\begin{remark} \label{rem:sequences2}
A similar comment as in Remark~\ref{rem:sequences} also applies to
Theorems~\ref{thm:ms-multi}--\ref{thm:ls_dist-multi}.
\end{remark}
	
The following proposition offers additional insights into the limiting
distribution of $\lambda_T^{-1/2}T(\betaAL - \beta_T)$ under
consistent tuning, as derived in
Theorem~\ref{thm:ls_dist-multi}\ref{enum:ls_dist_consist-multi}.

\begin{proposition}\label{prop:M-argmin}

For a fixed $\omega$ in the sample space of the underlying probability
space, the point $m = m(\omega) \in \mR^k$ is a minimizer of
$\tilde{V}_{\tilde\beta_0}^c(z)(\omega)$ if and only if
\begin{align*}
\begin{cases} m_j = 0 & \text{ if } \tilde\beta_{0,j} = 0 \\
(\zeta_{vv}^c(\omega)m)_j = 0 & \text{ if } |\tilde\beta_{0,j}| = \infty \\
(\zeta_{vv}^c(\omega)m)_j = - \frac{\sign(m_j+\tilde\beta_{0,j})}{2|\tilde\beta_{0,j}|} & 
\text{ if } 0 < |\tilde\beta_{0,j}| < \infty \text{ and } m_j\neq - \tilde\beta_{0,j} \\
|(\zeta_{vv}^c(\omega)m)_j| \leq \frac{1}{2|\tilde\beta_{0,j}|} & 
\text{ if } 0 < |\tilde\beta_{0,j}| < \infty \text{ and } m_j = - \tilde\beta_{0,j},
\end{cases}
\end{align*}
where $\zeta_{vv}^c$ is the same random matrix as in the definition of
$\tilde{V}_{\tilde\beta_0}^c(z)$ in
Theorem~\ref{thm:ls_dist-multi}\ref{enum:ls_dist_consist-multi}.

\end{proposition}

Building on Proposition~\ref{prop:M-argmin}, the following theorem
shows that the set of minimizers of $\tilde{V}_{\tilde\beta_0}^c(z)$ taken over over all $\tilde\beta_0 \in \mRquer^k$ is contained in a random set that does not depend on $\tilde\beta_0$.

\begin{theorem} \label{thm:setM}
Define the random set
$$
\Mc \coloneqq \left\{m\in\mR^k:m_j(\zeta_{vv}^cm)_j\leq \frac{1}{2},\,j=1,\ldots,k\right\},
$$
where $\zeta_{vv}^c$ is the same as in the definition of
$\tilde{V}_{\tilde\beta_0}^c(z)$ in
Theorem~\ref{thm:ls_dist-multi}\ref{enum:ls_dist_consist-multi}. Then,
$$
\underset{\tilde\beta_0 \in \mRquer^k}{\bigcup} \argmin_{z \in \mR^k}\tilde V_{\tilde\beta_0}^c(z)
\subseteq \Mc,
$$
where the set inclusion holds surely, i.e., for all $\omega$ in the
sample space of the underlying probability space.

\end{theorem}

\begin{remark} \label{rem:skorohod}
For later use, we will assume that the random matrix $\zeta_{vv}^c$ in
Theorem~\ref{thm:ls_dist-multi}\ref{enum:ls_dist_consist-multi}
satisfies $\lim_{T \to \infty} T^{-2} \sum_{t=1}^Tx_t(\omega)x_t'(\omega) =
\zeta_{vv}^c(\omega)$ for all $\omega$. This can be achieved using
Skorohod's representation theorem.
\end{remark}

Theorem~\ref{thm:setM} together with Remark~\ref{rem:skorohod} shows
that the union of limits of $\lambda_T^{-1/2} T(\betaAL - \beta_T)$
over all possible parameter sequences is contained in the set $\Mc$,
which is a compact set for each realization of $\zeta_{vv}^c$. This
observation allows us to construct uniformly valid confidence regions
centered at the adaptive LASSO estimator, as developed in the
following subsection. Before doing so, let us examine the set $\Mc$ in
more detail.

Clearly, all randomness in $\Mc$ stems from the regressors $x_t$,
i.e., no randomness arises from the regression errors $u_t$. For
expositional convenience, we focus on the pure unit root case ($c=0$):
There, $\zeta_{vv}^c$ has expectation $0.5\,\Omega_{vv}$, where
$\Omega_{vv}$, given by the $k \times k$ bottom-right block of
$\Omega$, the long-run covariance matrix of $\{v_t\}_{t \in
\mathbb{Z}}$. Hence, on average, $\Mc$ is given by $\{m\in \mR^k : m_j
(\Omega_{vv} m)_j \leq 1,\,j=1,\ldots,k \}$. Thus, on average, the set
$\Mc$ becomes smaller as the variability of $\{v_t\}_{t \in
\mathbb{Z}}$ increases. In the univariate regressor case,
$\Omega_{vv}$ reduces to the long-run variance of $v_t$. Normalizing
this variance to one implies that, on average, $\Mc$ coincides with
the interval $[-1,1]$. Consequently, in one dimension and on average,
we recover the same interval as \cite{PoetscherSchneider09} and
\cite{AmannSchneider23}. Note, however, that the corresponding sets in
these two papers are non-random, as the regressors are treated as
deterministic.

\subsection{A Universal Confidence Region Under Consistent Tuning}\label{subsec:ci}

We now use the observation from Theorem~\ref{thm:setM} to construct a
confidence region that has asymptotic coverage probability equal to
one. To this end, we define a ``slightly larger'' finite-sample
analogue $\Mhateps \coloneqq \{m\in\mR^k :
m_j((T^{-2}\sum_{t=1}^Tx_tx_t')m)_j \leq \frac{1}{2} +
\eps,\,j=1,\ldots,k\}$ of $\Mc$, where $\eps > 0$ but arbitrarily
small. The following theorem shows that this set can be used to
construct a confidence region based on the adaptive LASSO estimator
that asymptotically holds any prescribed coverage level.

\begin{theorem}[Confidence regions] \label{thm:confM}
Let $\{y_t\}_{t \in \mZ}$ and $\{x_t\}_{t \in \mZ}$ be generated
by~\eqref{eq:y}~and~\eqref{eq:x}, let $\{w_t\}_{t \in \mZ}$ satisfy
Assumption~\ref{ass:w1}, and let $T^{-2}\lambda_T \to 0$ and $\lambda_T
\to \infty$. Then
$$
\lim_{T \to \infty} \inf_{\beta \in \mR^k} \mP_\beta\left(\beta \in
\betaAL - T^{-1}\lambda_T^{1/2}\Mhateps\right) = 1
$$
for any $\eps > 0$.
\end{theorem}

Theorem~\ref{thm:confM} delivers valid confidence regions, as the
coverage probability holds \emph{uniformly} over the parameter space.
Technically, this is achieved by taking the infimum over the parameter
space \emph{prior} to letting $T$ tend to infinity. The key underlying
idea is that $\beta \in \{\betaAL - \lambda_T^{1/2}T^{-1}m : m\in
\Mhateps\}$ holds if and only if $\lambda_T^{-1/2}T(\betaAL - \beta)
\in \Mhateps$. The latter event can then be approximated by
$\argmin_{z \in \mR^k}\tilde V_{\tilde\beta_0}^c(z) \in \Mc$ for some
$\tilde\beta_0 \in \mRquer^k$, and this inclusion surely holds for any
$\tilde\beta_0$ by Theorem~\ref{thm:setM}.
%

In practice, we propose to construct the confidence region for $\beta$
based on $\Mhatzero$. Since no closed-form solution is available,
$\Mhatzero$ can be computed numerically following the the description
below, and the confidence region is then given by $\{\betaAL -
\lambda_T^{1/2}T^{-1}m : m\in \Mhatzero\}$.

If confidence intervals for individual components $\beta_j$ are of
interest, these can be obtained by 
$$
[\betaALj - \lambda_T^{1/2}T^{-1}\overline{m}_j, \betaALj -
\lambda_T^{1/2}T^{-1}\underline{m}_j],
$$ 
where $\overline{m}_j\coloneqq \max\{m_j : m\in \Mhatzero\}$ and
$\underline{m}_j\coloneqq \min\{m_j : m\in
\Mhatzero\}=-\overline{m}_j$. The quantity $\overline{m}_j$ can be
computed numerically using sequential quadratic programming without
explicitly constructing the set $\Mhatzero$. Specifically, we solve
the constrained maximization problem defining $\overline{m}_j$
directly. To reduce the risk of convergence to local optima, the
algorithm can be initialized from multiple random starting values,
retaining the largest value of $m_j$ obtained across these runs.

The coordinate-wise intervals can also be used to construct the set
$\Mhatzero$. Since $\Mhatzero$ is contained in the Cartesian product
of these intervals, i.e., $\Mhatzero \subseteq \mathcal{B} \coloneqq
\prod_{j=1}^{k}[\underline{m}_j, \overline{m}_j]$, one can either
construct a grid within $\mathcal{B}$ or, to speed up computation for
large $k$, randomly sample vectors from $\mathcal{B}$, retaining only
those that satisfy the constraints defining
$\Mhatzero$.\footnote{MATLAB code for computing $\overline{m}_j$ and
$\underline{m}_j$ for all $j=1,\ldots,k$, as well as $\Mhatzero$ for
general $k \times k$ symmetric positive definite matrices, is
available on the first author's personal website.}

It may appear curious to consider confidence regions whose asymptotic
coverage probability equals one. To explain this, recall that the
consistently tuned adaptive LASSO estimator exhibits the behavior that
only randomness stemming from the regressors $x_t$ can persist
asymptotically. This occurs because the scaling induced by the uniform
convergence rate is not sufficiently large for stochastic variation
from the error terms $u_t$ to survive in the limit. When the
regressors are treated as non-random, this even manifests in entirely
non-random limits which are contained in a compact set.
\cite{AmannSchneider23} show that this non-random set can be utilized
in a similar way to construct confidence regions with uniform
asymptotic coverage probability equal to one, but that any slightly
smaller region has asymptotic coverage probability equal to zero.

When the regressors $x_t$ are random, a slightly different picture
arises. All possible limits of $\lambda_T^{-1/2}T(\betaAL - \beta)$
are contained in a set that is compact for a fixed realization of the
limiting regressor matrix $\zeta_{vv}^c$. As a result, confidence
regions can be constructed that still achieve uniform asymptotic
coverage equal to one. However, it cannot be shown anymore that this
probability will drop to zero when the regions are made smaller, an
effect of $x_t$ being stochastic.

It may be possible to exploit the remaining randomness to construct
confidence regions with asymptotic coverage strictly less than one,
but we leave a formal investigation of this question for future
research. Nevertheless, the regions proposed here possess a key
advantage: they do not rely on the asymptotic distribution of either
the adaptive LASSO estimator or the rescaled regressors. As a
consequence, their construction avoids the need for any knowledge or
estimation of local-to-unity and long-run covariance parameters, as
well as for accounting for second-order bias terms in the limiting
distribution of the adaptive LASSO estimator. This universality
distinguishes our approach from existing methods and, to the best of
our knowledge, represents the first construction of LASSO-based
uniformly valid confidence regions in regressions with unit root or
local-to-unity regressors.

\section{Simulation Results}
\label{sec:simulation}

This section presents simulation results. Section~\ref{subsec:sim_fs}
analyzes the approximation quality of the asymptotic results to the
finite-sample distribution of the adaptive LASSO estimator, whereas
Section~\ref{subsec:sim_ci} focuses on the empirical coverage
probabilities of the uniform confidence regions.

\subsection{Finite-Sample Distributions}\label{subsec:sim_fs}

We investigate the approximation quality of our theoretical results to
the finite-sample distribution of the adaptive LASSO estimator under
both conservative and consistent tuning for various sequences
$\beta_T$ and different sample sizes. We also compare the
finite-sample distribution of the adaptive LASSO to that of OLS to
analyze how much it deviates from what is suggested by the oracle
property in empirically relevant scenarios where some coefficients are
small rather than exactly equal to zero.

We generate data according to~\eqref{eq:y} and~\eqref{eq:x} for the
univariate unit root case with
$[u_t,v_t]'\sim\mathcal{N}\left(0,I_2\right)$ i.i.d.~across $t$, and
present results for $\beta_T\in\{0.1\beta, \beta/T^{1/2}, \beta/T,
\lambda_T^{1/2}\beta/T\}$, with $\beta=1$,
$\lambda_T\in\{1,T^{1/4},T^{1/2},T\}$, and
$T\in\{25,50,100,250,1000\}$. All results are based on $10{,}000$
Monte Carlo replications.\footnote{To focus on the main effects, we
omit error serial correlation and regressor endogeneity from the
model. While our empirical findings remain qualitatively similar when
these features are included -- as well as under changes in the
variances of $u_t$ and $v_t$ or deviations from normality -- the
approximation quality of the limiting distributions derived in
Theorem~\ref{thm:ls_dist-unif} and
Remark~\ref{rem:ls_dist_consist_rateT-unif} may be reduced,
particularly in small- to medium-sized samples.} The choices for
$\beta_T$ cover the cases where $\beta_T$ is bounded away from zero,
converges to zero at rate $T^{1/2}$ \citep[the same rate as used in
the simulations in][]{LeeEtAl22}, converges to zero at rate $T^{-1}$
(the cut-off rate under conservative tuning), and converges to zero at
the slower rate $T^{-1}\lambda_T^{1/2}$ (the cut-off rate under
consistent tuning). With respect to the tuning parameter $\lambda_T$,
the choice $\lambda_T\equiv1$ leads to conservative tuning, while the
other three choices lead to consistent tuning. Importantly, in case
$\beta_T=\beta/T^{1/2}$, only $\lambda_T=T^{1/4}$ fulfills the
condition in \cite{LeeEtAl22} that $T^{-1/2}\lambda_T + \lambda_T^{-1}
\to 0$.

Separately for the four choices of $\beta_T$,
Figures~\ref{fig:densities_thm3_1}--\ref{fig:densities_thm3_4} display
the finite-sample distributions of $T(\betaAL - \beta_T)$ (under
conservative tuning) and $\lambda_T^{-1/2}T(\betaAL - \beta_T)$ (under
consistent tuning). The distributions consist of an atomic mass, drawn
at the height corresponding to the relative frequency $p$ of the event
$\betaAL = 0$, and a continuous component (rescaled to integrate to $1
- p$), representing the density of the non-zero
estimates.\footnote{All displayed densities are smoothed by using the
Gaussian kernel.} The figures also display the finite-sample
distribution of the OLS estimator, $T(\hat\beta - \beta_T)$, as well
as the case-specific limiting distribution of the adaptive LASSO
estimator from Theorem~\ref{thm:ls_dist-unif} evaluated at
$\beta_{0,T} \coloneqq T\beta_T$ (under conservative tuning) and
$\tilde\beta_{0,T} \coloneqq \lambda_T^{-1/2}T\beta_T$ (under
consistent tuning).\footnote{Densities of limiting distributions are
obtained by simulation, where Brownian motions are approximated by
normalized sums of $10{,}000$ i.i.d.~standard normal random variables
and stochastic integrals are approximated accordingly.} Replacing the
limiting parameters $\beta_0$ and $\tilde \beta_0$ with their
finite-sample counterparts allows us to account for the size of
$\beta_T$ relative to the sample size when evaluating the
approximation quality of the limiting distributions derived in
Theorem~\ref{thm:ls_dist-unif} for the finite-sample distributions.

Figure~\ref{fig:densities_thm3_1} presents the results for $\beta_T
\equiv 0.1\beta$. In general, the adaptive LASSO estimator identifies
the true coefficient as non-zero with probability approaching one as
sample size $T$ increases. However, by construction, the empirical
probability of incorrectly setting the non-zero coefficient to zero
increases with the order at which $\lambda_T$ diverges. Under
conservative tuning, the finite-sample distribution of the estimator
approaches that of the OLS estimator, with the two distributions
becoming virtually indistinguishable already for $T = 100$. Notably,
the limiting distribution derived in Theorem~\ref{thm:ls_dist-unif}(a)
evaluated at $\beta_{0,T}$ already provides a good approximation to
the finite-sample distribution of the adaptive LASSO estimator for
small $T$, e.g., $T = 25$. In the context of
Lemma~\ref{lem:fs_results} in Appendix~\ref{app:prep}, this indicates
that $\mathcal{Z}^c_T$ and $\zeta_{vv,T}$ converge quickly to their
asymptotic counterparts, such that the finite-sample distribution of
the procedure is effectively governed by $\beta_{0,T}$. Under
consistent tuning, the figure shows that the scaling factor implied by
the uniform rate causes all mass of the distribution of the adaptive
LASSO estimator to collapse at zero as $T$ increases. Moreover, the
limiting distribution derived in
Theorem~\ref{thm:ls_dist-unif}\ref{enum:ls_dist_consist-unif},
evaluated at $\tilde\beta_{0,T}$, approximates the finite-sample
distribution more accurately the larger the order of $\lambda_T$. For
$\lambda_T = T$, the approximation is already quite accurate for small
$T$.
	
We now turn to Figure~\ref{fig:densities_thm3_2}, which presents the
results for $\beta_T=\beta/T^{1/2}$. Under conservative tuning, the
adaptive LASSO estimator still identifies the smaller coefficient as
non-zero with probability approaching one as $T$ increases and its
distribution quickly approaches that of the OLS estimator. Under
consistent tuning, however, its properties now depend on the order of
$\lambda_T$. As before, for $\lambda_T\in\{T^{1/4}, T^{1/2}\}$, the
procedure correctly identifies the true coefficient as non-zero with
probability approaching one and the scaling factor implied by the
uniform rate causes all mass to collapse at zero as $T$ increases. For
$\lambda_T = T$, however, the estimator incorrectly sets the true
coefficient to zero with probability approaching $0.68$. As a result,
even for large $T$, the distribution of the estimator consists of both
an atomic and a continuous part. Nevertheless, the limiting
distribution derived in
Theorem~\ref{thm:ls_dist-unif}\ref{enum:ls_dist_consist-unif},
evaluated at $\tilde\beta_{0,T}$, provides a good approximation to the
finite-sample distribution even for small $T$.
	
Figure~\ref{fig:densities_thm3_3} presents the results for $\beta_T =
\beta/T$, which is the cut-off rate for detection of non-zero
coefficients under conservative tuning. As a result, under
conservative tuning, the adaptive LASSO estimator incorrectly sets the
true coefficient to zero with probability approaching $0.43$ and its
distribution -- which is approximated very well by the limiting
distribution derived in
Theorem~\ref{thm:ls_dist-unif}\ref{enum:ls_dist_conserv-unif},
evaluated at $\beta_{0,T}$, already for small $T$ -- consists of both
an atomic and a continuous part. Under consistent tuning, the
estimator incorrectly sets the true coefficient to zero with
probability approaching one, such that its distribution collapses to
an atomic part at $-\tilde\beta_{0,T}$.
	
Finally, Figure~\ref{fig:densities_thm3_4} presents the results for
$\beta_T = \sqrt{\lambda_T}\beta/T$, which is the cut-off rate for the
detection of non-zero coefficients under consistent
tuning.\footnote{Since $\lambda_T \equiv 1$ implies $\beta_T =
\beta/T$, the results under conservative tuning coincide with those
shown in Figure~\ref{fig:densities_thm3_3}.} As a result, under
consistent tuning, the adaptive LASSO estimator incorrectly sets the
true coefficient to zero with probability approaching $0.68$ and its
distribution consists of both an atomic and a continuous part. The
limiting distribution derived in
Theorem~\ref{thm:ls_dist-unif}\ref{enum:ls_dist_consist-unif},
evaluated at $\tilde\beta_{0,T}$, approximates the finite-sample
distribution of the estimator more accurately the larger the order of
$\lambda_T$. For $\lambda_T = T$, the approximation is already quite
accurate for small $T$.

Now, we analyze the finite-sample distribution of $T(\hat\beta_{\AL} -
\beta_T)$ under consistent tuning, which is presented in
Figures~\ref{fig:densities_rem5_1}--\ref{fig:densities_rem5_4} in
Appendix~\ref{app:simulation} separately for the four choices of
$\beta_T$ alongside the corresponding limits from
Remark~\ref{rem:ls_dist_consist_rateT-unif} and the finite-sample
distribution of the OLS estimator.

Figure~\ref{fig:densities_rem5_1} presents the results for $\beta_T
\equiv 0.1\beta$. As seen previously in
Figure~\ref{fig:densities_thm3_1}, the adaptive LASSO estimator
identifies the true coefficient as non-zero with probability
approaching one as $T$ increases. However, the order of $\lambda_T$
has a substantial impact on the estimator’s distribution. For
$\lambda_T \in \{T^{1/4}, T^{1/2}\}$, its distribution approaches that
of the OLS estimator. In contrast, for $\lambda_T = T$, the
distribution approaches to that of the OLS estimator shifted to the
left. This stochastically bounded shift significantly distorts the
distribution even for small $T$. As a result, when $\lambda_T = T$,
the adaptive LASSO estimator fails to exhibit the oracle property,
despite correctly identifying the true coefficient as non-zero with
probability approaching one.
	
We now turn to Figure~\ref{fig:densities_rem5_2}, which presents the
results for $\beta_T = \beta/T^{1/2}$. For $\lambda_T \in \{T^{1/4},
T^{1/2}\}$, the adaptive LASSO estimator identifies the true
coefficient as non-zero with probability approaching one as $T$
increases (c.f.~Figure~\ref{fig:densities_thm3_2}), but only for
$\lambda_T = T^{1/4}$ does its distribution approach that of OLS. For
$\lambda_T = T^{1/2}$, on the other hand, the distribution of the
procedure approaches the one of OLS shifted to the left and it thus
loses its oracle property. For $\lambda_T = T$, the adaptive LASSO
estimator incorrectly sets the true coefficient to zero with
probability approaching $0.68$ and its distribution consists of both
an atomic part and a continuous part. As $T$ increases, both the
location of the atomic part and the region where the continuous part
of the distribution has mass shift toward $-\infty$. While the
behavior of the estimator in case $\lambda_T = T^{1/4}$ is already
described in \citet{LeeEtAl22}, the cases $\lambda_T \in \{T^{1/2},
T\}$ are not covered by their results, but are in line with our
asymptotic results in Remark~\ref{rem:ls_dist_consist_rateT-unif}.
	
Figure~\ref{fig:densities_rem5_3} presents the results for
$\beta_T=\beta/T$. In this case, the adaptive LASSO estimator
incorrectly sets the true coefficient to zero with probability
approaching one (see also Figure~\ref{fig:densities_thm3_3}), causing
its distribution to collapse to an atomic part at $-\beta$. The
collapse occurs more rapidly the larger the order at which $\lambda_T$
diverges.
	
Finally, Figure~\ref{fig:densities_rem5_4} presents the results for
$\beta_T = \sqrt{\lambda_T}\beta/T$. As this is the cut-off rate for
the detection of non-zero coefficients under consistent tuning, the
adaptive LASSO estimator incorrectly sets the true coefficient to zero
with probability approaching $0.68$ (see also
Figure~\ref{fig:densities_thm3_4}), and its distribution consists of
both an atomic and a continuous part. As $T$ increases, both the
location of the atomic part and the region where the continuous part
of the distribution has mass shift toward $-\infty$.

Overall, the simulation results reveal that the finite-sample
distribution of the adaptive LASSO estimator can deviate substantially
from what is suggested by the oracle property, especially under
consistent tuning. By contrast, the limiting distributions derived in
under moving-parameter asymptotics capture the finite-sample
properties much more accurately. They not only provide reasonable
approximations for the absolutely continuous part of the estimator's
distribution but also convey information on the relative frequency
with which the coefficient is set to zero.

\subsection{Coverage Probabilities}\label{subsec:sim_ci} 

We now study the coverage probabilities of the confidence regions
introduced in Section~\ref{subsec:ci} and benchmark them against
confidence regions based on the oracle property.

Data are generated as in the previous subsection. In the univariate
case, the confidence region from Theorem~\ref{thm:confM} simplifies to
the interval $\betaAL \mp \sqrt{\lambda_T}/\sqrt{2\sum_{t=1}^T
x_t^2}$, which we label the ``Uniform CI''. The oracle-based interval
(``Oracle CI'') is given by $[\betaAL - q_{1-\alpha/2}/T, \betaAL -
q_{\alpha/2}/T]$, where $q_{\alpha}$ denotes the $\alpha$-quantile of
$\mathcal{Z}^c$ defined in Equation~\eqref{eq:OLS} in
Section~\ref{sec:setting}. The quantiles are obtained by simulation as
described in the previous subsection. We report results for
$\alpha=0.05$ and $\alpha=0.01$, corresponding to nominal $95\%$ and
$99\%$ oracle confidence intervals, respectively. In addition, we
consider the interval $\betaAL \mp \sqrt{\lambda_T}/T$, labeled
``asymptotic Uniform CI'', which replaces $T^{-2}\sum_{t=1}^T x_t^2$
in the definition of $\Mhatzero$ by the expectation of its limit
$\zeta_{vv}^c$, which is equal to 1 in this case.

For each interval, we compute the empirical coverage probability for
the true parameter value $\beta \in [-0.6,0.6]$ using an equidistant
grid with step size $0.01$. As the results are symmetric in $\beta$,
we report them only for $|\beta|$. Practitioners following the oracle
property are typically interested in confidence intervals only when
$\betaAL\neq 0$, as they assume $\beta=0$ otherwise. Accordingly, when
$\betaAL = 0$ we collapse the Oracle CI to the singleton $\{0\}$,
which covers the true parameter $\beta$ only if $\beta =
0$.\footnote{Results are qualitatively similar without this
restriction.} The Uniform CI is constructed the same way for all
values of $\betaAL$. As before, all results reported in this
subsection are based on $10{,}000$ Monte Carlo replications.

Figure~\ref{fig:coverage_lambda} reports coverage probabilities for
those choices of tuning parameters $\lambda_T$ from the previous
subsection that lead to consistent tuning, i.e.,
$\lambda_T\in\{T^{1/4},T^{1/2},T\}$. Since the oracle property
requires $T^{-1}\lambda_T+\lambda_T^{-1}\to 0$ (see, e.g.,
Corollary~\ref{cor:summary_fixed}), it does not hold for
$\lambda_T=T$. Consequently, the Oracle CI is asymptotically invalid
in this case, whereas the Uniform CI remains asymptotically valid. The
figure shows that the coverage probabilities of all confidence
intervals are close to one when $\beta=0$. However, for small
deviations of $\beta$ away from zero, the coverage probability of the
Oracle CI drops sharply, often to values below $0.5$, while the
coverage probability of the Uniform CI remains relatively stable. As
$\beta$ moves further away from zero, the coverage probability of the
Oracle CI eventually recovers. Nevertheless, it performs poorly in the
region of primary interest, namely when $\beta$ is small but non-zero.
More generally, we find that the higher the rate at which $\lambda_T$
diverges, or the larger the sample size $T$, the more stable the
coverage probability of the Uniform CI across values of $\beta$. When
$\lambda_T=T$, the asymptotically invalid Oracle CI can exhibit
coverage probabilities close to zero even in large samples, whereas
the Uniform CI remains asymptotically valid and performs well even in
small samples. Finally, the coverage probabilities of the asymptotic
Uniform CI are generally much lower than those of the Uniform CI. This
reflects the point already emphasized in Section~\ref{subsec:ci} that
the confidence sets must be constructed using the realized regressors
$x_t$, rather than quantities based on (limiting) distributional
properties only.

We next repeat the analysis after scaling $\lambda_T$ by a factor of
four in each case, leaving its rate of divergence unchanged.
Figure~\ref{fig:coverage_4lambda} presents the results. Scaling
$\lambda_T$ in this way further improves the coverage probability of
the Uniform CI across all sample sizes and divergence rates, while the
coverage probability of the Oracle CI deteriorates further. This
outcome reflects two opposing effects of increasing $\lambda_T$.
First, more estimates $\betaAL$ are shrunk to zero, which worsens the
performance of the Oracle CI. Second, the width of the Uniform CI
increases, which improves its coverage.

Finally, we relax the assumption of i.i.d.\ standard normal regression
errors and regressor innovations and examine the performance of the
confidence intervals under error serial correlation and regressor
endogeneity. Specifically, we generate $u_t=\rho_1
u_{t-1}+e_t+\rho_2\nu_t$ and $v_t=\nu_t+0.5\nu_{t-1}$, where
$[e_t,\nu_t]'\sim\mathcal{N}(0,(4/9)I_2)$ i.i.d.\ across
$t$.\footnote{Rescaling the variance ensures that $\Omega_{vv}=1$ and
thus simplifies the comparison with the previous results.} The
parameters $\rho_1$ and $\rho_2$ govern the degree of error serial
correlation and regressor endogeneity, respectively.
Figure~\ref{fig:coverage_4lambda_rho6} reports results for
$\rho_1=\rho_2=0.6$ and the scaled tuning parameters.\footnote{When
simulating the quantiles of $\mathcal{Z}^c$ for the oracle intervals,
we use the true long-run covariance parameters to capture the
dependence structure in the data. In applications, these quantities
are unknown and must be estimated. In the presence of local-to-unity
regressors this approach becomes infeasible as $\mathcal{Z}^c$ also
depends on the local-to-unity parameters which are not consistently
estimable. In contrast, the uniform confidence intervals are
unaffected by these issues.} The figure shows that the previous
findings remain intact in the presence of serial correlation and
regressor endogeneity.

To complete the analysis, Tables~\ref{tab:CI_length_iid} and
\ref{tab:CI_length_corr} in Appendix~\ref{app:simulation} report the
lengths of the confidence intervals underlying
Figures~\ref{fig:coverage_lambda}--\ref{fig:coverage_4lambda_rho6}. As
expected, the Uniform CI is typically longer than the Oracle CI, but
the difference is usually moderate and diminishes further as the
sample size increases. In some instances, however, the Uniform CI can
be substantially longer than the Oracle CI, but this occurs either for
unfavorable realizations of $x_t$ or in settings where the Oracle CI
is asymptotically invalid and exhibits coverage probabilities close to
zero. We therefore conclude that the uniform confidence region
proposed in Section~\ref{subsec:ci} constitutes a useful tool for
quantifying uncertainty around adaptive LASSO estimates under
consistent tuning in empirical applications.

\begin{figure}[ht]
\begin{center}
	\caption*{$\lambda_T \equiv 1$}
	\begin{subfigure}{0.2\textwidth}
		\centering
		\caption*{$T = 25$}
		\vspace{-1.5ex}
		\includegraphics[trim={0cm 0cm 0.50cm 0.5cm},width=\textwidth,clip]
		{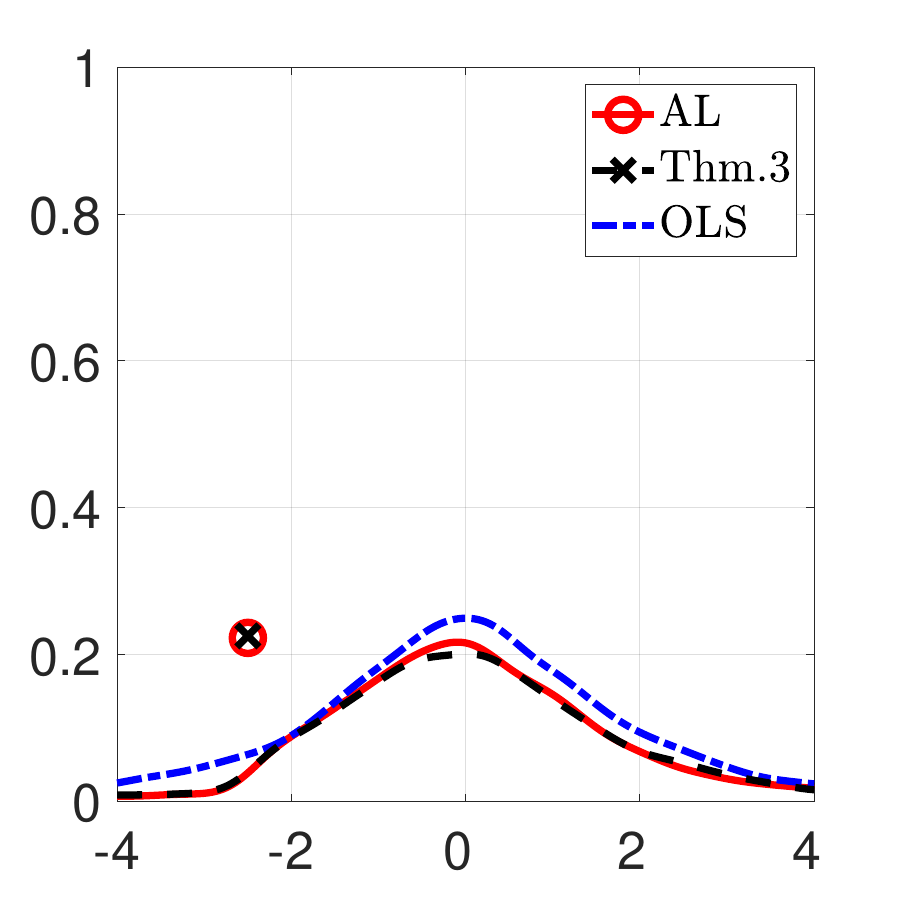}
	\end{subfigure}\begin{subfigure}{0.2\textwidth}
		\centering
		\caption*{$T = 50$}
		\vspace{-1.5ex}
		\includegraphics[trim={0cm 0cm 0.50cm 0.5cm},width=\textwidth,clip]
		{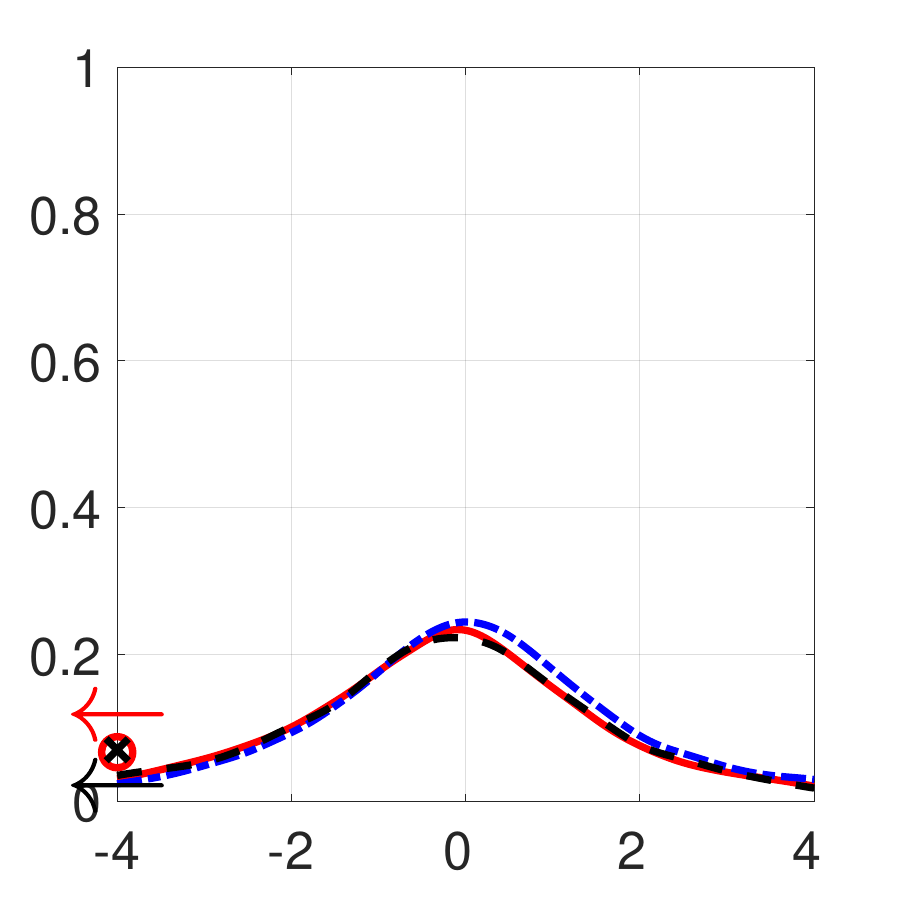}
	\end{subfigure}\begin{subfigure}{0.2\textwidth}
		\centering
		\caption*{$T = 100$}
		\vspace{-1.5ex}
		\includegraphics[trim={0cm 0cm 0.50cm 0.5cm},width=\textwidth,clip]
		{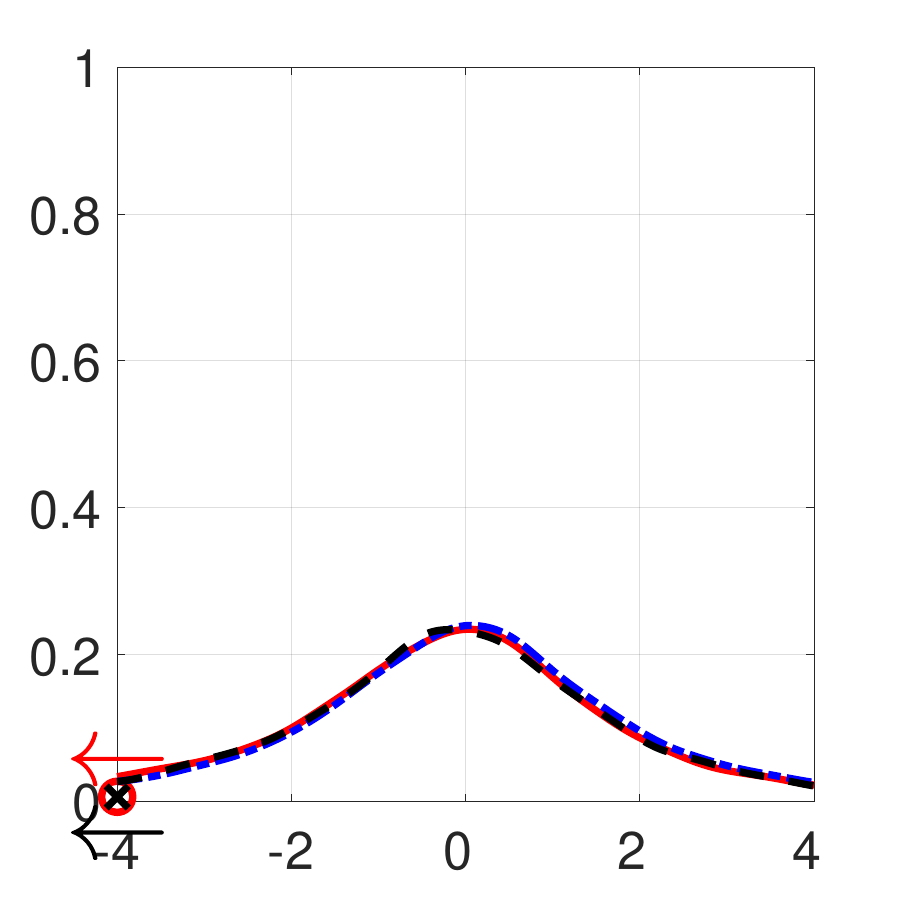}
	\end{subfigure}\begin{subfigure}{0.2\textwidth}
		\centering
		\caption*{$T = 250$}
		\vspace{-1.5ex}
		\includegraphics[trim={0cm 0cm 0.50cm 0.5cm},width=\textwidth,clip]
		{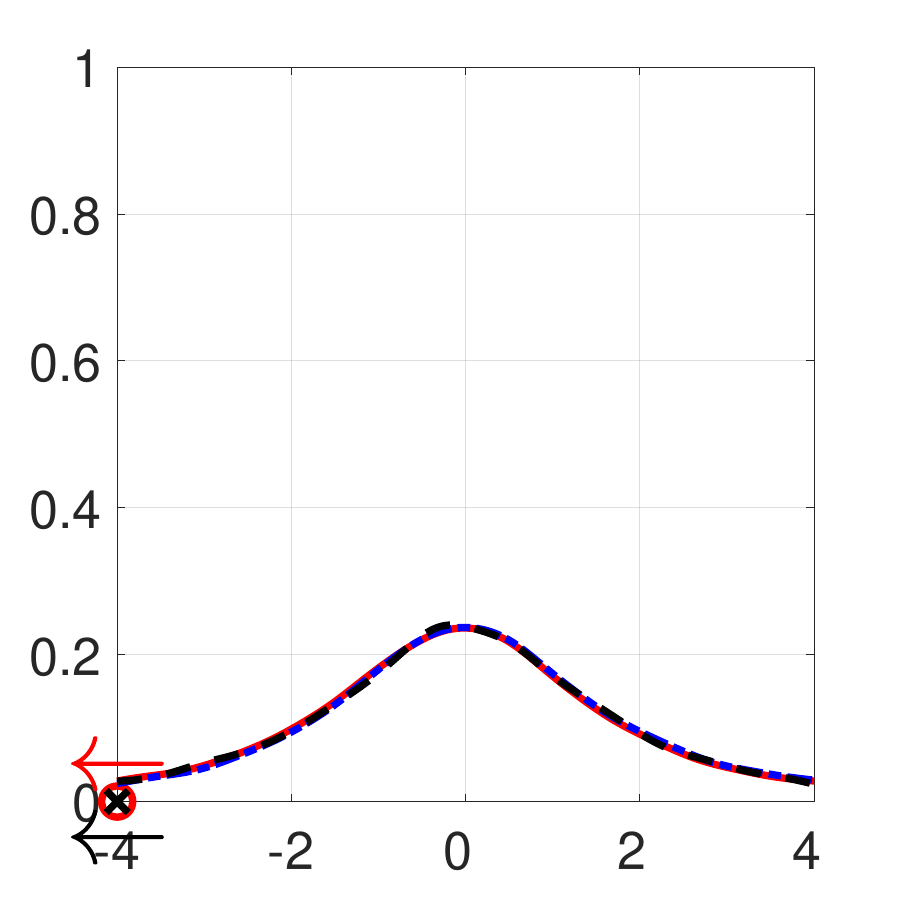}
	\end{subfigure}\begin{subfigure}{0.2\textwidth}
		\centering
		\caption*{$T = 1000$}
		\vspace{-1.5ex}
		\includegraphics[trim={0cm 0cm 0.50cm 0.5cm},width=\textwidth,clip]
		{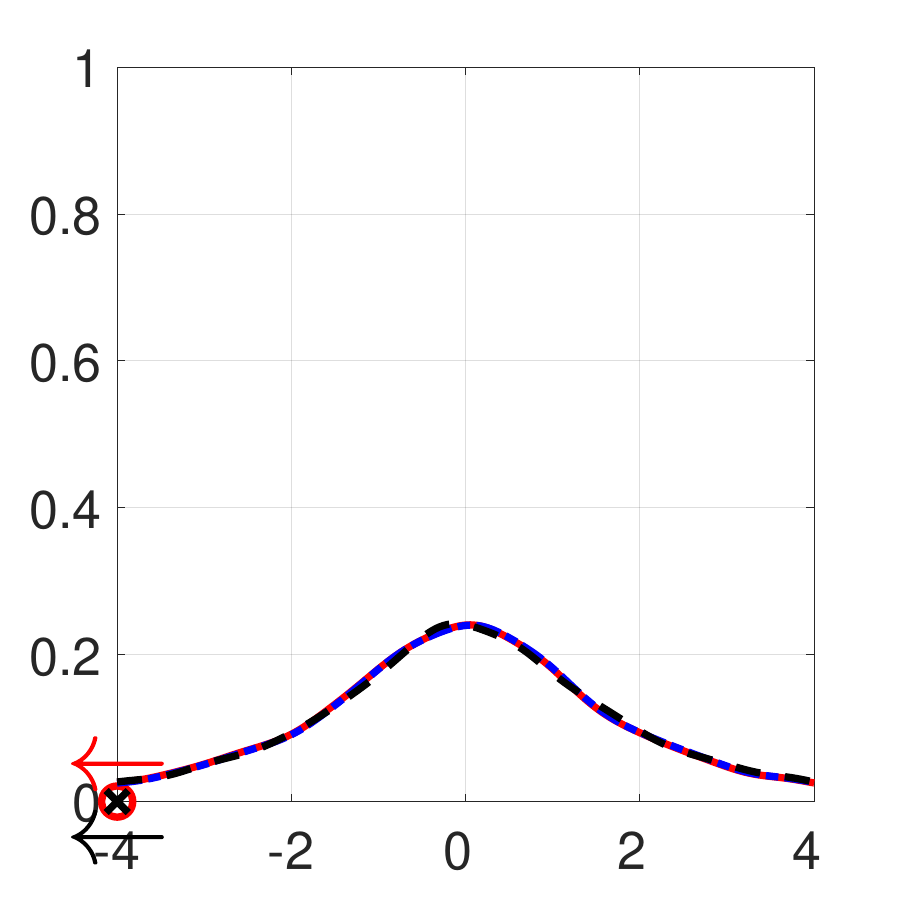}
	\end{subfigure}
	
	\caption*{$\lambda_T = T^{1/4}$}
	\vspace{-1.5ex}
	\begin{subfigure}{0.2\textwidth}
		\centering
		\includegraphics[trim={0cm 0cm 0.50cm 0.5cm},width=\textwidth,clip]
		{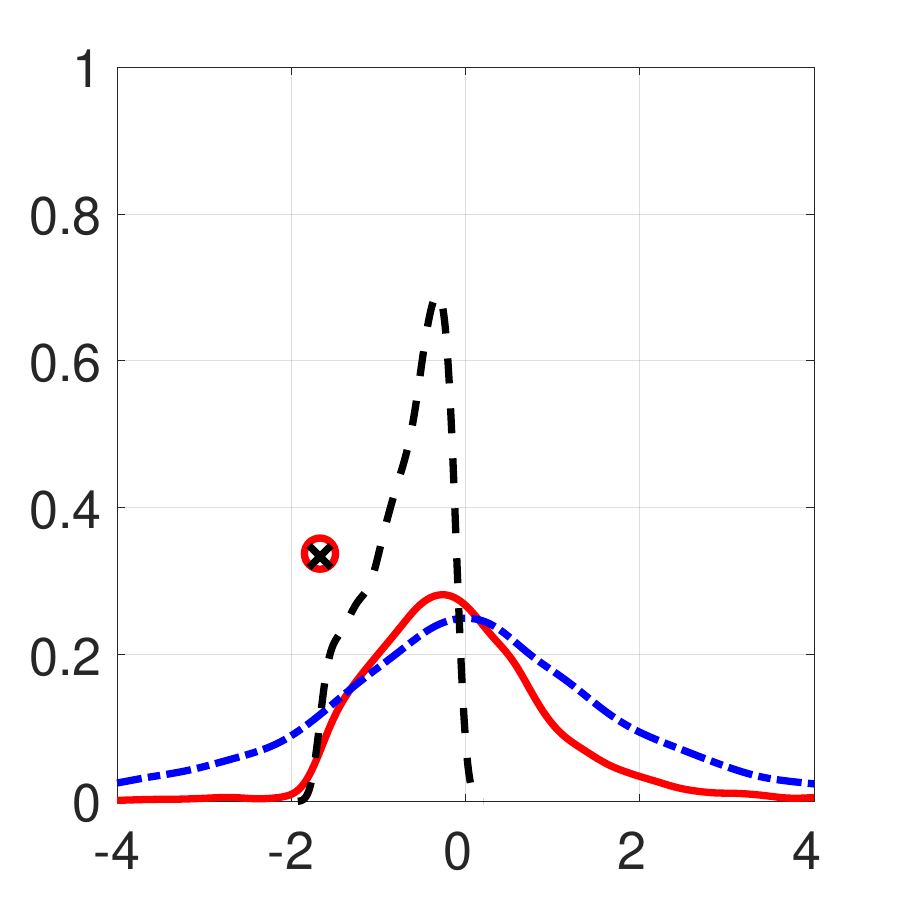}
	\end{subfigure}\begin{subfigure}{0.2\textwidth}
		\centering
		\includegraphics[trim={0cm 0cm 0.50cm 0.5cm},width=\textwidth,clip]
		{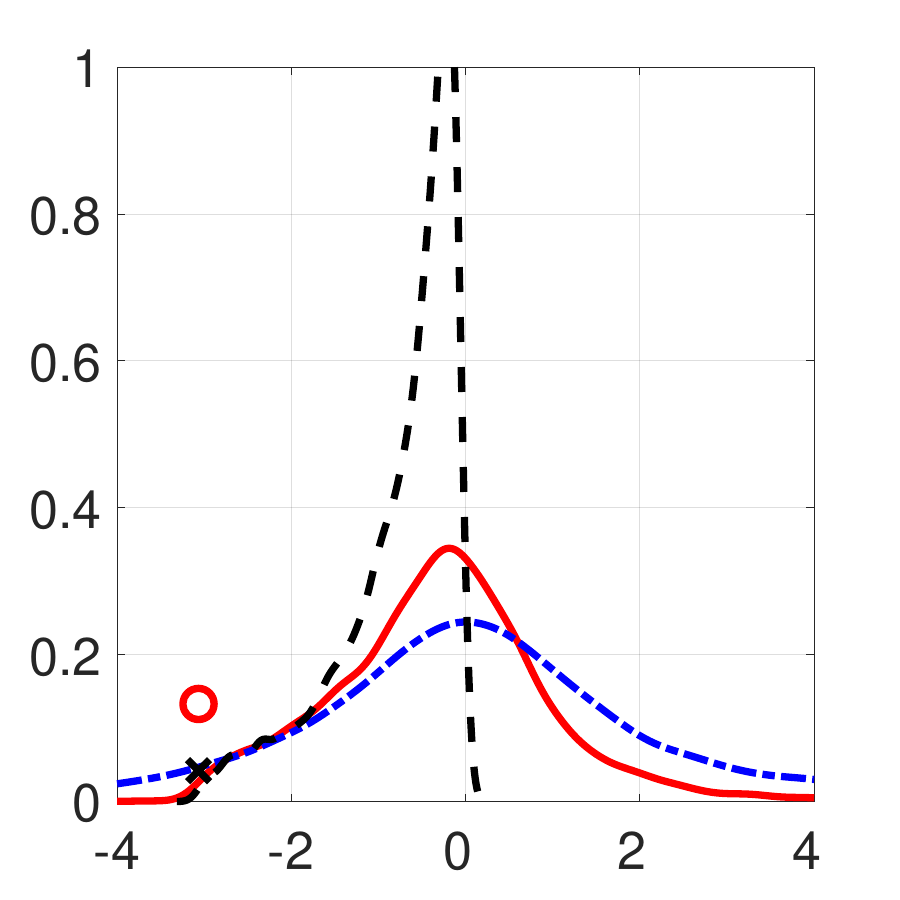}
	\end{subfigure}\begin{subfigure}{0.2\textwidth}
		\centering
		\includegraphics[trim={0cm 0cm 0.50cm 0.5cm},width=\textwidth,clip]
		{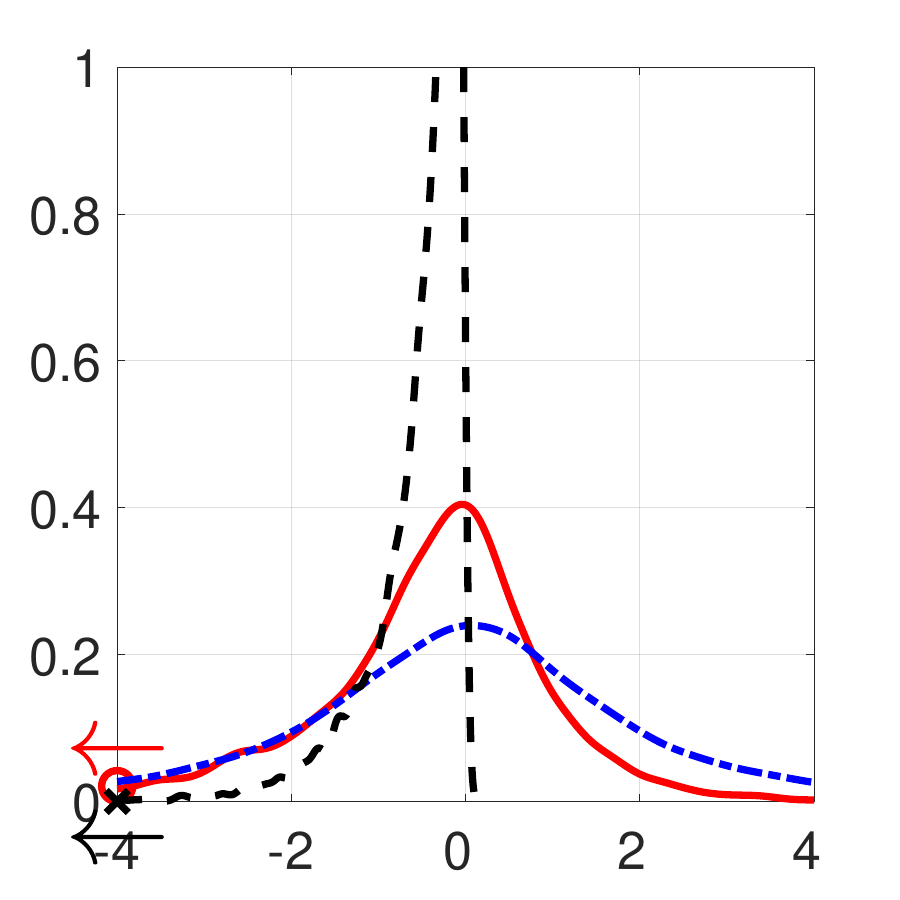}
	\end{subfigure}\begin{subfigure}{0.2\textwidth}
		\centering
		\includegraphics[trim={0cm 0cm 0.50cm 0.5cm},width=\textwidth,clip]
		{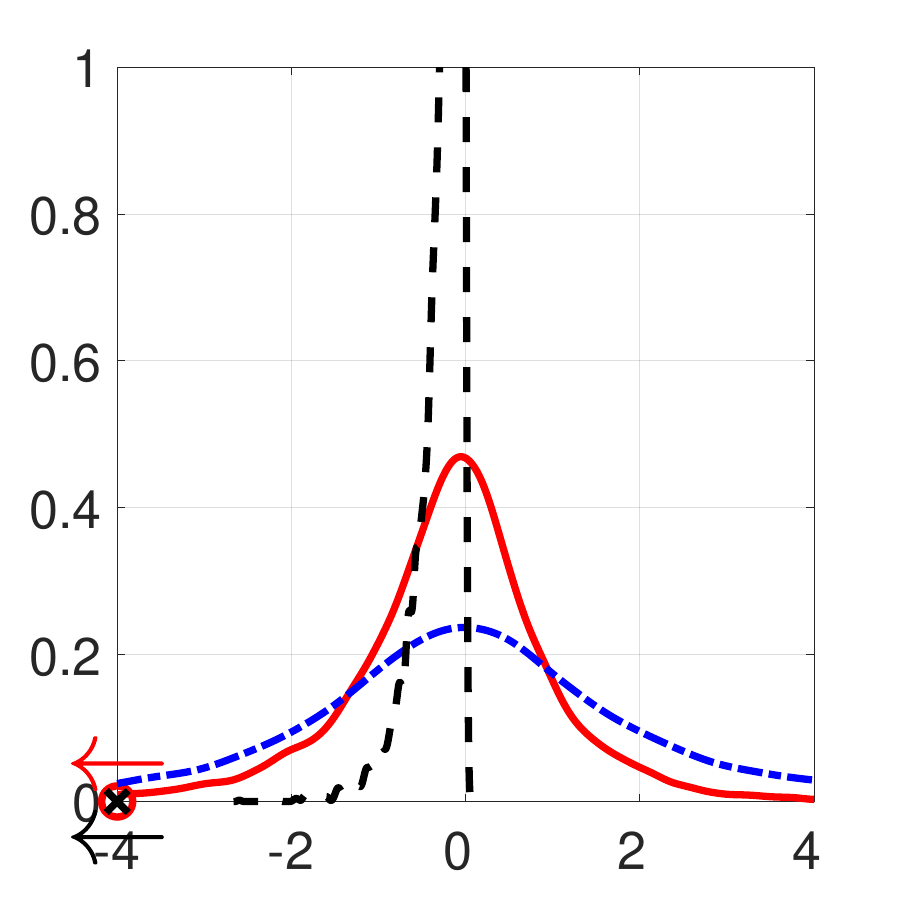}
	\end{subfigure}\begin{subfigure}{0.2\textwidth}
		\centering
		\includegraphics[trim={0cm 0cm 0.50cm 0.5cm},width=\textwidth,clip]
		{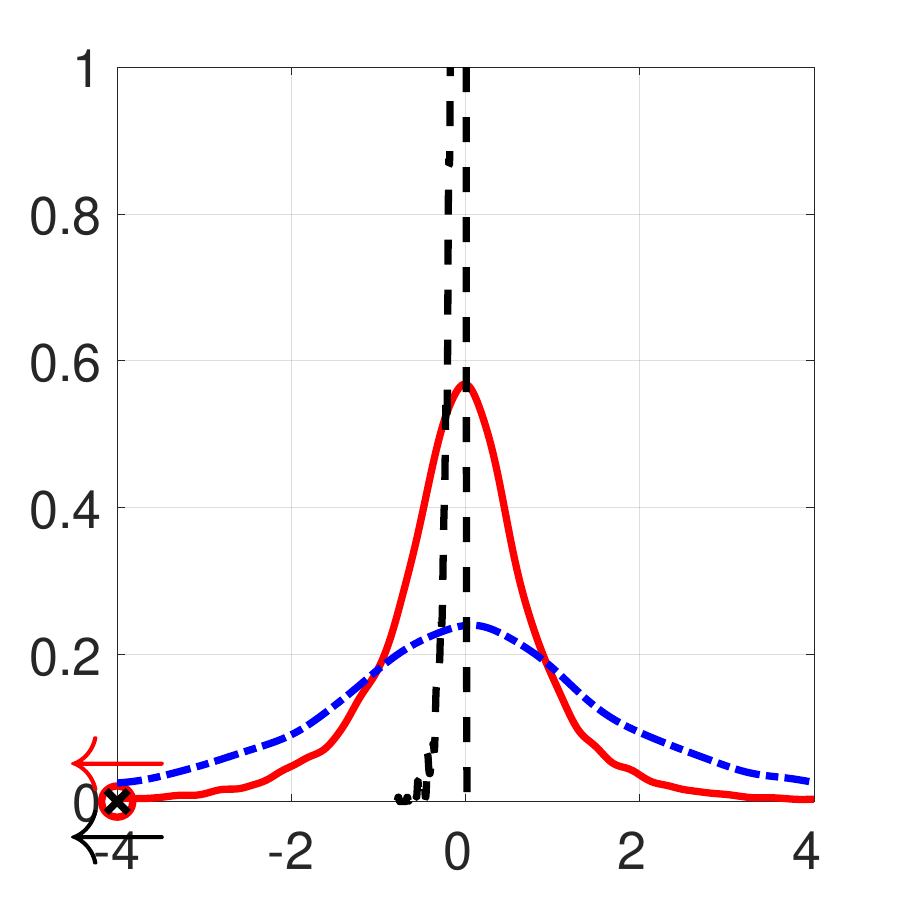}
	\end{subfigure}
	
	\caption*{$\lambda_T = T^{1/2}$}
	\vspace{-1.5ex}
	\begin{subfigure}{0.2\textwidth}
		\centering
		\includegraphics[trim={0cm 0cm 0.50cm 0.5cm},width=\textwidth,clip]
		{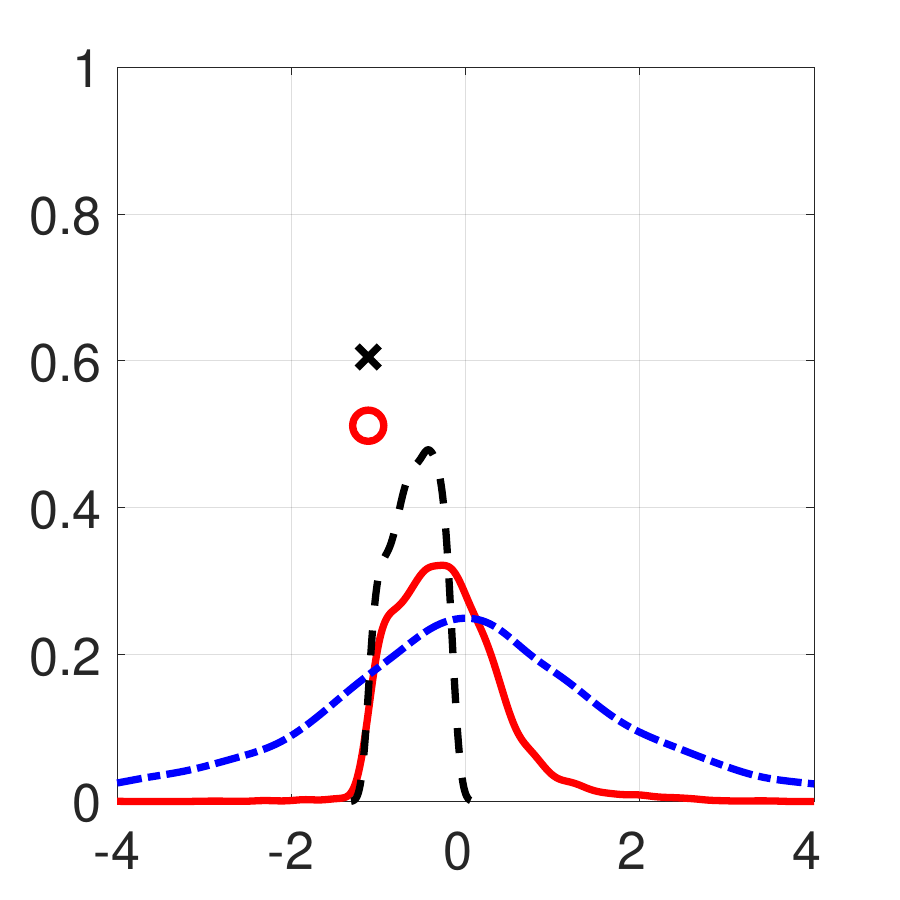}
	\end{subfigure}\begin{subfigure}{0.2\textwidth}
		\centering
		\includegraphics[trim={0cm 0cm 0.50cm 0.5cm},width=\textwidth,clip]
		{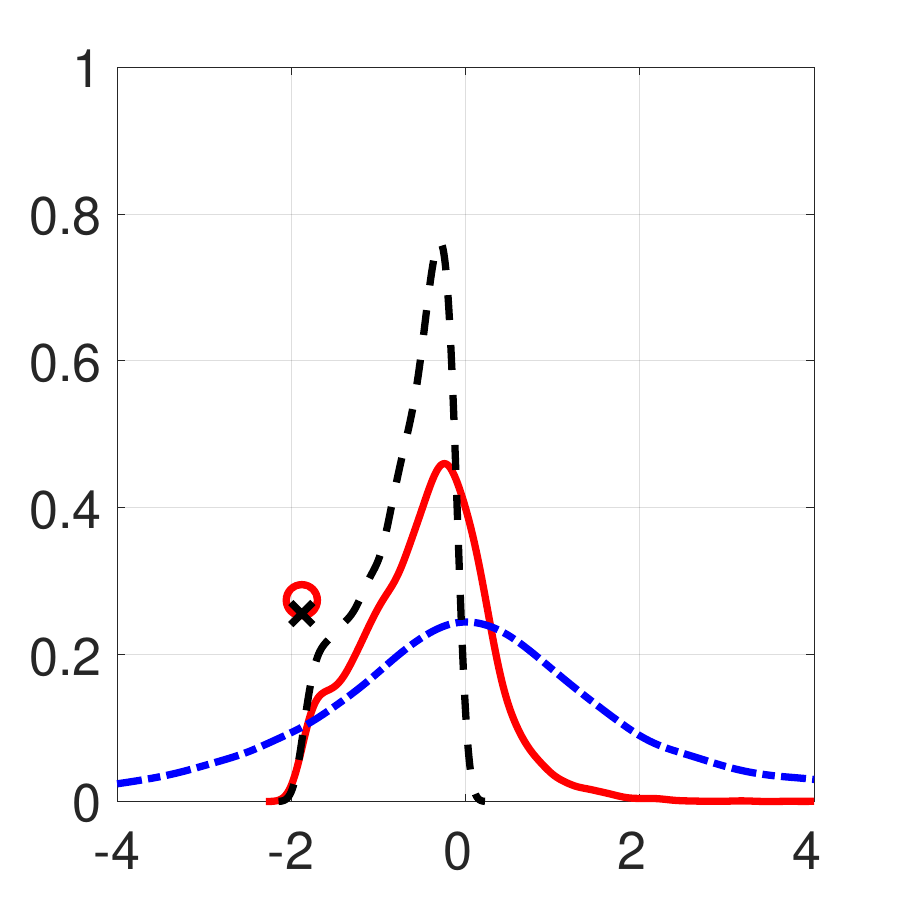}
	\end{subfigure}\begin{subfigure}{0.2\textwidth}
		\centering
		\includegraphics[trim={0cm 0cm 0.50cm 0.5cm},width=\textwidth,clip]
		{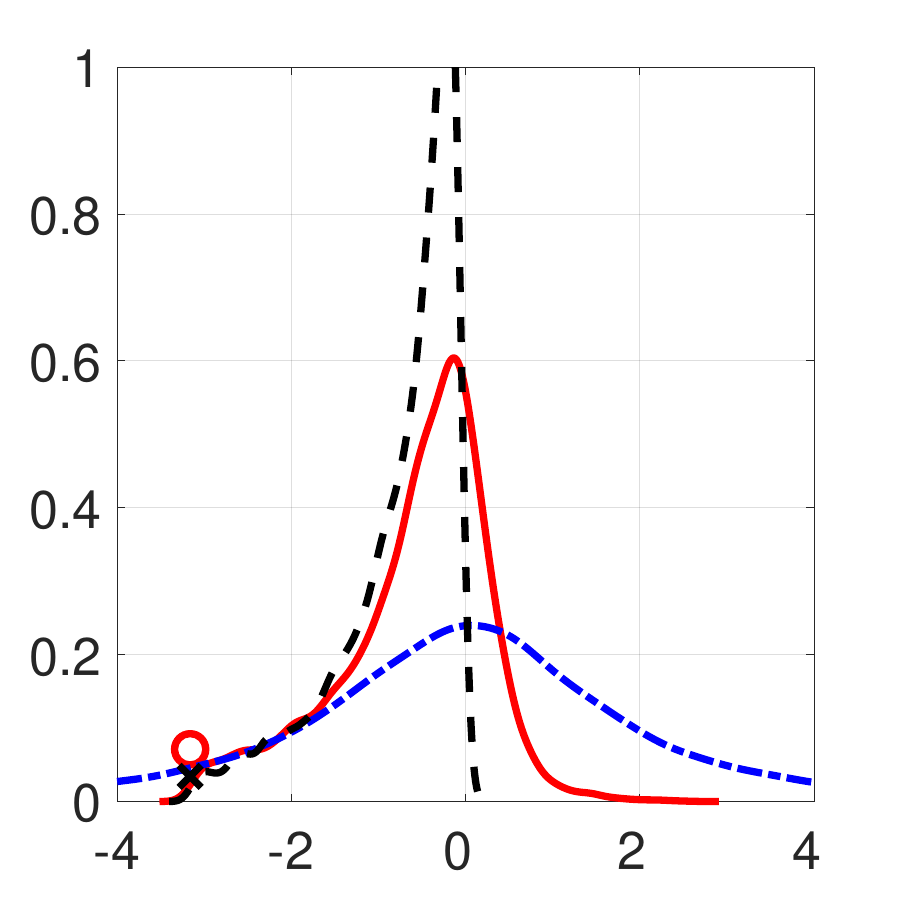}
	\end{subfigure}\begin{subfigure}{0.2\textwidth}
		\centering
		\includegraphics[trim={0cm 0cm 0.50cm 0.5cm},width=\textwidth,clip]
		{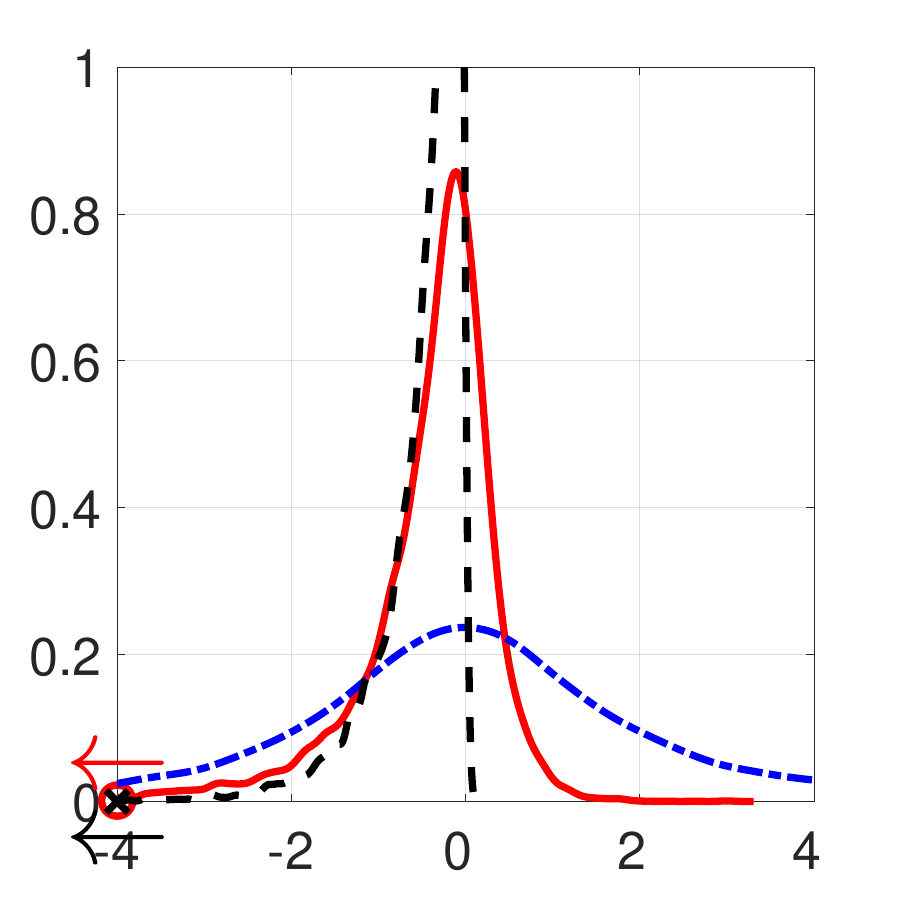}
	\end{subfigure}\begin{subfigure}{0.2\textwidth}
		\centering
		\includegraphics[trim={0cm 0cm 0.50cm 0.5cm},width=\textwidth,clip]
		{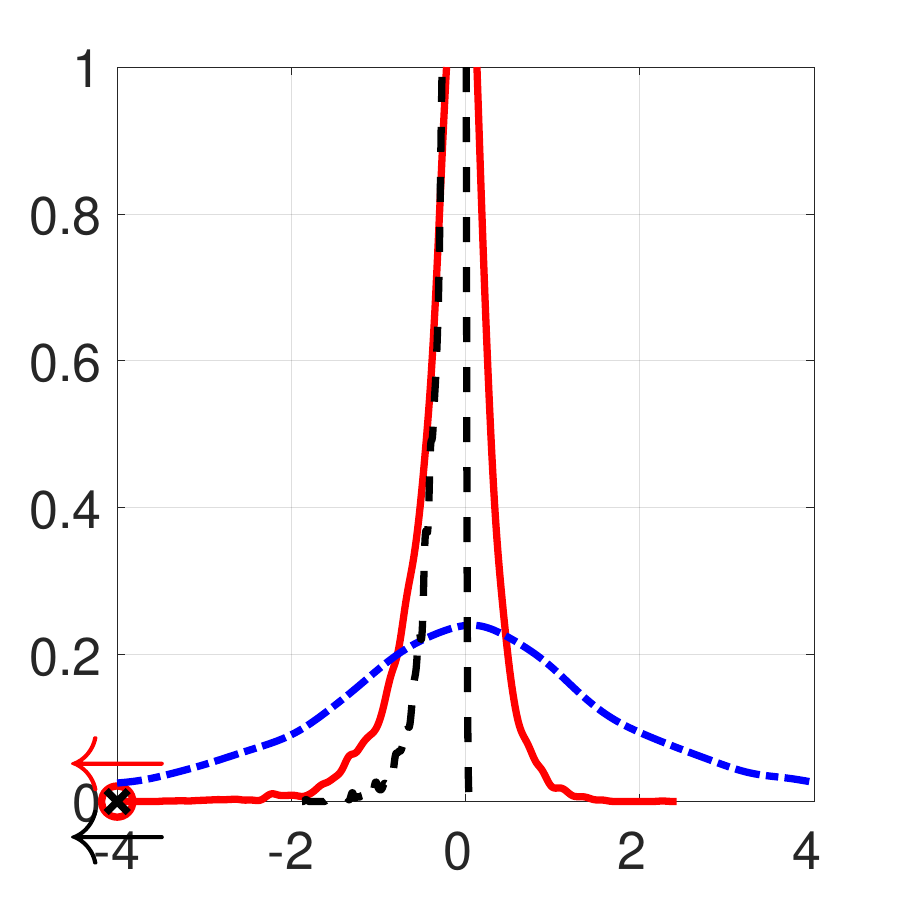}
	\end{subfigure}
	
	\caption*{$\lambda_T = T$}
	\vspace{-1.5ex}
	\begin{subfigure}{0.2\textwidth}
		\centering
		\includegraphics[trim={0cm 0cm 0.50cm 0.5cm},width=\textwidth,clip]
		{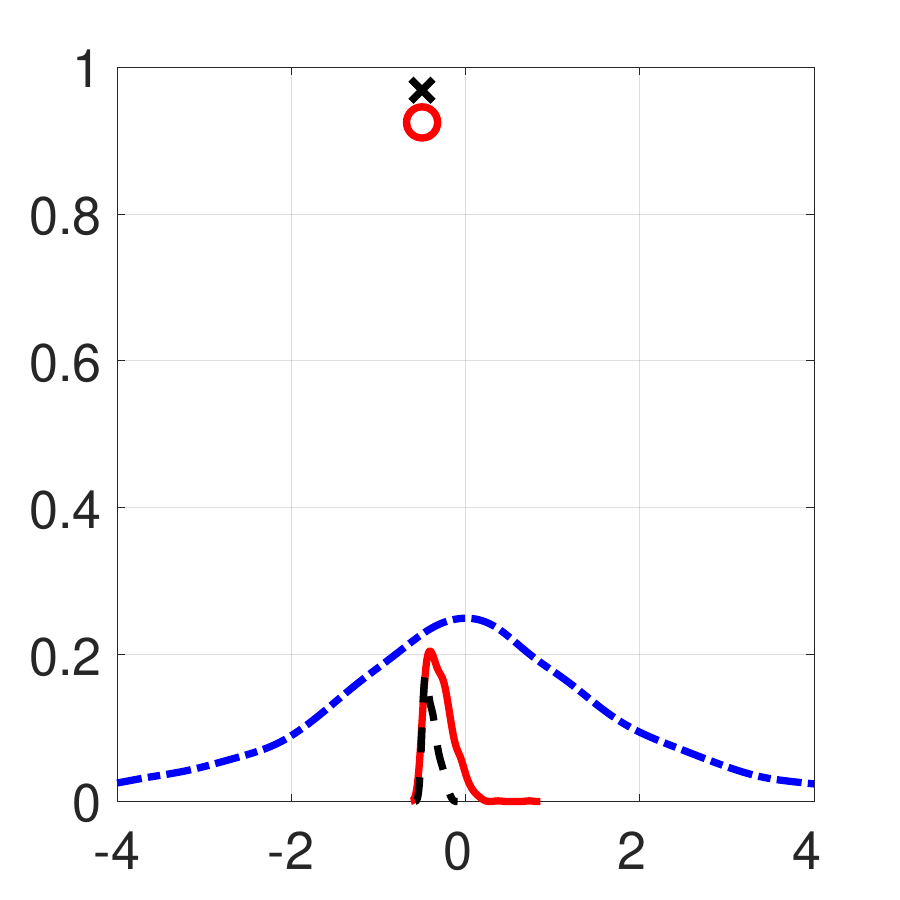}
	\end{subfigure}\begin{subfigure}{0.2\textwidth}
		\centering
		\includegraphics[trim={0cm 0cm 0.50cm 0.5cm},width=\textwidth,clip]
		{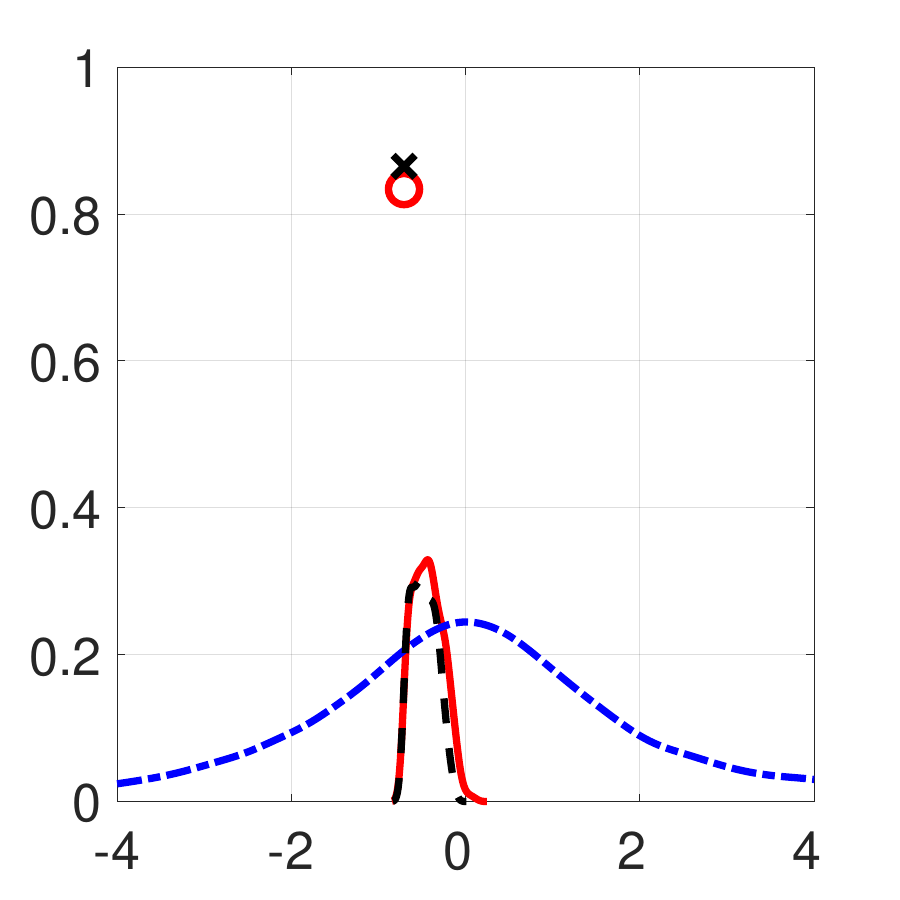}
	\end{subfigure}\begin{subfigure}{0.2\textwidth}
		\centering
		\includegraphics[trim={0cm 0cm 0.50cm 0.5cm},width=\textwidth,clip]
		{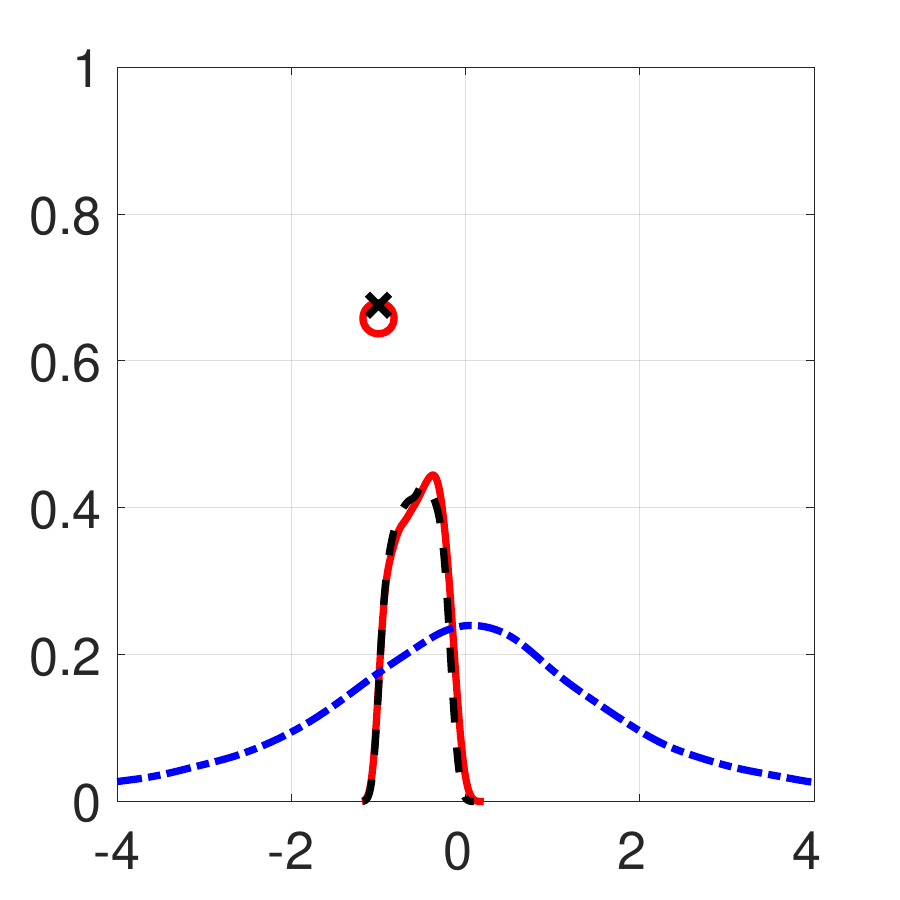}
	\end{subfigure}\begin{subfigure}{0.2\textwidth}
		\centering
		\includegraphics[trim={0cm 0cm 0.50cm 0.5cm},width=\textwidth,clip]
		{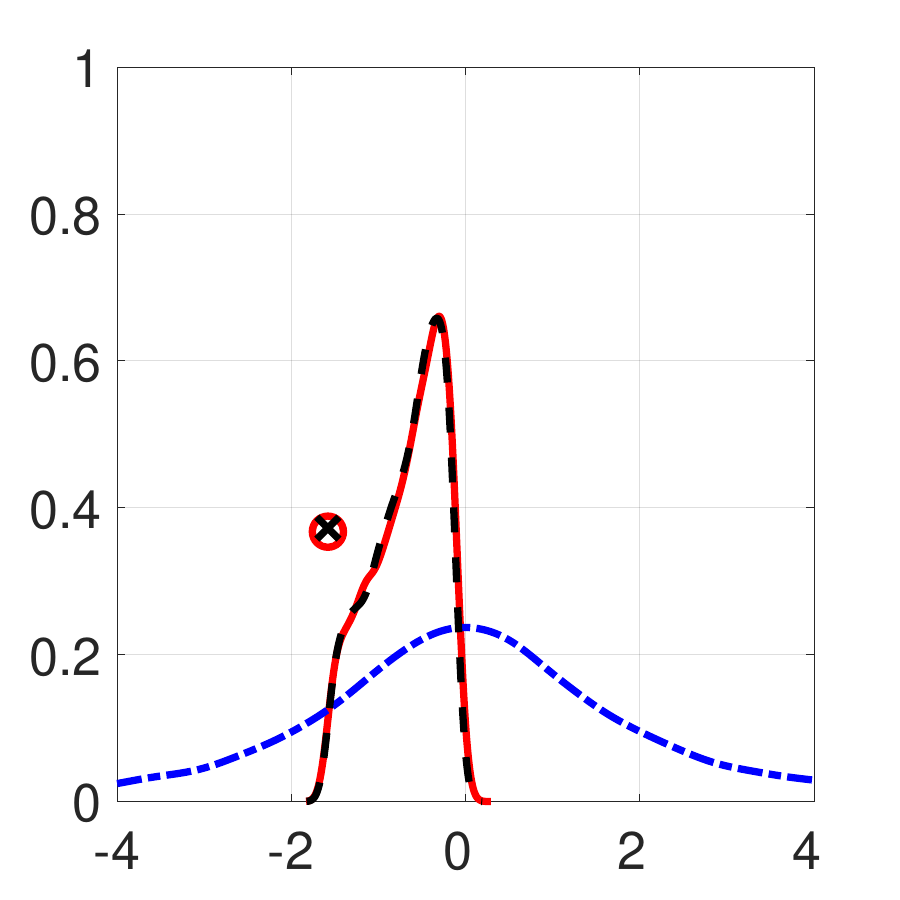}
	\end{subfigure}\begin{subfigure}{0.2\textwidth}
		\centering
		\includegraphics[trim={0cm 0cm 0.50cm 0.5cm},width=\textwidth,clip]
		{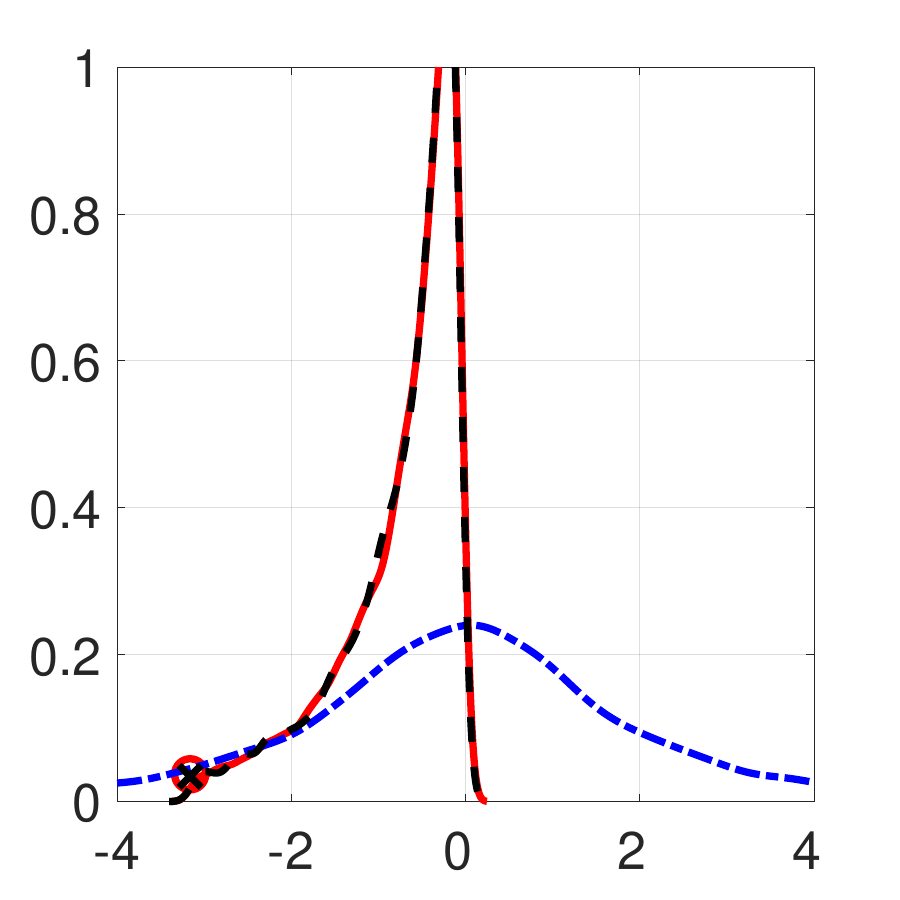}
	\end{subfigure}
	
\end{center}

\vspace{-2ex} 

\caption{Finite-sample distributions of $T(\betaAL - \beta_T)$ (under
conservative tuning, in the first row) and $\lambda_T^{-1/2}T(\betaAL
- \beta_T)$ (under consistent tuning, in the remaining rows) in case
$\beta_T \equiv 0.1\beta$ (labeled ``AL''), and case-specific limiting
distribution from Theorem~\ref{thm:ls_dist-unif}, evaluated at sample
counterparts of limiting parameters (labeled ``Thm.3''). \emph{Notes}:
``OLS'' denotes the finite-sample distribution of the OLS estimator.
If the correct location of the atomic part of a density is smaller
than $-4$, it is plotted at $-4$ with an arrow pointing to the left.}

\label{fig:densities_thm3_1}

\end{figure}

\begin{figure}[ht]
\begin{center}
	\caption*{$\lambda_T \equiv 1$}
	\begin{subfigure}{0.2\textwidth}
		\centering
		\caption*{$T = 25$}
		\vspace{-1.5ex}
		\includegraphics[trim={0cm 0cm 0.50cm 0.5cm},width=\textwidth,clip]
		{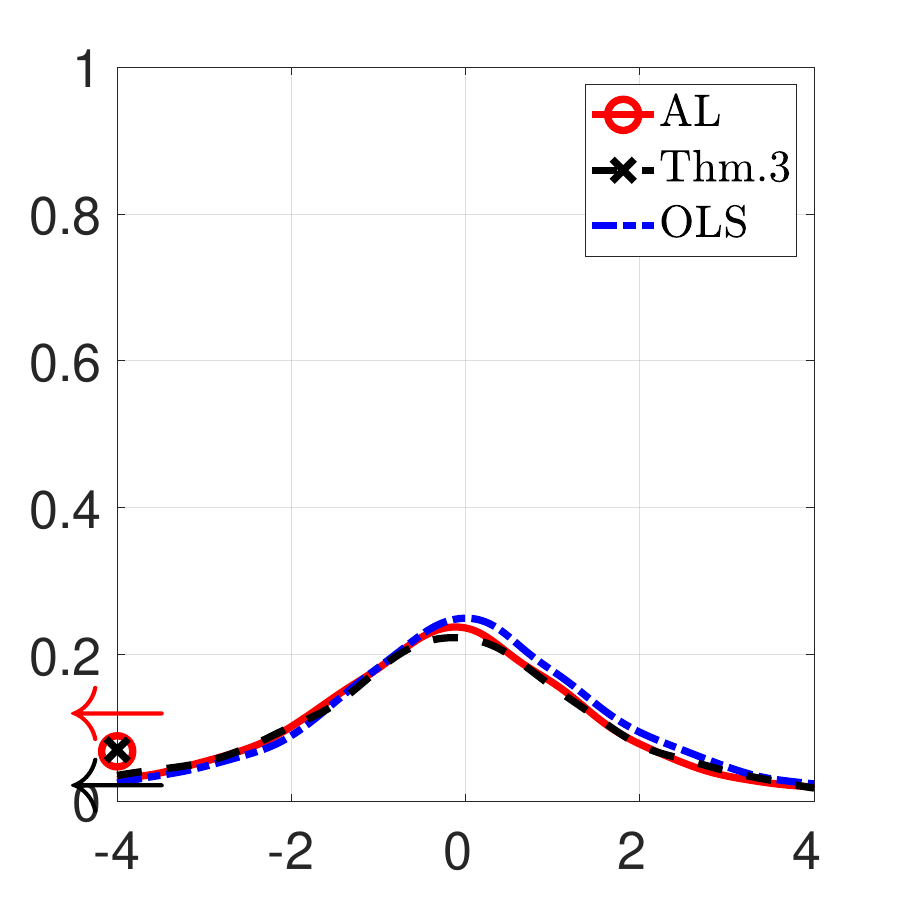}
	\end{subfigure}\begin{subfigure}{0.2\textwidth}
		\centering
		\caption*{$T = 50$}
		\vspace{-1.5ex}
		\includegraphics[trim={0cm 0cm 0.50cm 0.5cm},width=\textwidth,clip]
		{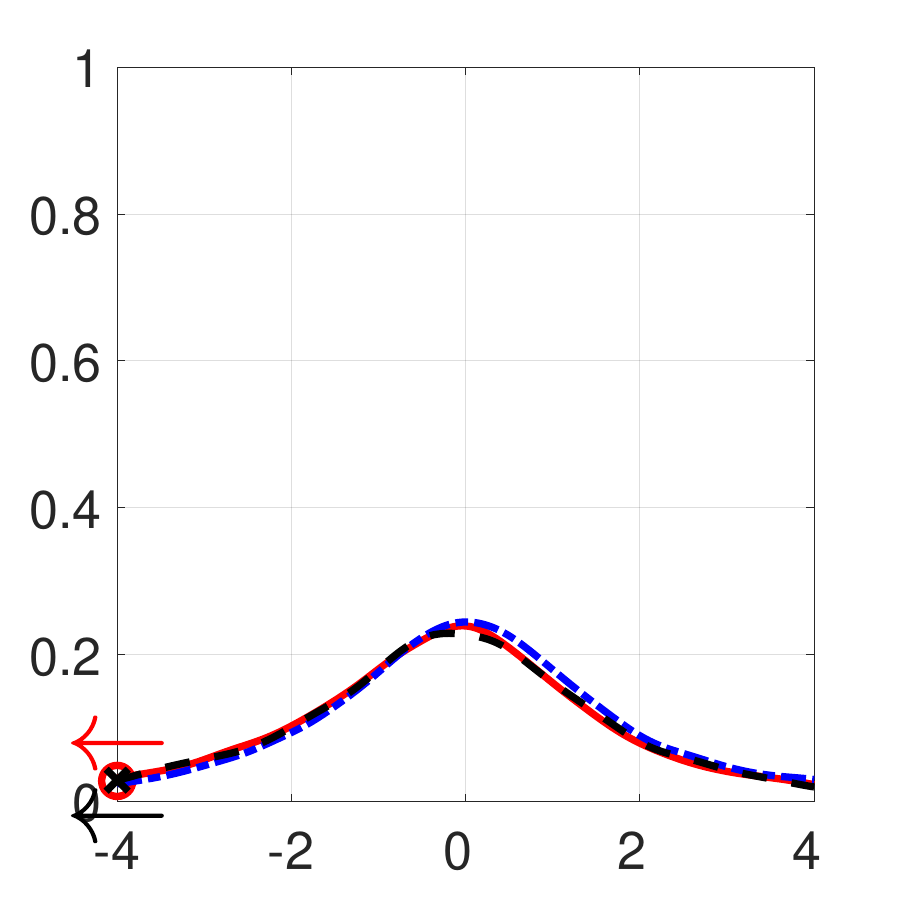}
	\end{subfigure}\begin{subfigure}{0.2\textwidth}
		\centering
		\caption*{$T = 100$}
		\vspace{-1.5ex}
		\includegraphics[trim={0cm 0cm 0.50cm 0.5cm},width=\textwidth,clip]
		{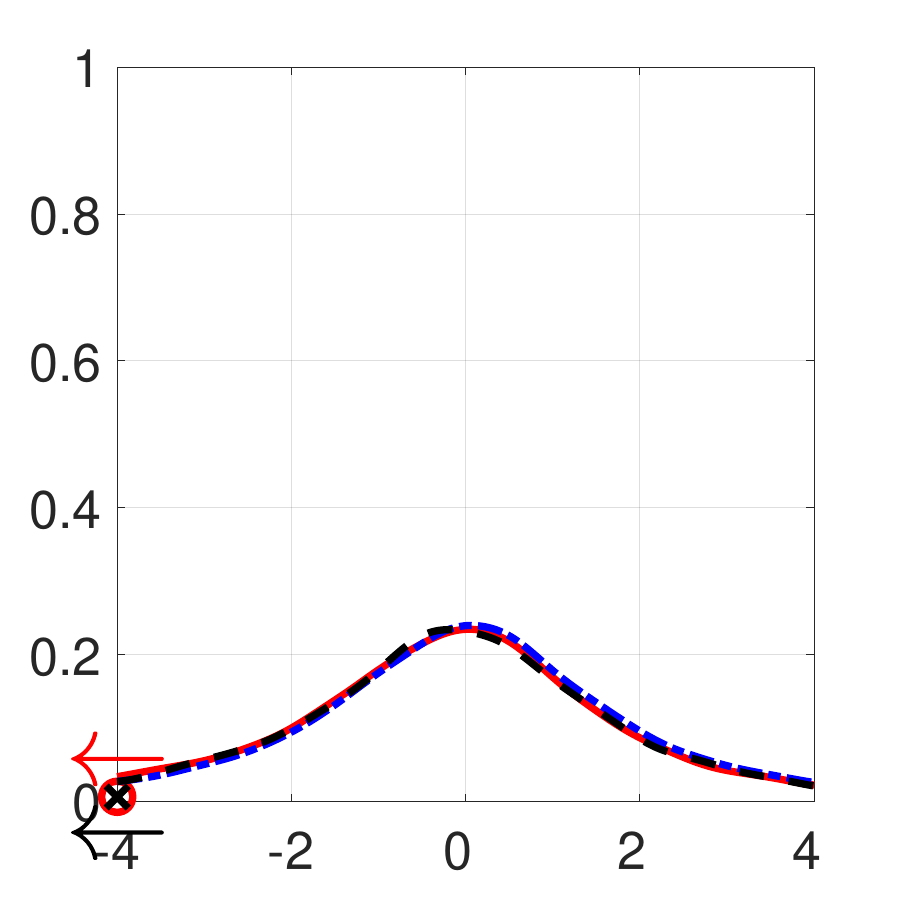}
	\end{subfigure}\begin{subfigure}{0.2\textwidth}
		\centering
		\caption*{$T = 250$}
		\vspace{-1.5ex}
		\includegraphics[trim={0cm 0cm 0.50cm 0.5cm},width=\textwidth,clip]
		{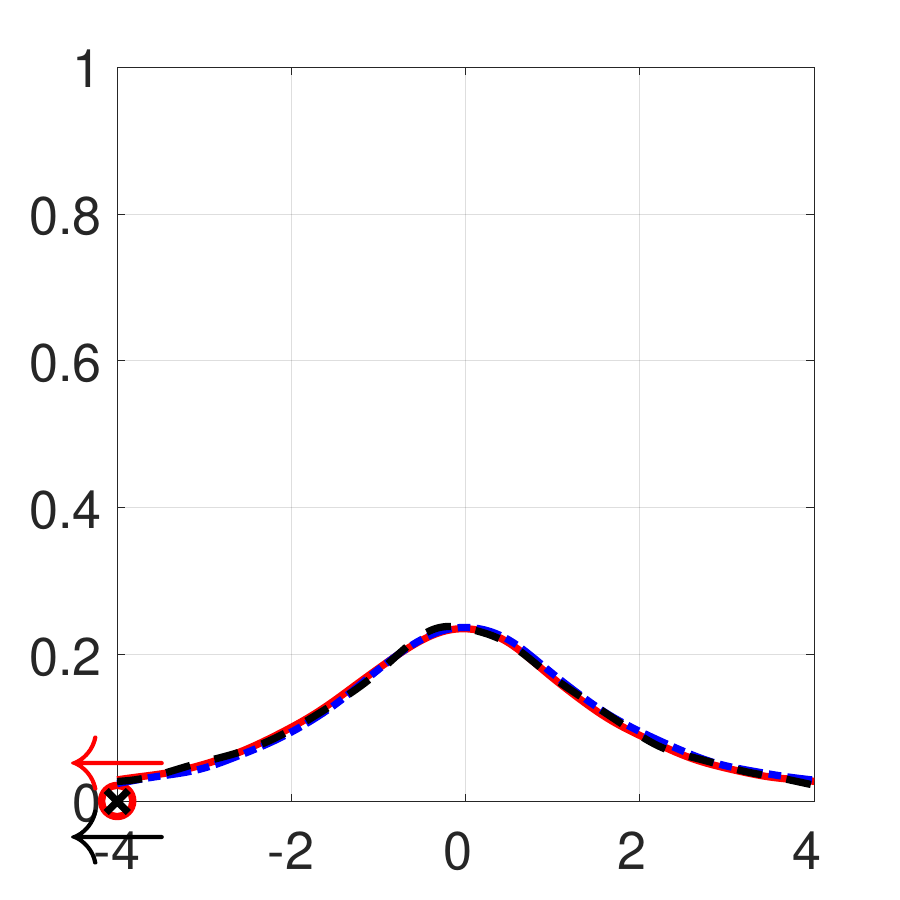}
	\end{subfigure}\begin{subfigure}{0.2\textwidth}
		\centering
		\caption*{$T = 1000$}
		\vspace{-1.5ex}
		\includegraphics[trim={0cm 0cm 0.50cm 0.5cm},width=\textwidth,clip]
		{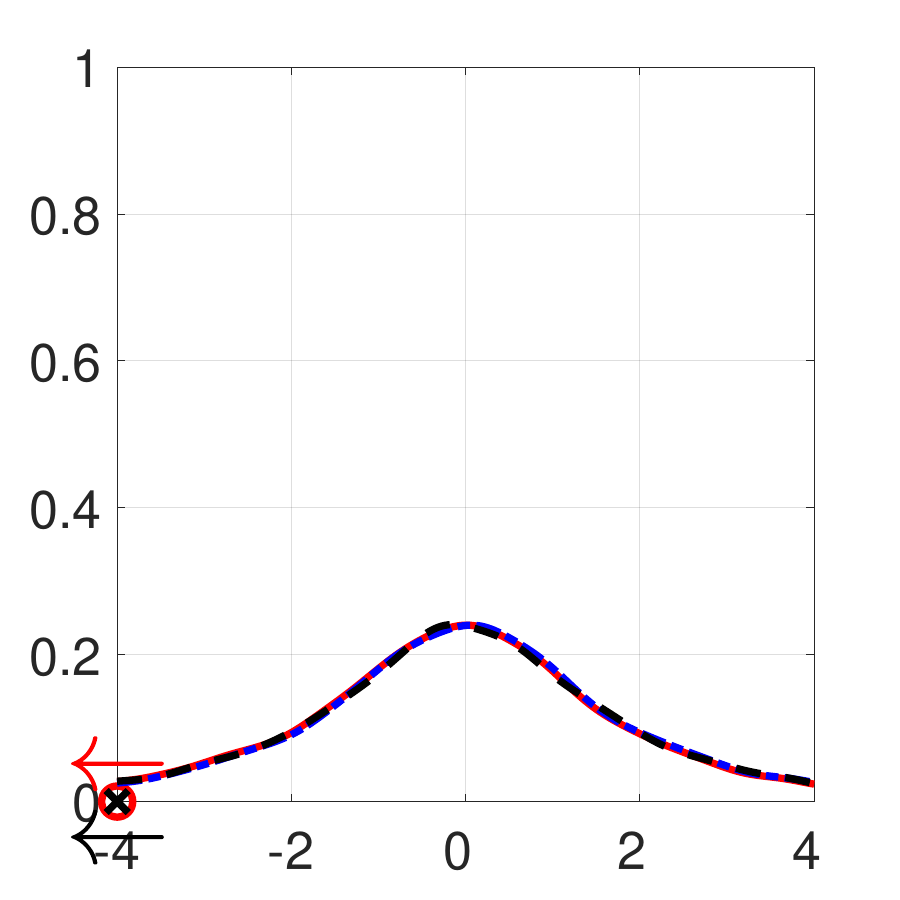}
	\end{subfigure}
	
	\caption*{$\lambda_T = T^{1/4}$}
	\vspace{-1.5ex}
	\begin{subfigure}{0.2\textwidth}
		\centering
		\includegraphics[trim={0cm 0cm 0.50cm 0.5cm},width=\textwidth,clip]
		{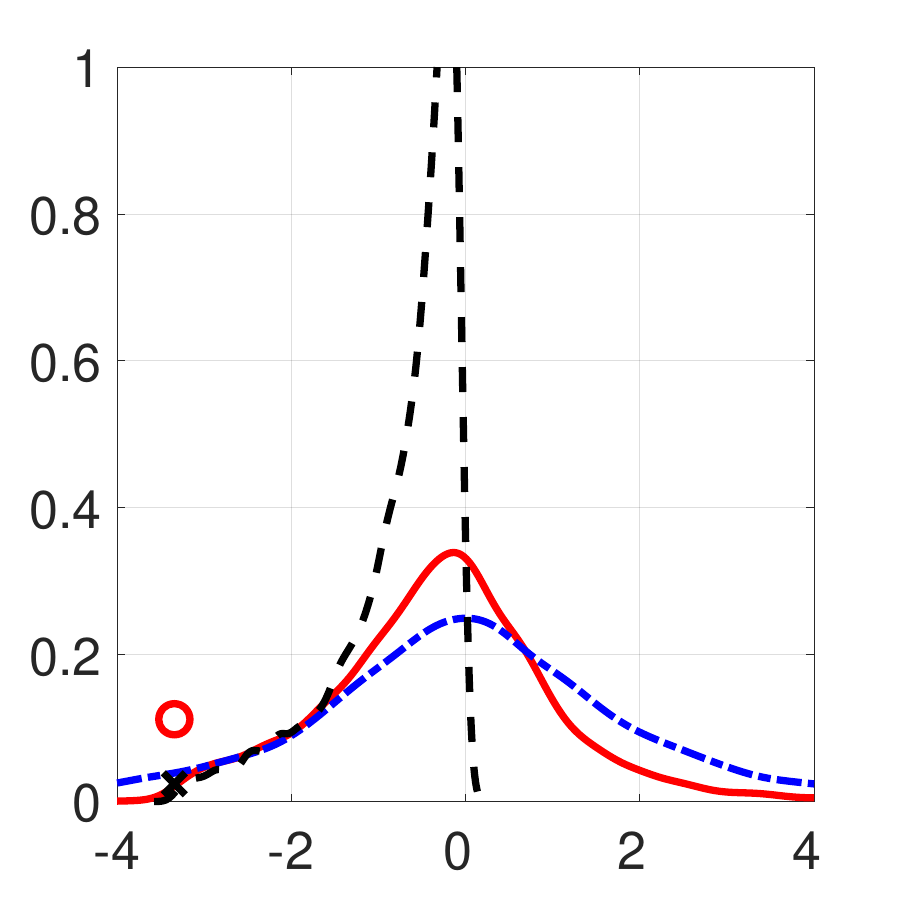}
	\end{subfigure}\begin{subfigure}{0.2\textwidth}
		\centering
		\includegraphics[trim={0cm 0cm 0.50cm 0.5cm},width=\textwidth,clip]
		{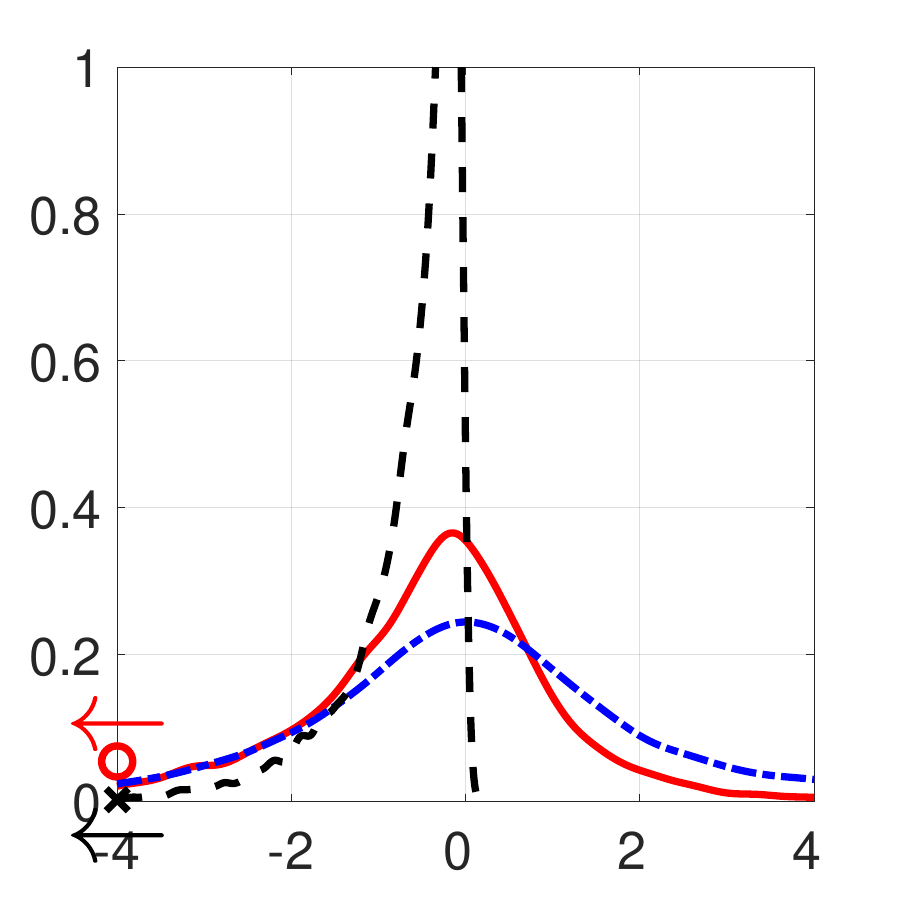}
	\end{subfigure}\begin{subfigure}{0.2\textwidth}
		\centering
		\includegraphics[trim={0cm 0cm 0.50cm 0.5cm},width=\textwidth,clip]
		{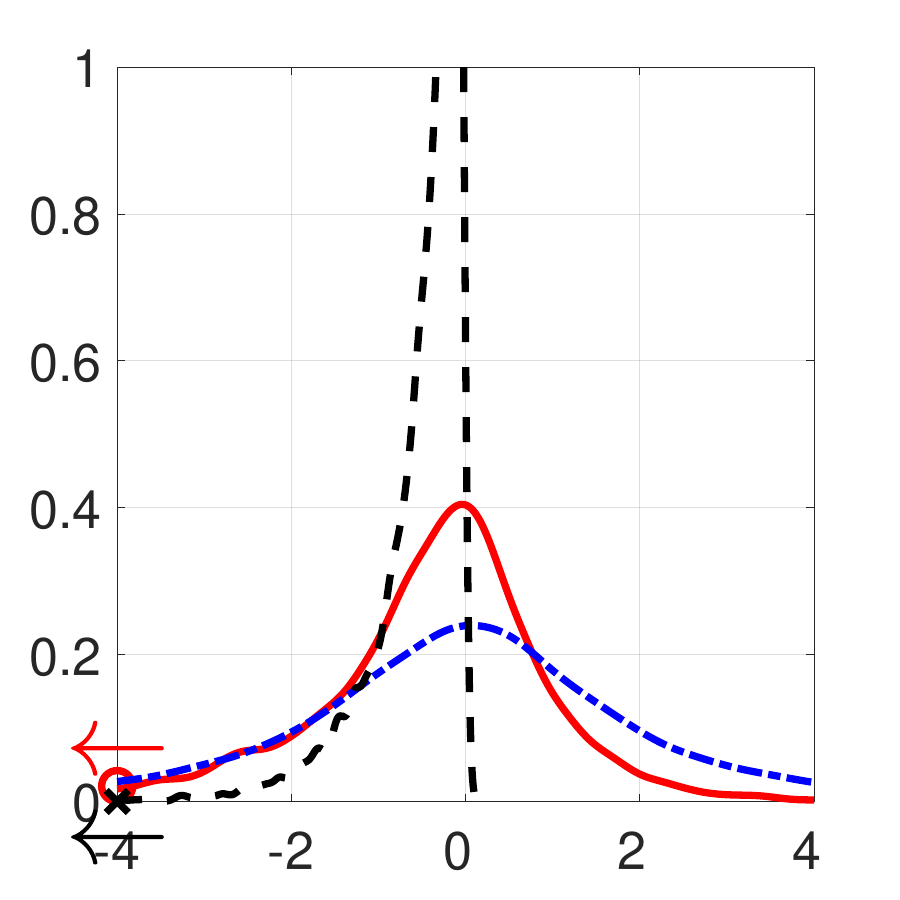}
	\end{subfigure}\begin{subfigure}{0.2\textwidth}
		\centering
		\includegraphics[trim={0cm 0cm 0.50cm 0.5cm},width=\textwidth,clip]
		{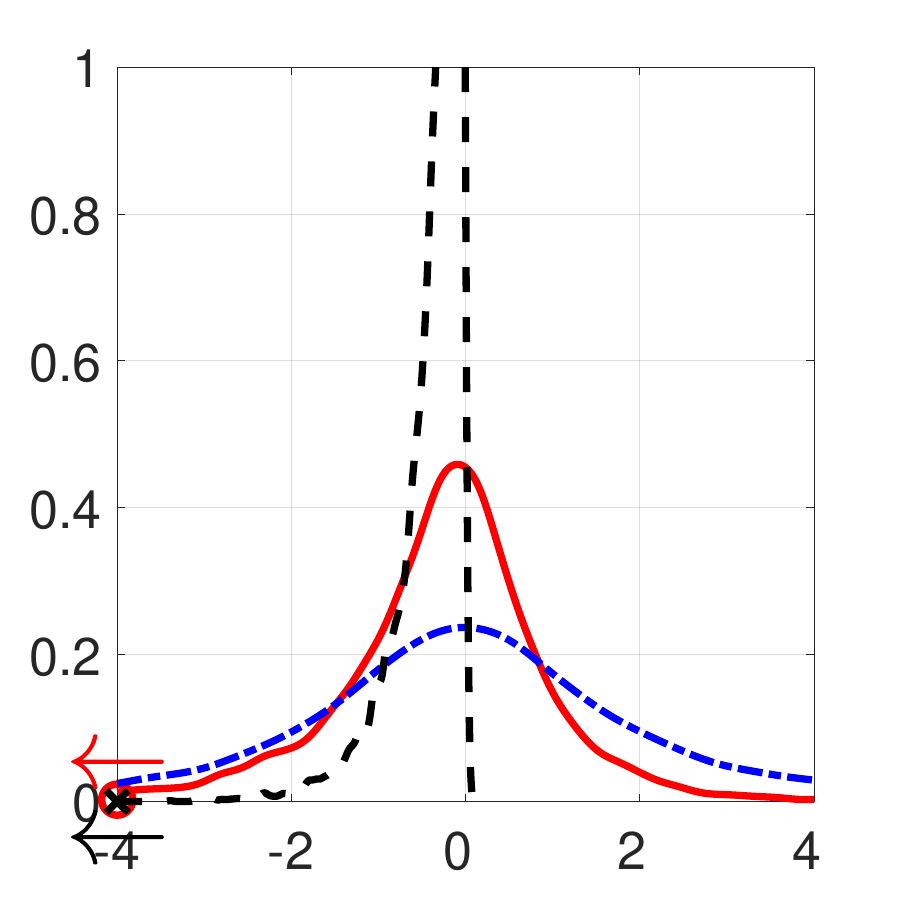}
	\end{subfigure}\begin{subfigure}{0.2\textwidth}
		\centering
		\includegraphics[trim={0cm 0cm 0.50cm 0.5cm},width=\textwidth,clip]
		{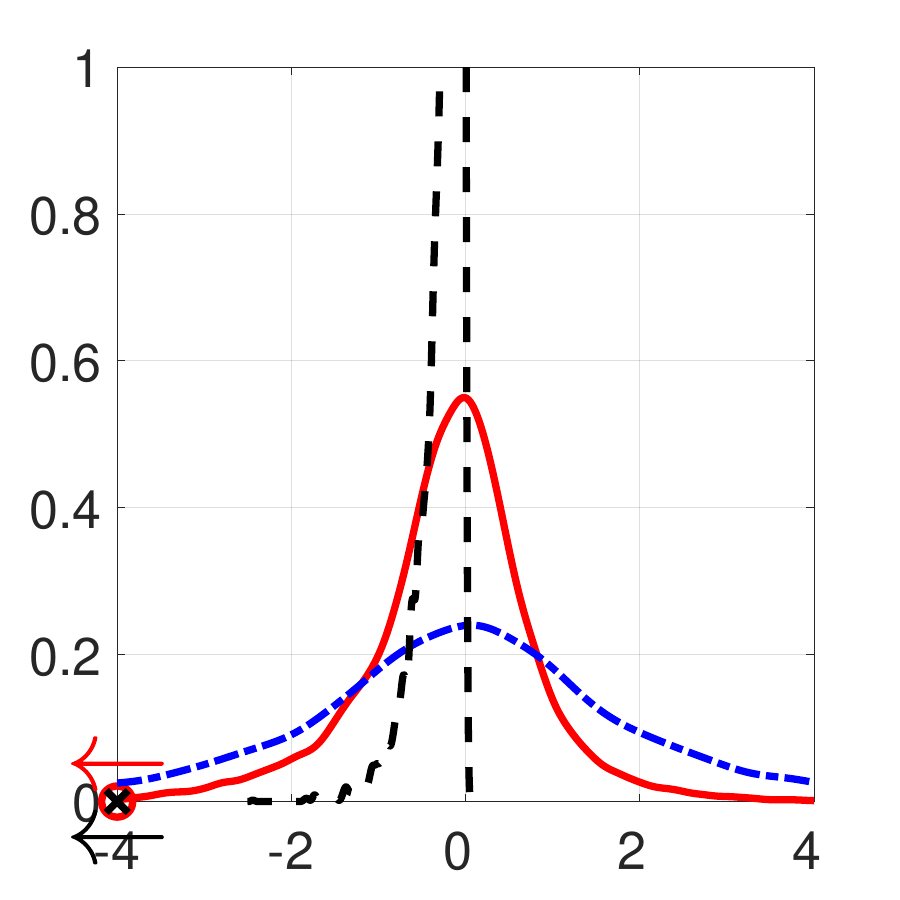}
	\end{subfigure}
	
	\caption*{$\lambda_T = T^{1/2}$}
	\vspace{-1.5ex}
	\begin{subfigure}{0.2\textwidth}
		\centering
		\includegraphics[trim={0cm 0cm 0.50cm 0.5cm},width=\textwidth,clip]
		{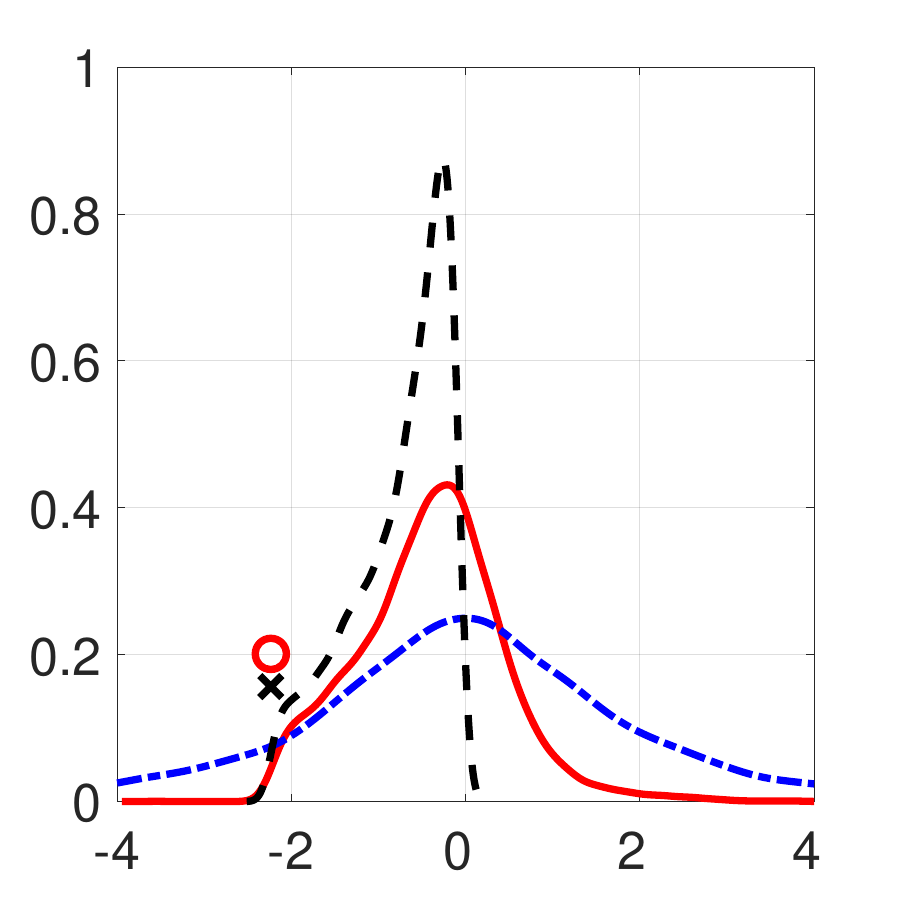}
	\end{subfigure}\begin{subfigure}{0.2\textwidth}
		\centering
		\includegraphics[trim={0cm 0cm 0.50cm 0.5cm},width=\textwidth,clip]
		{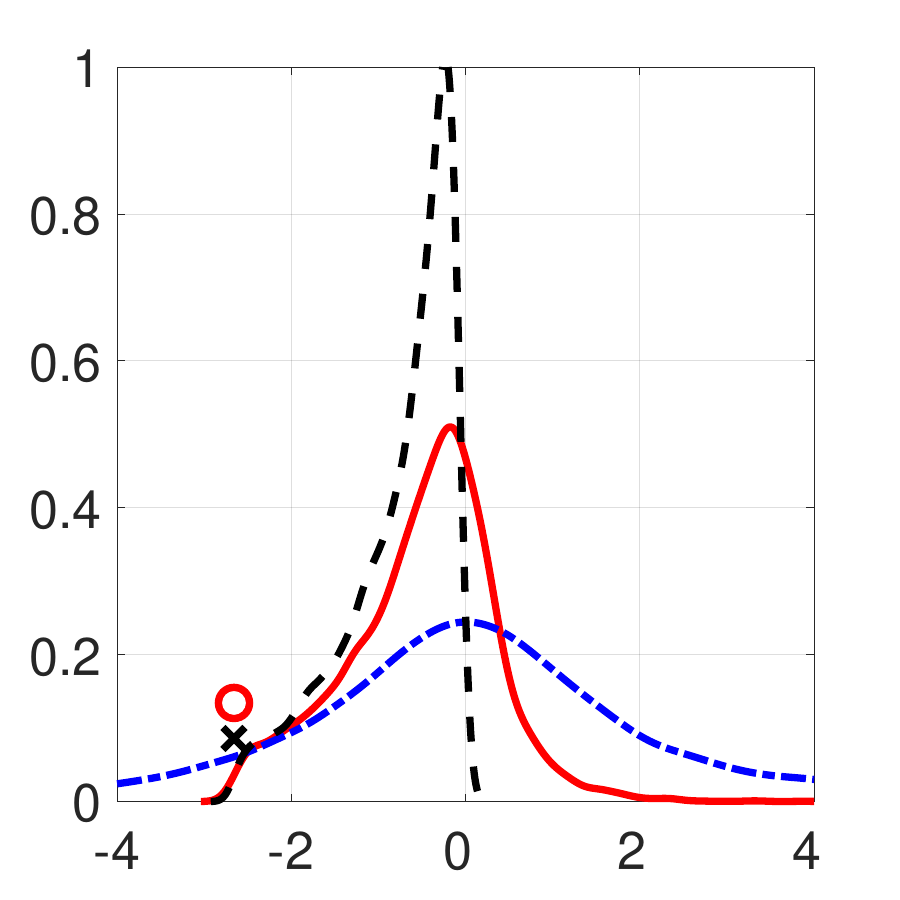}
	\end{subfigure}\begin{subfigure}{0.2\textwidth}
		\centering
		\includegraphics[trim={0cm 0cm 0.50cm 0.5cm},width=\textwidth,clip]
		{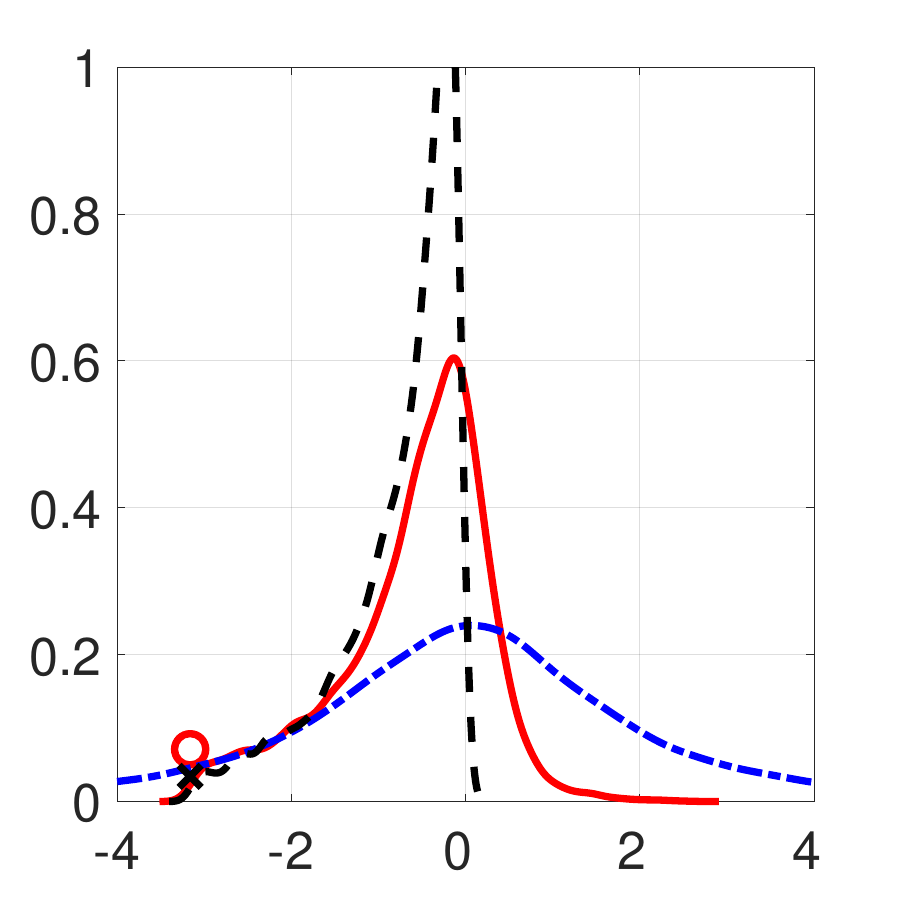}
	\end{subfigure}\begin{subfigure}{0.2\textwidth}
		\centering
		\includegraphics[trim={0cm 0cm 0.50cm 0.5cm},width=\textwidth,clip]
		{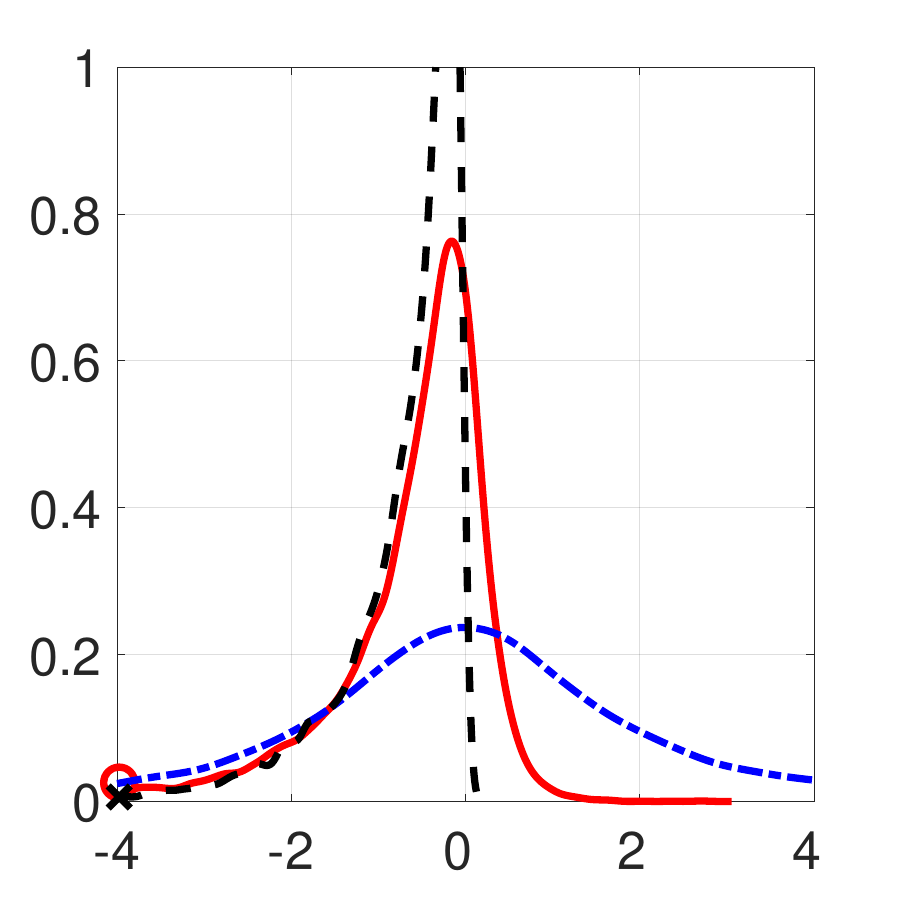}
	\end{subfigure}\begin{subfigure}{0.2\textwidth}
		\centering
		\includegraphics[trim={0cm 0cm 0.50cm 0.5cm},width=\textwidth,clip]
		{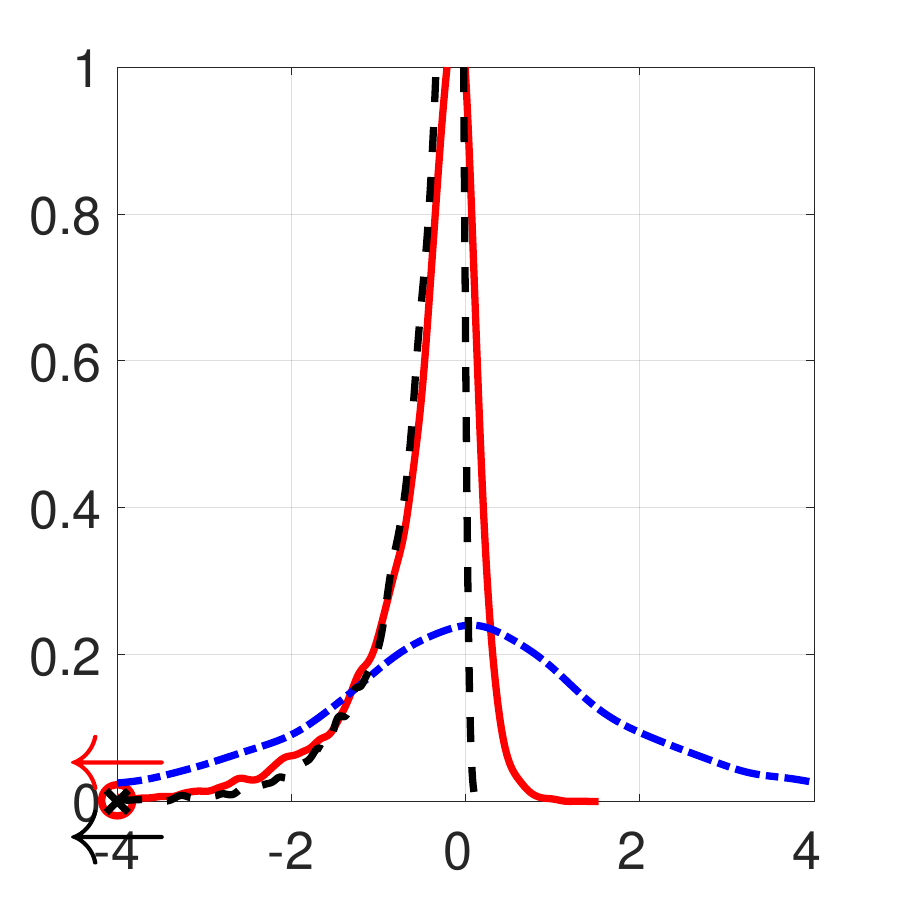}
	\end{subfigure}
	
	\caption*{$\lambda_T = T$}
	\vspace{-1.5ex}
	\begin{subfigure}{0.2\textwidth}
		\centering
		\includegraphics[trim={0cm 0cm 0.50cm 0.5cm},width=\textwidth,clip]
		{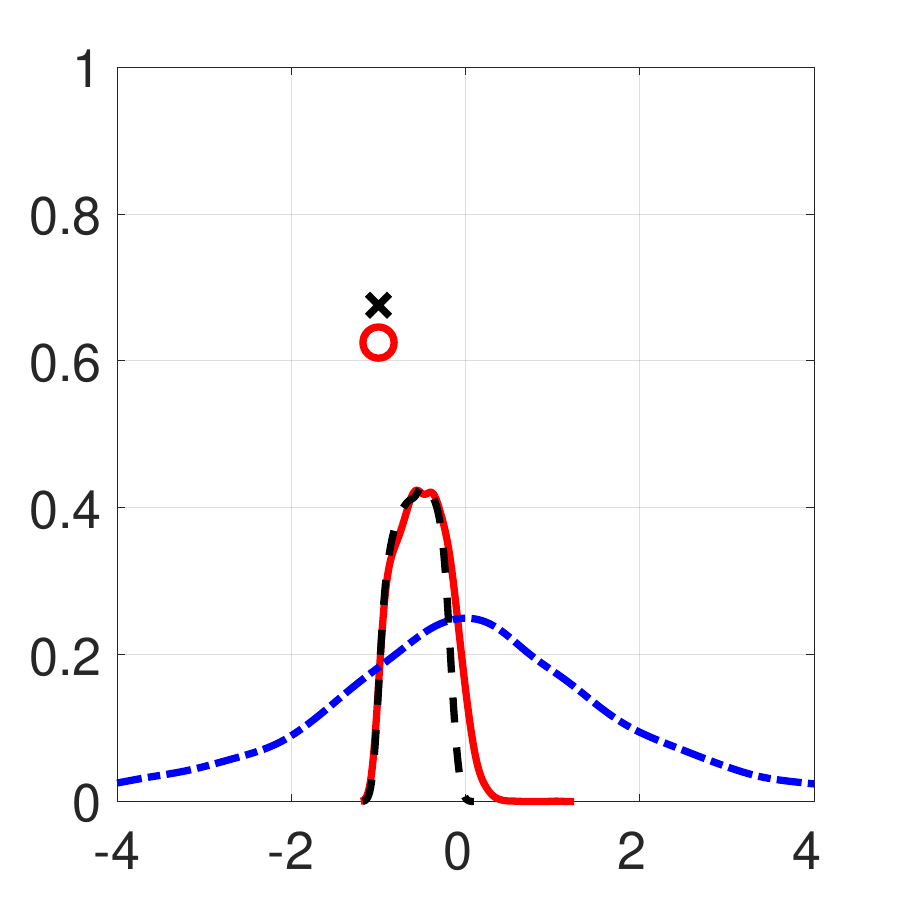}
	\end{subfigure}\begin{subfigure}{0.2\textwidth}
		\centering
		\includegraphics[trim={0cm 0cm 0.50cm 0.5cm},width=\textwidth,clip]
		{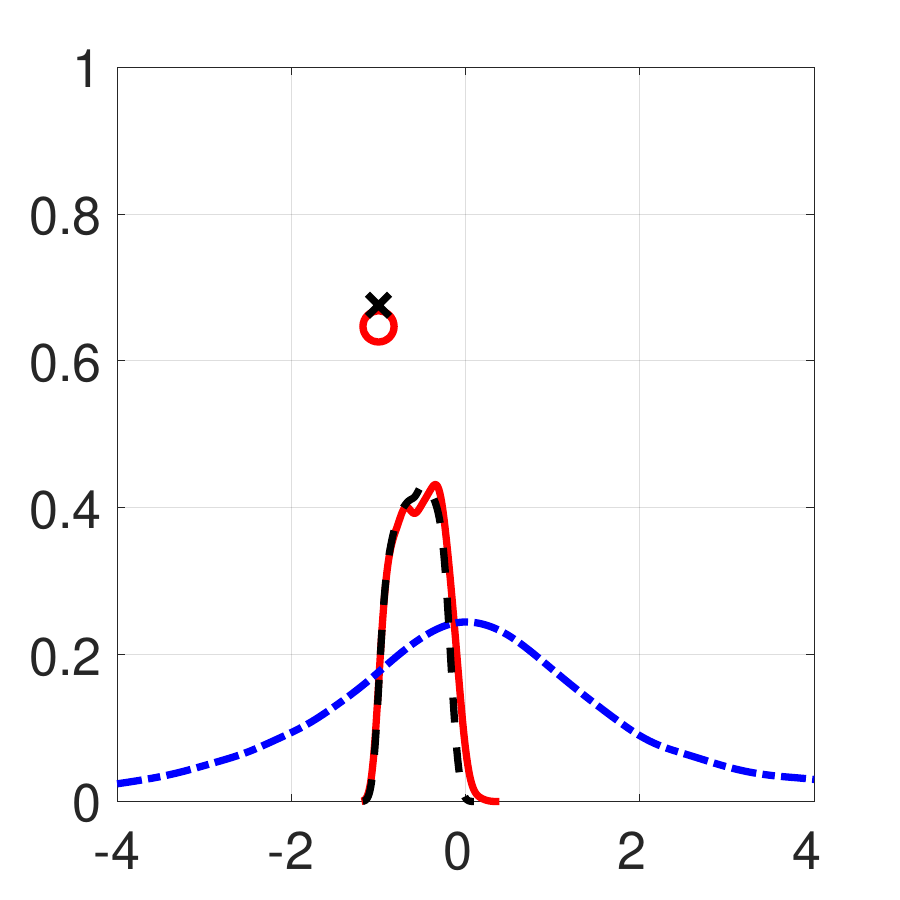}
	\end{subfigure}\begin{subfigure}{0.2\textwidth}
		\centering
		\includegraphics[trim={0cm 0cm 0.50cm 0.5cm},width=\textwidth,clip]
		{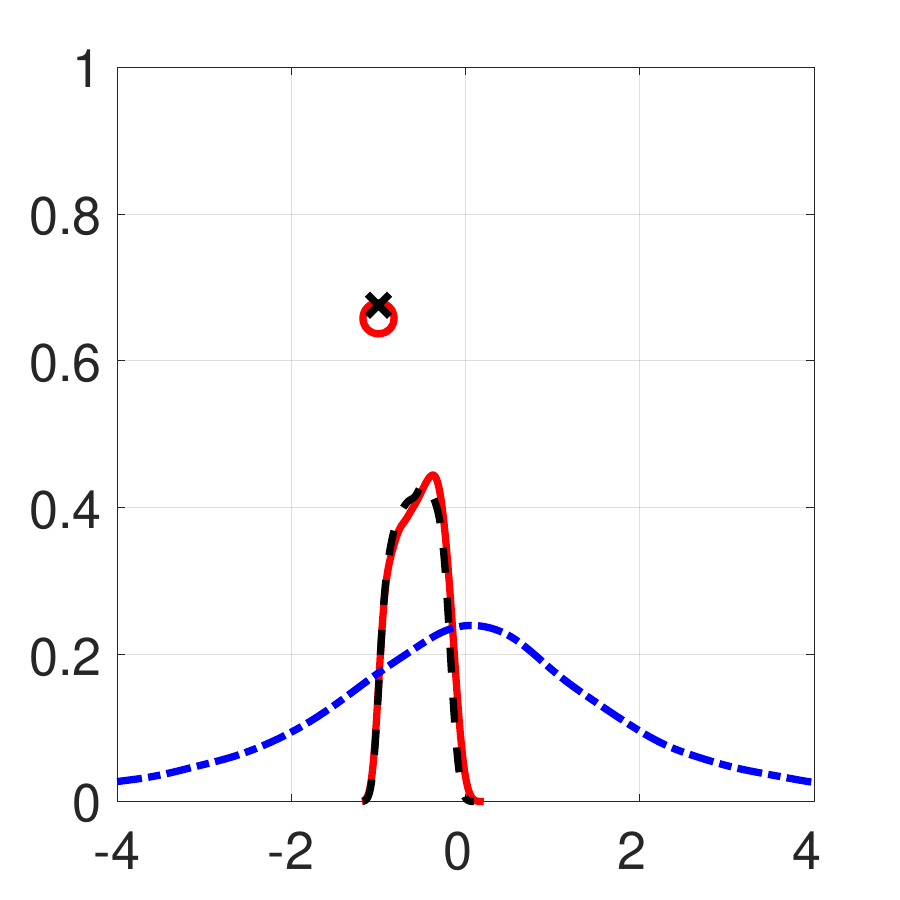}
	\end{subfigure}\begin{subfigure}{0.2\textwidth}
		\centering
		\includegraphics[trim={0cm 0cm 0.50cm 0.5cm},width=\textwidth,clip]
		{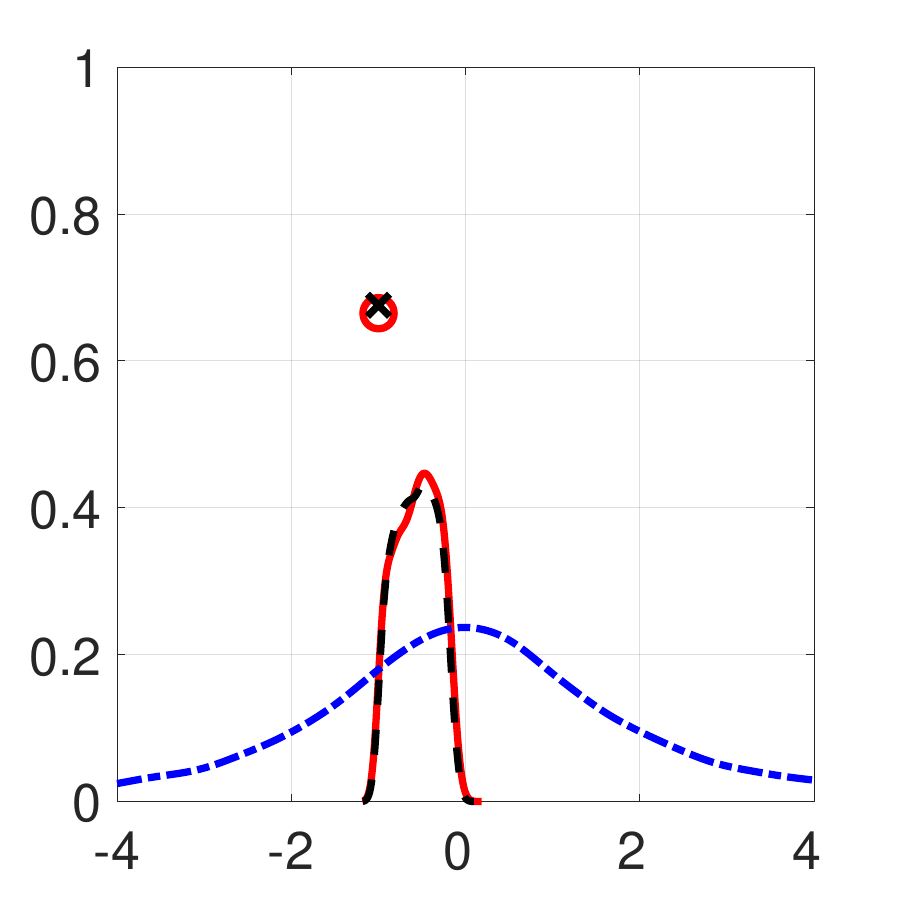}
	\end{subfigure}\begin{subfigure}{0.2\textwidth}
		\centering
		\includegraphics[trim={0cm 0cm 0.50cm 0.5cm},width=\textwidth,clip]
		{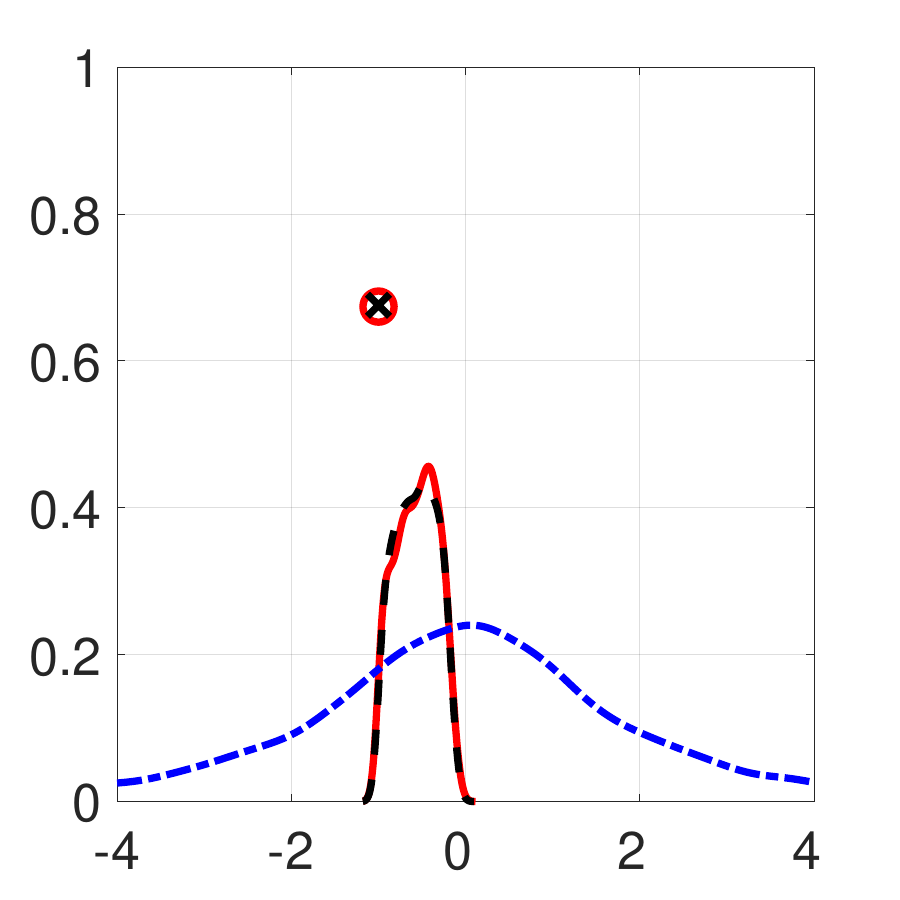}
	\end{subfigure}
	
\end{center}

\vspace{-2ex} 

\caption{Finite-sample distributions of $T(\betaAL - \beta_T)$ (under
conservative tuning, in the first row) and $\lambda_T^{-1/2}T(\betaAL
- \beta_T)$ (under consistent tuning, in the remaining rows) in case
$\beta_T = \beta/T^{1/2}$ (labeled ``AL''), and case-specific limiting
distribution from Theorem~\ref{thm:ls_dist-unif}, evaluated at sample
counterparts of limiting parameters (labeled ``Thm.3''). \emph{Notes}:
See notes to Figure~\ref{fig:densities_thm3_1}.}

\label{fig:densities_thm3_2}

\end{figure}

\begin{figure}[ht]
\begin{center}
	\caption*{$\lambda_T \equiv 1$}
	\begin{subfigure}{0.2\textwidth}
		\centering
		\caption*{$T = 25$}
		\vspace{-1.5ex}
		\includegraphics[trim={0cm 0cm 0.50cm 0.5cm},width=\textwidth,clip]
		{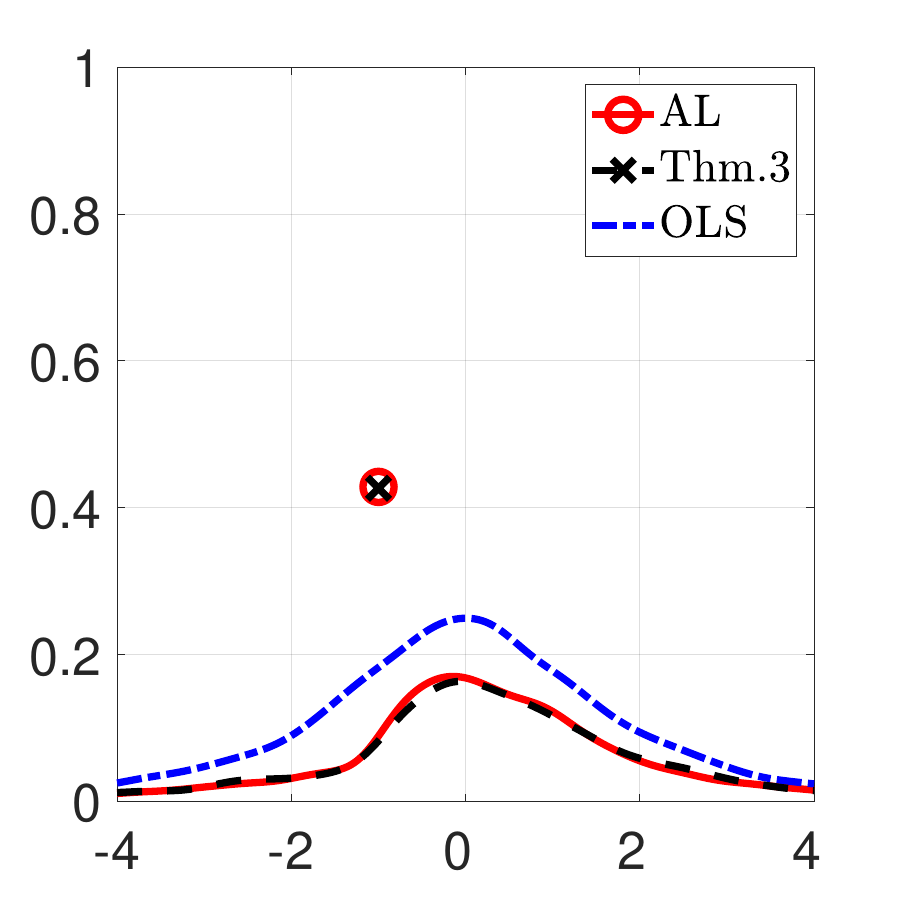}
	\end{subfigure}\begin{subfigure}{0.2\textwidth}
		\centering
		\caption*{$T = 50$}
		\vspace{-1.5ex}
		\includegraphics[trim={0cm 0cm 0.50cm 0.5cm},width=\textwidth,clip]
		{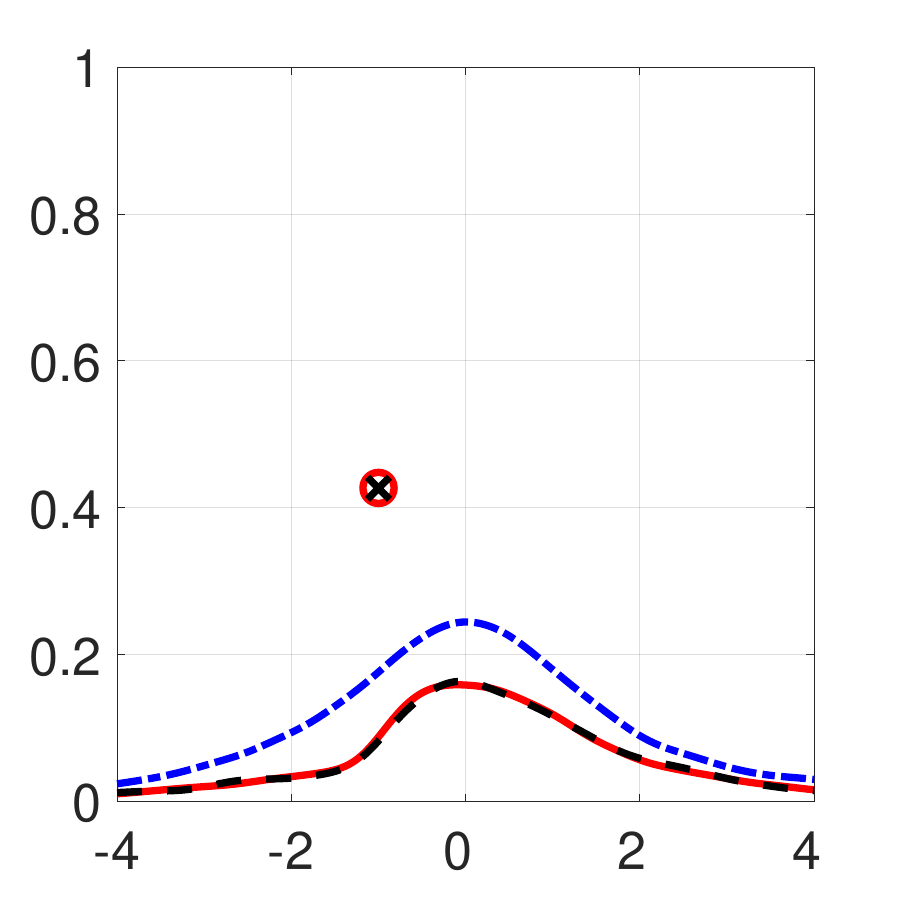}
	\end{subfigure}\begin{subfigure}{0.2\textwidth}
		\centering
		\caption*{$T = 100$}
		\vspace{-1.5ex}
		\includegraphics[trim={0cm 0cm 0.50cm 0.5cm},width=\textwidth,clip]
		{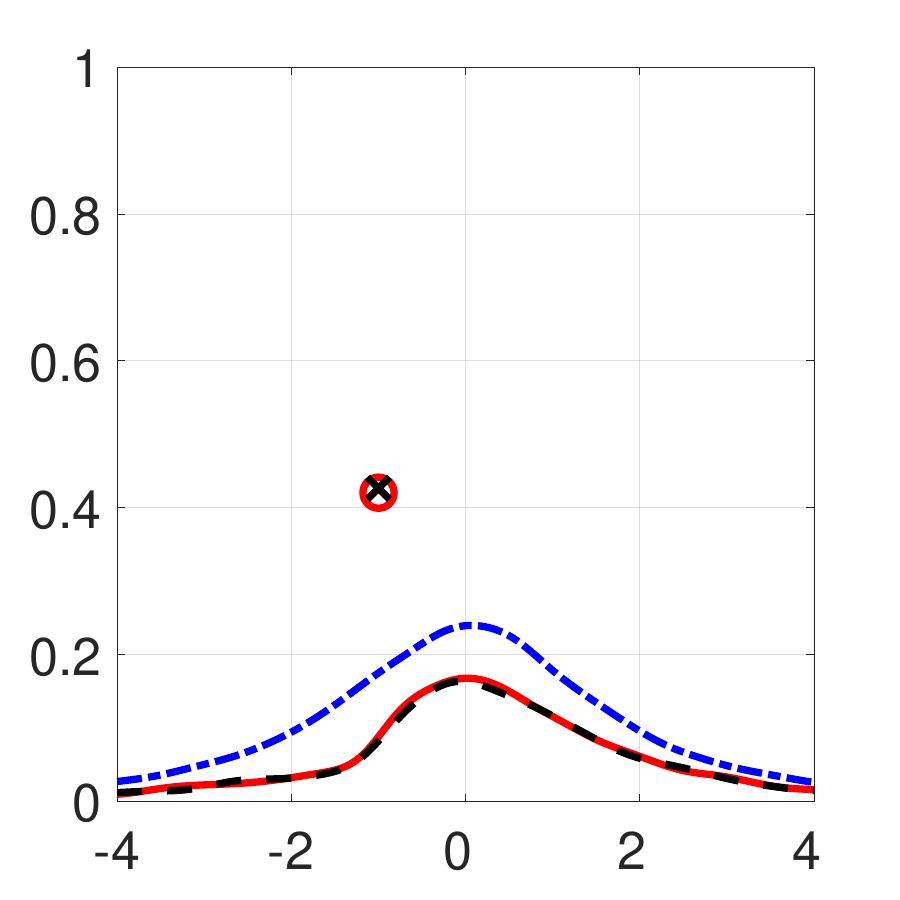}
	\end{subfigure}\begin{subfigure}{0.2\textwidth}
		\centering
		\caption*{$T = 250$}
		\vspace{-1.5ex}
		\includegraphics[trim={0cm 0cm 0.50cm 0.5cm},width=\textwidth,clip]
		{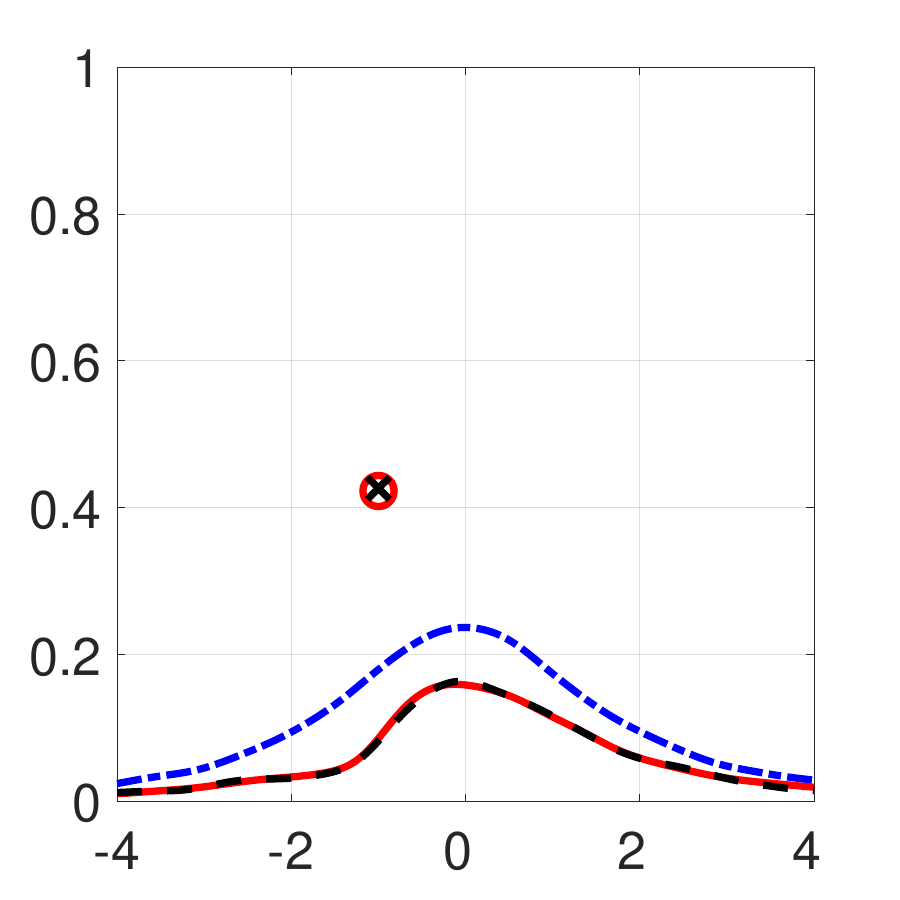}
	\end{subfigure}\begin{subfigure}{0.2\textwidth}
		\centering
		\caption*{$T = 1000$}
		\vspace{-1.5ex}
		\includegraphics[trim={0cm 0cm 0.50cm 0.5cm},width=\textwidth,clip]
		{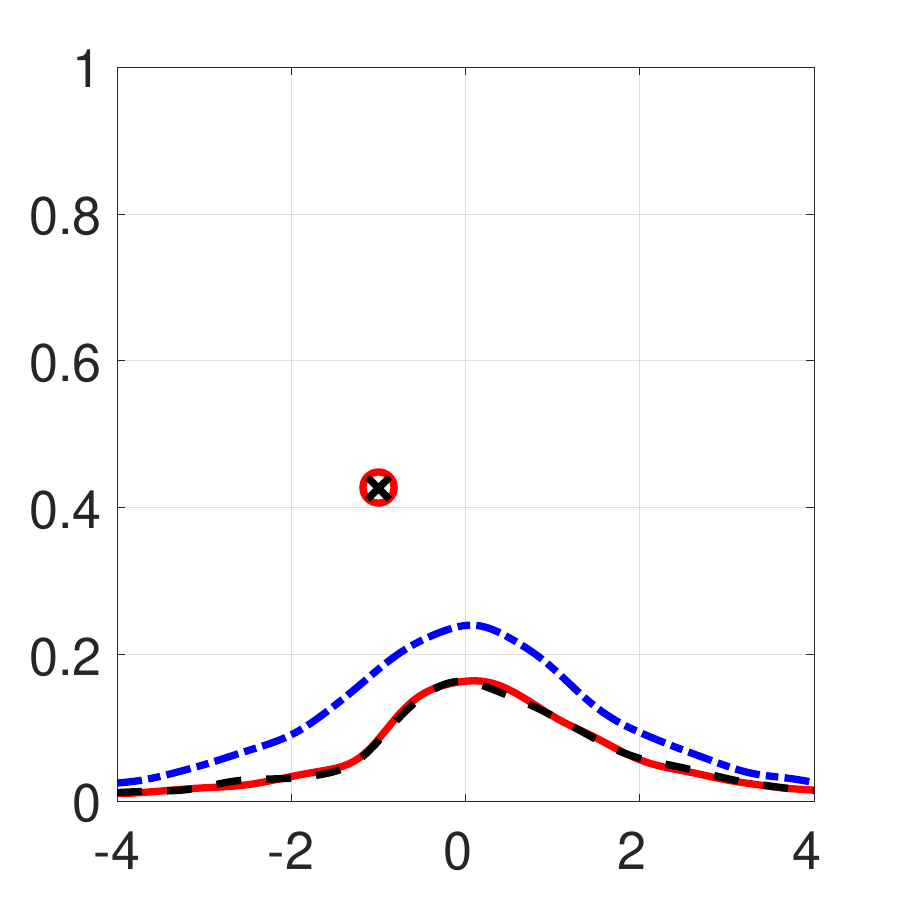}
	\end{subfigure}
	
	\caption*{$\lambda_T = T^{1/4}$}
	\vspace{-1.5ex}
	\begin{subfigure}{0.2\textwidth}
		\centering
		\includegraphics[trim={0cm 0cm 0.50cm 0.5cm},width=\textwidth,clip]
		{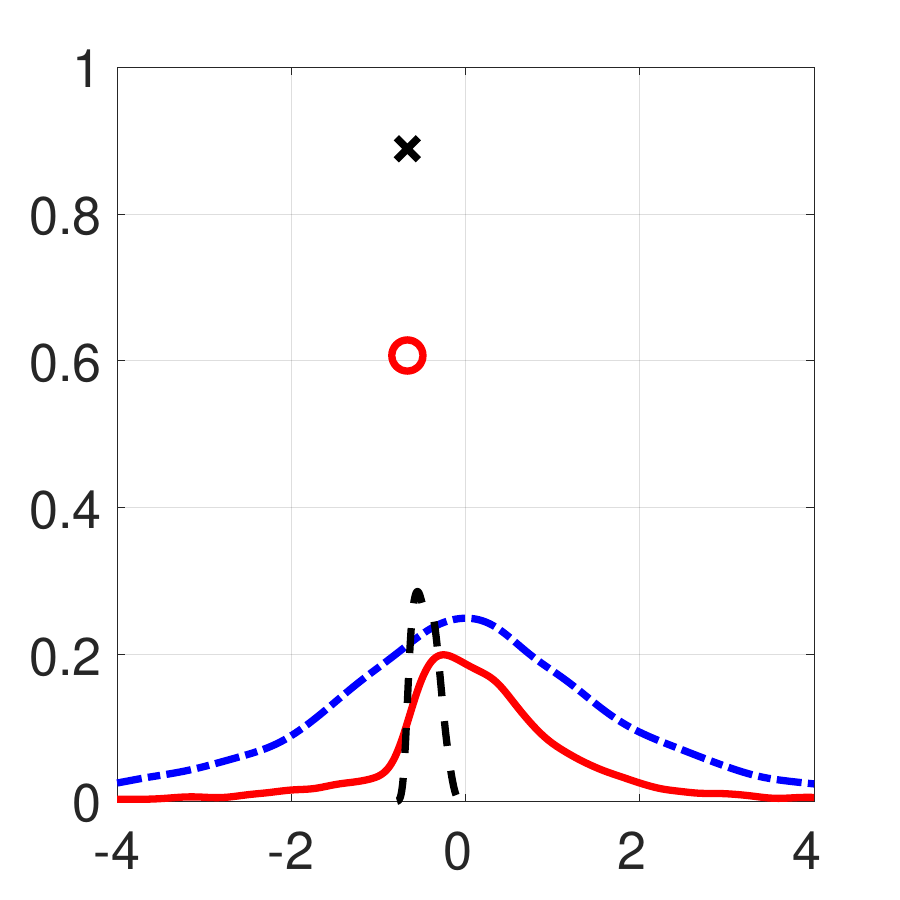}
	\end{subfigure}\begin{subfigure}{0.2\textwidth}
		\centering
		\includegraphics[trim={0cm 0cm 0.50cm 0.5cm},width=\textwidth,clip]
		{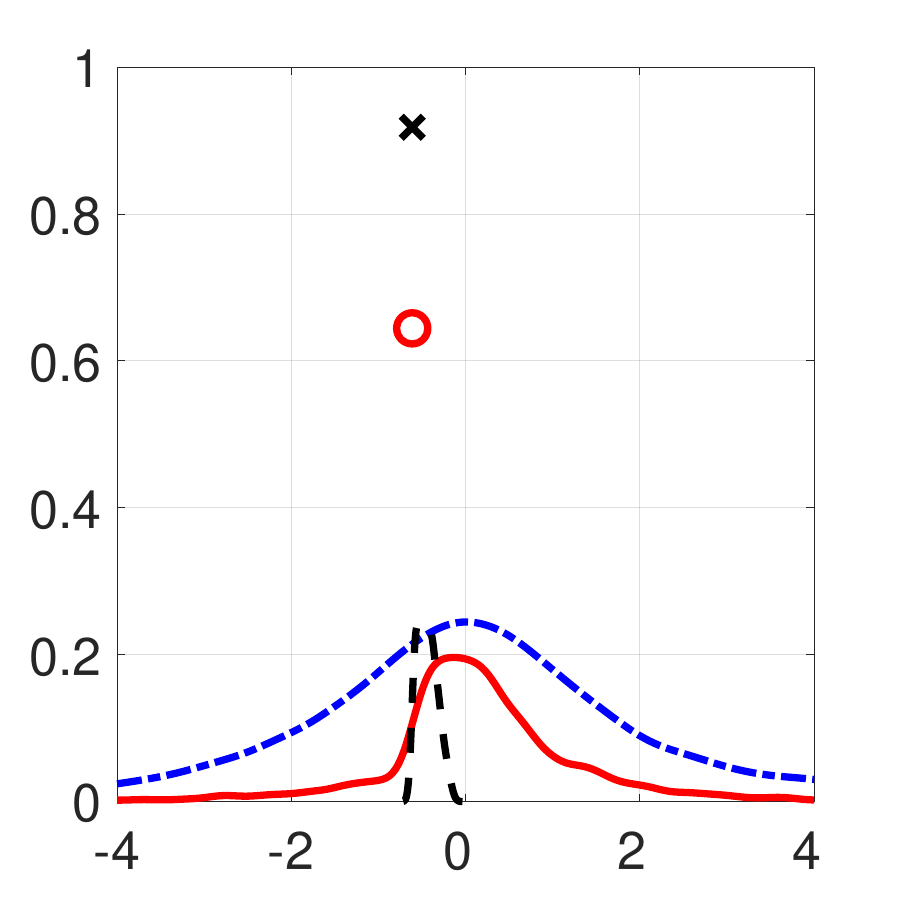}
	\end{subfigure}\begin{subfigure}{0.2\textwidth}
		\centering
		\includegraphics[trim={0cm 0cm 0.50cm 0.5cm},width=\textwidth,clip]
		{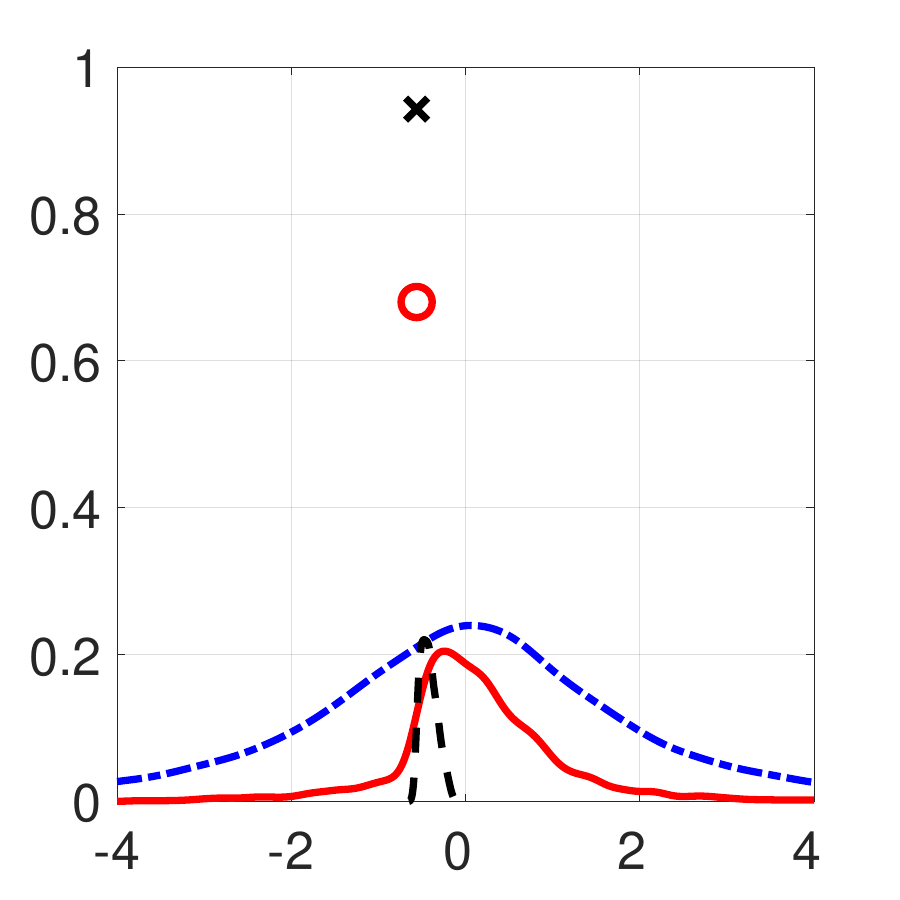}
	\end{subfigure}\begin{subfigure}{0.2\textwidth}
		\centering
		\includegraphics[trim={0cm 0cm 0.50cm 0.5cm},width=\textwidth,clip]
		{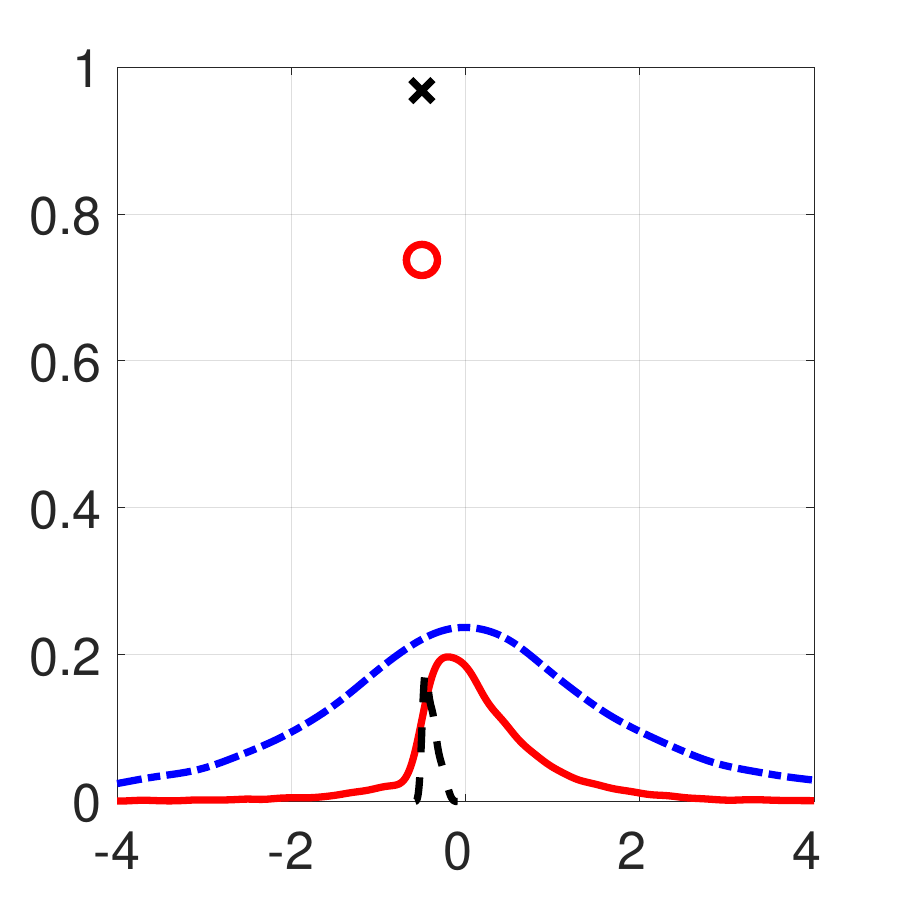}
	\end{subfigure}\begin{subfigure}{0.2\textwidth}
		\centering
		\includegraphics[trim={0cm 0cm 0.50cm 0.5cm},width=\textwidth,clip]
		{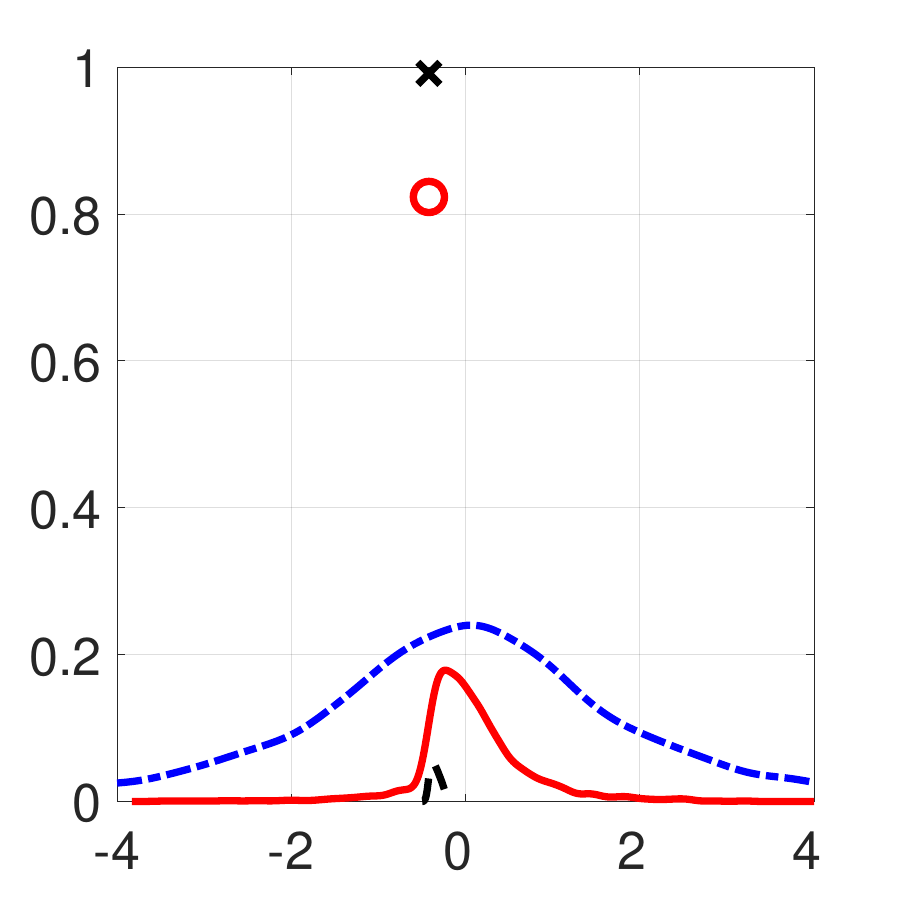}
	\end{subfigure}
	
	\caption*{$\lambda_T = T^{1/2}$}
	\vspace{-1.5ex}
	\begin{subfigure}{0.2\textwidth}
		\centering
		\includegraphics[trim={0cm 0cm 0.50cm 0.5cm},width=\textwidth,clip]
		{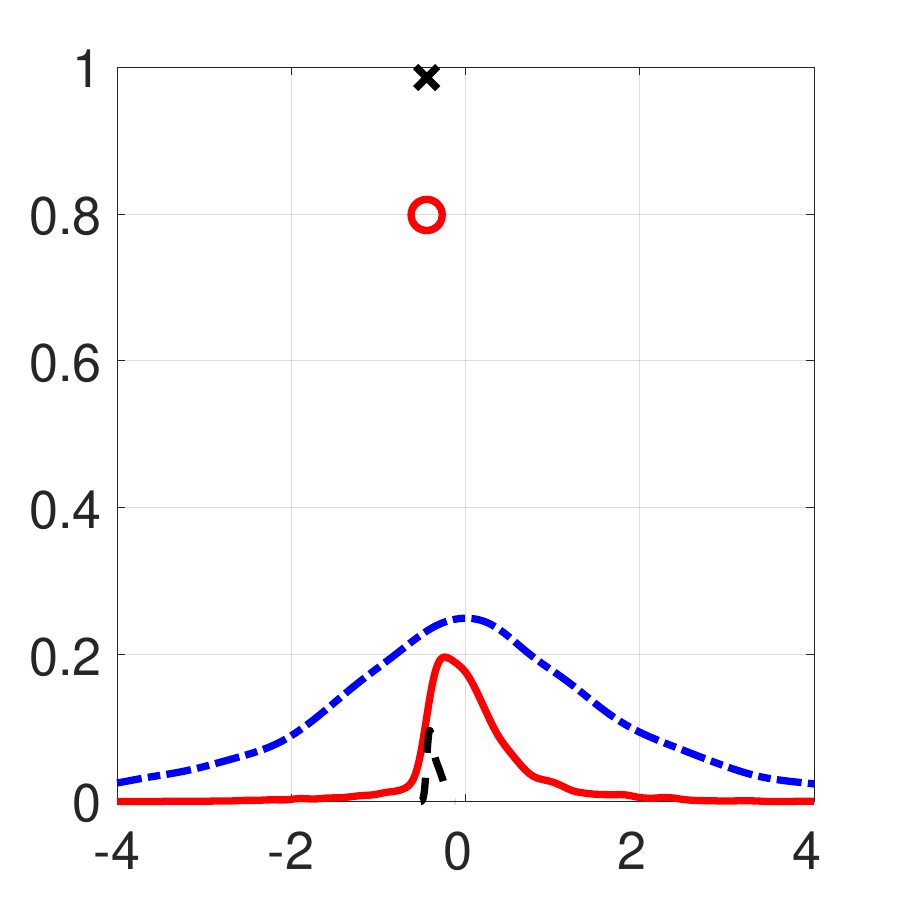}
	\end{subfigure}\begin{subfigure}{0.2\textwidth}
		\centering
		\includegraphics[trim={0cm 0cm 0.50cm 0.5cm},width=\textwidth,clip]
		{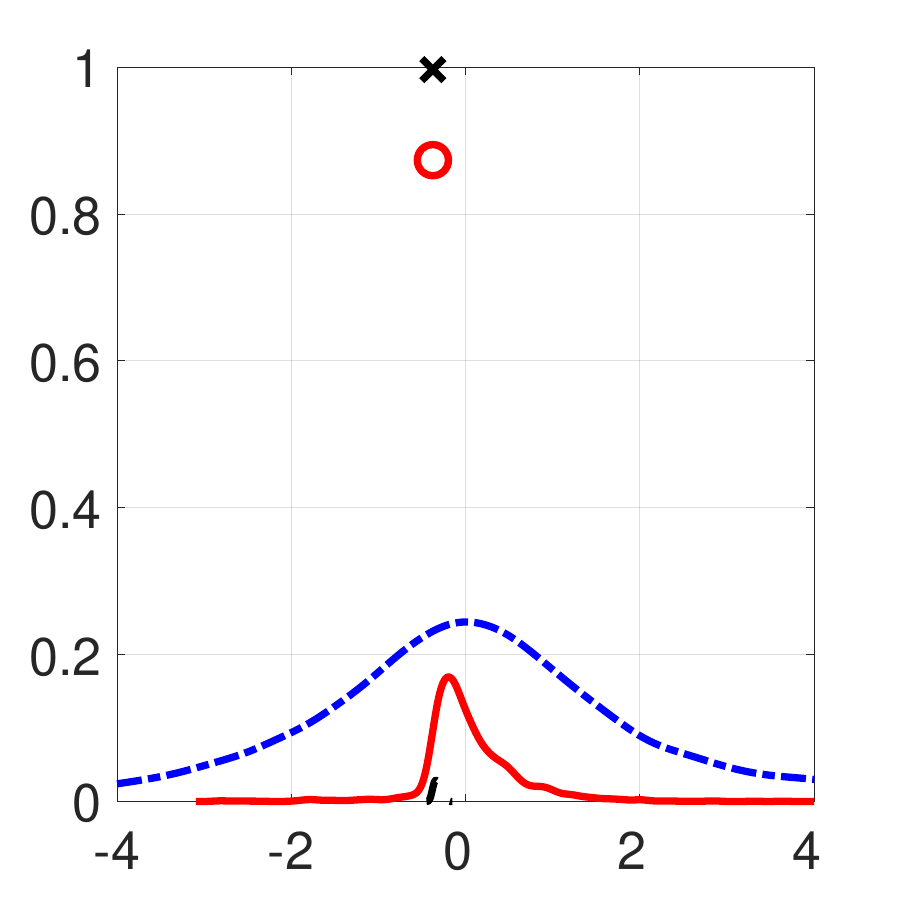}
	\end{subfigure}\begin{subfigure}{0.2\textwidth}
		\centering
		\includegraphics[trim={0cm 0cm 0.50cm 0.5cm},width=\textwidth,clip]
		{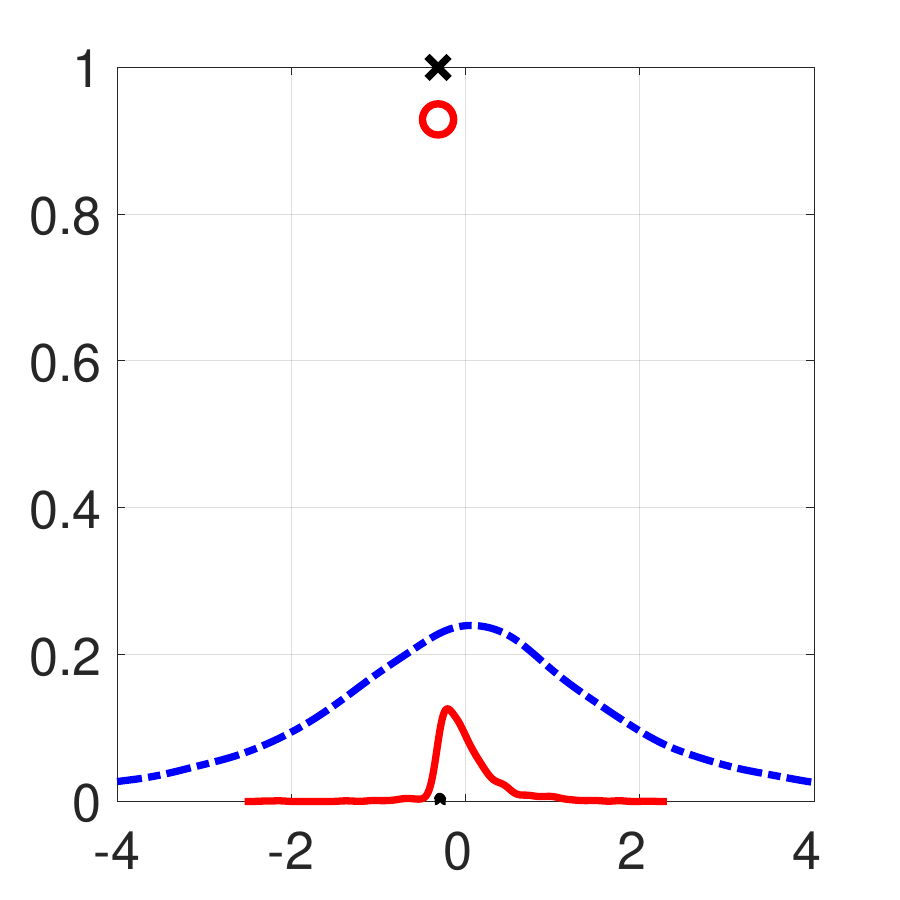}
	\end{subfigure}\begin{subfigure}{0.2\textwidth}
		\centering
		\includegraphics[trim={0cm 0cm 0.50cm 0.5cm},width=\textwidth,clip]
		{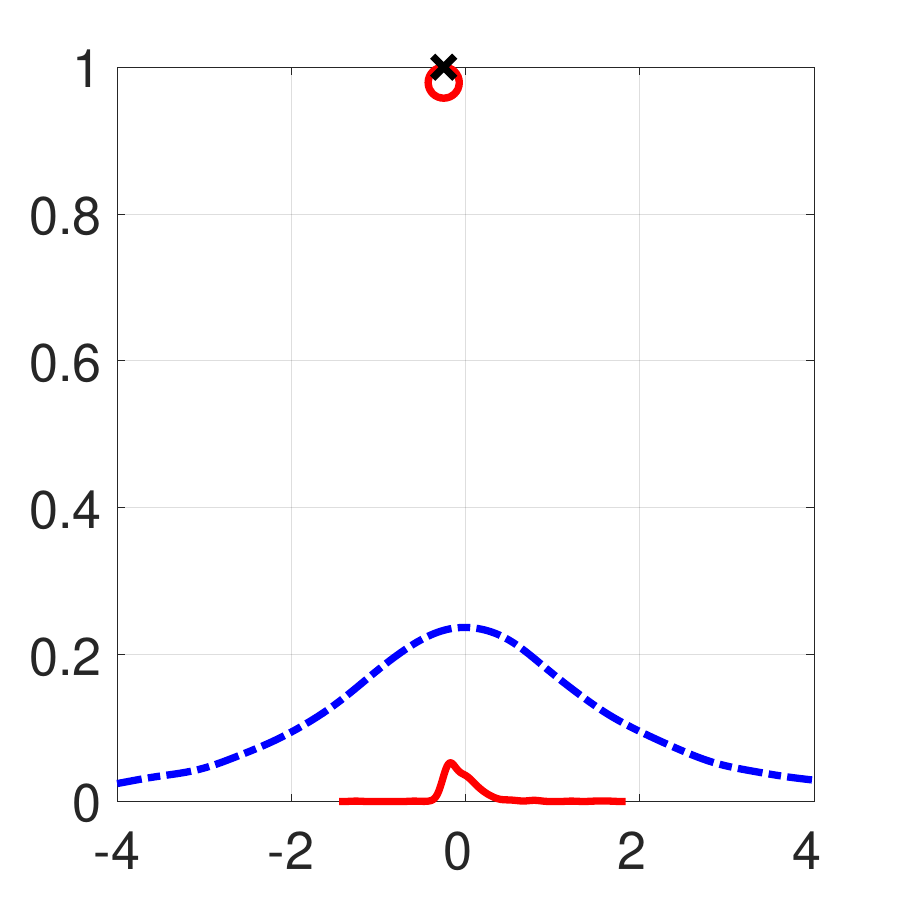}
	\end{subfigure}\begin{subfigure}{0.2\textwidth}
		\centering
		\includegraphics[trim={0cm 0cm 0.50cm 0.5cm},width=\textwidth,clip]
		{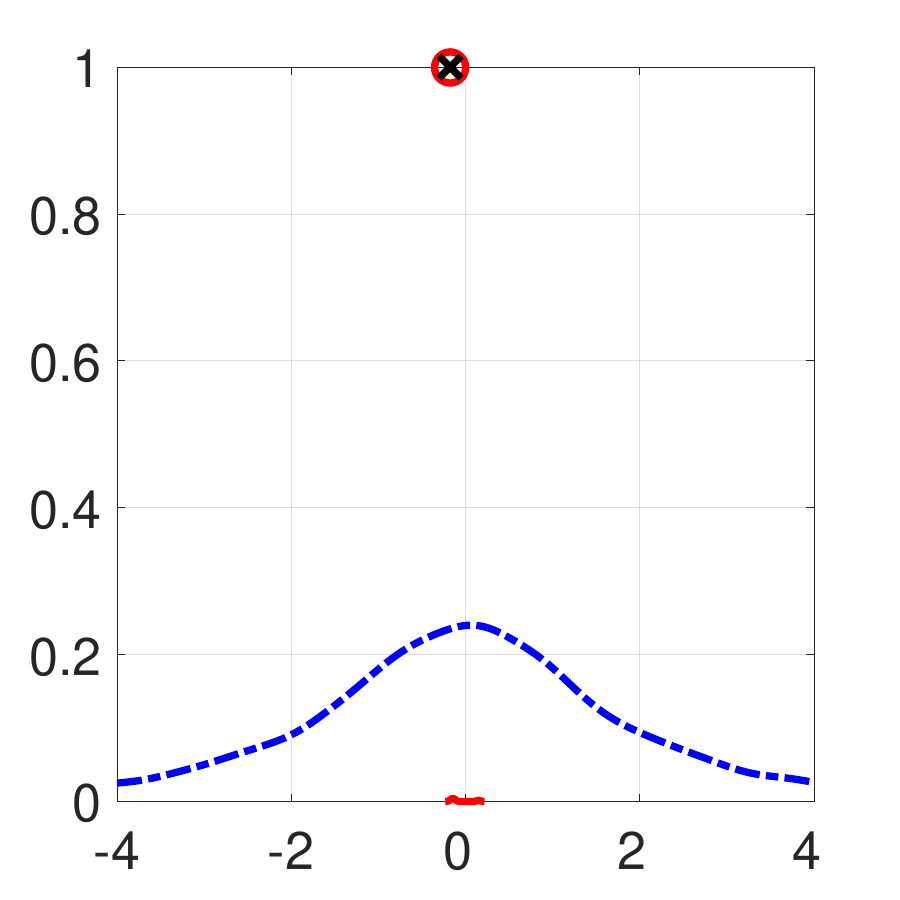}
	\end{subfigure}
	
	\caption*{$\lambda_T = T$}
	\vspace{-1.5ex}
	\begin{subfigure}{0.2\textwidth}
		\centering
		\includegraphics[trim={0cm 0cm 0.50cm 0.5cm},width=\textwidth,clip]
		{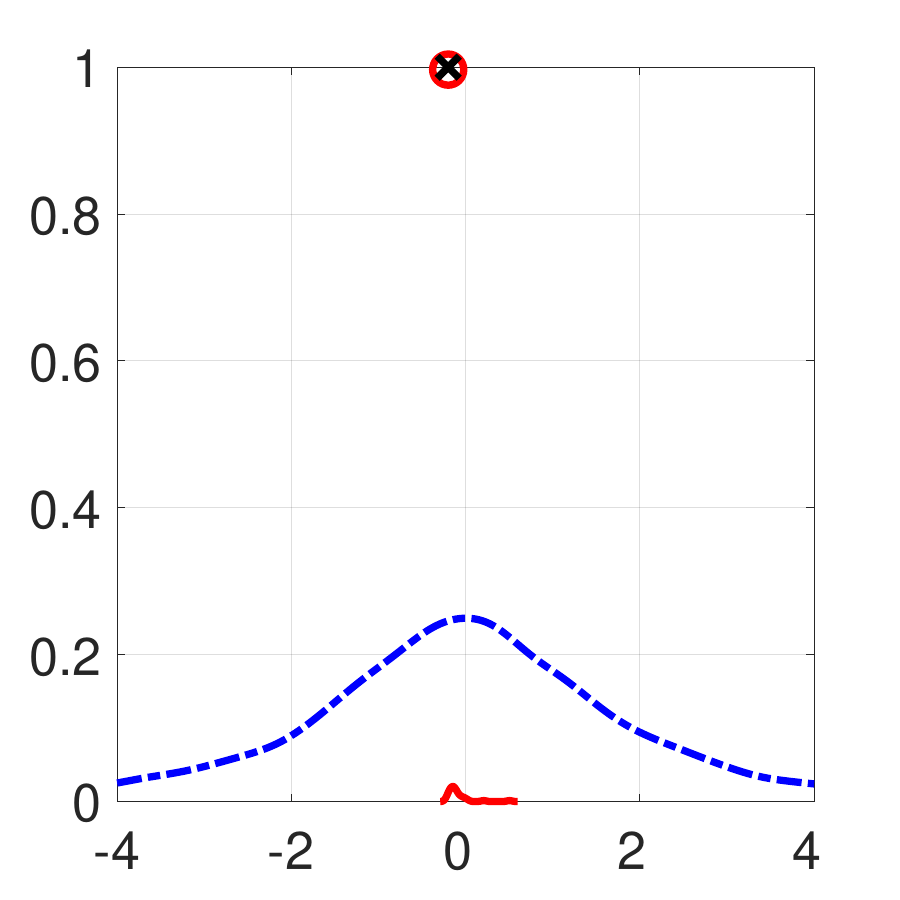}
	\end{subfigure}\begin{subfigure}{0.2\textwidth}
		\centering
		\includegraphics[trim={0cm 0cm 0.50cm 0.5cm},width=\textwidth,clip]
		{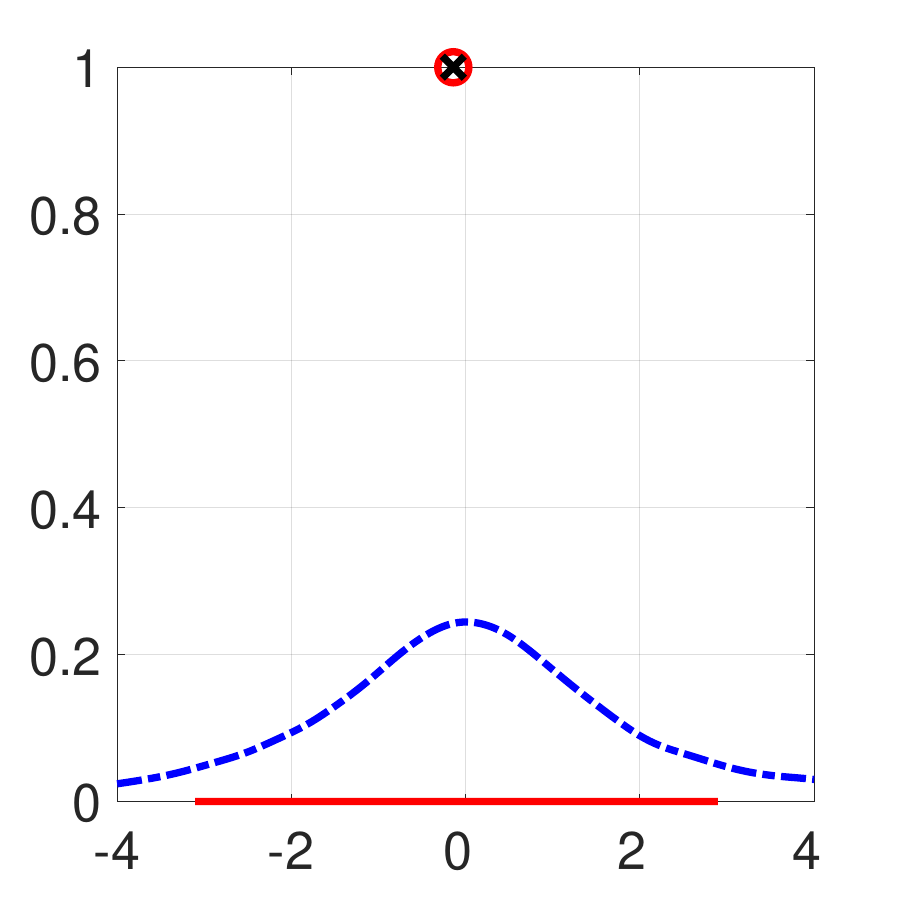}
	\end{subfigure}\begin{subfigure}{0.2\textwidth}
		\centering
		\includegraphics[trim={0cm 0cm 0.50cm 0.5cm},width=\textwidth,clip]
		{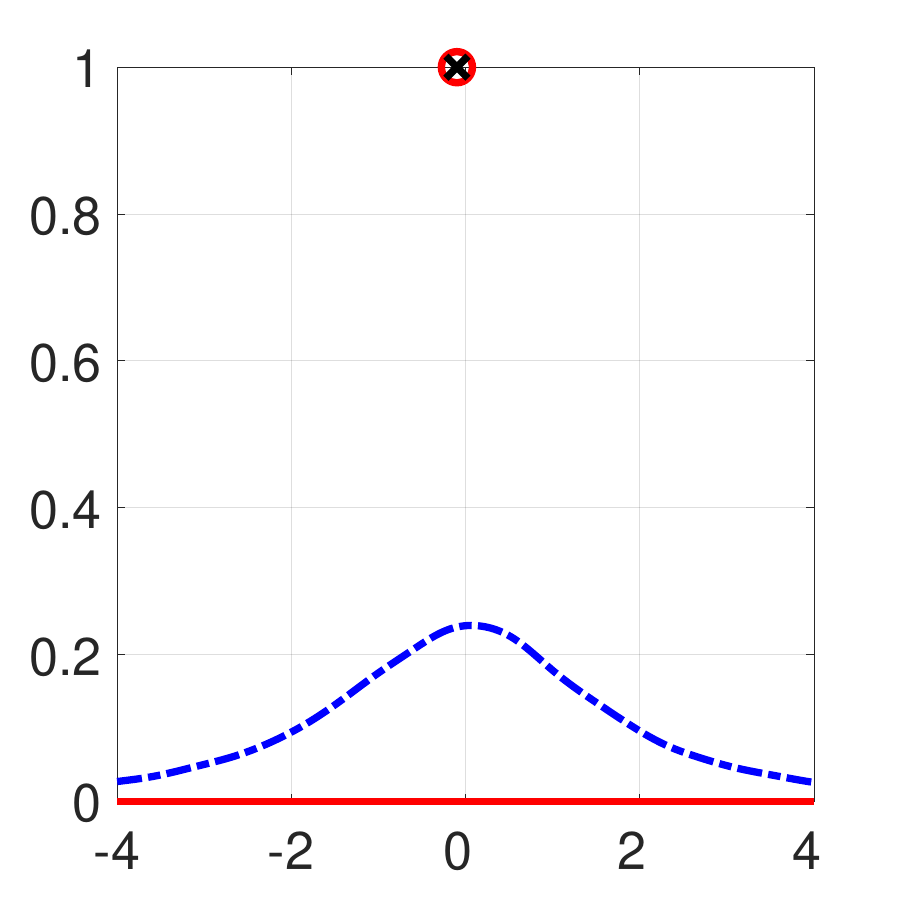}
	\end{subfigure}\begin{subfigure}{0.2\textwidth}
		\centering
		\includegraphics[trim={0cm 0cm 0.50cm 0.5cm},width=\textwidth,clip]
		{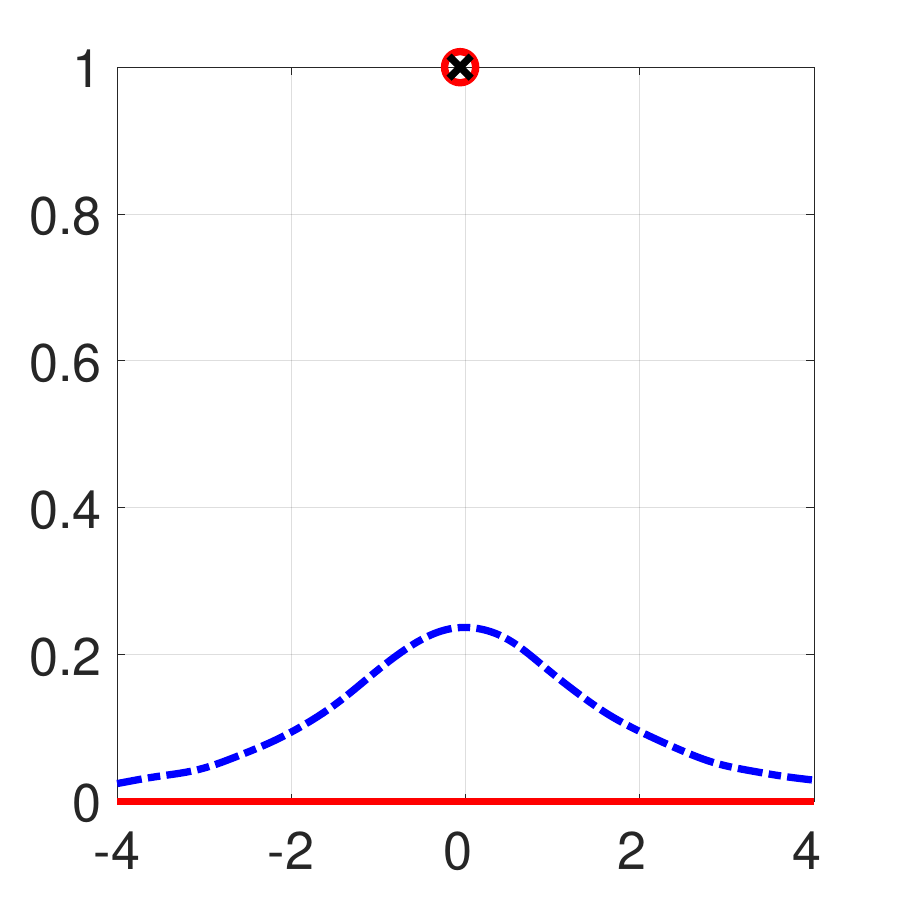}
	\end{subfigure}\begin{subfigure}{0.2\textwidth}
		\centering
		\includegraphics[trim={0cm 0cm 0.50cm 0.5cm},width=\textwidth,clip]
		{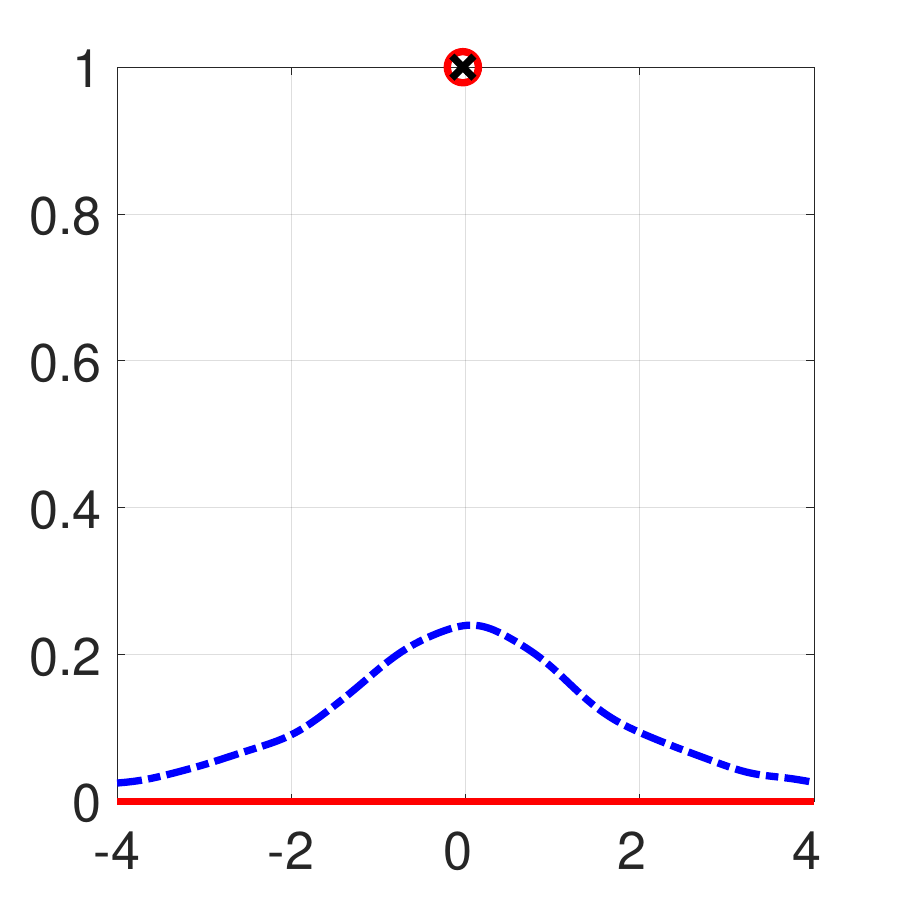}
	\end{subfigure}
	
\end{center}

\vspace{-2ex} 

\caption{Finite-sample distributions of $T(\betaAL - \beta_T)$ (under
conservative tuning, in the first row) and $\lambda_T^{-1/2}T(\betaAL
- \beta_T)$ (under consistent tuning, in the remaining rows) in case
$\beta_T = \beta/T$ (labeled ``AL''), and case-specific limiting
distribution from Theorem~\ref{thm:ls_dist-unif}, evaluated at sample
counterparts of limiting parameters (labeled ``Thm.3''). \emph{Notes}:
See notes to Figure~\ref{fig:densities_thm3_1}.}

\label{fig:densities_thm3_3}

\end{figure}

\begin{figure}[ht]
\begin{center}
	\caption*{$\lambda_T \equiv 1$}
	\begin{subfigure}{0.2\textwidth}
		\centering
		\caption*{$T = 25$}
		\vspace{-1.5ex}
		\includegraphics[trim={0cm 0cm 0.50cm 0.5cm},width=\textwidth,clip]
		{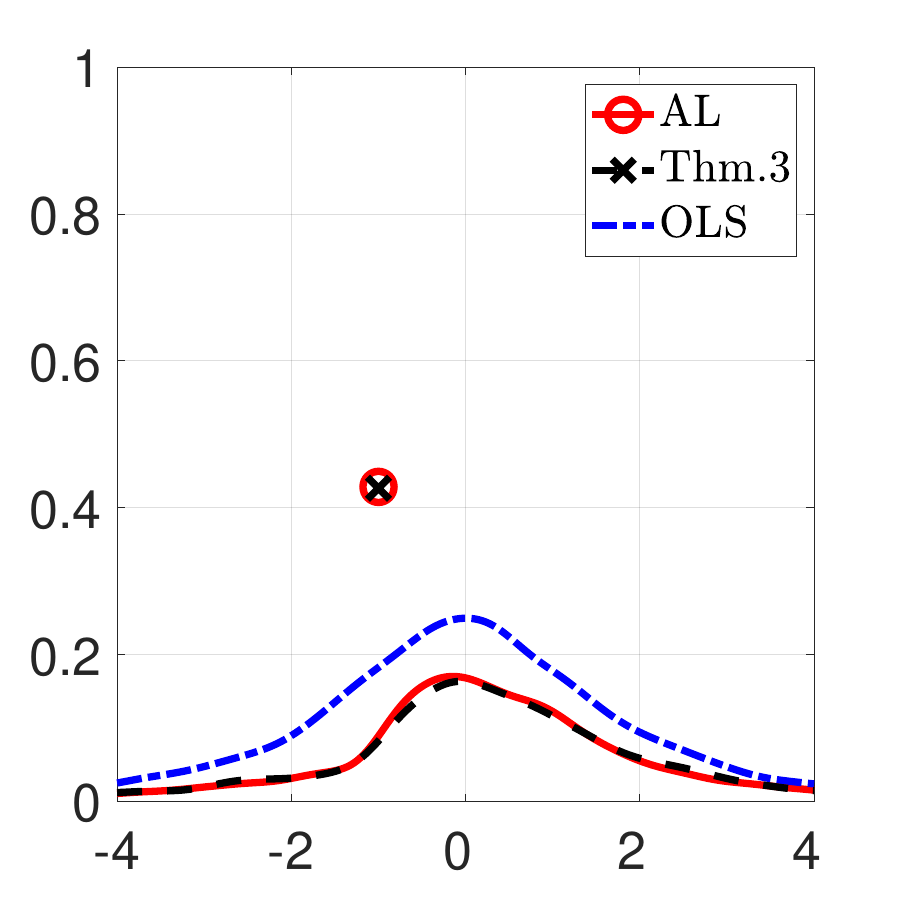}
	\end{subfigure}\begin{subfigure}{0.2\textwidth}
		\centering
		\caption*{$T = 50$}
		\vspace{-1.5ex}
		\includegraphics[trim={0cm 0cm 0.50cm 0.5cm},width=\textwidth,clip]
		{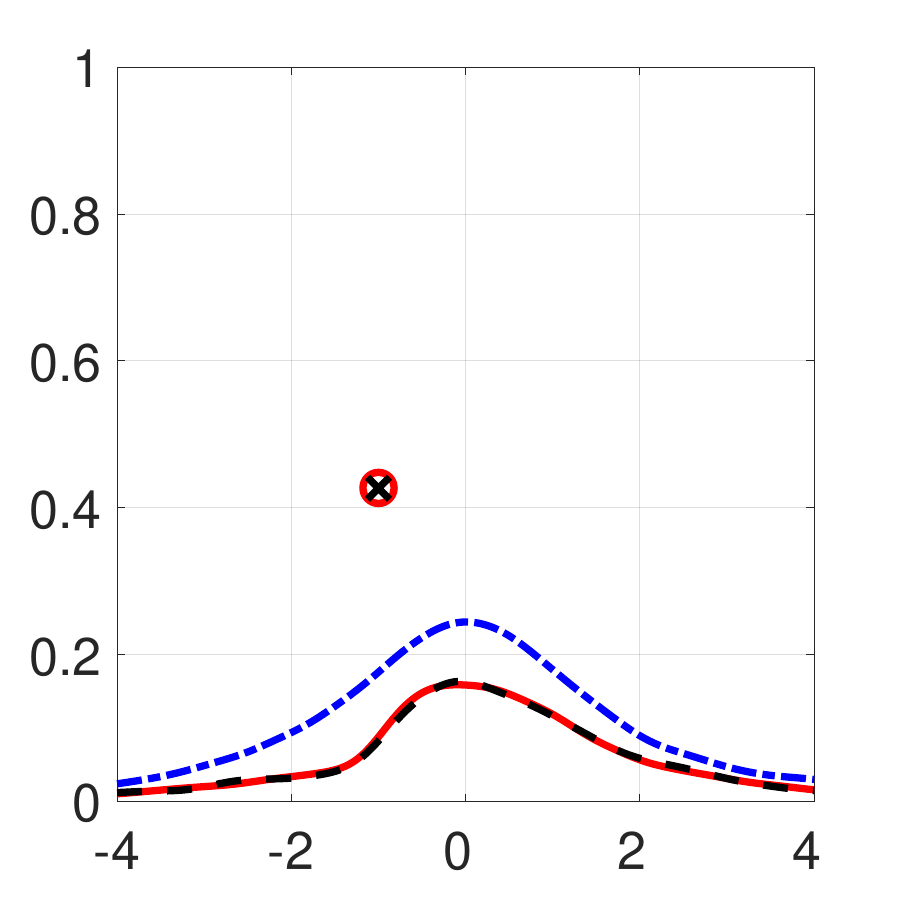}
	\end{subfigure}\begin{subfigure}{0.2\textwidth}
		\centering
		\caption*{$T = 100$}
		\vspace{-1.5ex}
		\includegraphics[trim={0cm 0cm 0.50cm 0.5cm},width=\textwidth,clip]
		{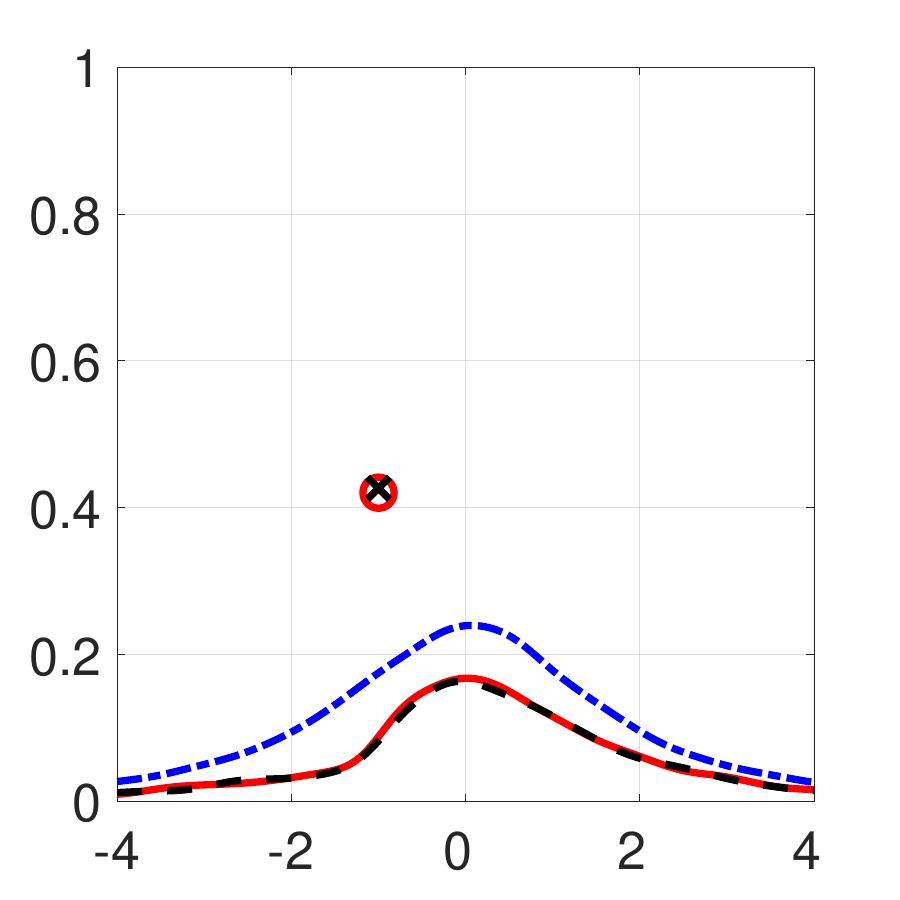}
	\end{subfigure}\begin{subfigure}{0.2\textwidth}
		\centering
		\caption*{$T = 250$}
		\vspace{-1.5ex}
		\includegraphics[trim={0cm 0cm 0.50cm 0.5cm},width=\textwidth,clip]
		{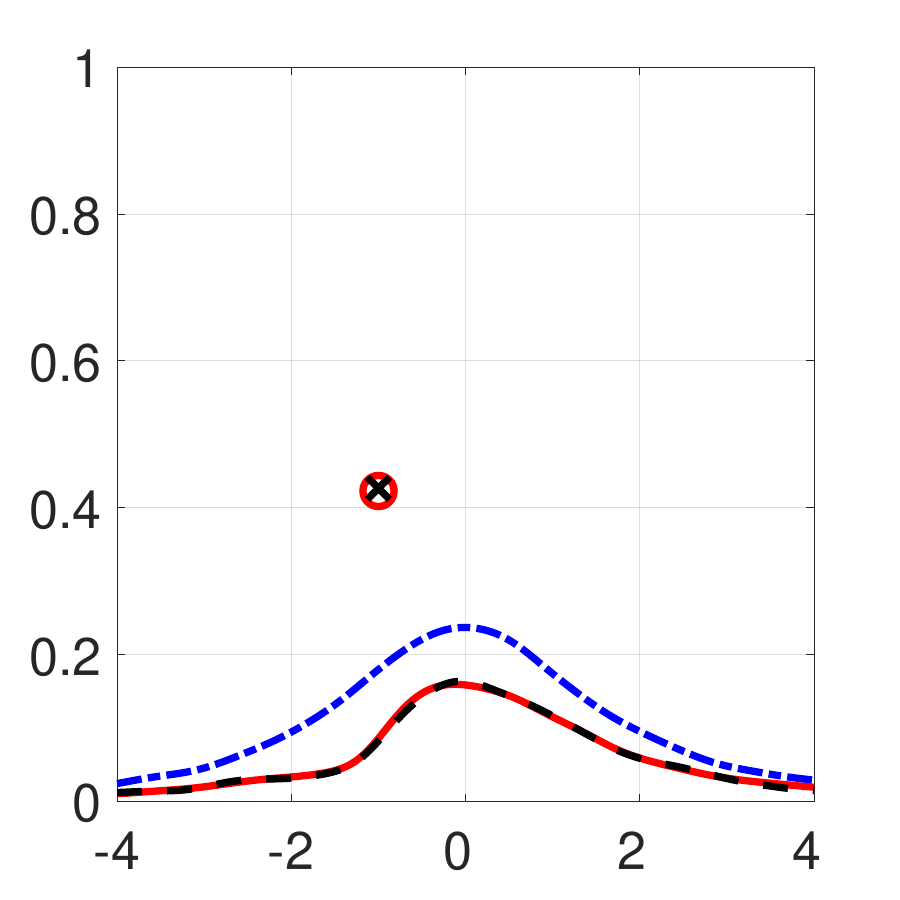}
	\end{subfigure}\begin{subfigure}{0.2\textwidth}
		\centering
		\caption*{$T = 1000$}
		\vspace{-1.5ex}
		\includegraphics[trim={0cm 0cm 0.50cm 0.5cm},width=\textwidth,clip]
		{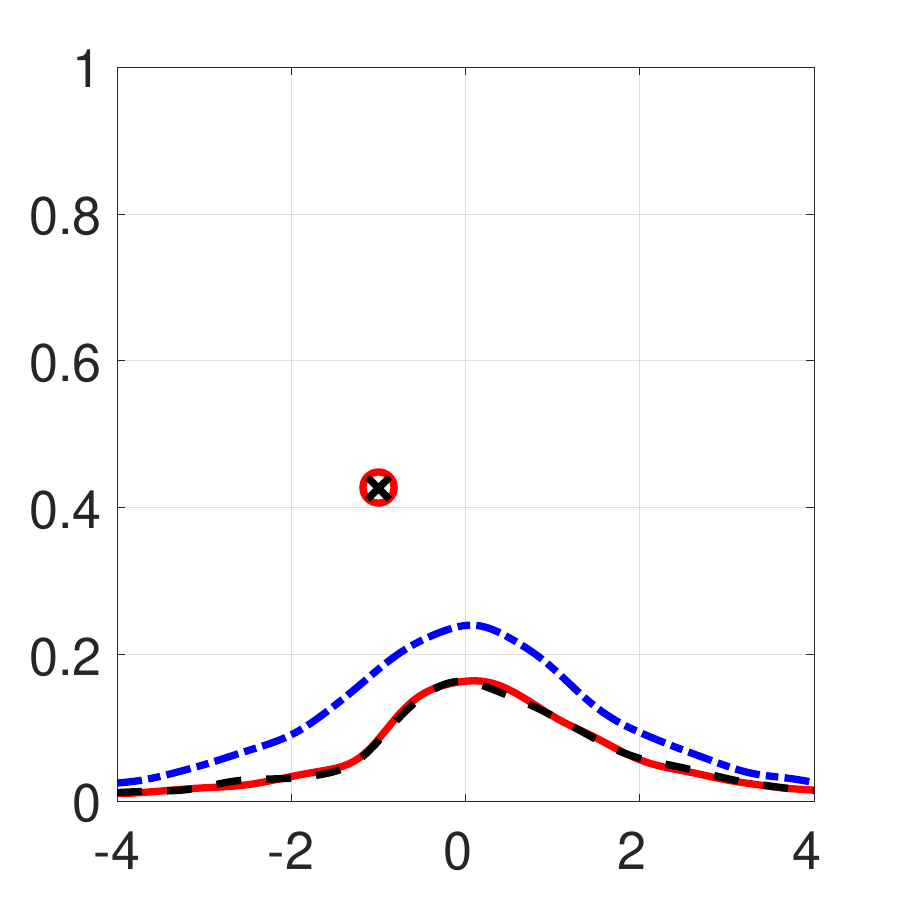}
	\end{subfigure}
	
	\caption*{$\lambda_T = T^{1/4}$}
	\vspace{-1.5ex}
	\begin{subfigure}{0.2\textwidth}
		\centering
		\includegraphics[trim={0cm 0cm 0.50cm 0.5cm},width=\textwidth,clip]
		{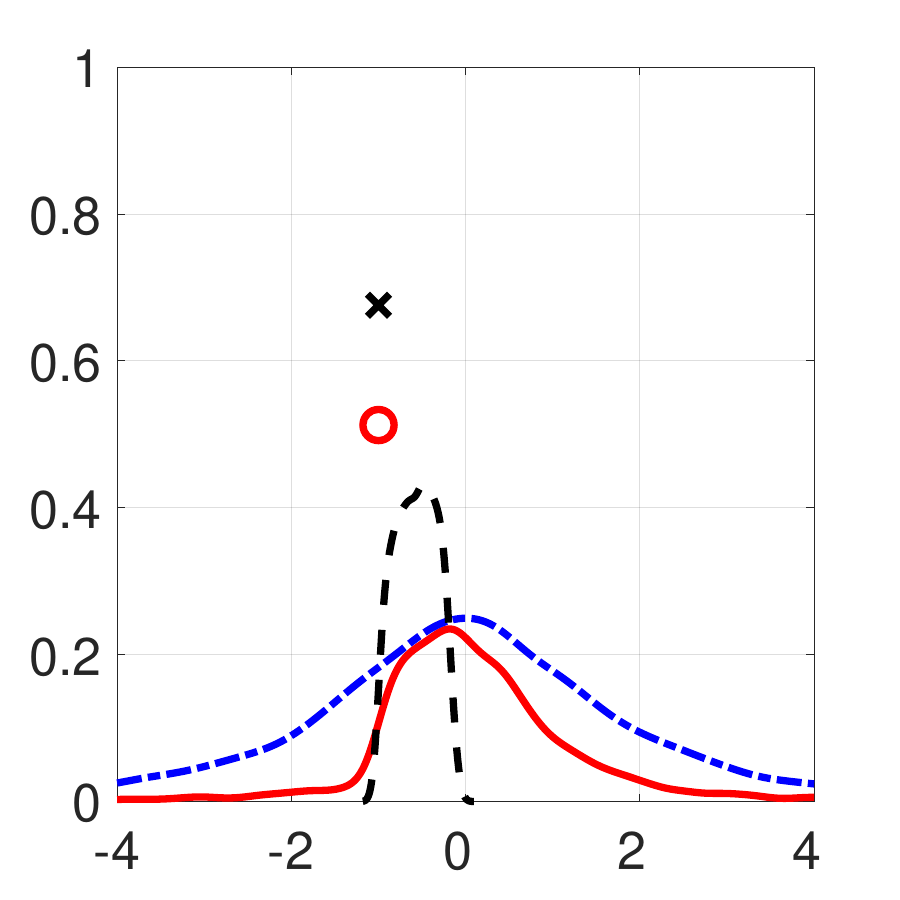}
	\end{subfigure}\begin{subfigure}{0.2\textwidth}
		\centering
		\includegraphics[trim={0cm 0cm 0.50cm 0.5cm},width=\textwidth,clip]
		{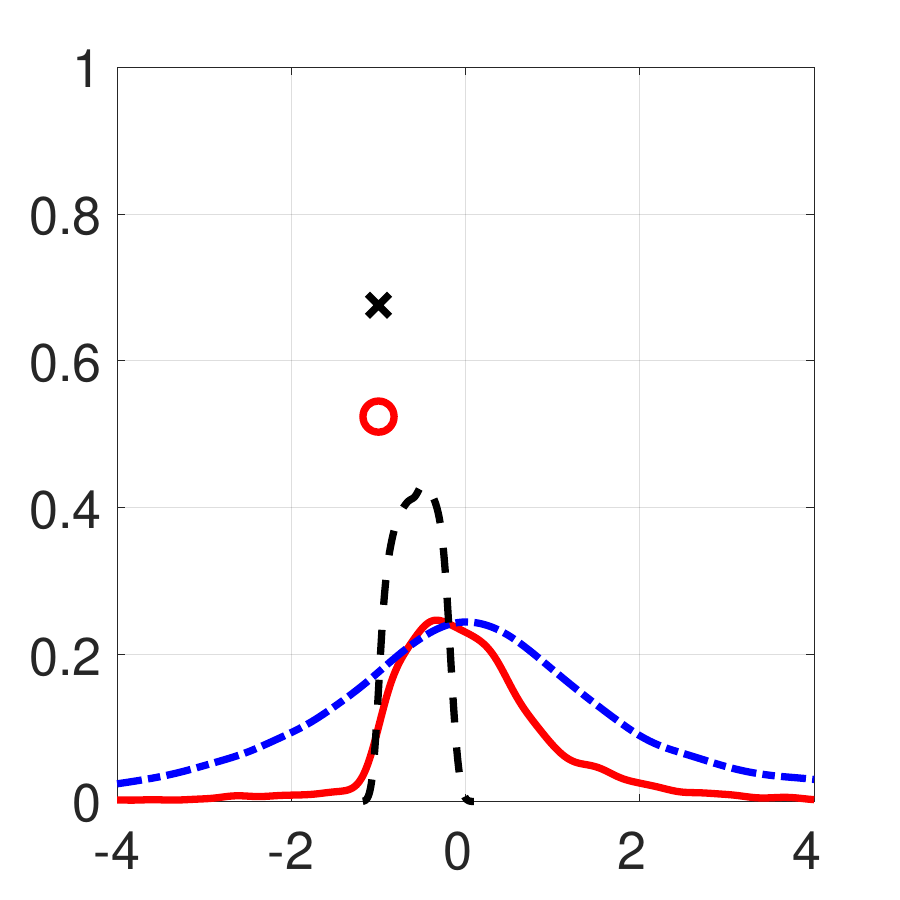}
	\end{subfigure}\begin{subfigure}{0.2\textwidth}
		\centering
		\includegraphics[trim={0cm 0cm 0.50cm 0.5cm},width=\textwidth,clip]
		{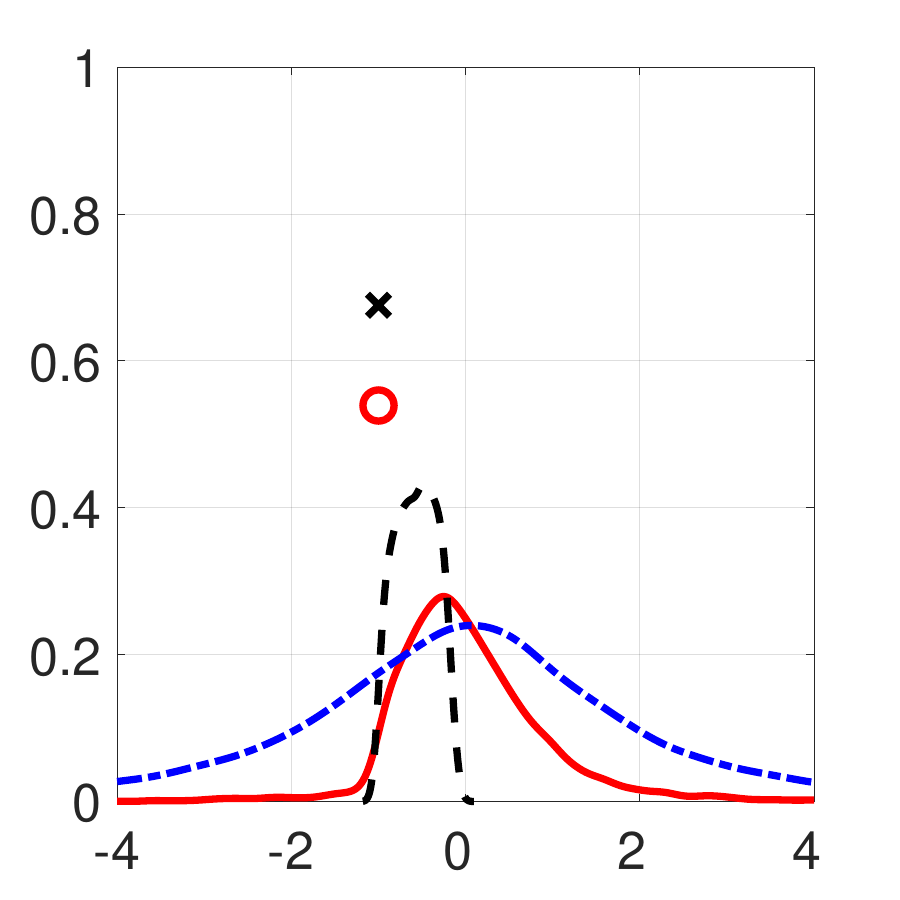}
	\end{subfigure}\begin{subfigure}{0.2\textwidth}
		\centering
		\includegraphics[trim={0cm 0cm 0.50cm 0.5cm},width=\textwidth,clip]
		{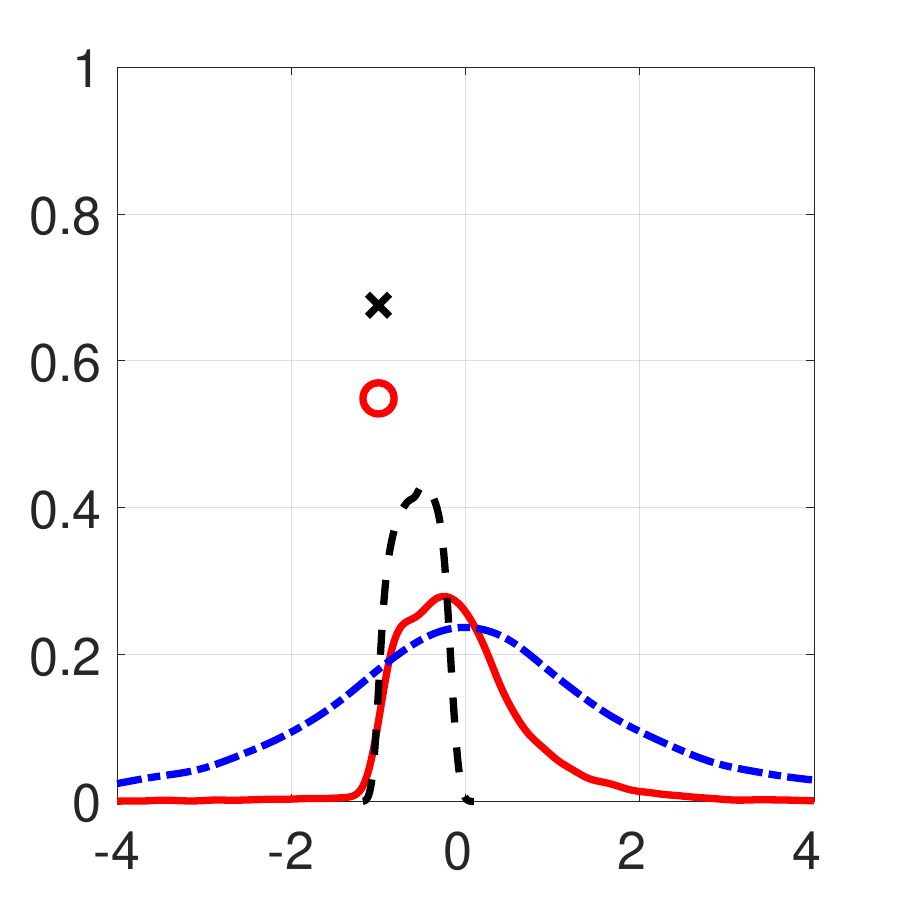}
	\end{subfigure}\begin{subfigure}{0.2\textwidth}
		\centering
		\includegraphics[trim={0cm 0cm 0.50cm 0.5cm},width=\textwidth,clip]
		{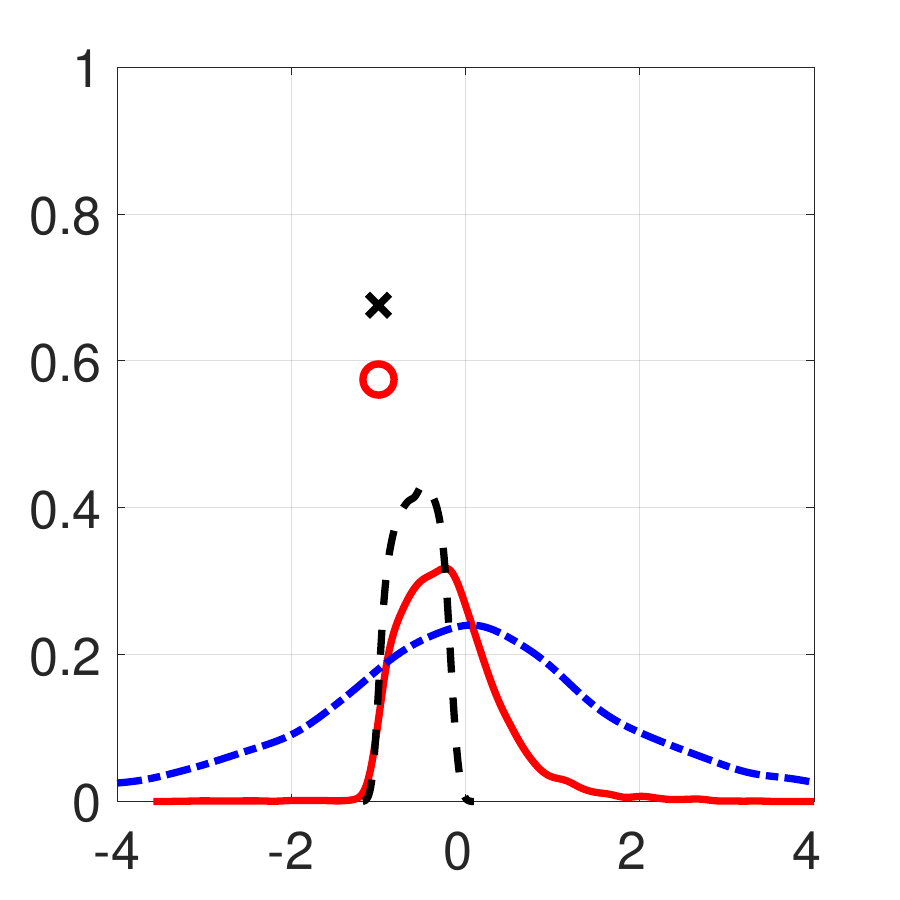}
	\end{subfigure}
	
	\caption*{$\lambda_T = T^{1/2}$}
	\vspace{-1.5ex}
	\begin{subfigure}{0.2\textwidth}
		\centering
		\includegraphics[trim={0cm 0cm 0.50cm 0.5cm},width=\textwidth,clip]
		{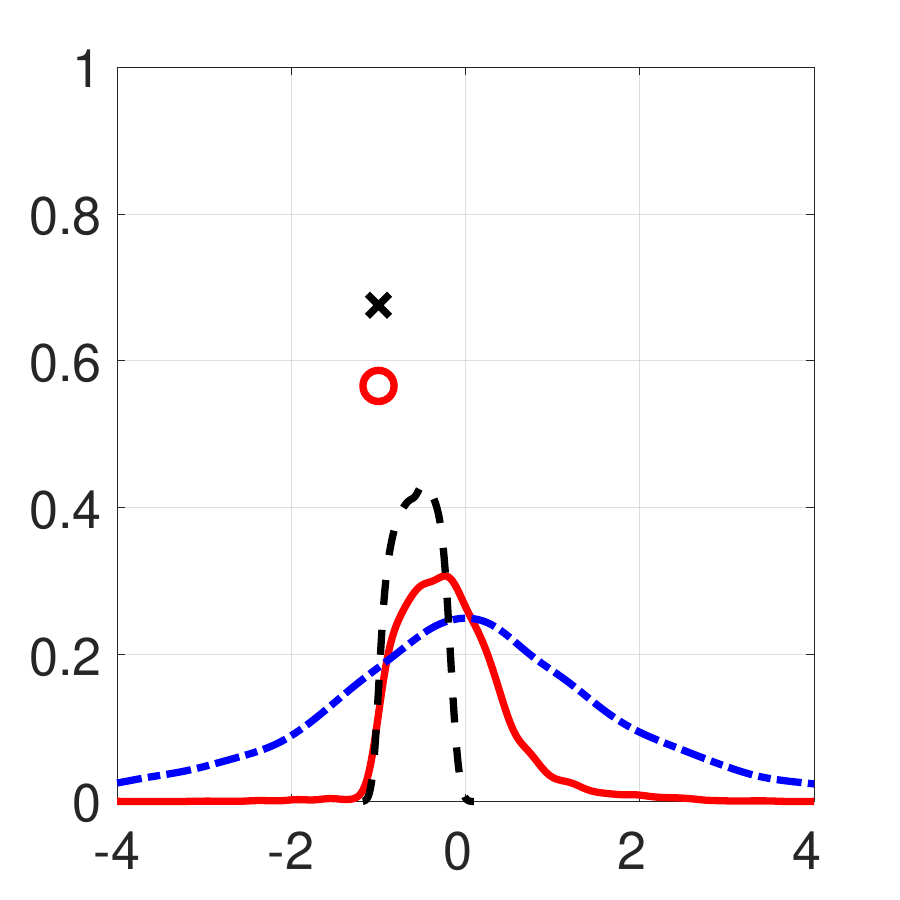}
	\end{subfigure}\begin{subfigure}{0.2\textwidth}
		\centering
		\includegraphics[trim={0cm 0cm 0.50cm 0.5cm},width=\textwidth,clip]
		{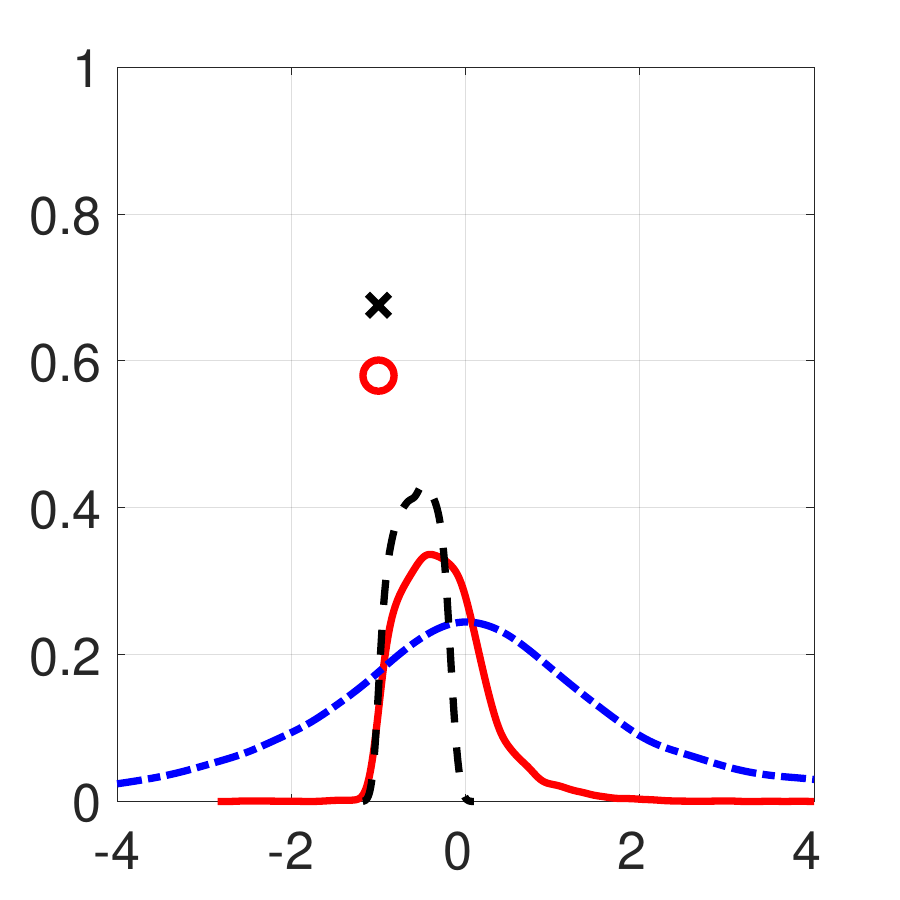}
	\end{subfigure}\begin{subfigure}{0.2\textwidth}
		\centering
		\includegraphics[trim={0cm 0cm 0.50cm 0.5cm},width=\textwidth,clip]
		{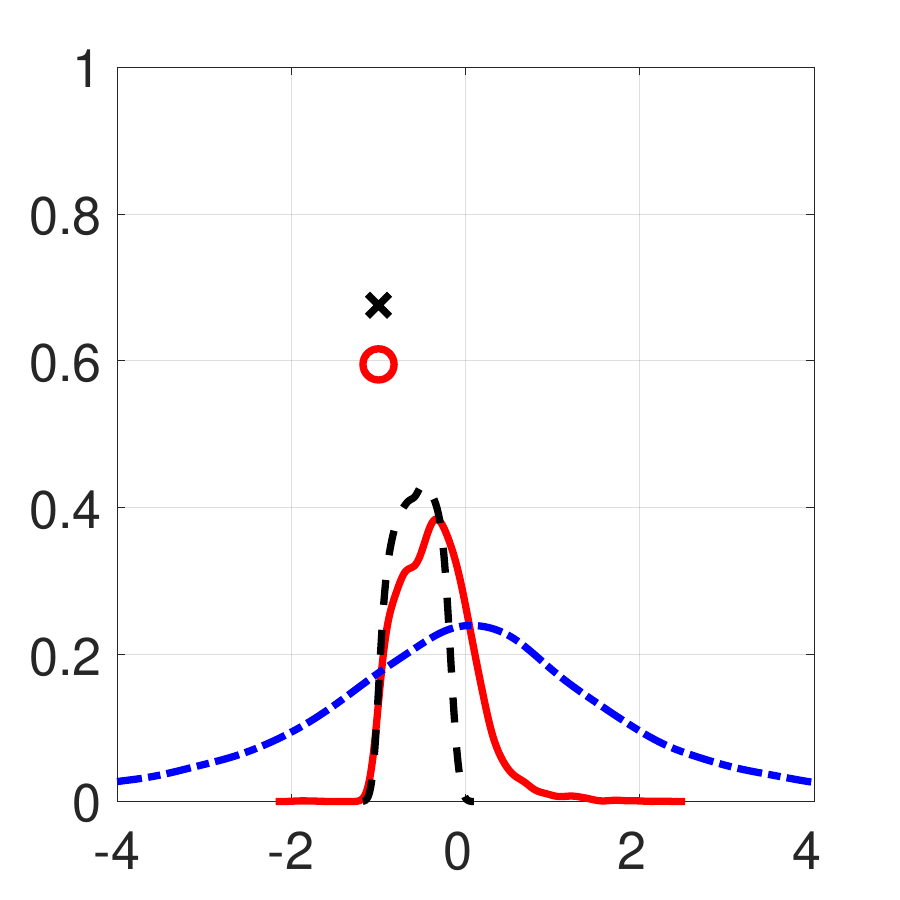}
	\end{subfigure}\begin{subfigure}{0.2\textwidth}
		\centering
		\includegraphics[trim={0cm 0cm 0.50cm 0.5cm},width=\textwidth,clip]
		{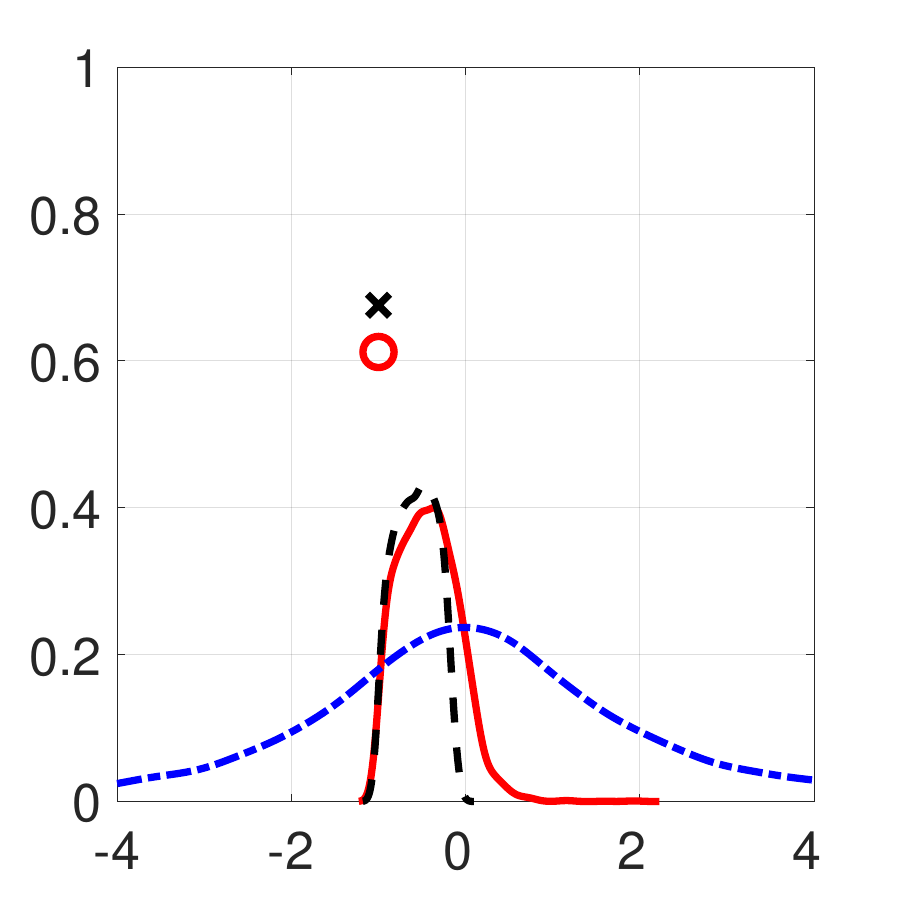}
	\end{subfigure}\begin{subfigure}{0.2\textwidth}
		\centering
		\includegraphics[trim={0cm 0cm 0.50cm 0.5cm},width=\textwidth,clip]
		{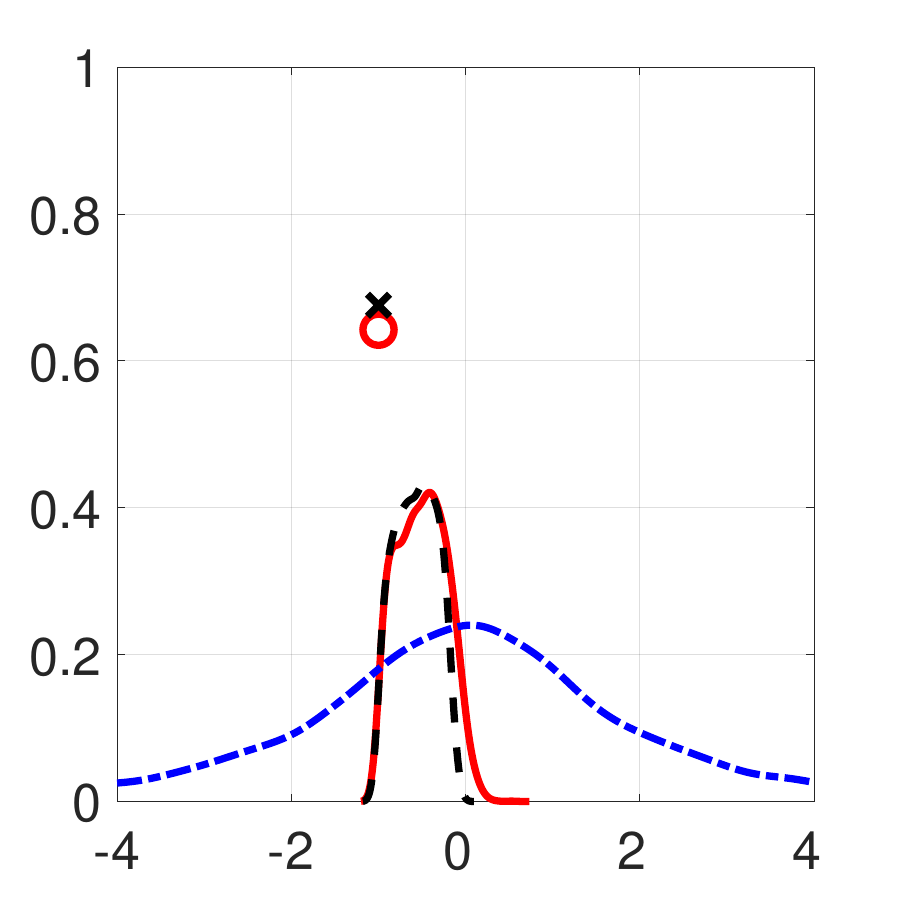}
	\end{subfigure}
	
	\caption*{$\lambda_T = T$}
	\vspace{-1.5ex}
	\begin{subfigure}{0.2\textwidth}
		\centering
		\includegraphics[trim={0cm 0cm 0.50cm 0.5cm},width=\textwidth,clip]
		{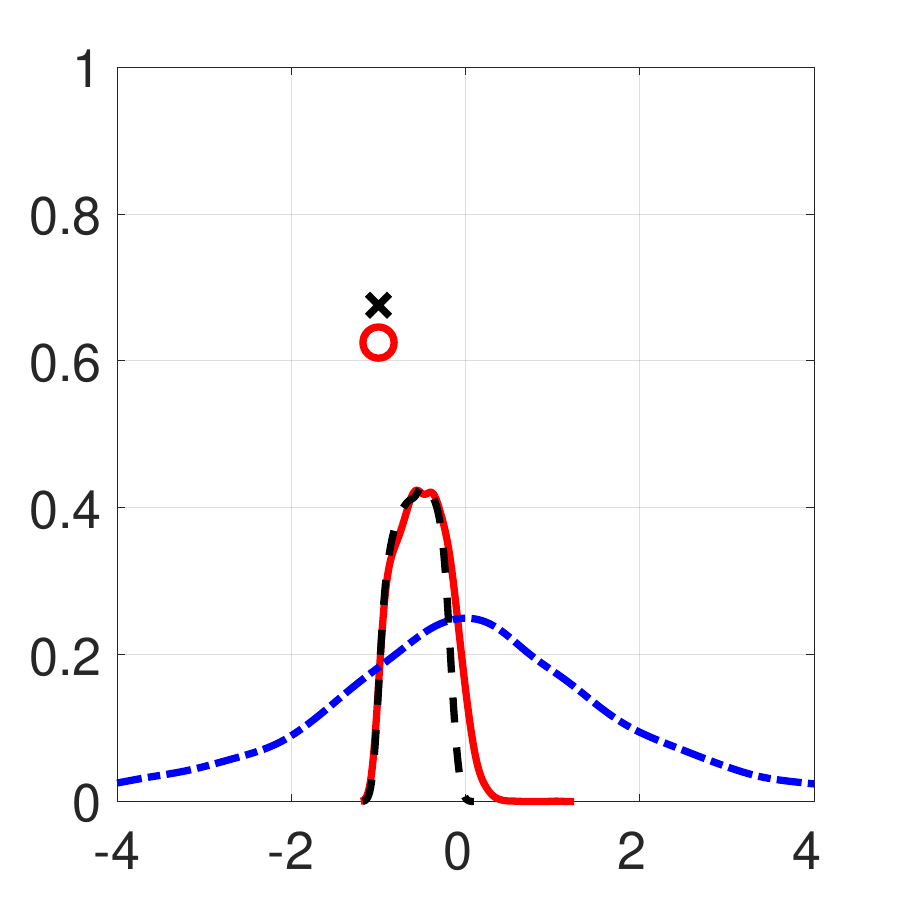}
	\end{subfigure}\begin{subfigure}{0.2\textwidth}
		\centering
		\includegraphics[trim={0cm 0cm 0.50cm 0.5cm},width=\textwidth,clip]
		{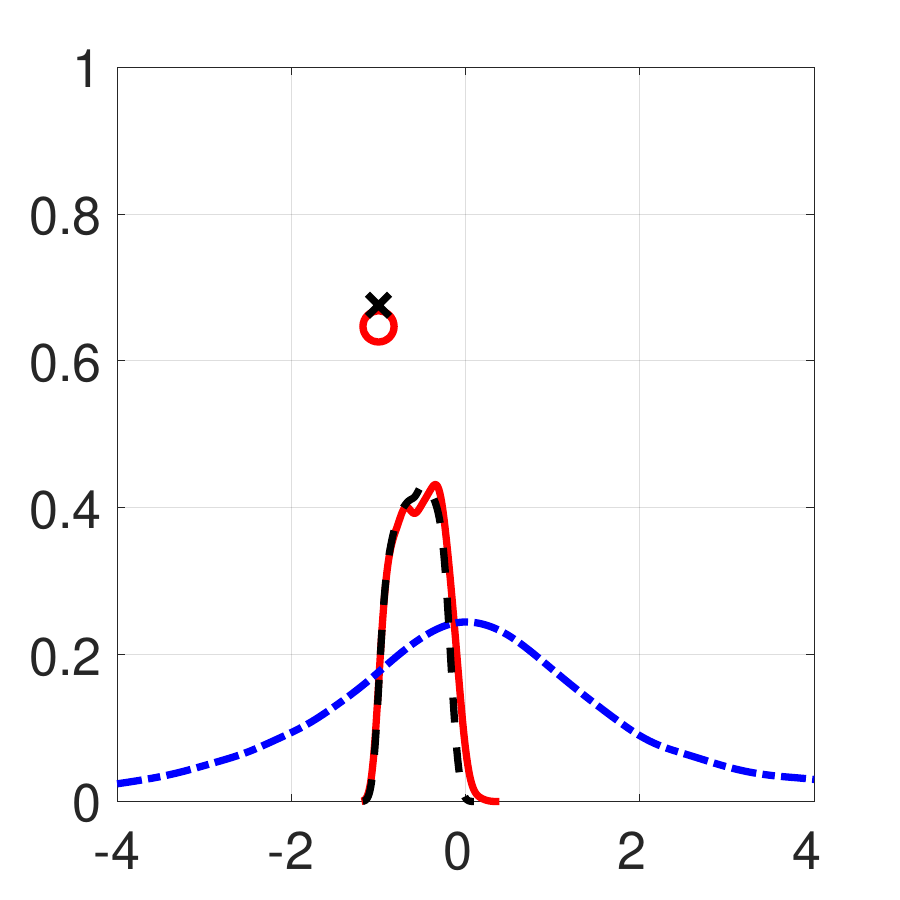}
	\end{subfigure}\begin{subfigure}{0.2\textwidth}
		\centering
		\includegraphics[trim={0cm 0cm 0.50cm 0.5cm},width=\textwidth,clip]
		{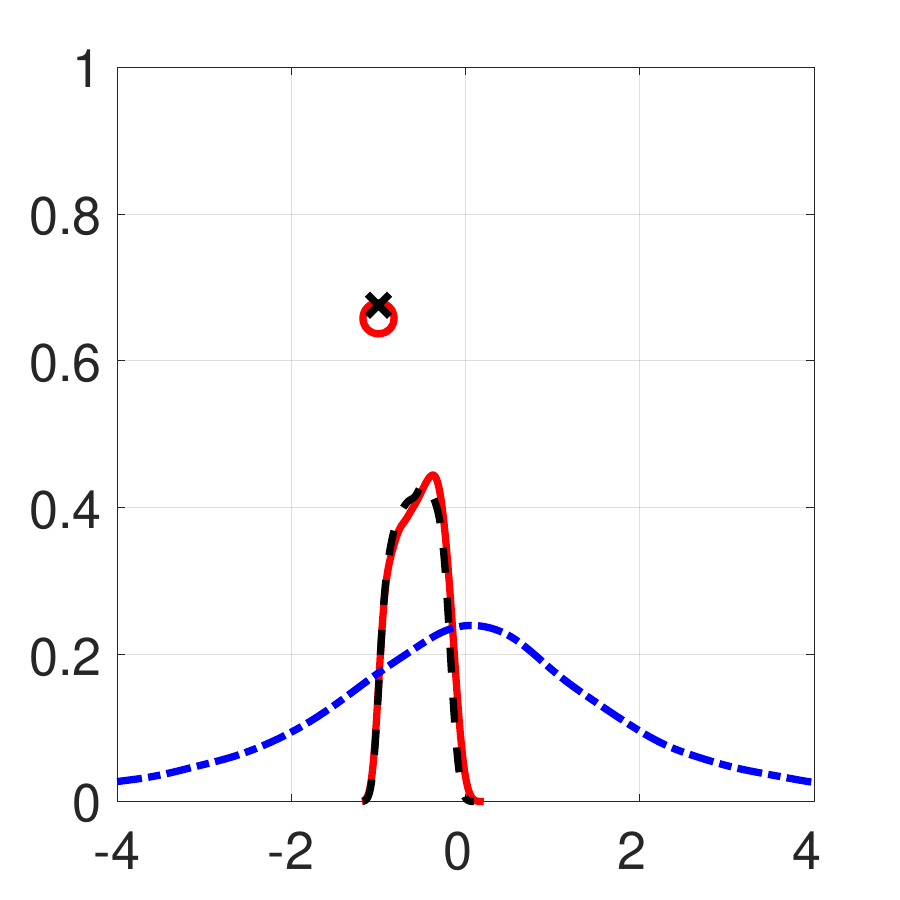}
	\end{subfigure}\begin{subfigure}{0.2\textwidth}
		\centering
		\includegraphics[trim={0cm 0cm 0.50cm 0.5cm},width=\textwidth,clip]
		{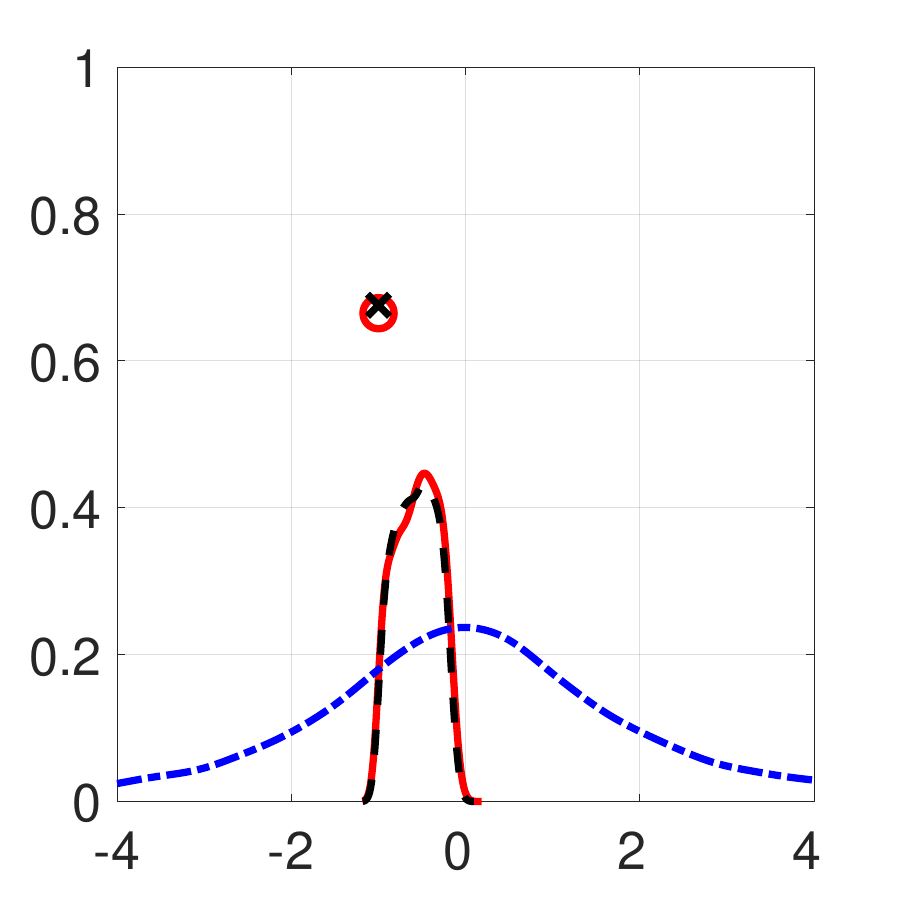}
	\end{subfigure}\begin{subfigure}{0.2\textwidth}
		\centering
		\includegraphics[trim={0cm 0cm 0.50cm 0.5cm},width=\textwidth,clip]
		{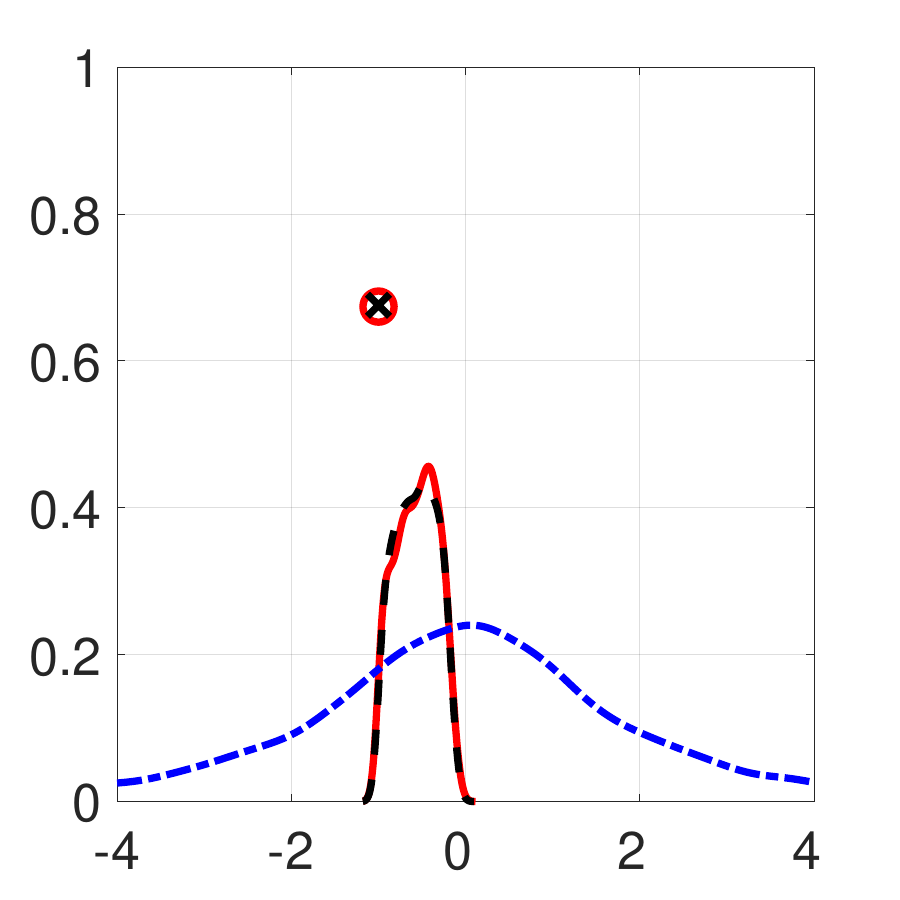}
	\end{subfigure}
	
\end{center}

\vspace{-2ex} 

\caption{Finite-sample distributions of $T(\betaAL - \beta_T)$ (under
conservative tuning, in the first row) and $\lambda_T^{-1/2}T(\betaAL
- \beta_T)$ (under consistent tuning, in the remaining rows) in case
$\beta_T = \sqrt{\lambda_T}\beta/T$ (labeled ``AL''), and
case-specific limiting distribution from
Theorem~\ref{thm:ls_dist-unif}, evaluated at sample counterparts of
limiting parameters (labeled ``Thm.3''). \emph{Notes}: See notes to
Figure~\ref{fig:densities_thm3_1}.}

\label{fig:densities_thm3_4}

\end{figure}

\begin{figure}[ht]
\begin{center}
	\caption*{$\lambda_T = T^{1/4}$}
	\vspace{-1.5ex}
	\begin{subfigure}{0.2\textwidth}
		\centering
		\caption*{$T = 25$}
		\vspace{-1.5ex}
		\includegraphics[trim={0cm 0cm 0.50cm 0.5cm},width=\textwidth,clip]
		{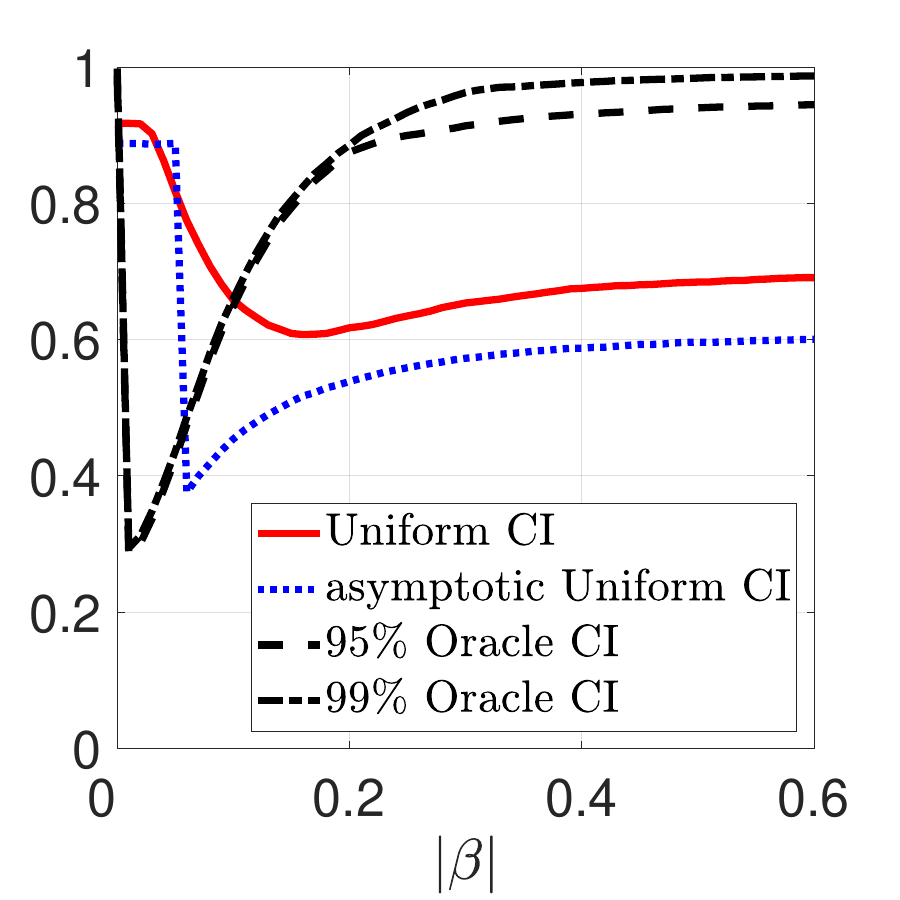}
	\end{subfigure}\begin{subfigure}{0.2\textwidth}
		\centering
		\caption*{$T = 50$}
		\vspace{-1.5ex}
		\includegraphics[trim={0cm 0cm 0.50cm 0.5cm},width=\textwidth,clip]
		{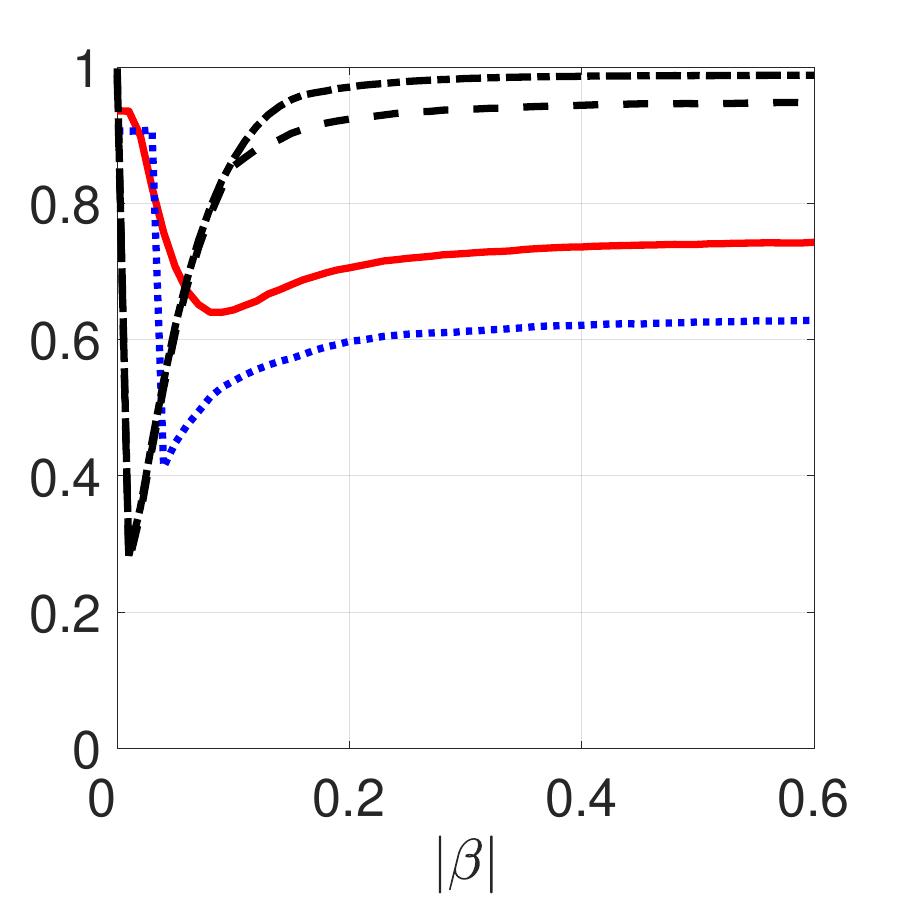}
	\end{subfigure}\begin{subfigure}{0.2\textwidth}
		\centering
		\caption*{$T = 100$}
		\vspace{-1.5ex}
		\includegraphics[trim={0cm 0cm 0.50cm 0.5cm},width=\textwidth,clip]
		{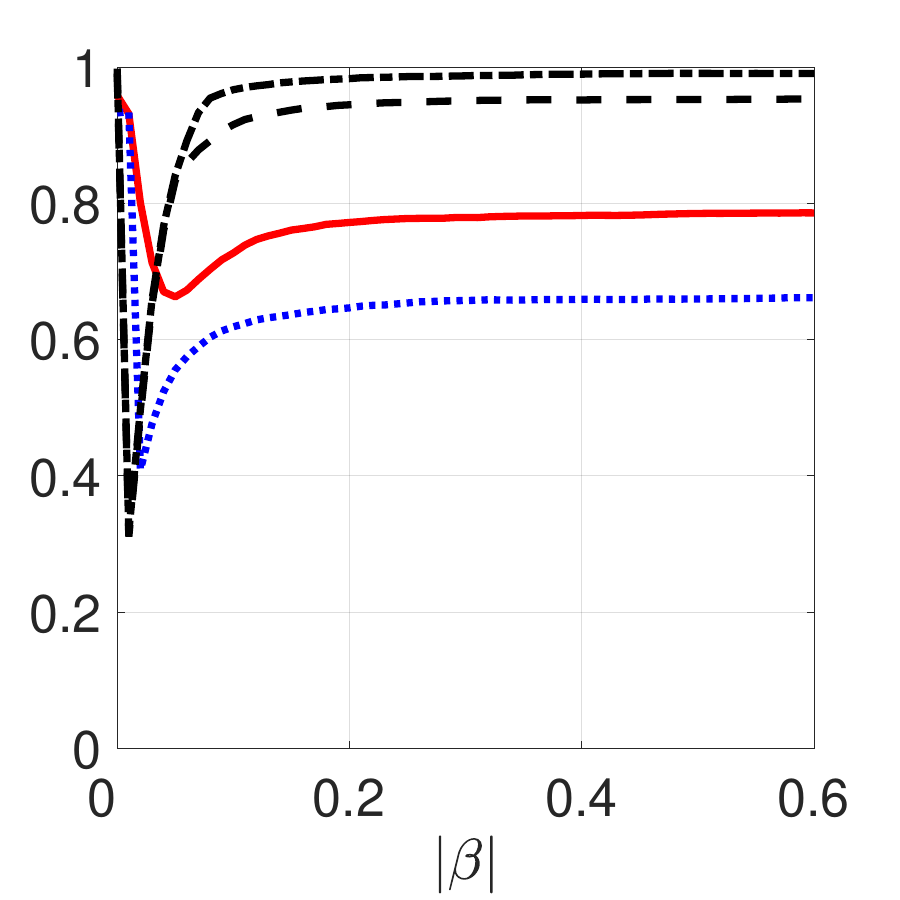}
	\end{subfigure}\begin{subfigure}{0.2\textwidth}
		\centering
		\caption*{$T = 250$}
		\vspace{-1.5ex}
		\includegraphics[trim={0cm 0cm 0.50cm 0.5cm},width=\textwidth,clip]
		{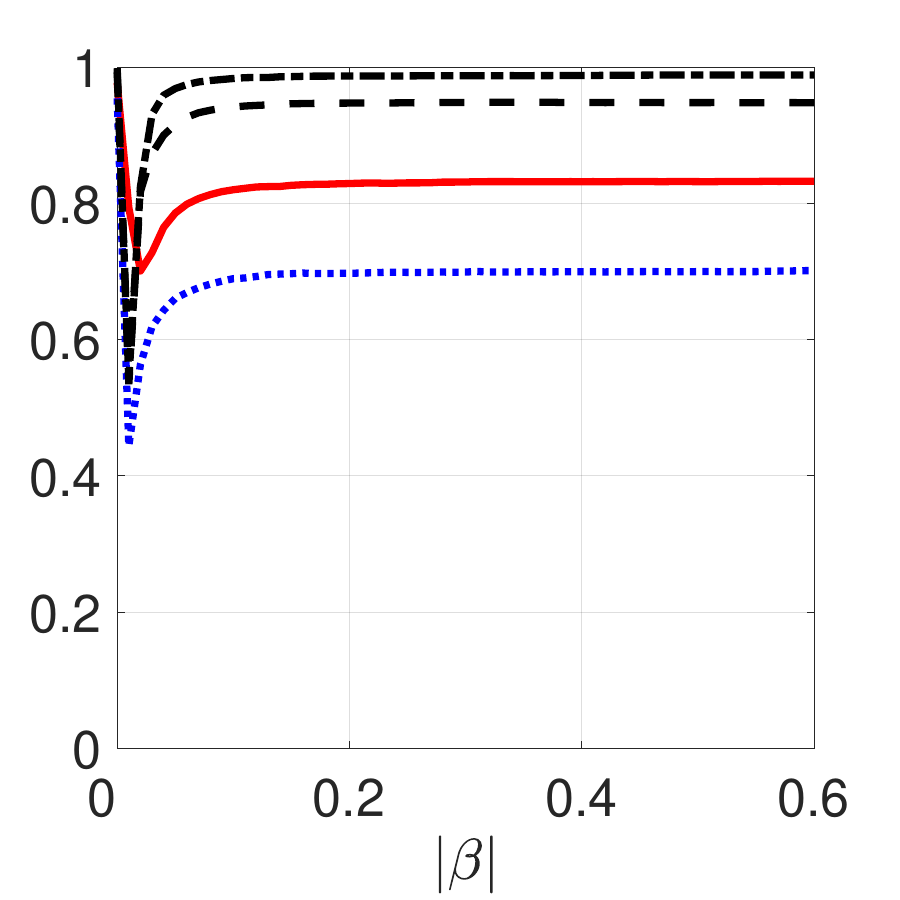}
	\end{subfigure}\begin{subfigure}{0.2\textwidth}
		\centering
		\caption*{$T = 1000$}
		\vspace{-1.5ex}
		\includegraphics[trim={0cm 0cm 0.50cm 0.5cm},width=\textwidth,clip]
		{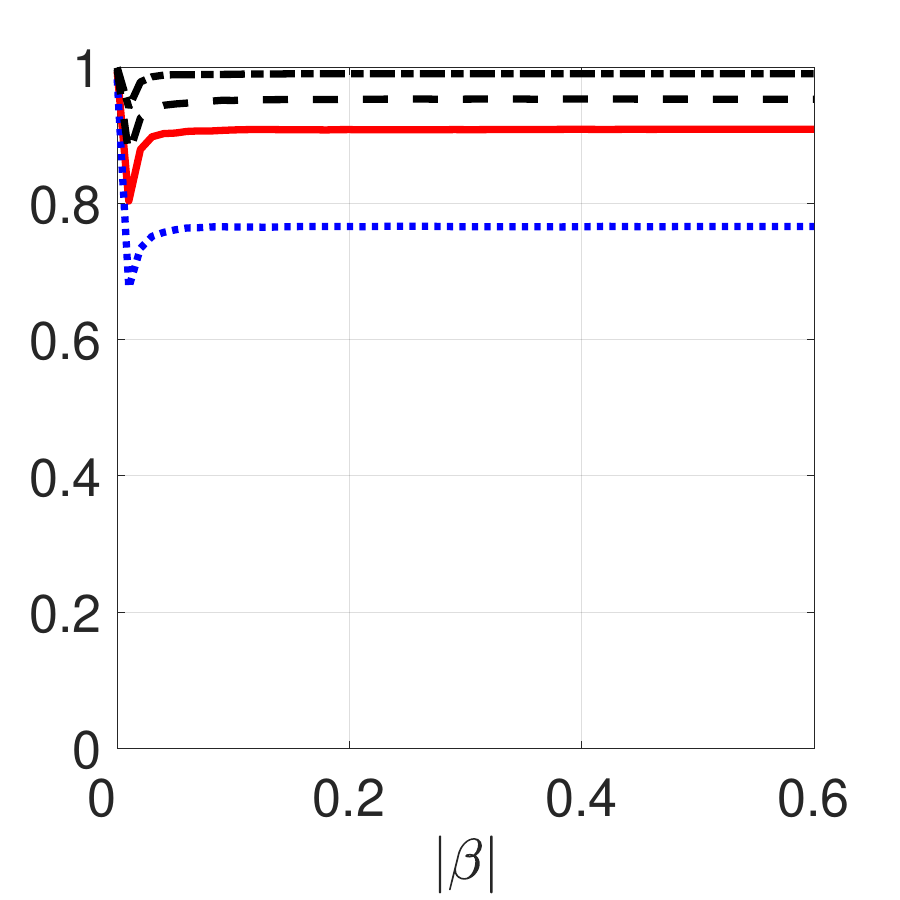}
	\end{subfigure}
	
	\caption*{$\lambda_T = T^{1/2}$}
	\vspace{-1.5ex}
	\begin{subfigure}{0.2\textwidth}
		\centering
		\includegraphics[trim={0cm 0cm 0.50cm 0.5cm},width=\textwidth,clip]
		{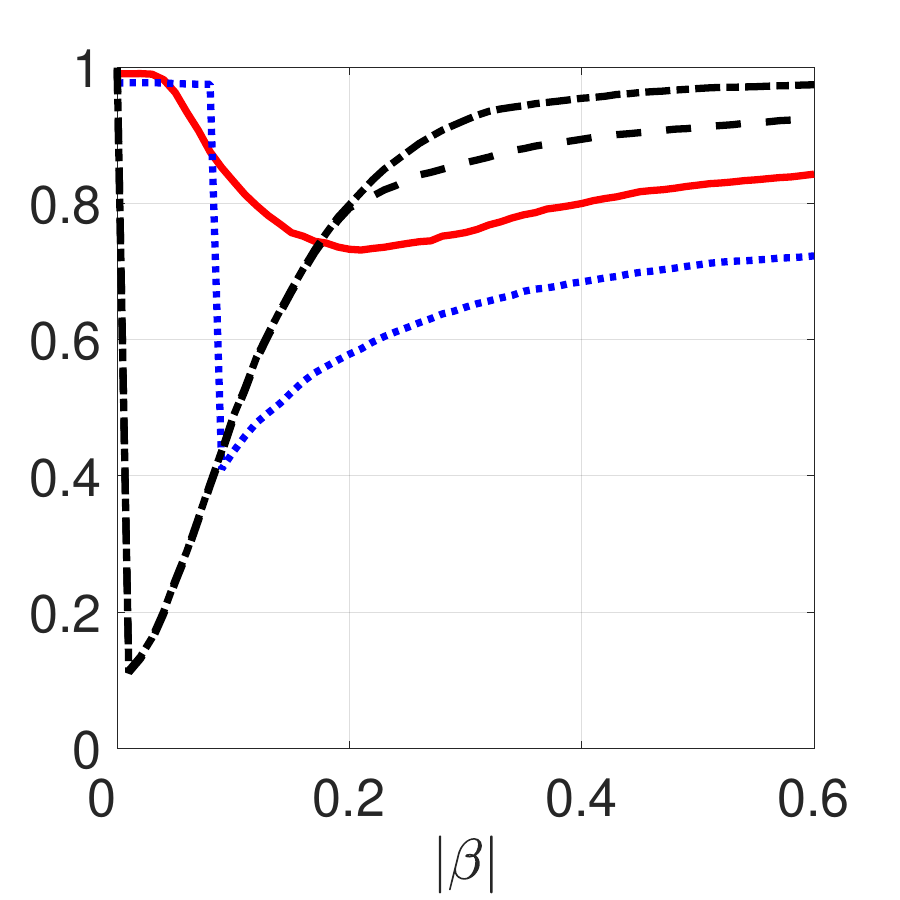}
	\end{subfigure}\begin{subfigure}{0.2\textwidth}
		\centering
		\includegraphics[trim={0cm 0cm 0.50cm 0.5cm},width=\textwidth,clip]
		{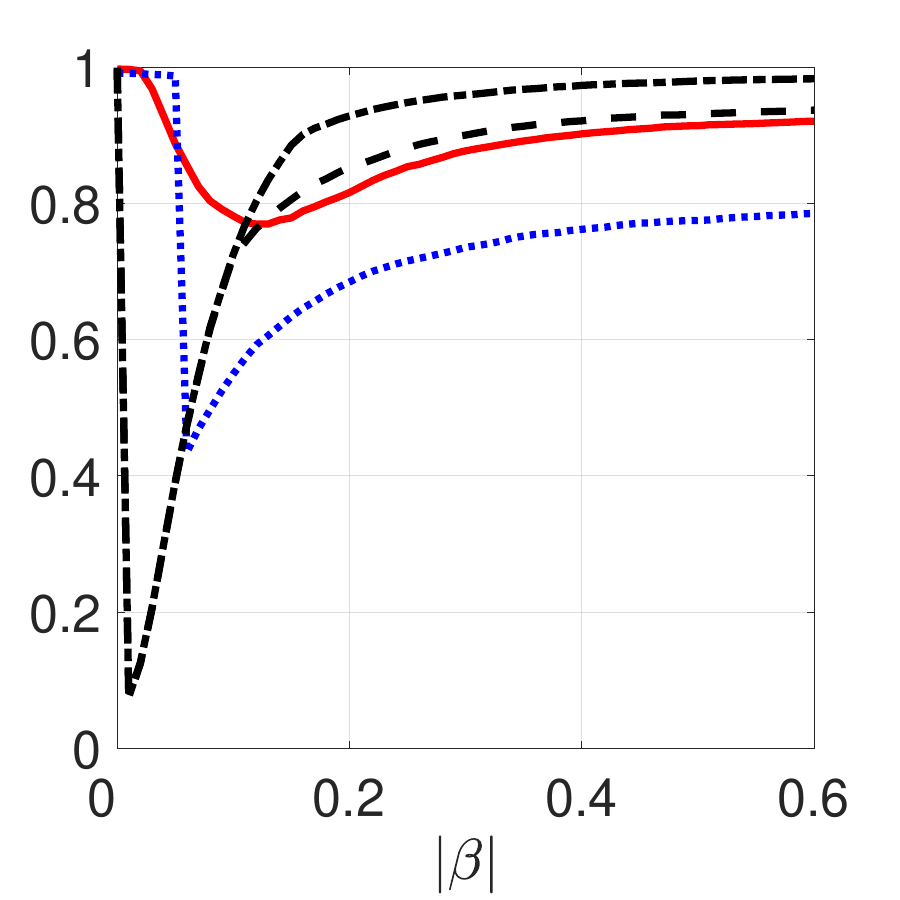}
	\end{subfigure}\begin{subfigure}{0.2\textwidth}
		\centering
		\includegraphics[trim={0cm 0cm 0.50cm 0.5cm},width=\textwidth,clip]
		{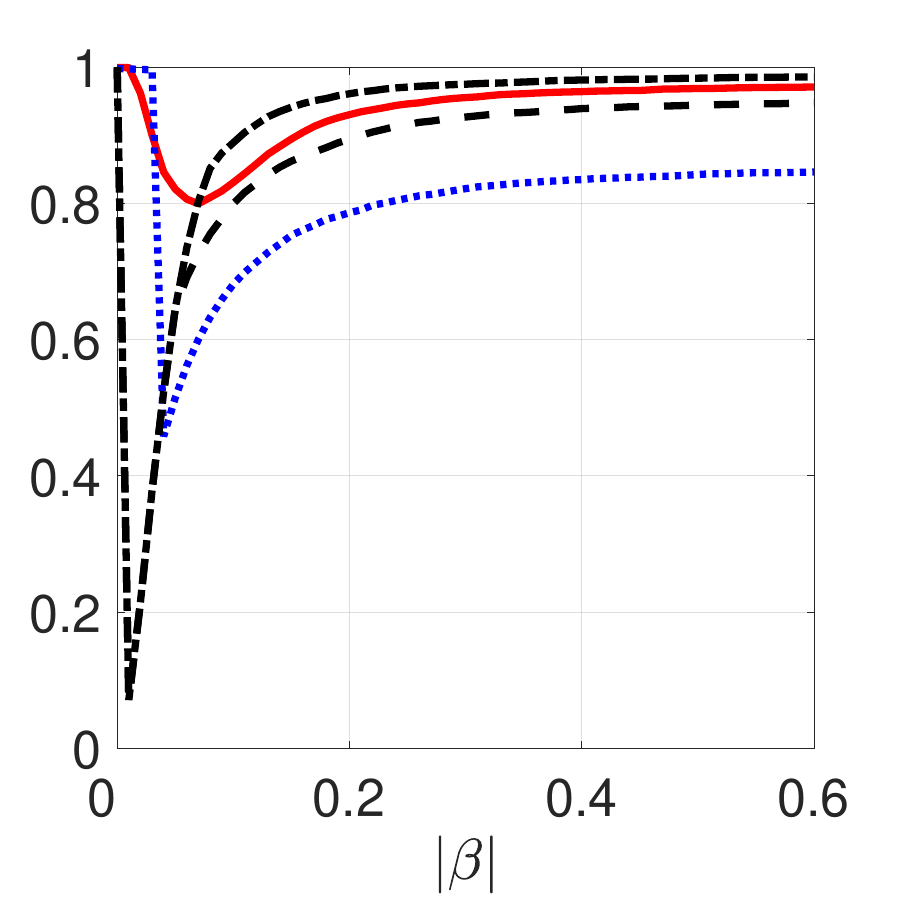}
	\end{subfigure}\begin{subfigure}{0.2\textwidth}
		\centering
		\includegraphics[trim={0cm 0cm 0.50cm 0.5cm},width=\textwidth,clip]
		{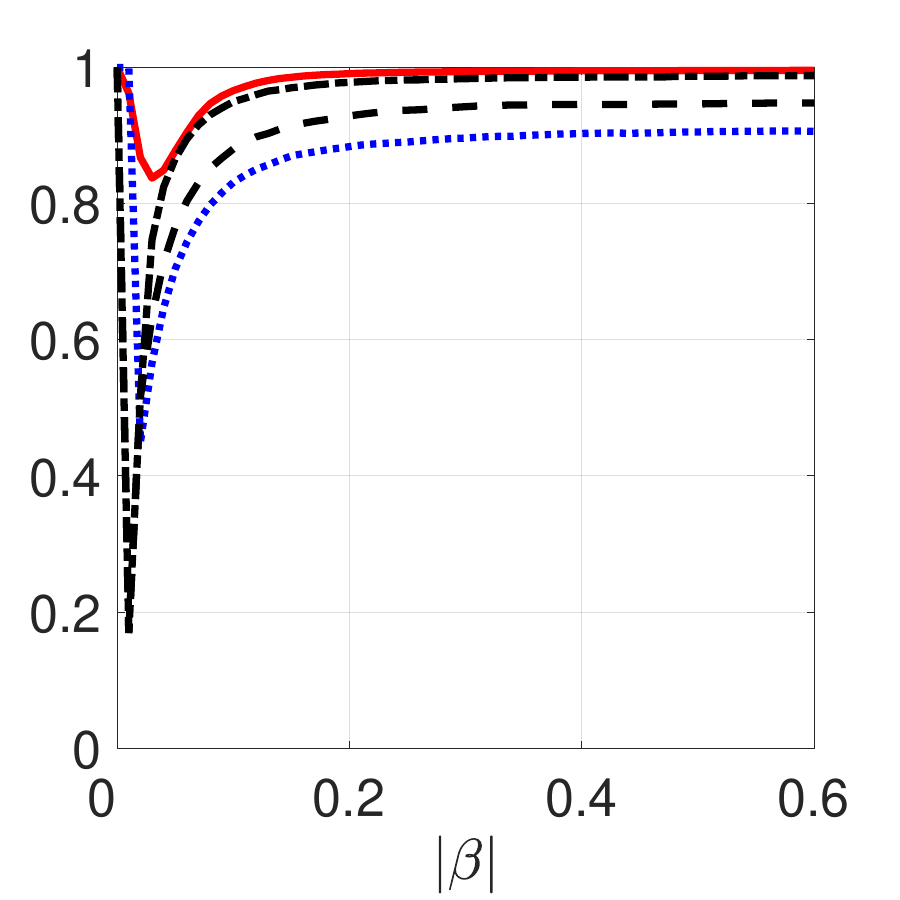}
	\end{subfigure}\begin{subfigure}{0.2\textwidth}
		\centering
		\includegraphics[trim={0cm 0cm 0.50cm 0.5cm},width=\textwidth,clip]
		{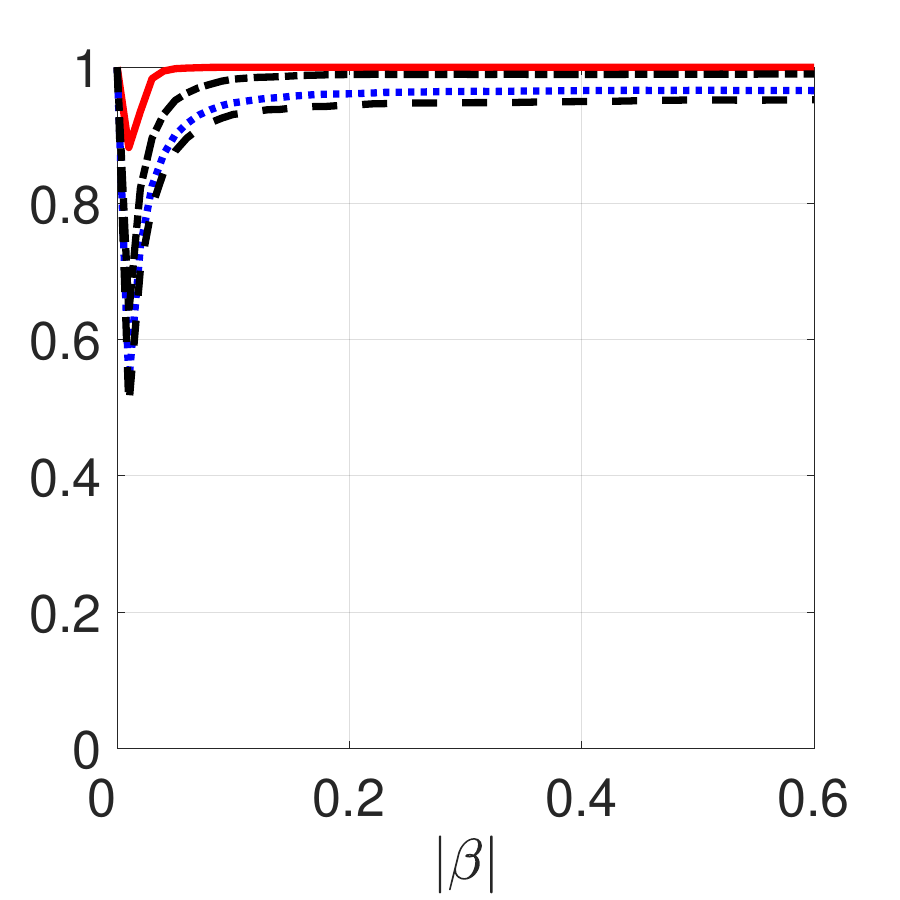}
	\end{subfigure}
	
	\caption*{$\lambda_T = T$}
	\vspace{-1.5ex}
	\begin{subfigure}{0.2\textwidth}
		\centering
		\includegraphics[trim={0cm 0cm 0.50cm 0.5cm},width=\textwidth,clip]
		{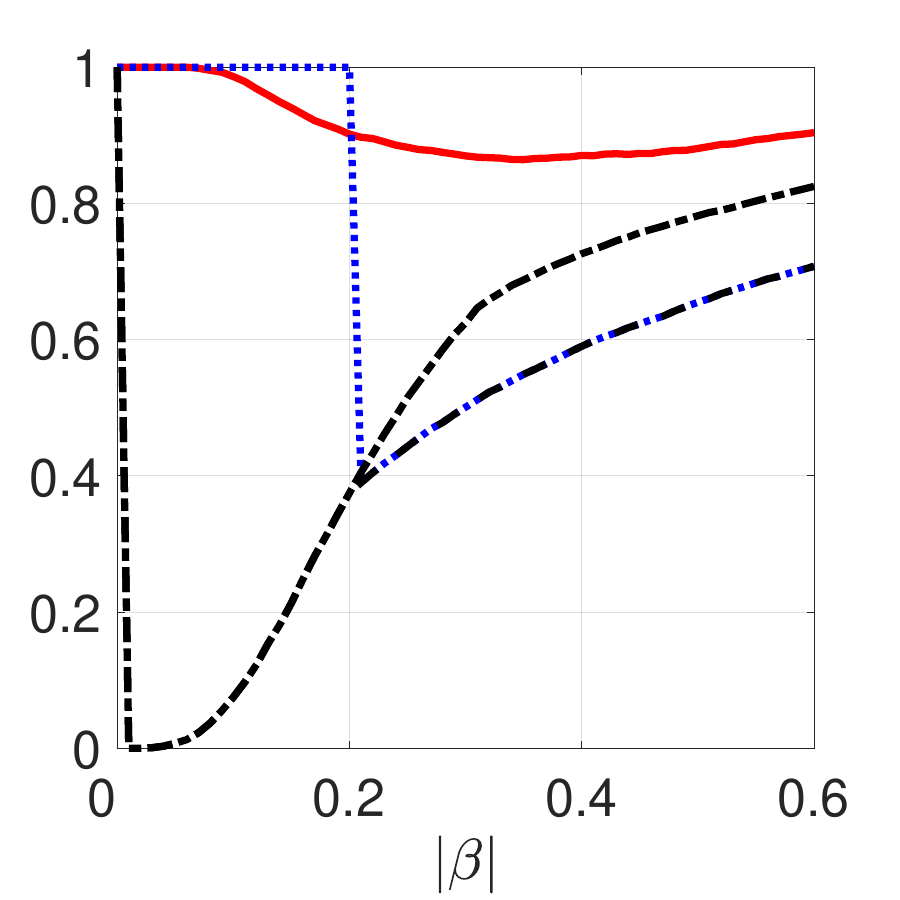}
	\end{subfigure}\begin{subfigure}{0.2\textwidth}
		\centering
		\includegraphics[trim={0cm 0cm 0.50cm 0.5cm},width=\textwidth,clip]
		{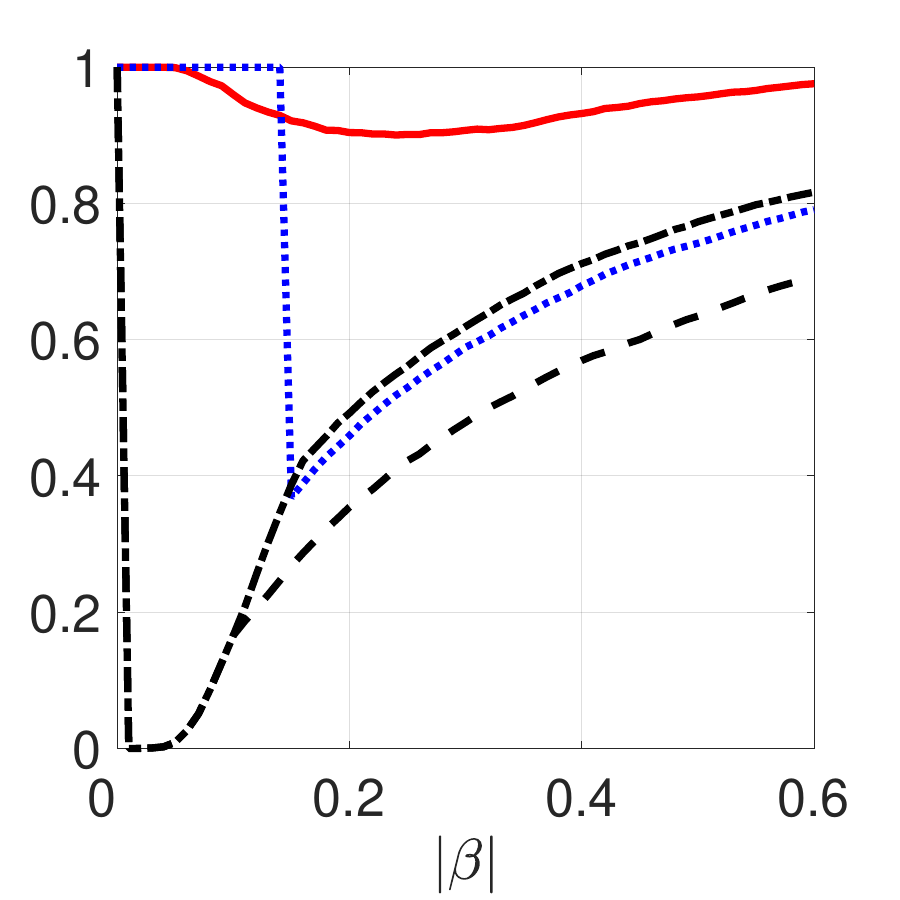}
	\end{subfigure}\begin{subfigure}{0.2\textwidth}
		\centering
		\includegraphics[trim={0cm 0cm 0.50cm 0.5cm},width=\textwidth,clip]
		{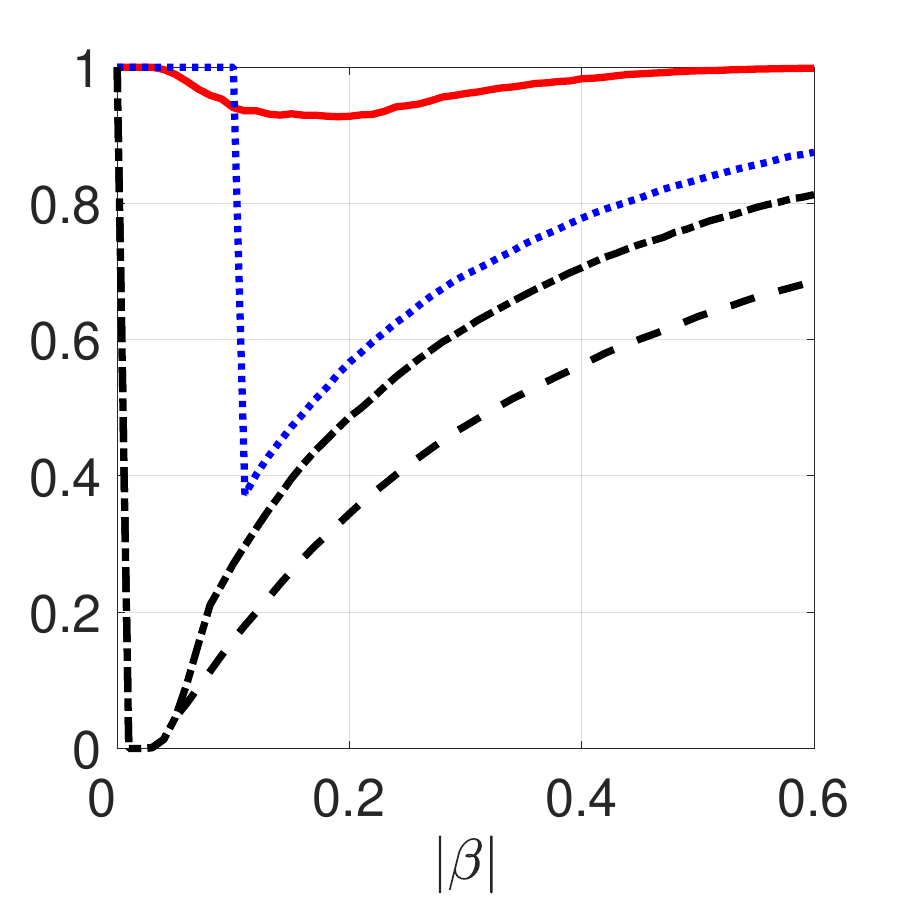}
	\end{subfigure}\begin{subfigure}{0.2\textwidth}
		\centering
		\includegraphics[trim={0cm 0cm 0.50cm 0.5cm},width=\textwidth,clip]
		{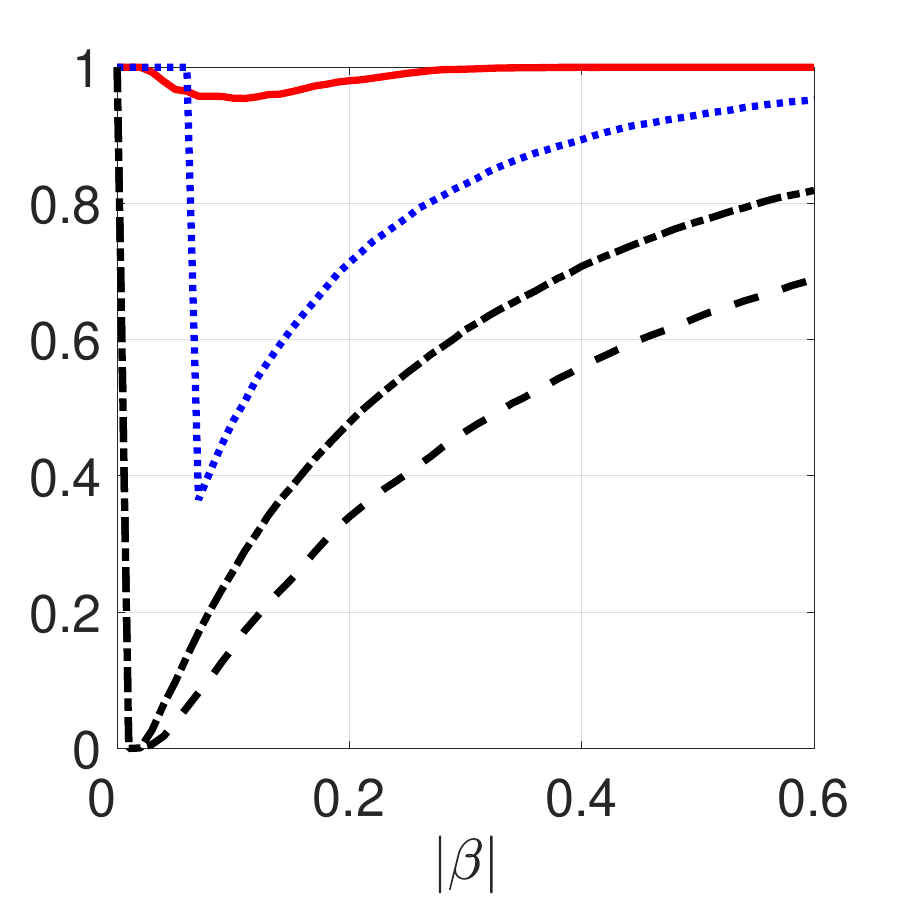}
	\end{subfigure}\begin{subfigure}{0.2\textwidth}
		\centering
		\includegraphics[trim={0cm 0cm 0.50cm 0.5cm},width=\textwidth,clip]
		{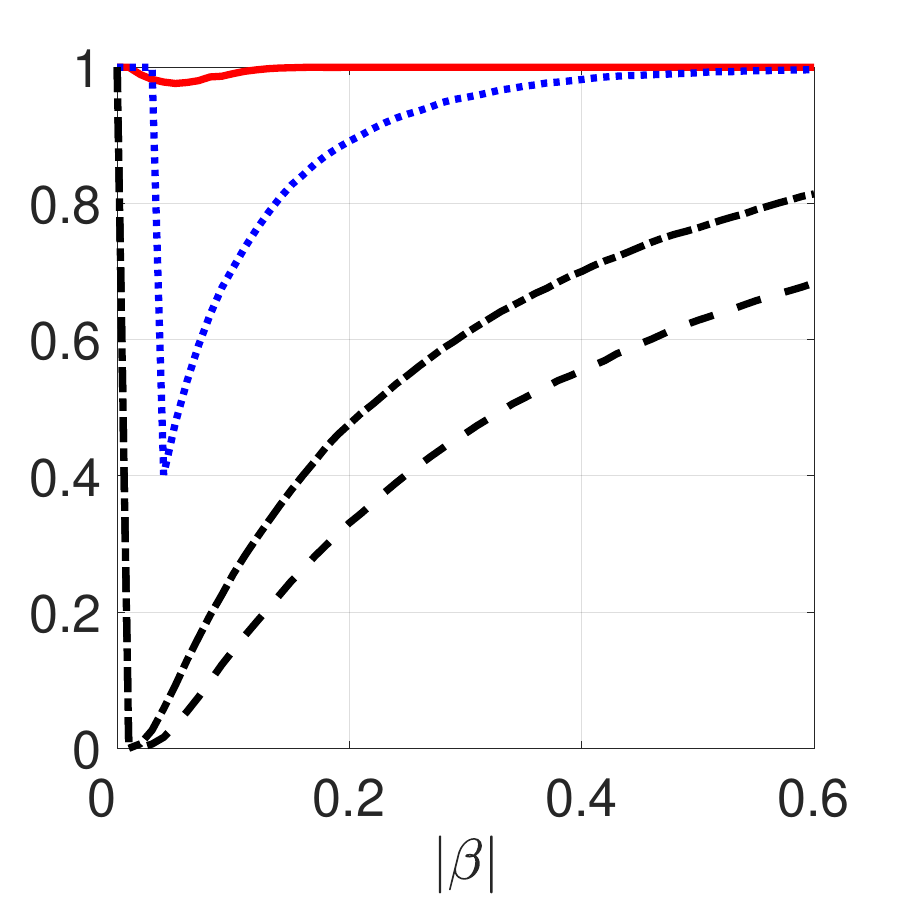}
	\end{subfigure}
	
\end{center}

\vspace{-2ex} 

\caption{Coverage probabilities of confidence intervals}

\label{fig:coverage_lambda}

\end{figure}

\begin{figure}[ht]
\begin{center}	
	\caption*{$\lambda_T = 4\times T^{1/4}$}
	\vspace{-1.5ex}
	\begin{subfigure}{0.2\textwidth}
		\centering
		\caption*{$T = 25$}
		\vspace{-1.5ex}
		\includegraphics[trim={0cm 0cm 0.50cm 0.5cm},width=\textwidth,clip]
		{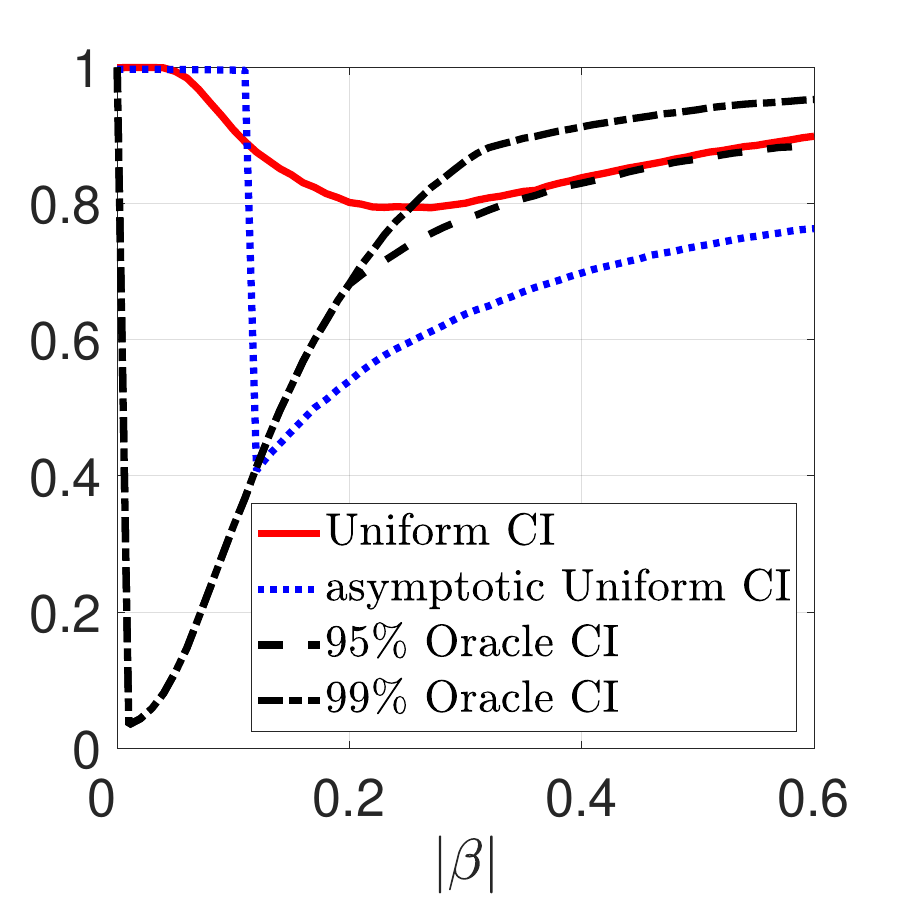}
	\end{subfigure}\begin{subfigure}{0.2\textwidth}
		\centering
		\caption*{$T = 50$}
		\vspace{-1.5ex}
		\includegraphics[trim={0cm 0cm 0.50cm 0.5cm},width=\textwidth,clip]
		{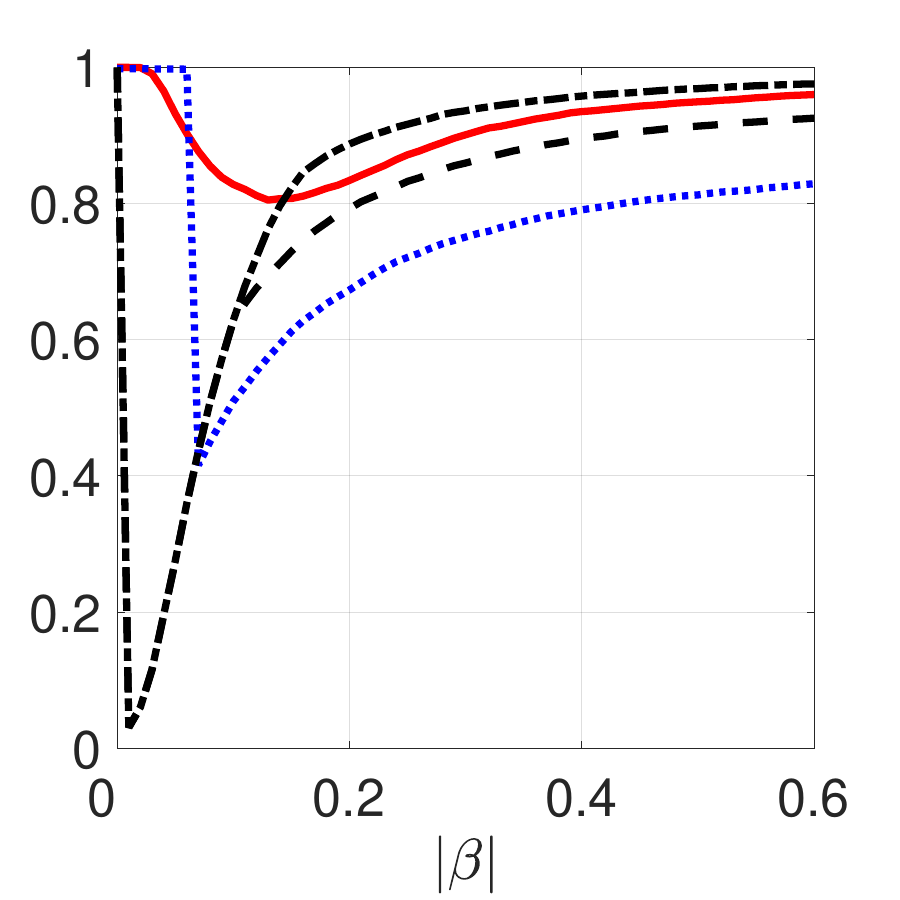}
	\end{subfigure}\begin{subfigure}{0.2\textwidth}
		\centering
		\caption*{$T = 100$}
		\vspace{-1.5ex}
		\includegraphics[trim={0cm 0cm 0.50cm 0.5cm},width=\textwidth,clip]
		{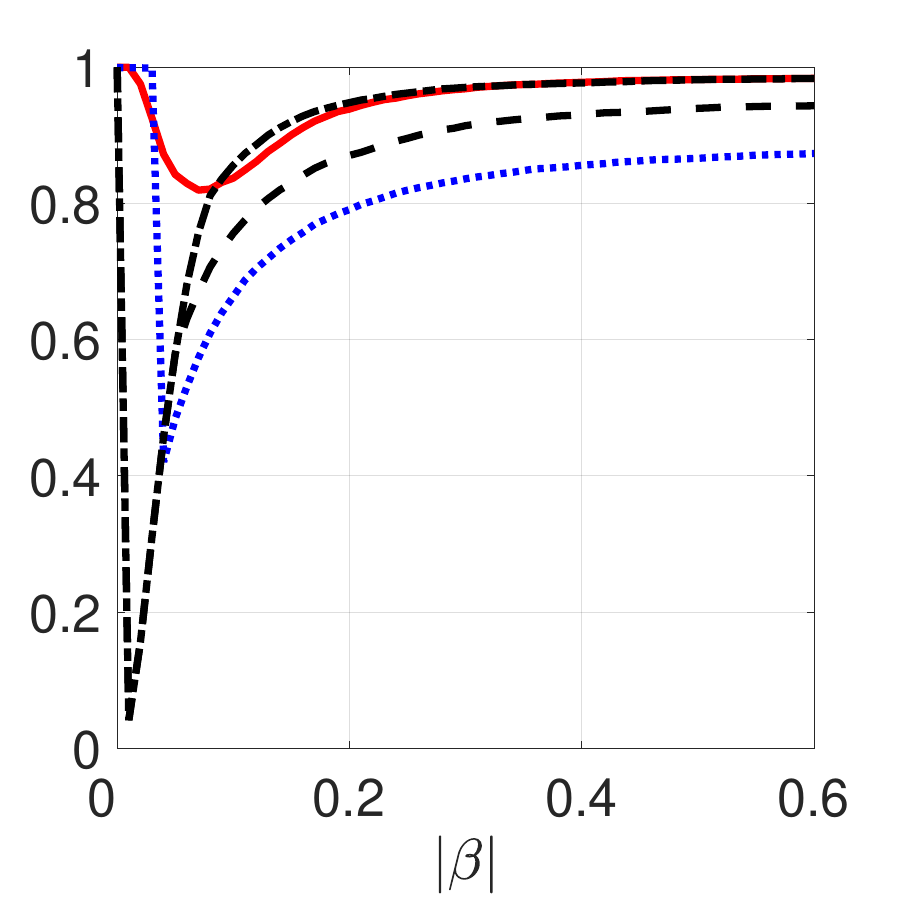}
	\end{subfigure}\begin{subfigure}{0.2\textwidth}
		\centering
		\caption*{$T = 250$}
		\vspace{-1.5ex}
		\includegraphics[trim={0cm 0cm 0.50cm 0.5cm},width=\textwidth,clip]
		{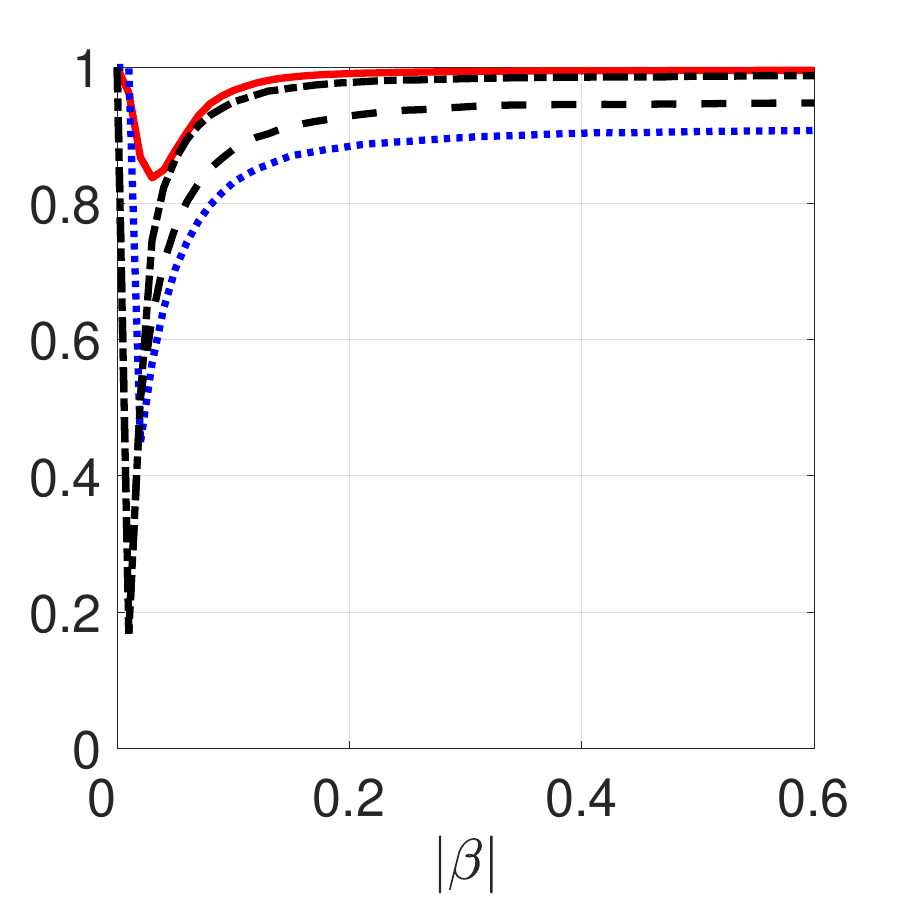}
	\end{subfigure}\begin{subfigure}{0.2\textwidth}
		\centering
		\caption*{$T = 1000$}
		\vspace{-1.5ex}
		\includegraphics[trim={0cm 0cm 0.50cm 0.5cm},width=\textwidth,clip]
		{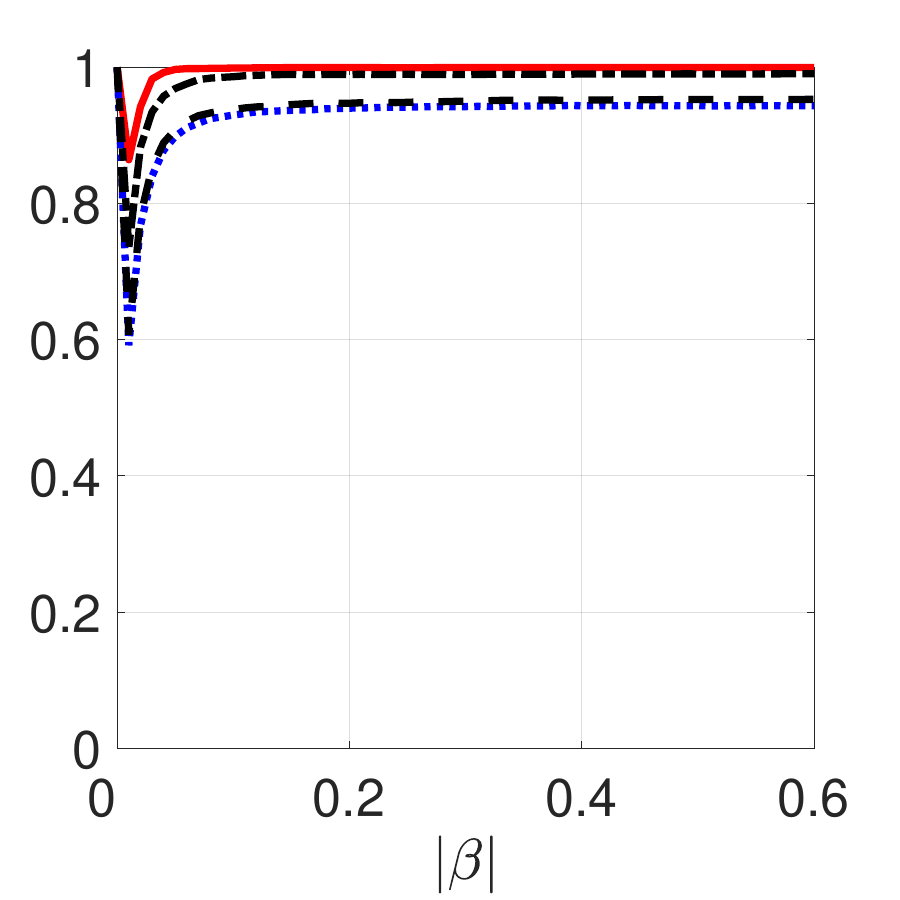}
	\end{subfigure}
	
	\caption*{$\lambda_T = 4\times T^{1/2}$}
	\vspace{-1.5ex}
	\begin{subfigure}{0.2\textwidth}
		\centering
		\includegraphics[trim={0cm 0cm 0.50cm 0.5cm},width=\textwidth,clip]
		{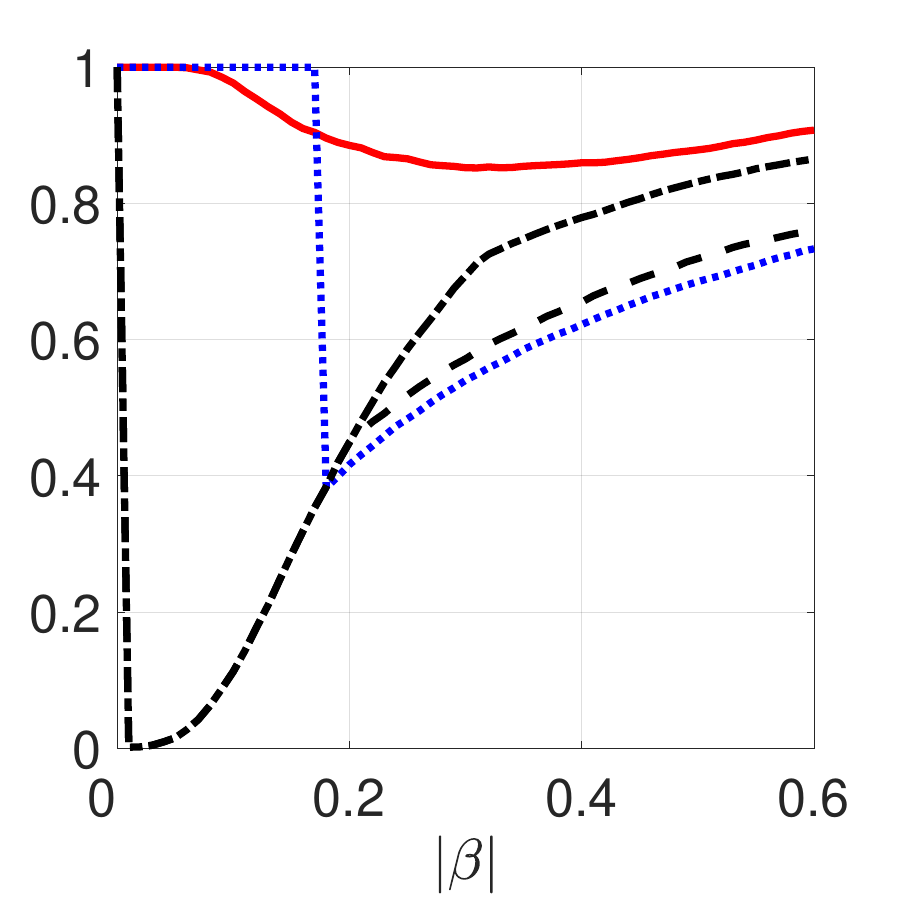}
	\end{subfigure}\begin{subfigure}{0.2\textwidth}
		\centering
		\includegraphics[trim={0cm 0cm 0.50cm 0.5cm},width=\textwidth,clip]
		{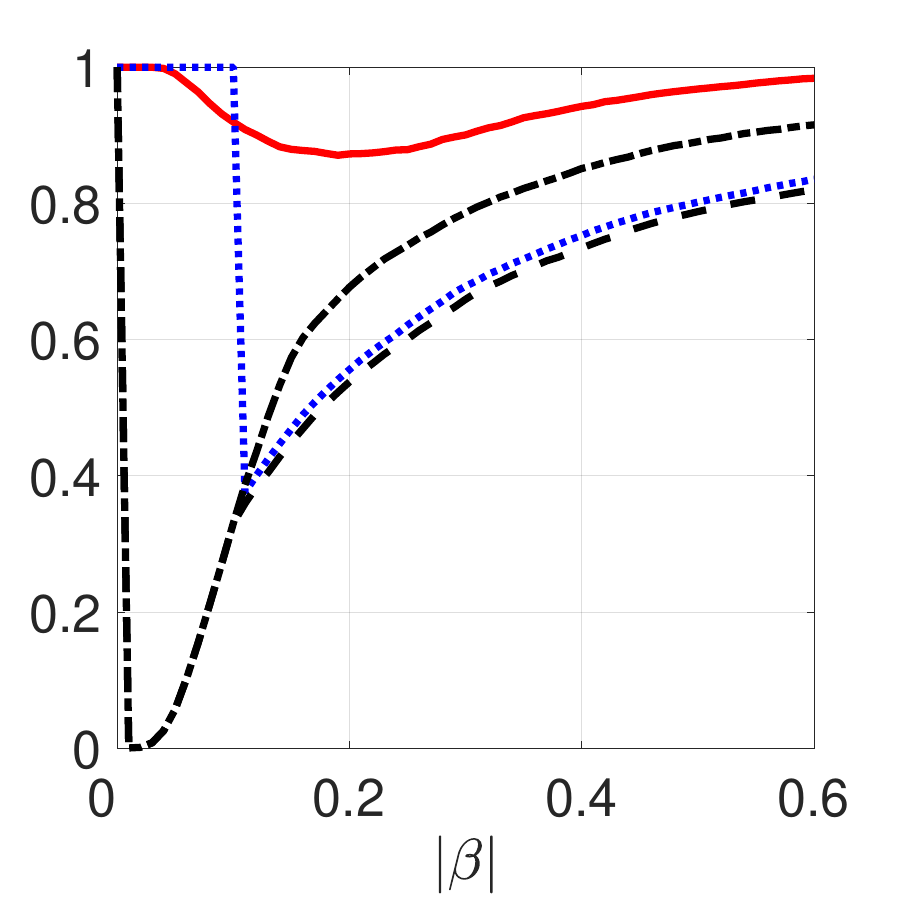}
	\end{subfigure}\begin{subfigure}{0.2\textwidth}
		\centering
		\includegraphics[trim={0cm 0cm 0.50cm 0.5cm},width=\textwidth,clip]
		{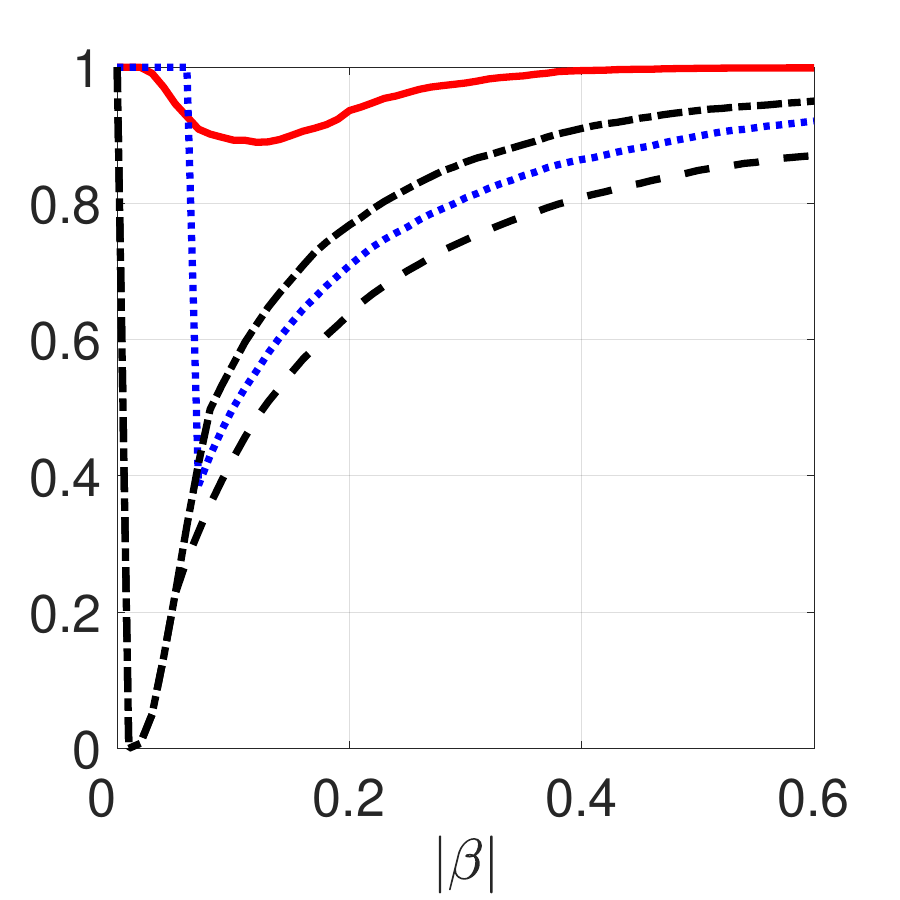}
	\end{subfigure}\begin{subfigure}{0.2\textwidth}
		\centering
		\includegraphics[trim={0cm 0cm 0.50cm 0.5cm},width=\textwidth,clip]
		{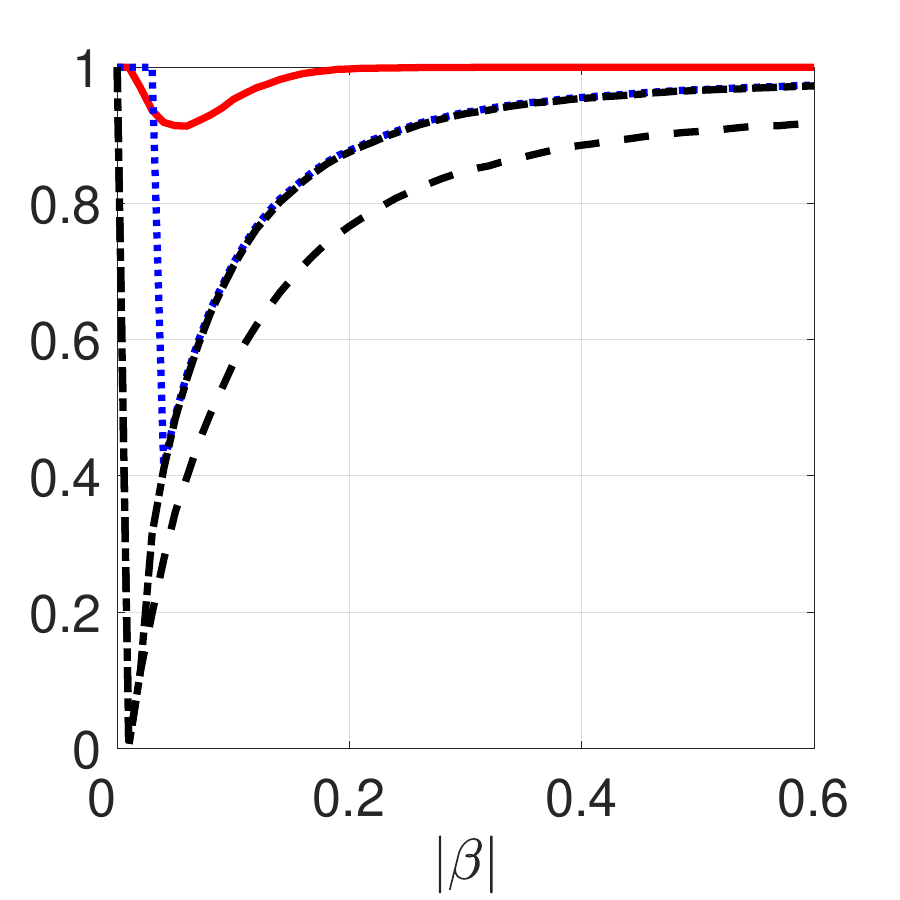}
	\end{subfigure}\begin{subfigure}{0.2\textwidth}
		\centering
		\includegraphics[trim={0cm 0cm 0.50cm 0.5cm},width=\textwidth,clip]
		{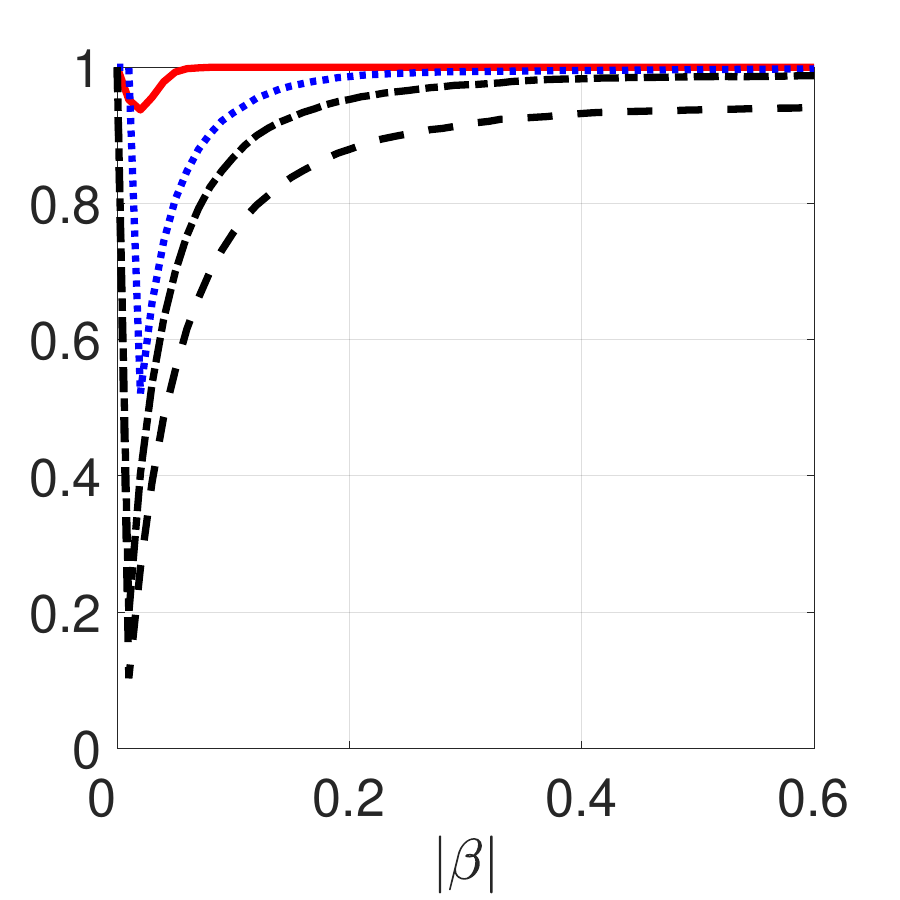}
	\end{subfigure}
	
	\caption*{$\lambda_T = 4\times T$}
	\vspace{-1.5ex}
	\begin{subfigure}{0.2\textwidth}
		\centering
		\includegraphics[trim={0cm 0cm 0.50cm 0.5cm},width=\textwidth,clip]
		{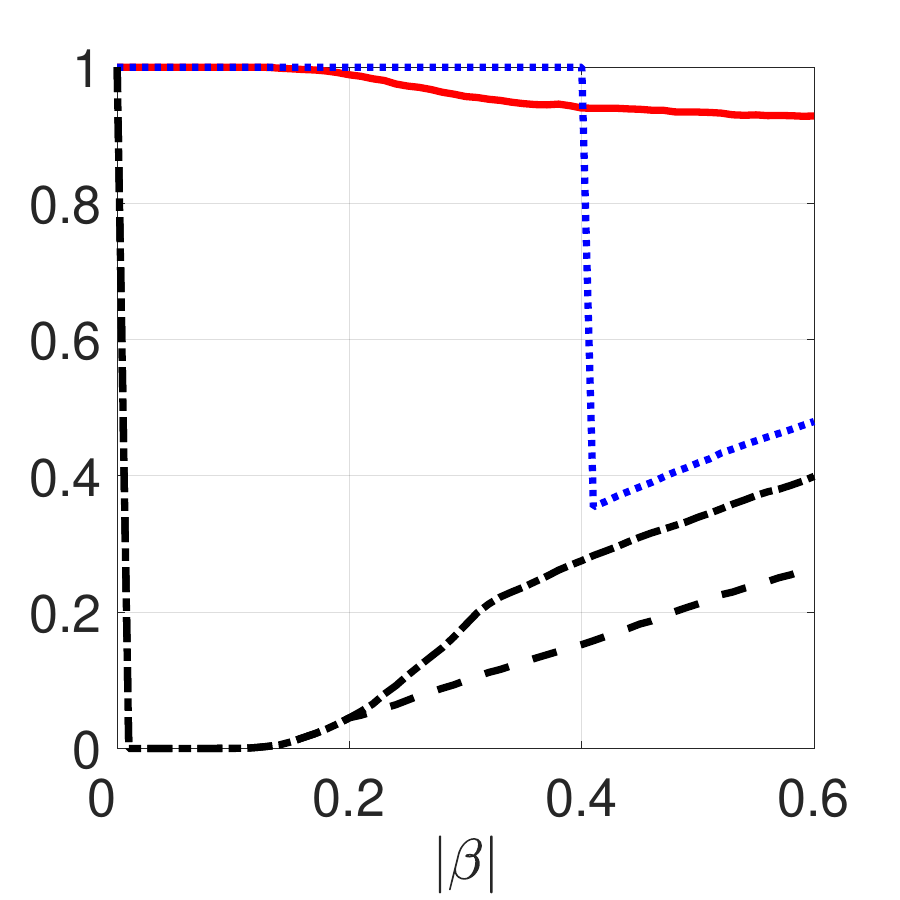}
	\end{subfigure}\begin{subfigure}{0.2\textwidth}
		\centering
		\includegraphics[trim={0cm 0cm 0.50cm 0.5cm},width=\textwidth,clip]
		{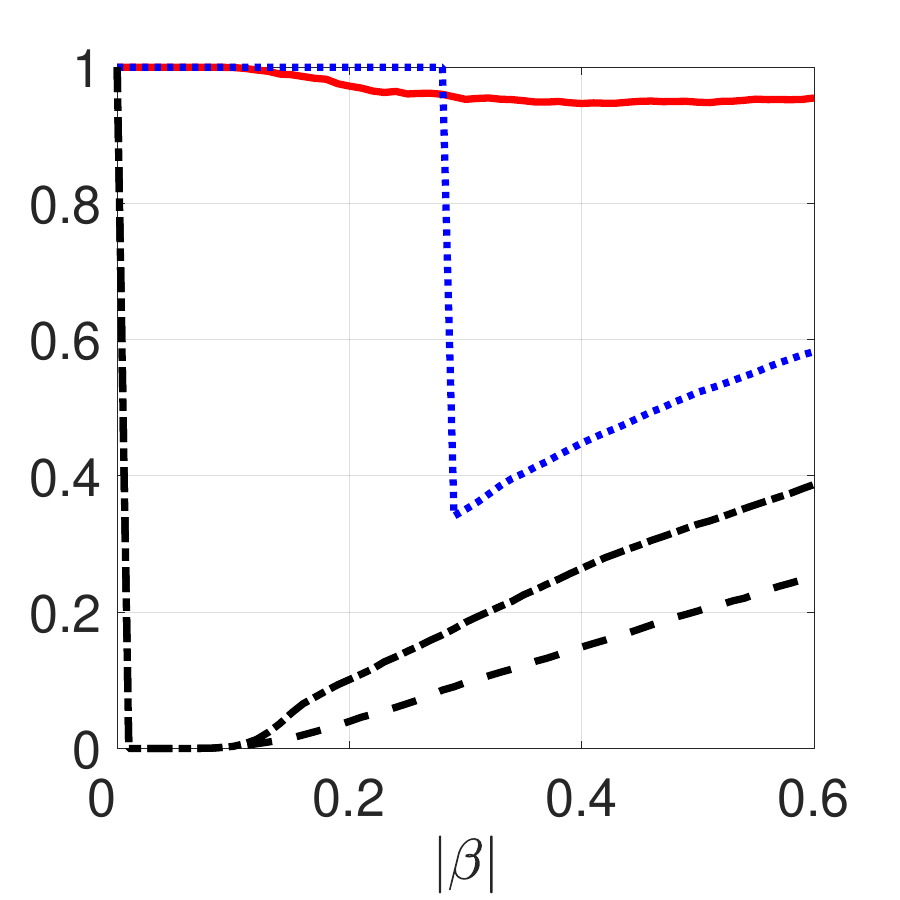}
	\end{subfigure}\begin{subfigure}{0.2\textwidth}
		\centering
		\includegraphics[trim={0cm 0cm 0.50cm 0.5cm},width=\textwidth,clip]
		{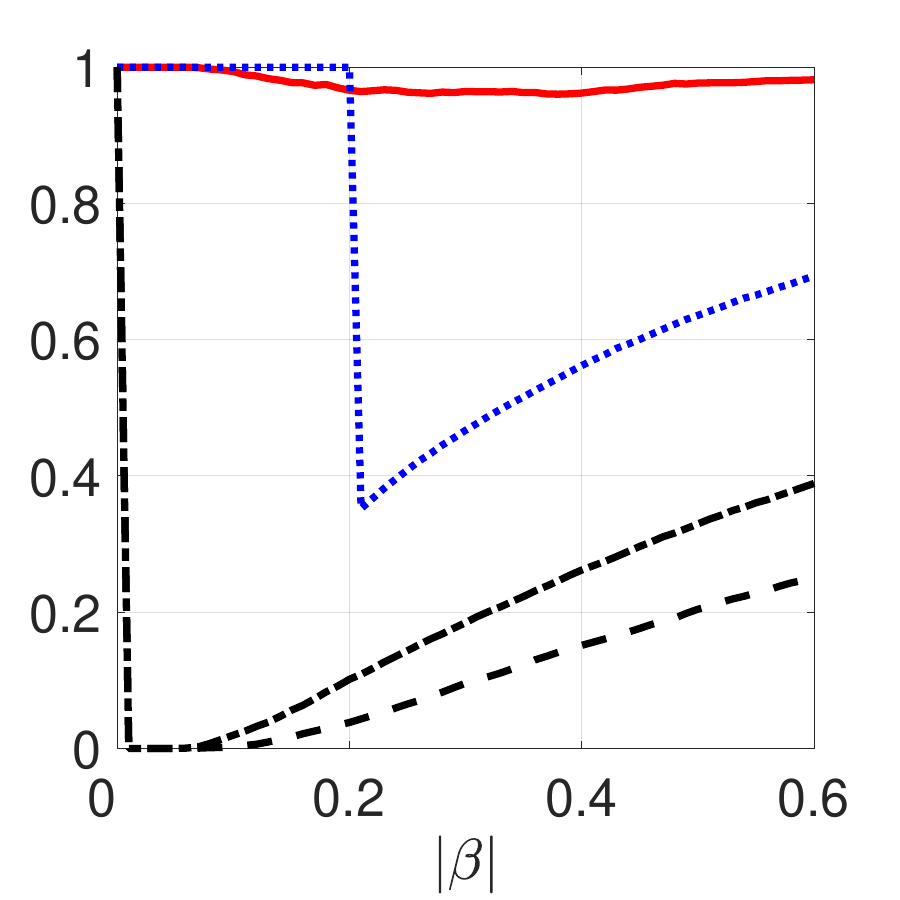}
	\end{subfigure}\begin{subfigure}{0.2\textwidth}
		\centering
		\includegraphics[trim={0cm 0cm 0.50cm 0.5cm},width=\textwidth,clip]
		{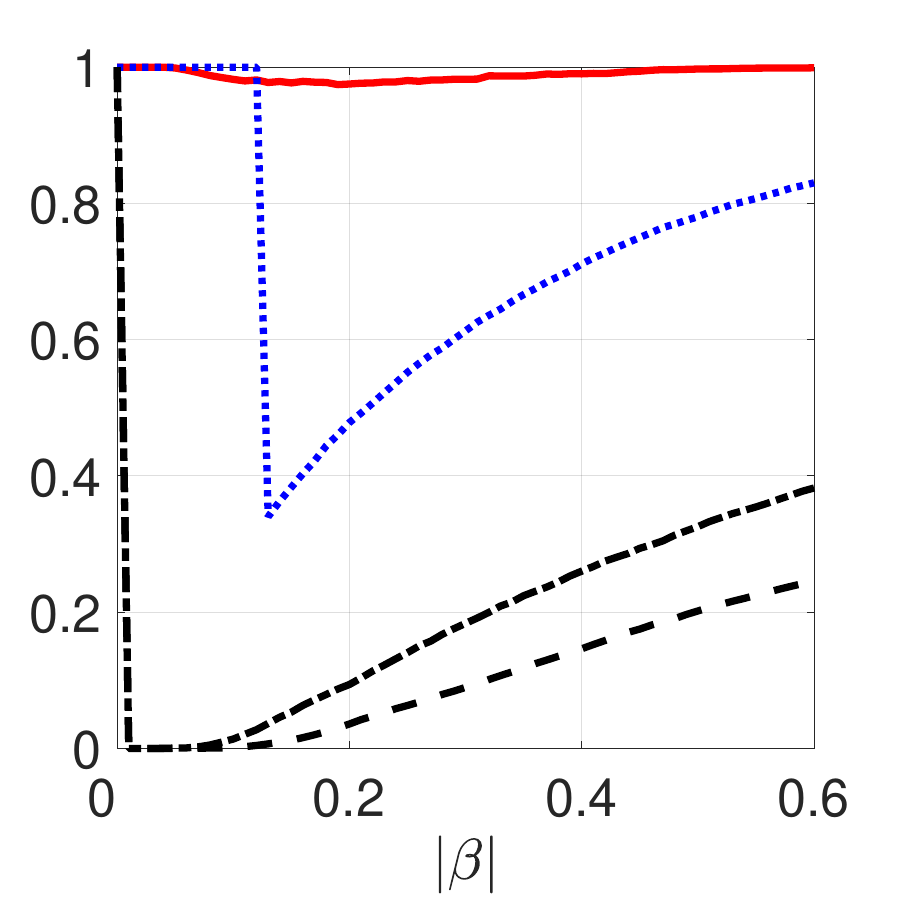}
	\end{subfigure}\begin{subfigure}{0.2\textwidth}
		\centering
		\includegraphics[trim={0cm 0cm 0.50cm 0.5cm},width=\textwidth,clip]
		{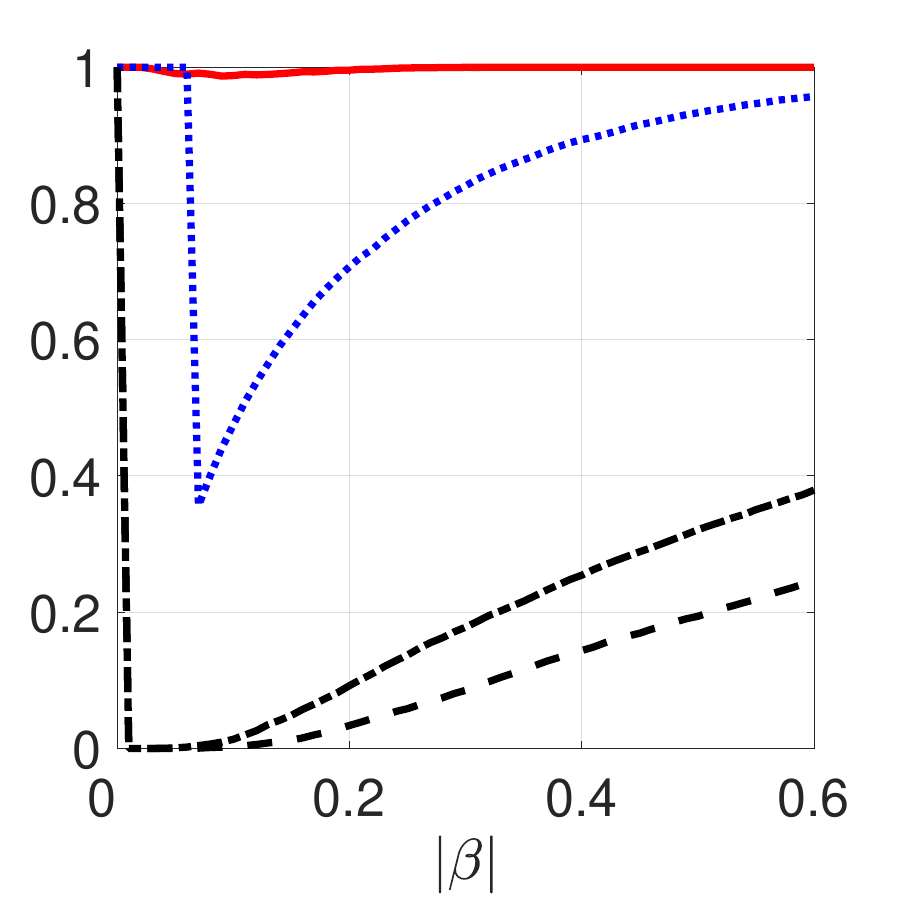}
	\end{subfigure}
	
\end{center}

\vspace{-2ex} 

\caption{Coverage probabilities of confidence intervals for scaled $\lambda_T$}

\label{fig:coverage_4lambda}

\end{figure}

\begin{figure}[ht]
\begin{center}
	\caption*{$\lambda_T = 4\times T^{1/4}$}
	\vspace{-1.5ex}
	\begin{subfigure}{0.2\textwidth}
		\centering
		\caption*{$T = 25$}
		\vspace{-1.5ex}
		\includegraphics[trim={0cm 0cm 0.50cm 0.5cm},width=\textwidth,clip]
		{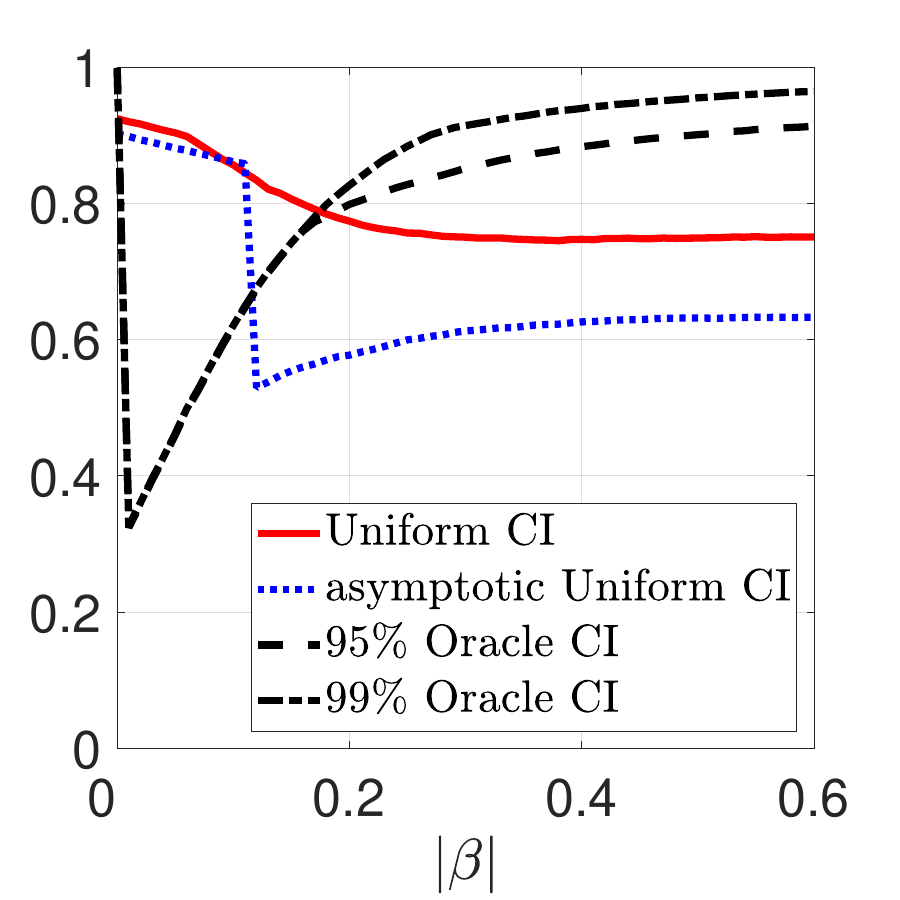}
	\end{subfigure}\begin{subfigure}{0.2\textwidth}
		\centering
		\caption*{$T = 50$}
		\vspace{-1.5ex}
		\includegraphics[trim={0cm 0cm 0.50cm 0.5cm},width=\textwidth,clip]
		{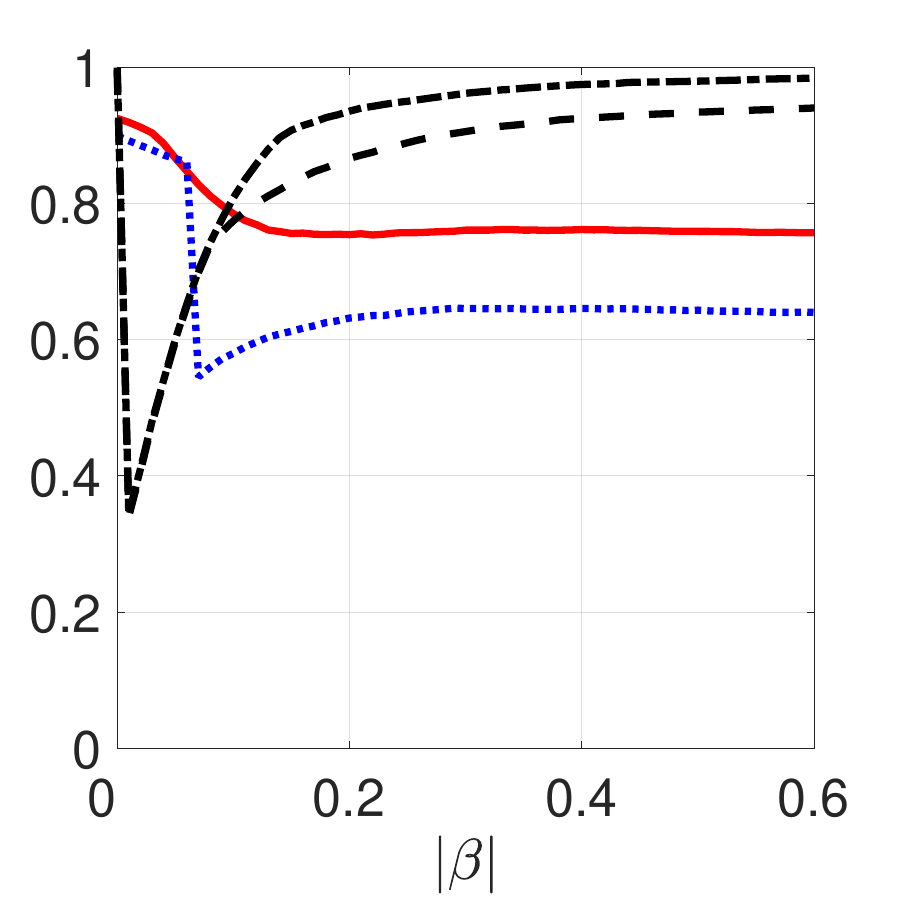}
	\end{subfigure}\begin{subfigure}{0.2\textwidth}
		\centering
		\caption*{$T = 100$}
		\vspace{-1.5ex}
		\includegraphics[trim={0cm 0cm 0.50cm 0.5cm},width=\textwidth,clip]
		{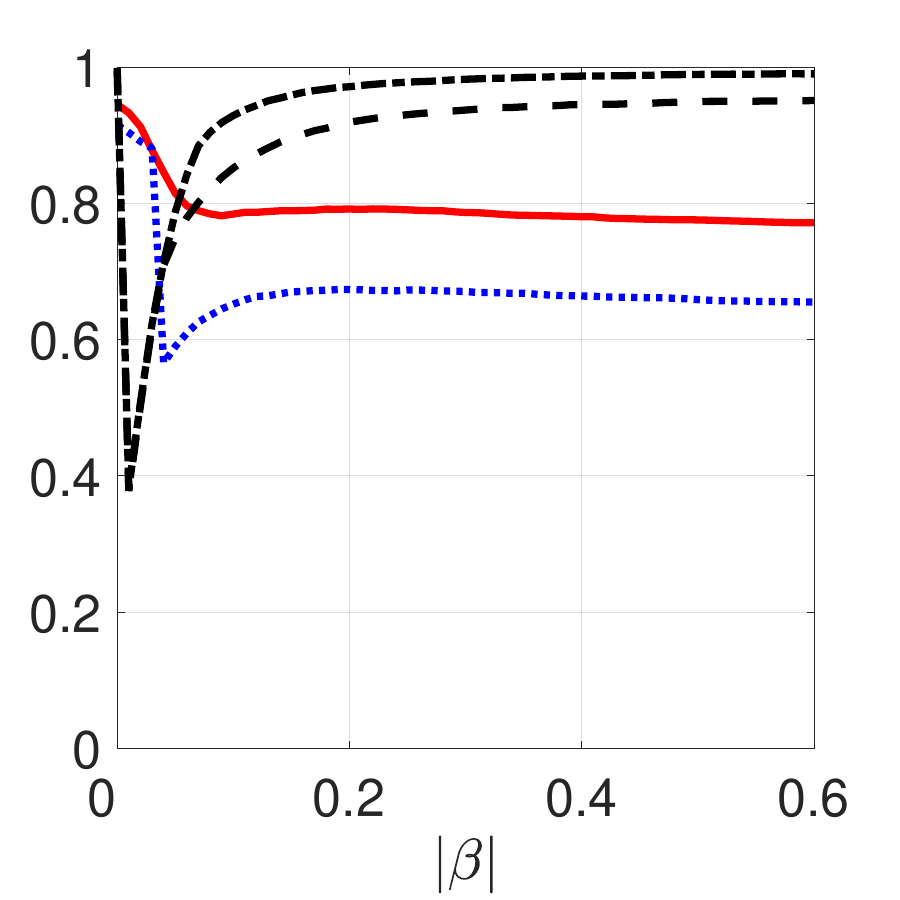}
	\end{subfigure}\begin{subfigure}{0.2\textwidth}
		\centering
		\caption*{$T = 250$}
		\vspace{-1.5ex}
		\includegraphics[trim={0cm 0cm 0.50cm 0.5cm},width=\textwidth,clip]
		{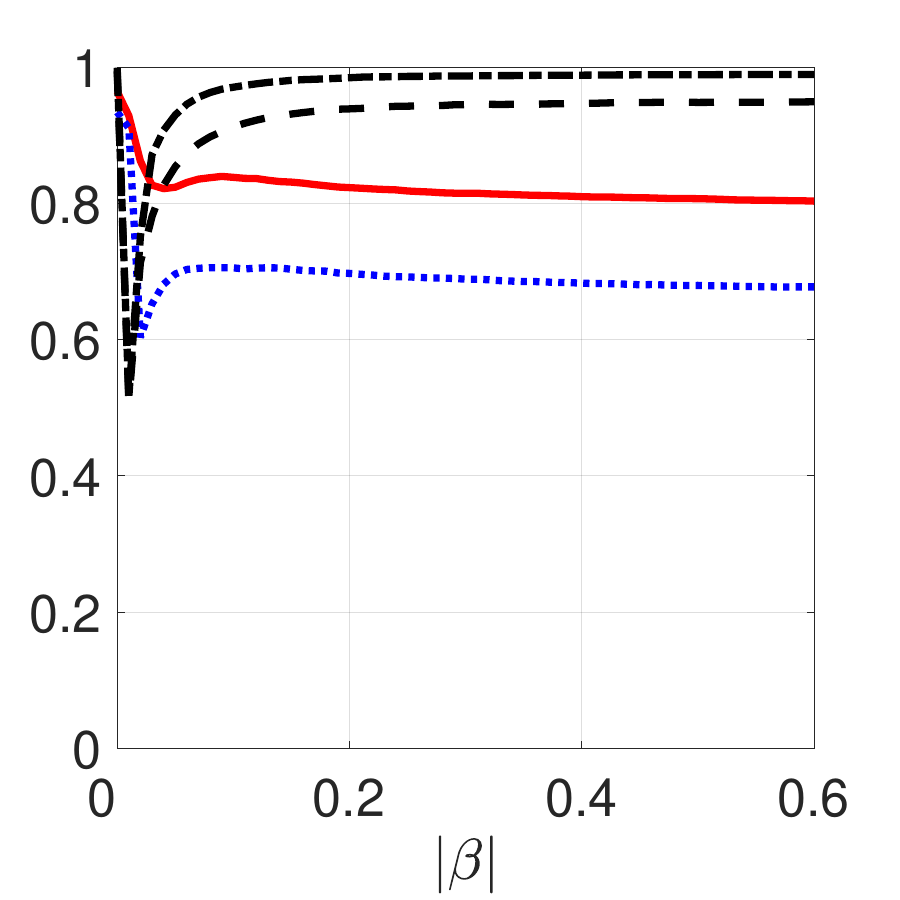}
	\end{subfigure}\begin{subfigure}{0.2\textwidth}
		\centering
		\caption*{$T = 1000$}
		\vspace{-1.5ex}
		\includegraphics[trim={0cm 0cm 0.50cm 0.5cm},width=\textwidth,clip]
		{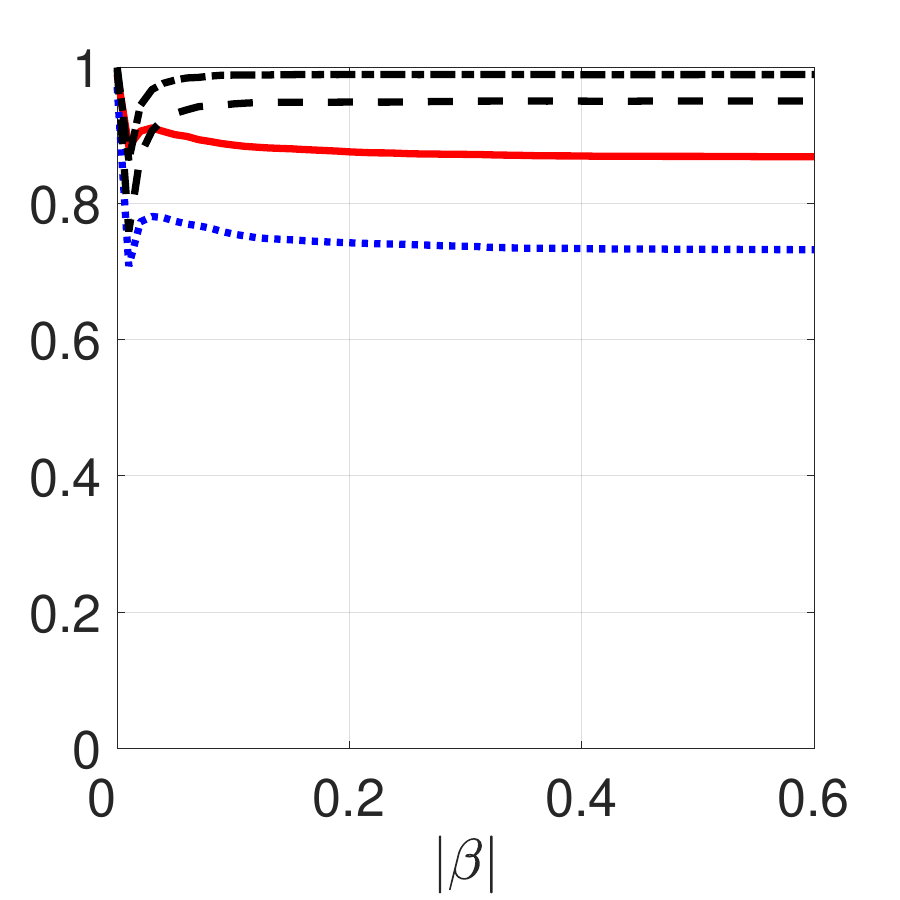}
	\end{subfigure}
	
	\caption*{$\lambda_T = 4\times T^{1/2}$}
	\vspace{-1.5ex}
	\begin{subfigure}{0.2\textwidth}
		\centering
		\includegraphics[trim={0cm 0cm 0.50cm 0.5cm},width=\textwidth,clip]
		{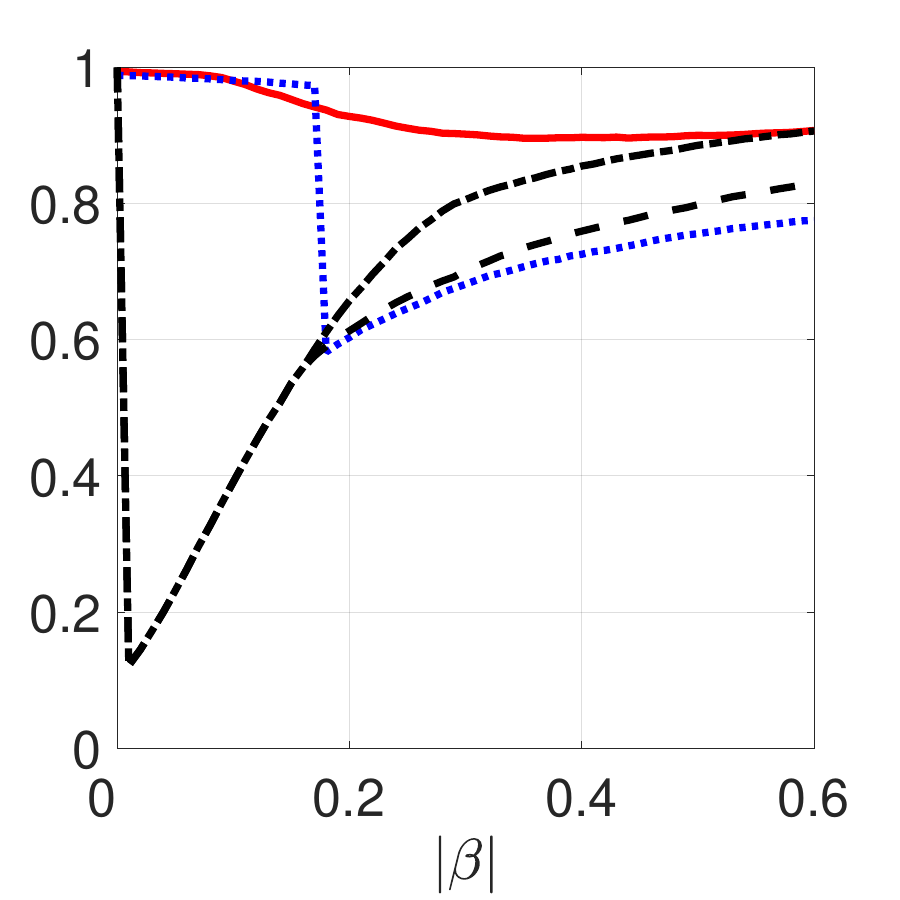}
	\end{subfigure}\begin{subfigure}{0.2\textwidth}
		\centering
		\includegraphics[trim={0cm 0cm 0.50cm 0.5cm},width=\textwidth,clip]
		{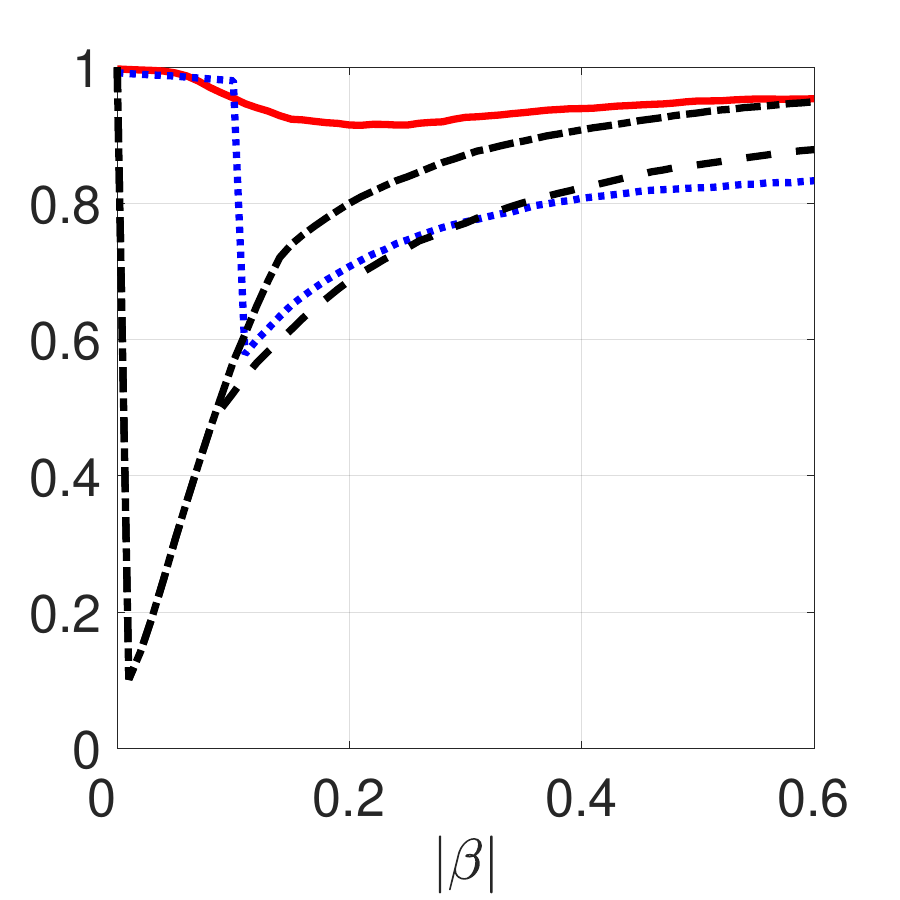}
	\end{subfigure}\begin{subfigure}{0.2\textwidth}
		\centering
		\includegraphics[trim={0cm 0cm 0.50cm 0.5cm},width=\textwidth,clip]
		{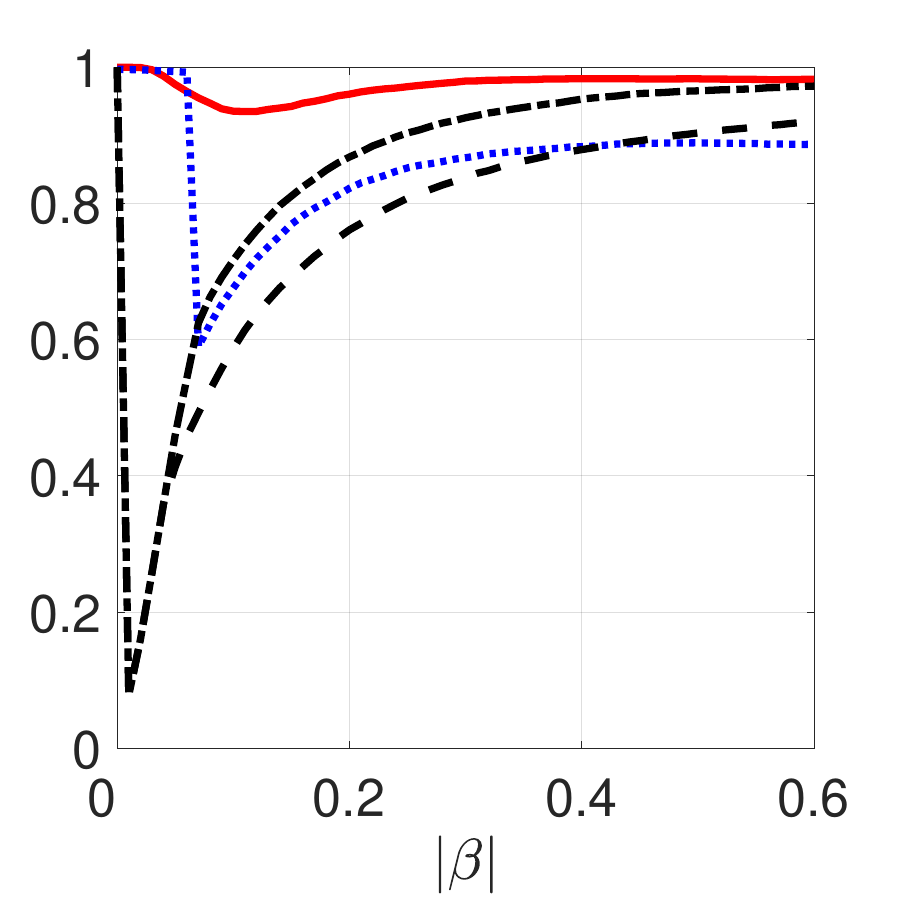}
	\end{subfigure}\begin{subfigure}{0.2\textwidth}
		\centering
		\includegraphics[trim={0cm 0cm 0.50cm 0.5cm},width=\textwidth,clip]
		{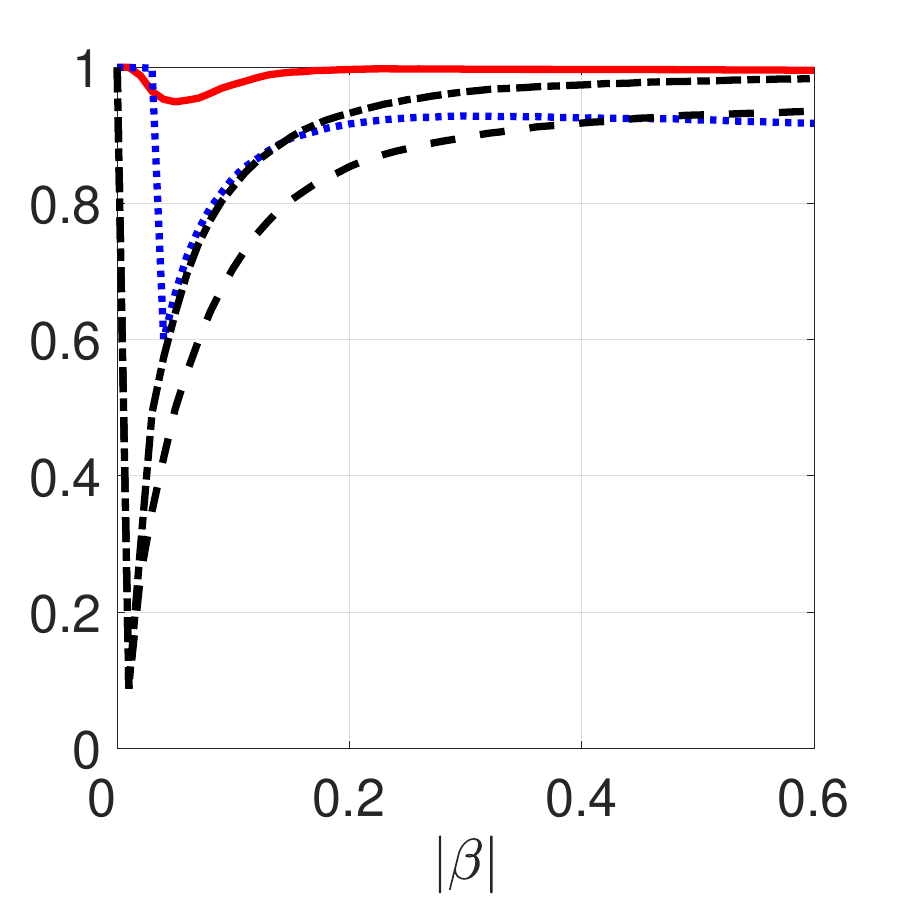}
	\end{subfigure}\begin{subfigure}{0.2\textwidth}
		\centering
		\includegraphics[trim={0cm 0cm 0.50cm 0.5cm},width=\textwidth,clip]
		{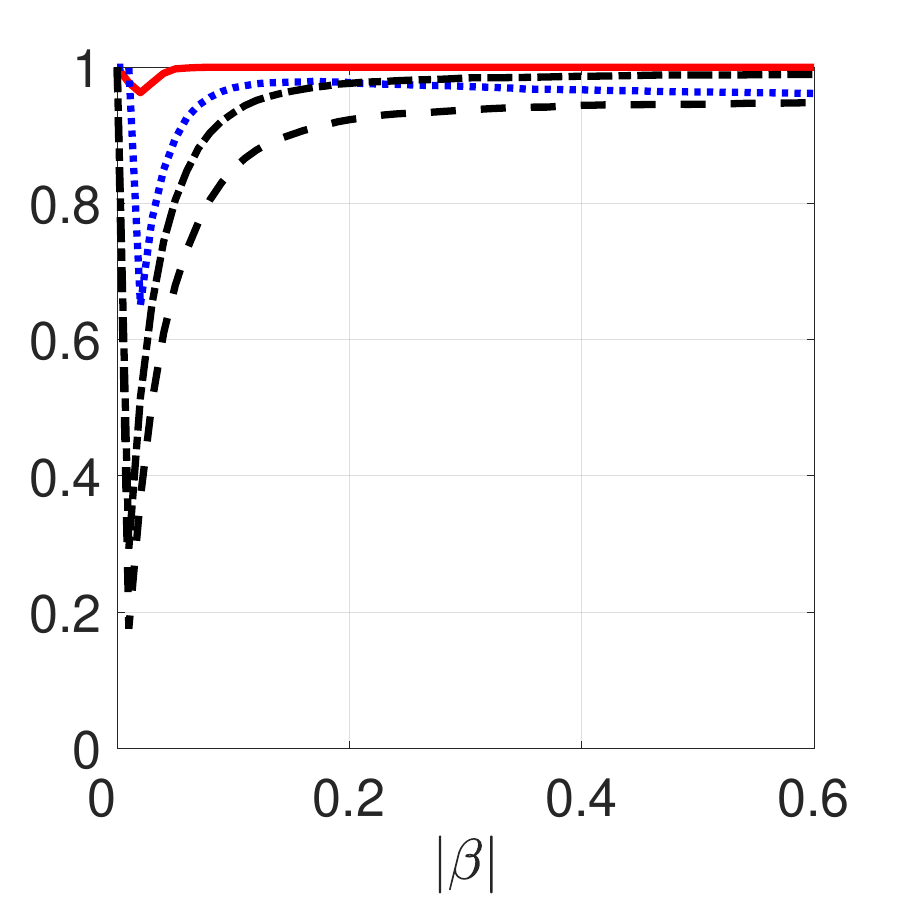}
	\end{subfigure}
	
	\caption*{$\lambda_T = 4\times T$}
	\vspace{-1.5ex}
	\begin{subfigure}{0.2\textwidth}
		\centering
		\includegraphics[trim={0cm 0cm 0.50cm 0.5cm},width=\textwidth,clip]
		{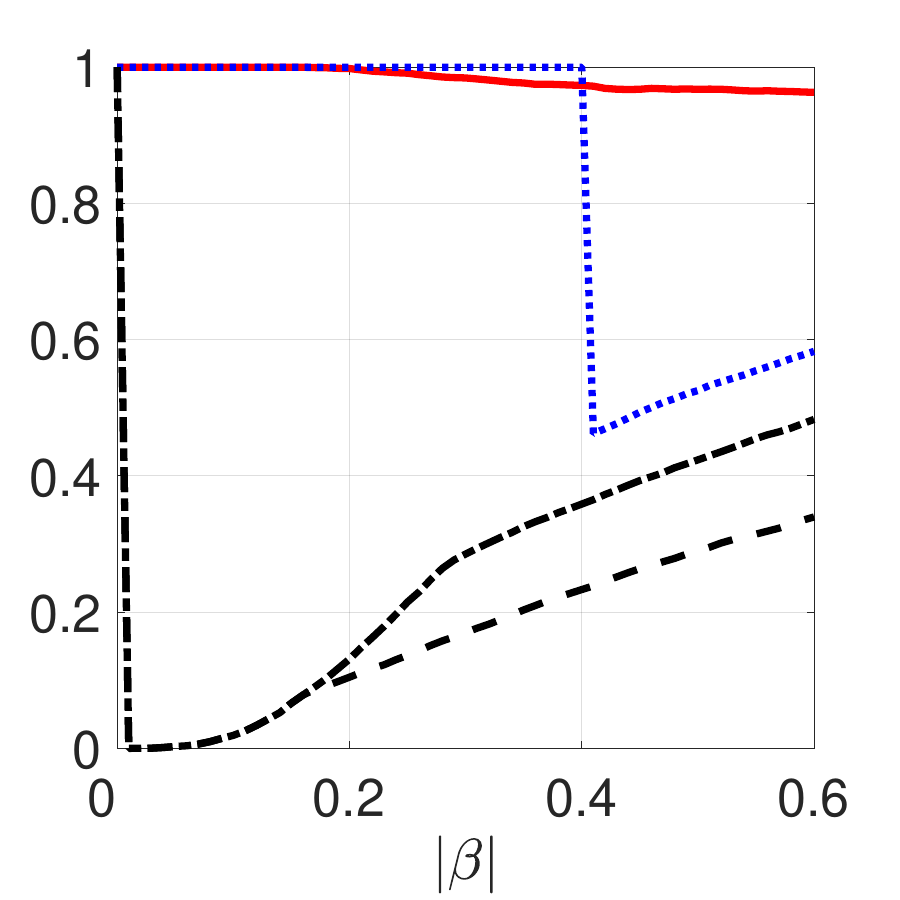}
	\end{subfigure}\begin{subfigure}{0.2\textwidth}
		\centering
		\includegraphics[trim={0cm 0cm 0.50cm 0.5cm},width=\textwidth,clip]
		{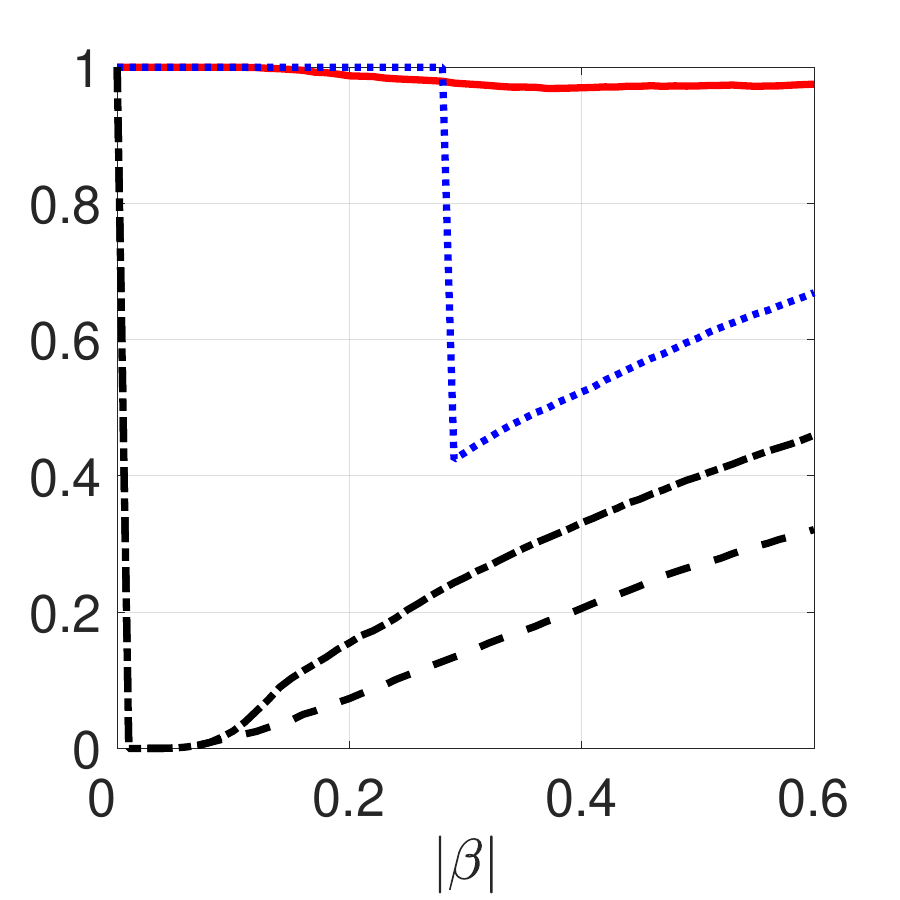}
	\end{subfigure}\begin{subfigure}{0.2\textwidth}
		\centering
		\includegraphics[trim={0cm 0cm 0.50cm 0.5cm},width=\textwidth,clip]
		{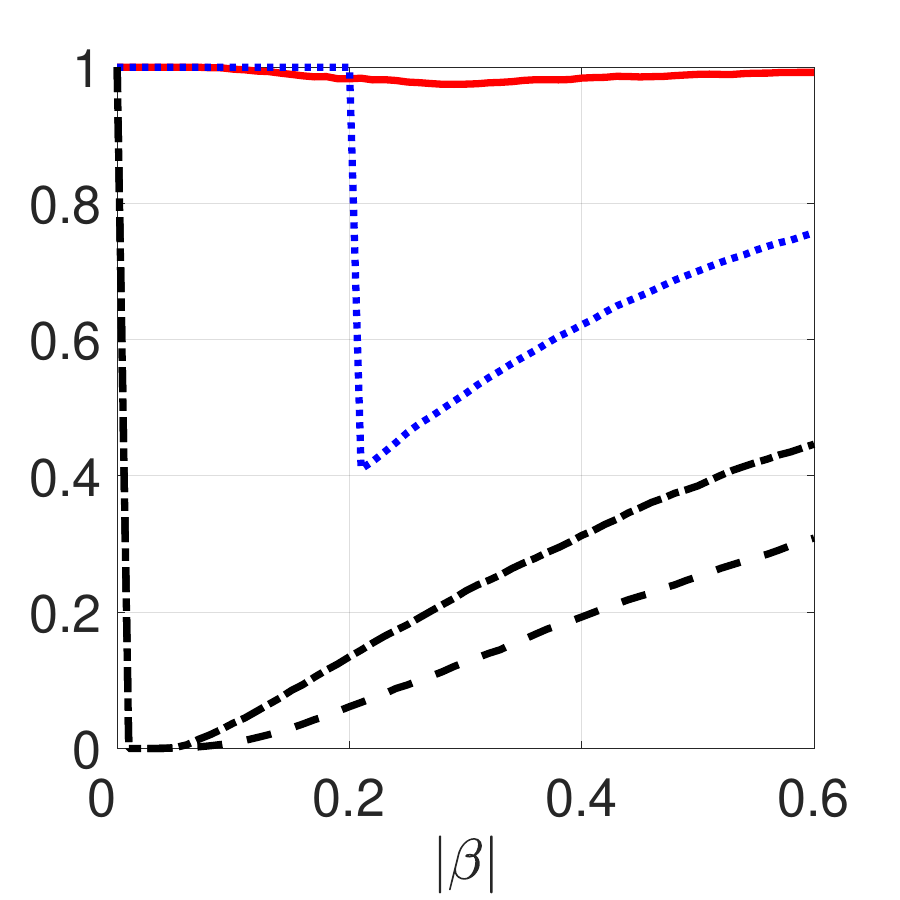}
	\end{subfigure}\begin{subfigure}{0.2\textwidth}
		\centering
		\includegraphics[trim={0cm 0cm 0.50cm 0.5cm},width=\textwidth,clip]
		{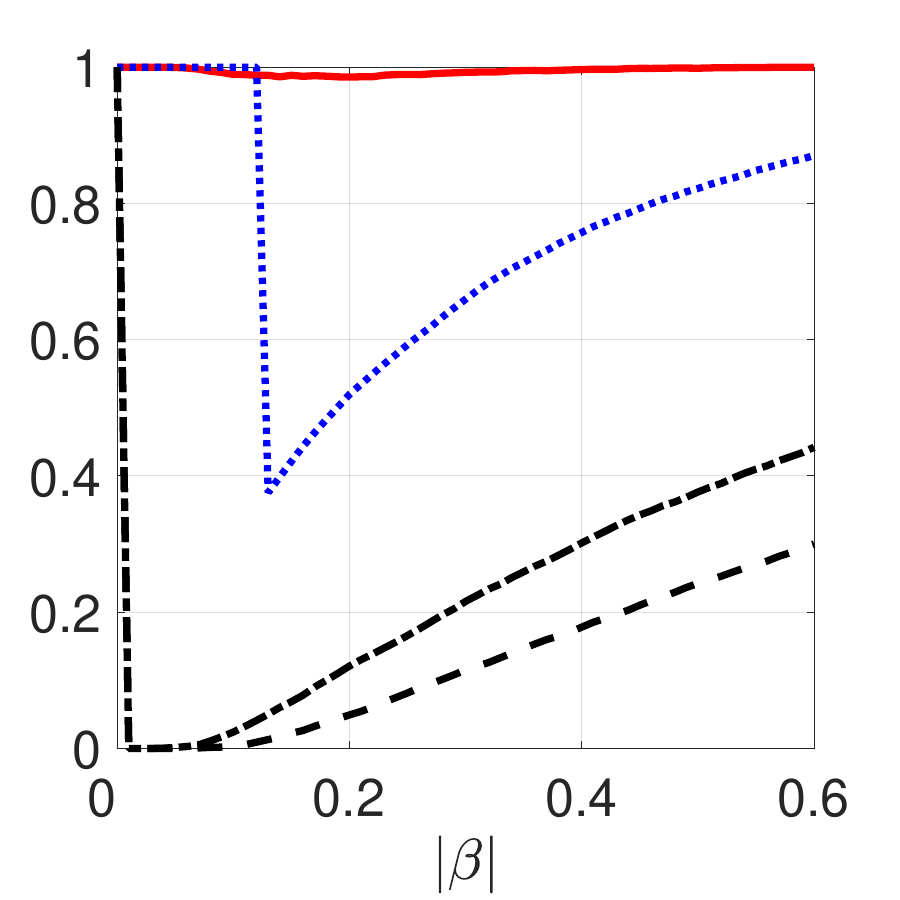}
	\end{subfigure}\begin{subfigure}{0.2\textwidth}
		\centering
		\includegraphics[trim={0cm 0cm 0.50cm 0.5cm},width=\textwidth,clip]
		{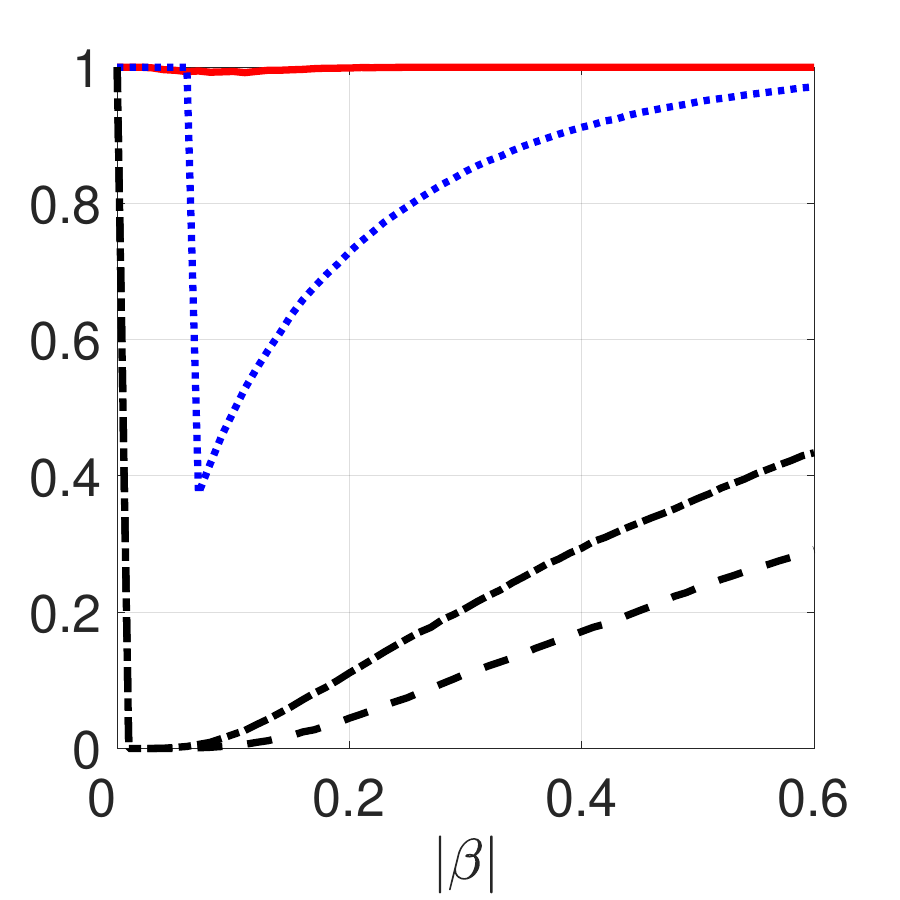}
	\end{subfigure}
	
\end{center}

\vspace{-2ex} 

\caption{Coverage probabilities of confidence intervals for scaled
$\lambda_T$ in the presence of error serial correlation and regressor
endogeneity}

\label{fig:coverage_4lambda_rho6}

\end{figure}

\newpage
\clearpage

\section{Empirical Illustration} \label{sec:emp}

We apply the adaptive LASSO estimator within a predictive regression
framework to forecast the U.S. monthly unemployment rate
(\texttt{UNRATE}). As potential predictors, we include the variables
considered by \citet[Table~1]{BuckmannJoseph23}: the 3-month Treasury
bill (\texttt{TB3MS}), real personal income (\texttt{RPI}), industrial
production (\texttt{INDPRO}), consumption (\texttt{DPCERA3M086SBEA}),
the S\&P 500 price index (\texttt{S\&P 500}), business loans
(\texttt{BUSLOANS}), the consumer price index (\texttt{CPIAUCSL}), the
oil price (\texttt{OILPRICEx}), and the M2 money stock
(\texttt{M2SL}). In addition, we include the four variables most
frequently selected by the standardized LASSO in
\citet[Table~4]{MeiShi24} when predicting the U.S. unemployment rate
one month ahead using a 20-year rolling window: initial jobless claims
(\texttt{CLAIMSx}), the number of unemployed less than 5 weeks
(\texttt{UEMPLT5}), the number of unemployed 5 to 14 weeks
(\texttt{UEMP5TO14}), and the number of unemployed 15 weeks and over
(\texttt{UEMP15OV}). All series are obtained from the FRED-MD
macroeconomic database \citep{McCrackenNg16}, with their respective
FRED-MD codes indicated in parentheses. Throughout, we use the raw
data without applying any transformations.

\cite{MeiShi24} demonstrate the usefulness of LASSO-type methods for
predicting the U.S. unemployment rate using the full set of FRED-MD
variables, considering multiple horizons (1, 2, and 3 months) and
rolling windows of 10, 20, and 30 years, and benchmarking their
results against a random walk with drift and an autoregressive model.
In contrast, our focus is not on relative predictive performance, but
on quantifying uncertainty around adaptive LASSO estimates using the
uniformly valid confidence intervals proposed in
Section~\ref{subsec:ci}. Although we compare the magnitudes of
adaptive LASSO coefficients to their OLS counterparts, oracle-based
confidence intervals for OLS are infeasible because the limiting
distribution of the OLS estimator is distorted by nuisance parameters
arising from endogeneity, serial correlation, and local-to-unity
predictors.

We report results for one-month-ahead out-of-sample forecasts based on
a 20-year rolling window using data from January 1959 to December
2024. The sample from January 1959 to December 1979 is used for
initial estimation, while forecasts are evaluated over the period
January 1980 to December 2024.

Within each rolling window, the penalization parameter $\lambda_T$ is
selected via time-series cross-validation following
\citet[Chapter~5.10]{HyndmanAthanasopoulos18}. For each candidate
value of $\lambda_T$, the adaptive LASSO is calculated using the first
60\% of observations in the window and then used to generate a
one-month-ahead forecast. The estimation sample is then expanded
recursively by one observation at a time until the end of the window,
producing a sequence of forecasts. The value of $\lambda_T$ is chosen
to minimize the resulting root mean squared forecasting error across
those forecasts. The candidate set for $\lambda_T$ consists of a grid
ranging from zero to the smallest value that shrinks all coefficients
to zero when estimated on the full window.

The left panel of Figure~\ref{fig:emp_forecasts} in
Appendix~\ref{app:emp} shows the adaptive LASSO and OLS forecasts
alongside the observed unemployment rate, while the right panel
reports the corresponding forecast errors. Over the full evaluation
period, OLS attains a root mean squared forecasting error of $0.82$,
whereas the adaptive LASSO reduces this by $11\%$ to $0.73$. This
improvement largely reflects the adaptive LASSO’s ability to
accommodate structural changes during and in the aftermath of the
COVID-19 pandemic. The spike in forecast errors both for OLS and
adaptive LASSO in the beginning of COVID-19 is driven by the abrupt
surge in \texttt{CLAIMSx}. While the adaptive LASSO adjusts rapidly in
subsequent months, the OLS estimator fails to do so, as illustrated in
Figure~\ref{fig:emp_coeffs1}.

In line with the motivation of this paper, we find that the
coefficients for the four labor-market variables (\texttt{CLAIMSx},
\texttt{UEMPLT5}, \texttt{UEMP5TO14}, \texttt{UEMP15OV}) are
frequently estimated to be small but non-zero, whereas the remaining
coefficients are often shrunk to zero by the adaptive LASSO.
Figure~\ref{fig:emp_coeffs1} reports the rolling-window adaptive LASSO
estimates for the coefficients corresponding to the labor-market
variables together with their uniformly valid confidence intervals,
benchmarked against the corresponding OLS estimates. The resulting
confidence intervals appear plausible, often widening during and in
the aftermath of crisis episodes. This behavior can be partly
attributed to increases in the penalization parameter $\lambda_T$ (see
Figure~\ref{fig:emp_lambda} in Appendix~\ref{app:emp}) and partly to
changes in the underlying variables. Importantly, a larger $\lambda_T$
does not necessarily imply wider confidence intervals. For example,
during and in the aftermath of the COVID-19 crisis, $\lambda_T$ is
elevated, yet the confidence intervals for the coefficients on
\texttt{CLAIMSx} and \texttt{UEMP5TO14} become noticeably narrower.

Figure~\ref{fig:emp_coeffs2} in Appendix~\ref{app:emp} shows the
results for the remaining variables. Although the adaptive LASSO often
sets their coefficients to zero, the associated uncertainty can still
be relatively large, particularly during crises.

Overall, the application highlights the usefulness of the adaptive
LASSO for estimating relationships among economic variables in the
presence of structural changes or shocks. The confidence intervals
proposed in this paper are plausible and allow to quantify uncertainty
around adaptive LASSO estimates. Although the confidence intervals can
occasionally be wide, they are robust to endogeneity, serial
correlation, and local-to-unity parameters, making them a valuable
tool for empirical applications.

\begin{figure}[ht]
\begin{center}
	
	\begin{subfigure}{0.4\textwidth}
		\centering
		\includegraphics[trim={0cm 0cm 0.50cm 0cm},width=\textwidth,clip]
		{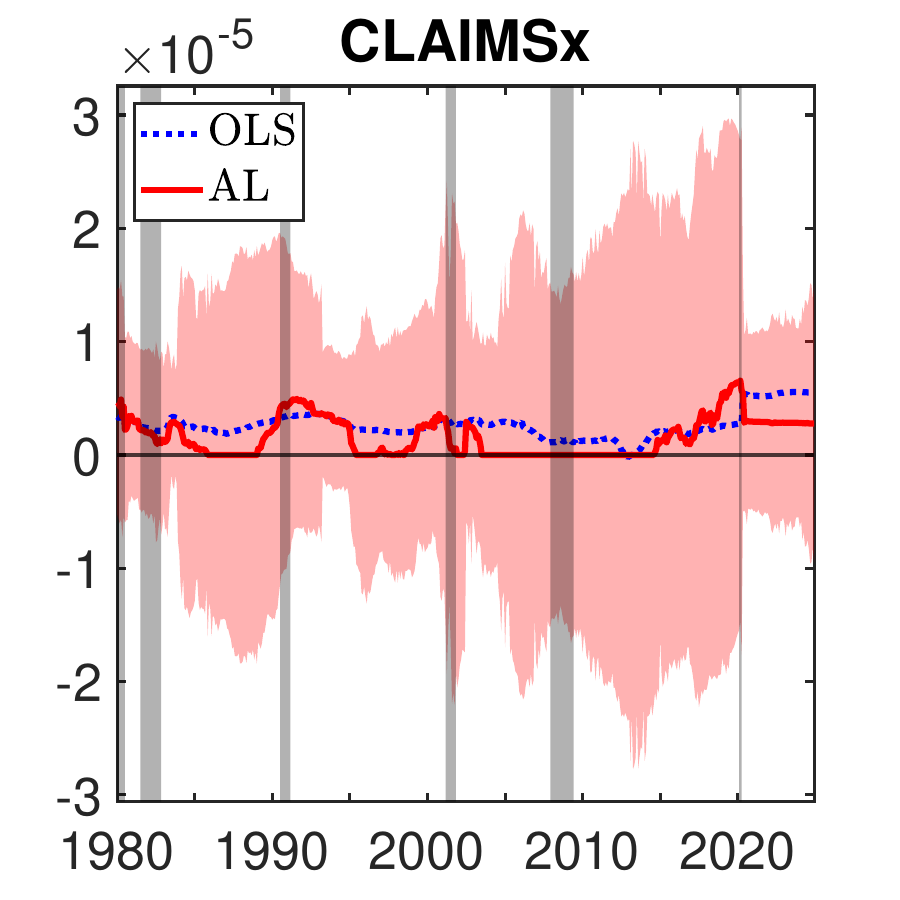}
	\end{subfigure}\begin{subfigure}{0.4\textwidth}
		\centering
		\includegraphics[trim={0cm 0cm 0.50cm 0cm},width=\textwidth,clip]
		{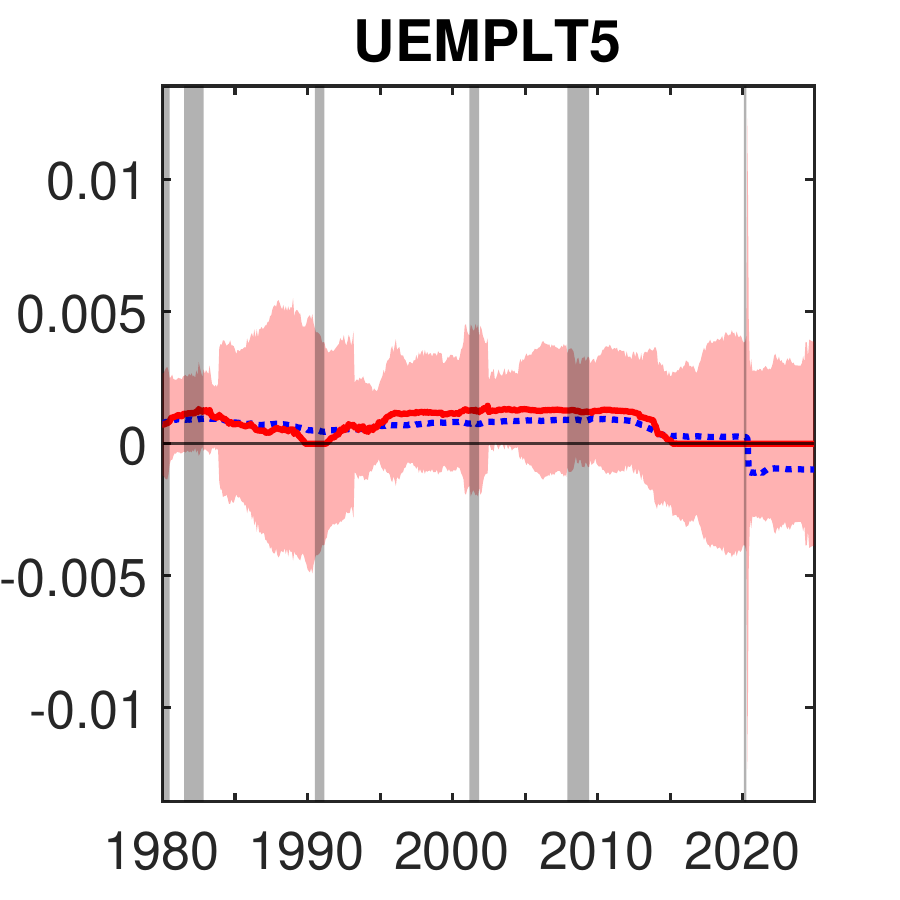}
	\end{subfigure}
	
	\begin{subfigure}{0.4\textwidth}
		\centering
		\includegraphics[trim={0cm 0cm 0.50cm 0cm},width=\textwidth,clip]
		{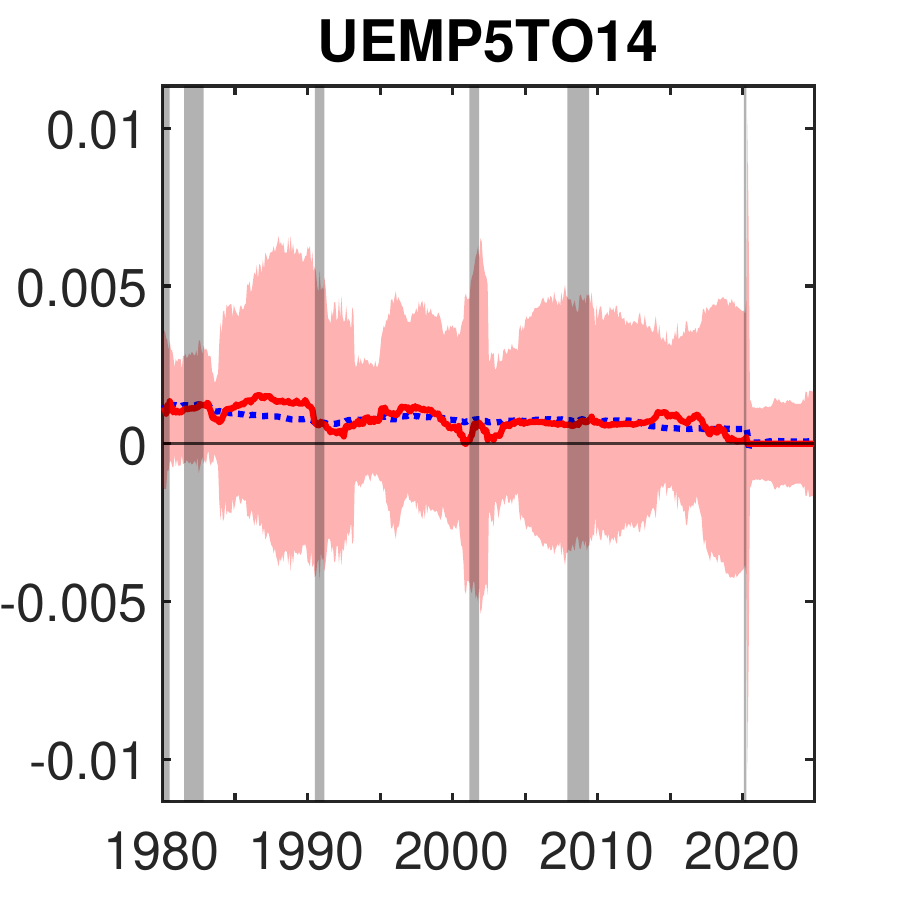}
	\end{subfigure}\begin{subfigure}{0.4\textwidth}
		\centering
		\includegraphics[trim={0cm 0cm 0.50cm 0cm},width=\textwidth,clip]
		{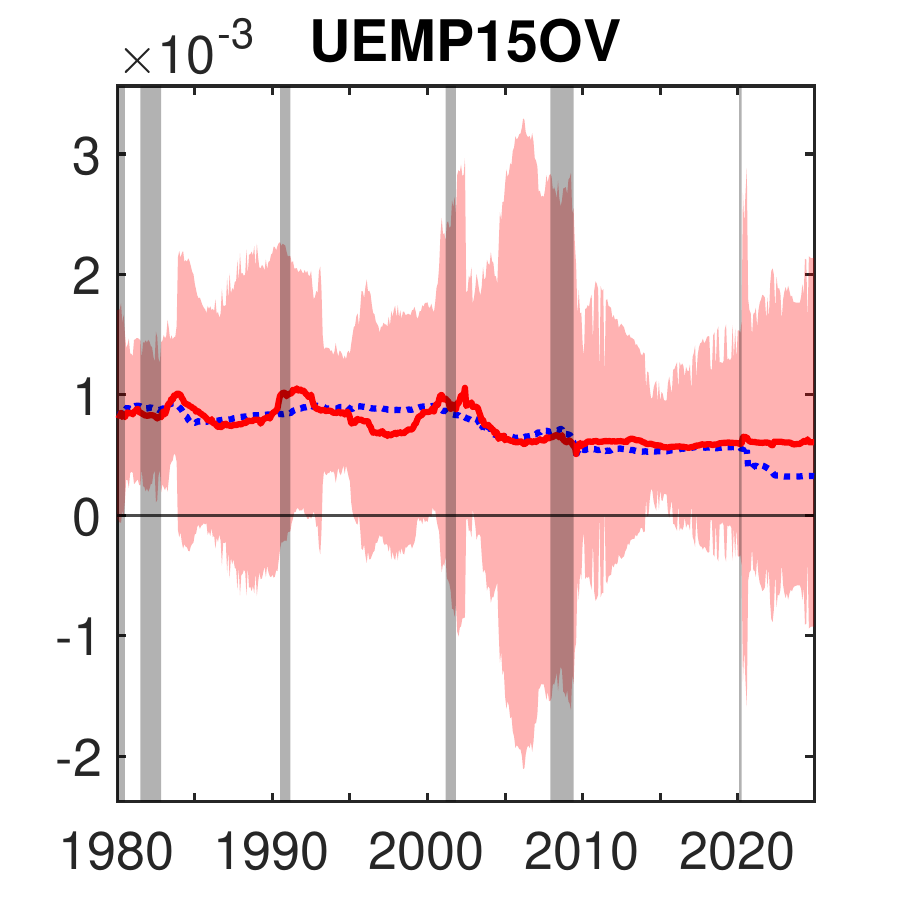}
	\end{subfigure}
	
\end{center}
\vspace{-2ex} 
\caption{20-year rolling window coefficient estimates and adaptive LASSO-based confidence intervals for labor-market variables.}
\label{fig:emp_coeffs1}
\end{figure}

\section{Summary and Conclusions} \label{sec:conclusion}

This paper analyzes the asymptotic behavior of the adaptive LASSO
estimator in cointegrating regressions with local-to-unity regressors
under moving-parameter asymptotics. We establish model selection
probabilities, estimation consistency, limiting distributions, and
uniform convergence rates, as well as the fastest local-to-zero rates
that remain detectable by the estimator. As these rates depend
critically on the tuning regime, the results characterize the smallest
signal-to-noise ratios that can reliably be detected under both
conservative and consistent tuning. In addition, under consistent
tuning, we construct uniformly valid confidence regions for the
regression coefficients that are straightforward to compute and do not
require any knowledge or estimation of nuisance parameters associated
with long-run covariance matrices or local-to-unity parameters.

Our simulation study demonstrates that the finite-sample distribution
of the adaptive LASSO estimator often differs substantially from what
is implied by the oracle property. In contrast, the limiting
distributions derived under moving-parameter asymptotics provide
accurate approximations and also successfully capture the empirical
frequency with which coefficients are set to zero. As a result, the
proposed uniform confidence regions exhibit stable coverage
probabilities across the parameter space, whereas confidence regions
based on the oracle property perform poorly when the true coefficients
are close to zero. The empirical application complements these
findings by illustrating the usefulness of the proposed confidence
regions for quantifying uncertainty around adaptive LASSO estimates in
practice.

Several promising directions for future research emerge. First, an
extension of the analysis to the twin adaptive LASSO proposed in
\cite{LeeEtAl22} for settings that allow also for stationary
regressors and cointegration among local-to-unity regressors would
extend the framework to a broader class of econometric models. Second,
further work should focus on the properties of the proposed confidence
regions, including their potential usefulness for hypothesis testing.
Third, theoretical guidance on choosing the penalization parameter
that balances the performance of the adaptive LASSO estimator and the
size of the confidence regions would further enhance the empirical
applicability of the proposed methods. Finally, extending the results
to high-dimensional regressions represents a natural next step toward
data-rich applications.

\section*{Acknowledgements} \label{sec:acknowledge}

We thank the participants of a research seminar at the Vienna
University of Economics and Business in 2025 and of the 2025
Econometrics Workshop at TU Dortmund University for their valuable
comments and suggestions.

\paragraph*{Declaration of Interest} The authors have no conflicts of
interest to declare.




\begin{appendices}

\setcounter{equation}{0}
\renewcommand\theequation{\Alph{section}.\arabic{equation}}	
\setcounter{table}{0}
\renewcommand\thetable{\Alph{section}.\arabic{table}}
\setcounter{figure}{0}
\renewcommand\thefigure{\Alph{section}.\arabic{figure}}

\setcounter{lemma}{0}
\renewcommand\thelemma{\Alph{section}.\arabic{lemma}}

\section{Preparation} \label{app:prep}

In preparation for some of the asymptotic derivations in the
univariate case, we provide finite-sample expressions of model
selection probabilities and appropriately scaled estimation errors in
the following lemma.

\begin{lemma}[Finite sample results] \label{lem:fs_results}
With $\mathcal{Z}^c_T \coloneqq T(\hat\beta - \beta_T)$, $\zeta_{vv,T}
\coloneqq T^{-2} \sum_{t=1}^T x_t^2$, $\beta_{0,T} \coloneqq
T\beta_T$, $\tilde\beta_{0,T} \coloneqq \lambda_T^{-1/2}T\beta_T$, and
$\bar\beta_{0,T} \coloneqq \lambda_T^{-1}T\beta_T$, we obtain the
following expressions for the model selection probabilities and the
distributions of the adaptive LASSO estimator in finite samples.

\begin{enumerate}

\item \label{enum:fs_ms} $\mP\left(\betaAL = 0\right)$ can be written as
\begin{align} \label{eq:fs-ms_conserv}
& \mP\left(\zeta_{vv,T}^{1/2}\left|\mathcal{Z}^c_T + \beta_{0,T} \right| \leq 
\sqrt{\frac{\lambda_T}{2}}\right) \\ \label{eq:fs-ms_consist}
= \; & \mP\left(\zeta_{vv,T}^{1/2} \leq 
\frac{1}{\sqrt{2}}\left|\lambda_T^{-1/2}\mathcal{Z}^c_T + \tilde\beta_{0,T}\right|^{-1}\right).
\end{align}
\item \label{enum:fs_dist_conserv} $T(\betaAL - \beta_T)$ can be decomposed into
\begin{align} \label{eq:fs-dist_conserv}
 & \ind\left\{\betaAL \neq 0\right\} \left(
\mathcal{Z}^c_T - \frac{\lambda_T}{2\zeta_{vv,T}}(\mathcal{Z}^c_T + \beta_{0,T})^{-1}\right) - 
\ind\left\{\betaAL = 0\right\}\beta_{0,T} \\  \label{eq:fs-dist_consist_T}
= \; & \ind\left\{\betaAL \neq 0\right\} \left(\mathcal{Z}^c_T -  
\frac{1}{2\zeta_{vv,T}}\left(\lambda_T^{-1}\mathcal{Z}^c_T + \bar\beta_{0,T}\right)^{-1}\right) 
- \ind\left\{\betaAL = 0\right\}\beta_{0,T}.
\end{align}
\item \label{enum:fs_dist_consist} $\lambda_T^{-1/2}T(\betaAL - \beta_T)$ can be decomposed into
\begin{align} \label{eq:fs-dist_consist}
\ind\left\{\betaAL \neq 0\right\} \left(\lambda_T^{-1/2}\mathcal{Z}^c_T -  
\frac{1}{2\zeta_{vv,T}}\left(\lambda_T^{-1/2}\mathcal{Z}^c_T + \tilde\beta_{0,T}\right)^{-1} \right) -
\ind\left\{\betaAL = 0\right\}\tilde\beta_{0,T}.
\end{align}
\end{enumerate}

\end{lemma}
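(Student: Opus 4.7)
The plan is to derive everything directly from the closed-form expression for $\betaAL$ given in~\eqref{eq:betaAL}, using only algebraic rewriting together with the substitutions $\sum_{t=1}^T x_t^2 = T^2\zeta_{vv,T}$ and hence $\tilde\lambda_T = \lambda_T/(2T^2\zeta_{vv,T})$, together with $T\hat\beta = \mathcal{Z}_T + \beta_{0,T}$. The three finite-sample identities $\beta_{0,T} = T\beta_T$, $\tilde\beta_{0,T} = \lambda_T^{-1/2}\beta_{0,T}$, and $\bar\beta_{0,T} = \lambda_T^{-1}\beta_{0,T}$ will be the only bookkeeping needed to pass between the different parametrizations.

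For part~\ref{enum:fs_ms}, I would start from the thresholding rule in~\eqref{eq:betaAL}: the event $\{\betaAL = 0\}$ coincides with $\{\hat\beta^2 \leq \tilde\lambda_T\}$. Multiplying through by $T^2\zeta_{vv,T}$ turns this into $\zeta_{vv,T}\,T^2\hat\beta^2 \leq \lambda_T/2$, i.e.\ $\zeta_{vv,T}^{1/2}|T\hat\beta| \leq \sqrt{\lambda_T/2}$, and substituting $T\hat\beta = \mathcal{Z}_T + \beta_{0,T}$ gives~\eqref{eq:fs-ms_conserv}. Dividing both sides of the inequality by $\sqrt{\lambda_T}$ and recognizing $\lambda_T^{-1/2}(\mathcal{Z}_T + \beta_{0,T}) = \lambda_T^{-1/2}\mathcal{Z}_T + \tilde\beta_{0,T}$ immediately yields the equivalent form~\eqref{eq:fs-ms_consist}.

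For part~\ref{enum:fs_dist_conserv}, I would split according to whether $\betaAL$ is zero. On $\{\betaAL = 0\}$, $T(\betaAL - \beta_T) = -T\beta_T = -\beta_{0,T}$. On $\{\betaAL \neq 0\}$, $\betaAL = \hat\beta - \tilde\lambda_T\hat\beta^{-1}$, so $T(\betaAL - \beta_T) = \mathcal{Z}_T - T\tilde\lambda_T\hat\beta^{-1}$. Using $T\tilde\lambda_T = \lambda_T/(2T\zeta_{vv,T})$ and $\hat\beta^{-1} = T(\mathcal{Z}_T + \beta_{0,T})^{-1}$ collapses the correction term to $\lambda_T (2\zeta_{vv,T})^{-1}(\mathcal{Z}_T + \beta_{0,T})^{-1}$, giving~\eqref{eq:fs-dist_conserv}. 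To obtain~\eqref{eq:fs-dist_consist_T}, factor $\lambda_T$ out of $\mathcal{Z}_T + \beta_{0,T} = \lambda_T(\lambda_T^{-1}\mathcal{Z}_T + \bar\beta_{0,T})$ so that the $\lambda_T$ in the numerator cancels, producing the form $(2\zeta_{vv,T})^{-1}(\lambda_T^{-1}\mathcal{Z}_T + \bar\beta_{0,T})^{-1}$. Part~\ref{enum:fs_dist_consist} then follows from part~\ref{enum:fs_dist_conserv} by multiplying~\eqref{eq:fs-dist_conserv} by $\lambda_T^{-1/2}$ and absorbing one factor of $\lambda_T^{1/2}$ into the inverse via $\lambda_T^{1/2}(\mathcal{Z}_T + \beta_{0,T})^{-1} = (\lambda_T^{-1/2}\mathcal{Z}_T + \tilde\beta_{0,T})^{-1}$.

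The proof is essentially a routine algebraic rewriting of~\eqref{eq:betaAL}, so I do not anticipate any substantive obstacle; the only care required is to keep the three scalings $\beta_{0,T}$, $\tilde\beta_{0,T}$, and $\bar\beta_{0,T}$ straight and to correctly track the powers of $\lambda_T$ that are shifted between the numerator and the argument of the inverse. These manipulations produce finite-sample identities that hold pathwise (on the event that the relevant inverses are defined, i.e.\ $\mathcal{Z}_T + \beta_{0,T} \neq 0$, which is precisely $\{\betaAL \neq 0\}$), and therefore they hold in distribution as well.
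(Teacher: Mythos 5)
Your proposal is correct and follows essentially the same route as the paper: both derive all three parts by direct algebraic rewriting of the closed-form solution~\eqref{eq:betaAL}, substituting $T\hat\beta = \mathcal{Z}_T + \beta_{0,T}$ and $\tilde\lambda_T = \lambda_T/(2T^2\zeta_{vv,T})$, and shifting powers of $\lambda_T$ between the scalings $\beta_{0,T}$, $\tilde\beta_{0,T}$, $\bar\beta_{0,T}$. The only quibble is your closing remark: $\{\betaAL \neq 0\} = \{|\hat\beta| > \sqrt{\tilde\lambda_T}\}$ is a strict subset of $\{\hat\beta \neq 0\}$ (since $\tilde\lambda_T > 0$), not equal to it, but this is immaterial because the inverse only ever appears multiplied by $\ind\{\betaAL \neq 0\}$.
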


\begin{proof} To show~\ref{enum:fs_ms}, note that from~\eqref{eq:betaAL}, we get 
\begin{align*}
\mP\left(\betaAL = 0\right) &= \mP\left(\left|\hat\beta\right| 
\leq \sqrt{\tilde{\lambda}_T}\right) 
= \mP\left(\left|T\left(\hat \beta-\beta_T\right) + T\beta_T\right| \leq
\sqrt{\frac{\lambda_T}{2T^{-2}\sum_{t=1}^T x_t^2}}\right) \\
&= \mP\left(\left| \mathcal{Z}^c_T + \beta_{0,T} \right| \leq 
\sqrt{\frac{\lambda_T}{2\zeta_{vv,T}}}\right)
= \mP\left(\left| \lambda_T^{-1/2}\mathcal{Z}^c_T + \tilde\beta_{0,T} \right| 
\leq \sqrt{\frac{1}{2\zeta_{vv,T}}}\right),
\end{align*}
with the last two expressions immediately yielding
\eqref{eq:fs-ms_conserv} and \eqref{eq:fs-ms_consist}, respectively.

To show~\ref{enum:fs_dist_conserv}, note that again
from~\eqref{eq:betaAL}, it follows that 
\begin{align*}
T(\betaAL - \beta_T) &= \ind\left\{\betaAL \neq 0\right\} \left(T(\hat\beta - \beta_T) 
- \tilde\lambda_T\hat\beta^{-1}\right) - \ind\left\{\betaAL = 0\right\} T\beta_T \\
&= \ind\left\{\betaAL \neq 0\right\} \left(\mathcal{Z}^c_T  
- \frac{\lambda_T}{2\zeta_{vv,T}}\left(\mathcal{Z}^c_T + \beta_{0,T}\right)^{-1}\right) 
- \ind\left\{\betaAL = 0\right\} \beta_{0,T} \\
&= \ind\left\{\betaAL \neq 0\right\} \left(\mathcal{Z}^c_T  
- \frac{1}{2\zeta_{vv,T}}\left(\lambda_T^{-1}\mathcal{Z}^c_T + \tilde\beta_{0,T}\right)^{-1}\right) 
- \ind\left\{\betaAL = 0\right\} \beta_{0,T},
\end{align*}
where the last two expressions prove \eqref{eq:fs-dist_conserv} and
\eqref{eq:fs-dist_consist_T}, respectively.

Finally,~\ref{enum:fs_dist_consist} can be shown by appropriately
scaling \eqref{eq:fs-dist_consist_T}.
\end{proof}

For the proofs using Lemma~\ref{lem:fs_results}, note that
$\mathcal{Z}^c_T \Rightarrow \mathcal{Z}^c$ and $\zeta_{vv,T} \Rightarrow
\zeta_{vv}^c$, as well as $\beta_{0,T} \to \beta_0$, $\tilde\beta_{0,T}
\to \tilde\beta_0$, and  $\bar\beta_{0,T} \to \bar\beta_0$ in
$\mRquer$, the latter three in the notation of
Theorems~\ref{thm:ms-unif}--\ref{thm:ls_dist-unif}.

For the proofs in both the univariate as well as the multivariate
case, we repeatedly use the joint convergence of $\left(T^{-2}\sum_{t=1}^T
x_tx_t', T^{-1}\sum_{t=1}^T x_t u_t\right) \Rightarrow \left(\zeta_{vv}^c,\int_0^1
J_v^c(r)dB_u(r) + \Delta_{vu}\right)$, see \citet[Lemma~3.1]{Phillips88} together with the continuous mapping theorem, without stating it explicitly. For the unit root case see \citet[Lemma~2.1]{ParkPhillips88}.

\section{Proofs for Section~\ref{subsec:fixed}} \label{app:proofs-fixed}

\begin{proof}[Proof of Proposition~\ref{prop:ms-fixed}]
To prove \ref{enum:basic-noFN}, note that just like in the proof of
Lemma~\ref{lem:fs_results}, we have
$$
\mP\left(\betaAL = 0\right) = \mP\left(\left| \hat\beta\right| \leq \sqrt{\tilde{\lambda}_T}\right) 
= \mP\left(\left|\left(\hat\beta - \beta\right) + \beta\right| \leq
\sqrt{\frac{T^{-2}\lambda_T}{2T^{-2}\sum_{t=1}^T x_t^2}}\right).
$$
The result therefore follows from observing that $\hat\beta - \beta =
o_\mP(1)$, $\beta\neq 0$, $T^{-2}\sum_{t=1}^T x_t^2=O_\mP(1)$, and
$T^{-2}\lambda_T\to 0$.

To show (b), use \eqref{eq:fs-ms_conserv} in
Lemma~\ref{lem:fs_results} with $\beta_T \equiv 0$ to see that
$$
\mP\left(\betaAL = 0\right) \to \mP\left((\zeta_{vv}^c)^{1/2}\left|\mathcal{Z}^c\right|
\leq \sqrt{\frac{\lim_{T \to \infty}\lambda_T}{2}}\right).
$$
For (b1), observe that the limiting probability in the above display
is strictly less than 1 since $\lambda_T \to \lambda_0 < \infty$ in
this case and the distribution of $\mathcal{Z}^c$ has unbounded support.
For (b2), note that $\lambda_T \to \infty$ in  this case and that
$(\zeta_{vv}^c)^{1/2}\left|\mathcal{Z}^c\right|$ is bounded in
probability.
\end{proof}


\begin{proof}[Proof of Proposition~\ref{prop:param_consist-fixed}] Part~\ref{enum:param_consist-fixed} 
is a special case of
Theorem~\ref{thm:param_consist-unif}\ref{enum:param_consist-unif}.
Part~\ref{enum:rateT-fixed} follows directly from
Proposition~\ref{prop:ls_dist-fixed}(\ref{enum:ls_dist_conserv-fixed}
and (b1)) and~\ref{enum:slower_rate-fixed} is a direct consequence of
Proposition~\ref{prop:ls_dist-fixed}(b2).
\end{proof}


\begin{proof}[Proof of Proposition~\ref{prop:ls_dist-fixed}] 

Parts~(a1) and (a2) are special cases of
Theorem~\ref{thm:ls_dist-unif}\ref{enum:ls_dist_conserv-unif}, with
$|\beta_0| = \infty$ and $\beta_0 = 0$, respectively.

To prove (b1), note that from~\eqref{eq:betaAL}, we can deduce that
\begin{equation} \label{eq:ProofProp3-b1}
T(\betaAL - \beta) = \ind\left\{\betaAL \neq 0\right\} \left(T\left(\hat \beta - \beta\right) 
- T\tilde \lambda_T \hat\beta^{-1} \right) - \ind\left\{\betaAL = 0\right\}T\beta.
\end{equation}
In case $\beta \neq 0$, we get from
Proposition~\ref{prop:ms-fixed}\ref{enum:basic-noFN} that
$\ind\left\{\betaAL \neq 0\right\} \plim 1$ as well as
$\ind\left\{\betaAL = 0\right\}T\beta = o_\mP(1)$, where the latter holds since 
$$
\mP\left(\ind\left\{\betaAL = 0\right\} |T\beta|> \eps \right) 
= \ind\left\{ |T\beta| > \eps \right\} \mP\left(\betaAL=0\right) \to 0.
$$
Thus, using estimation consistency of $\hat \beta$, $T\left(\hat \beta
- \beta\right) \Rightarrow \mathcal{Z}^c$, and $T\tilde \lambda_T =
\left(T^{-1}\lambda_T\right)/\left(2T^{-2}\sum_{t=1}^T x_t^2\right)
\Rightarrow \tilde \lambda_0/(2\zeta_{vv}^c)$, it follows that 
$$
T(\betaAL - \beta) \Rightarrow \mathcal{Z}^c - (\zeta_{vv}^c)^{-1}\frac{\tilde\lambda_0}{2\beta}.
$$
In case $\beta = 0$, \eqref{eq:ProofProp3-b1} reduces to
$$
T(\betaAL - \beta) = \ind\left\{\betaAL \neq 0\right\} \left(T\left(\hat \beta - \beta\right) 
- T\tilde \lambda_T \hat\beta^{-1} \right),
$$
which is $o_\mP(1)$ since
\begin{align*}
\mP\left(\ind\left\{\betaAL \neq 0 \right. \right. & \left. \left. \right\} 
\left| \left(\hat\beta - \beta\right) - 
T\tilde\lambda_T\hat\beta^{-1}\right| > \eps\right) 
\leq \mP\left(\ind\left\{\betaAL \neq 0 \right\} = 1\right) \\ 
& = \mP\left(\betaAL \neq 0\right) \to 0
\end{align*} 
by Proposition~\ref{prop:ms-fixed}(b2).

The proof of~(b2) is similar. First, again from~\eqref{eq:betaAL}, we get that
\begin{align} \label{eq:ProofProp3-b2}
\begin{split}
\lambda_T^{-1}T^2(\betaAL - \beta) = \ind&\left\{\betaAL \neq 0\right\} \left(\lambda_T^{-1}T^2
\left(\hat \beta - \beta\right) - \lambda_T^{-1}T^2\tilde \lambda_T \hat\beta^{-1} \right) \\
& \quad - \ind\left\{\betaAL = 0\right\}\lambda_T^{-1}T^2\beta.
\end{split}
\end{align}
As before, in case $\beta \neq 0$, we get that $\ind\left\{\betaAL
\neq 0\right\} \plim 1$ and $\ind\left\{\betaAL =
0\right\}\lambda_T^{-1}T^2\beta = o_\mP(1)$. Thus, from estimation
consistency of $\hat \beta$, $T\left(\hat \beta - \beta\right) =
O_\mP(1)$, $\lambda_T^{-1}T\to 0$, and $\lambda_T^{-1}T^2\tilde
\lambda_T = 1/\left(2T^{-2}\sum_{t=1}^T x_t^2\right) \Rightarrow
1/(2\zeta_{vv}^c)$, it follows that
$$
\lambda_T^{-1}T^2(\betaAL - \beta) \Rightarrow - (\zeta_{vv}^c)^{-1}\frac{1}{2\beta}.
$$
In case $\beta=0$,~\eqref{eq:ProofProp3-b2} reduces to
$$
\lambda_T^{-1}T^2(\betaAL - \beta) = 
\ind\left\{\betaAL \neq 0\right\} \left(\lambda_T^{-1}T^2\left(\hat \beta - \beta\right) 
- \lambda_T^{-1}T^2\tilde \lambda_T \hat\beta^{-1} \right),
$$
which is $o_\mP(1)$ by the same arguments as above.
\end{proof}


\begin{proof}[Proof of Corollary~\ref{cor:summary_fixed}]
The corollary follows directly from \ref{enum:basic-noFN} and (b2) in
Proposition~\ref{prop:ms-fixed} and from
Proposition~\ref{prop:ls_dist-fixed}(b1).
\end{proof}

\section{Proofs for Section~\ref{subsec:unif}} \label{app:proofs-unif}

\begin{proof}[Proof of Theorem~\ref{thm:ms-unif}] Part~\ref{enum:ms_conserv-unif}  
immediately follows from \eqref{eq:fs-ms_conserv} in
Lemma~\ref{lem:fs_results}\ref{enum:fs_ms} after letting $\mathcal{Z}^c_T$,
$\zeta_{vv,T}$, $\beta_{0,T}$, and $\lambda_T$ settle at their (weak) limits
$\mathcal{Z}^c$, $\zeta_{vv}^c$, $\beta_0$, and $\lambda_0$,
respectively. Also note that since $\mathcal{Z}^c$ has a distribution
with unbounded support, the limiting probability is smaller than one
for all $0 \leq \lambda_0 < \infty$ and $\beta_0 \in \mRquer$.

Similarly, \ref{enum:ms_consist-unif} can be deduced from
\eqref{eq:fs-ms_consist} in Lemma~\ref{lem:fs_results}\ref{enum:fs_ms}
in a straightforward manner after letting $\zeta_{vv,T}$ and
$\tilde\beta_{0,T}$ settle at their (weak) limits $\zeta_{vv}^c$ and
$\tilde\beta_0$, respectively, and noting that
$\lambda_T^{-1/2}\mathcal{Z}^c_T = o_{\mP}(1)$.
\end{proof}


\begin{proof}[Proof of Theorem~\ref{thm:param_consist-unif}] To show 
part~\ref{enum:param_consist-unif}, observe that \eqref{eq:betaAL}
implies
$$
\left|\betaAL - \beta_T\right| \leq \ind\left\{\left|
\hat\beta\right| > \sqrt{\tilde\lambda_T} \right\}\left|
\hat\beta - \tilde\lambda_T\hat\beta^{-1} - \beta_T\right| +
\ind\left\{\left|\hat\beta\right| \leq \sqrt{\tilde\lambda_T} \right\}\left|\beta_T \right|.
$$
The first term on the right-hand side of the above display is bounded by
$$
\ind\left\{\left| \hat\beta \right| > \sqrt{\tilde\lambda_T}
\right\} \left(\left| \hat\beta - \beta_T \right| + \left|
\tilde\lambda_T\hat\beta^{-1} \right|\right) \leq o_\mP(1) +
\sqrt{\tilde\lambda_T} = o_\mP(1),
$$
where the last equality follows from the fact that $\tilde\lambda_T =
0.5T^{-2}\lambda_T (T^{-2}\sum_{t=1}^Tx_t^2)^{-1} = o_\mP(1)$ since
$T^{-2}\sum_{t=1}^Tx_t^2 = O_\mP(1)$ and $T^{-2}\lambda_T \to 0$. For
the second term, we have
$$
\mP\left(\ind\left\{\left| \hat\beta \right| \leq \sqrt{\tilde\lambda_T} \right\}
\left| \beta_T \right| > \eps\right) 
=  \ind\left\{\left| \beta_T \right| > \eps\right\} 
\mP\left(\left| \hat\beta - \beta_T + \beta_T \right| \leq \sqrt{\tilde\lambda_T} \right) \to 0,
$$
as $\hat\beta - \beta_T = o_\mP(1)$, $\left| \beta_T \right| >
\eps$ if $\ind\left\{\left| \beta_T \right| > \eps\right\} =
1$, and $\tilde\lambda_T = o_\mP(1)$. Hence, $\ind\left\{\left|
\hat\beta \right| \leq \sqrt{\tilde\lambda_T} \right\} \left|
\beta_T \right| = o_\mP(1)$ also and we can conclude that
$\left| \betaAL - \beta_T \right| = o_\mP(1)$.

Part~\ref{enum:rate_conserv-unif} can be learned from
Theorem~\ref{thm:ls_dist-unif}\ref{enum:ls_dist_conserv-unif} since the
limiting distribution is stochastically bounded. For this, note that
$$
\ind\left\{(\zeta_{vv}^c)^{1/2}\left|\mathcal{Z}^c + \beta_0\right|
\leq \sqrt{\frac{\lambda_0}{2}}\right\} = 0
$$
for $|\beta_0| = \infty$ since $\lambda_0 < \infty$.

Part~\ref{enum:rate_consist-unif} follows directly from
Theorem~\ref{thm:ls_dist-unif}\ref{enum:ls_dist_consist-unif}.
\end{proof}


\begin{proof}[Proof of Theorem~\ref{thm:ls_dist-unif}] To show~\ref{enum:ls_dist_conserv-unif}, 
note that \eqref{eq:fs-dist_conserv} in Lemma~\ref{lem:fs_results}\ref{enum:fs_dist_conserv} states
$$
T(\betaAL - \beta_T) = \ind\left\{\betaAL \neq 0\right\} \left(\mathcal{Z}^c_T -
\frac{\lambda_T}{2\zeta_{vv,T}}\left(\mathcal{Z}^c_T+\beta_{0,T}\right)^{-1}\right) - 
\ind\left\{\betaAL = 0\right\}\beta_{0,T}.
$$
Observing that 
$$
\ind\left\{\betaAL = 0\right\} = \ind\left\{\zeta_{vv,T}^{1/2}\left|\mathcal{Z}^c_T
 + \beta_{0,T} \right| \leq \sqrt{\frac{\lambda_T}{2}}\right\}
$$
by \eqref{eq:fs-ms_conserv} in
Lemma~\ref{lem:fs_results}\ref{enum:fs_ms} and letting
$\zeta_{vv,T}$, $\mathcal{Z}^c_T$, $\beta_{0,T}$, and $\lambda_T$
settle at their respective (weak) limits yields the desired result.

To show~\ref{enum:ls_dist_consist-unif}, note that
\eqref{eq:fs-dist_conserv} in
Lemma~\ref{lem:fs_results}\ref{enum:fs_dist_conserv} states
\begin{align*}
\lambda_T^{-1/2}T(\betaAL - \beta_T) 
= & \ind\left\{\betaAL \neq 0\right\} \left(\lambda_T^{-1/2}\mathcal{Z}^c_T -  
\frac{1}{2\zeta_{vv,T}}\left(\lambda_T^{-1/2}\mathcal{Z}^c_T + \tilde\beta_{0,T}\right)^{-1} \right) \\
& - \ind\left\{\betaAL = 0\right\}\tilde\beta_{0,T}.
\end{align*}
Observing that 
$$
\ind\left\{\betaAL = 0\right\} =  \ind\left\{\zeta_{vv,T}^{1/2} \leq 
\frac{1}{\sqrt{2}}\left|\lambda_T^{-1/2}\mathcal{Z}^c_T + \tilde\beta_{0,T}\right|^{-1}\right\}
$$
by \eqref{eq:fs-ms_consist} in
Lemma~\ref{lem:fs_results}\ref{enum:fs_ms}, letting $\zeta_{vv,T}$
and $\tilde\beta_{0,T}$ settle at their respective (weak) limits and
recalling that $\lambda_T^{-1/2}\mathcal{Z}^c_T = o_\mP(1)$ immediately
yields the desired result when $\tilde\beta_0 \in \mR$. If $|\tilde\beta_0| =
\infty$, the fact that $\mP(\betaAL = 0) \to 0$ ensures
$\ind\{\betaAL = 0\}\tilde\beta_{0,T} = o_\mP(1)$, which
proves the claim for this case in a straight-forward manner also.
\end{proof}


\begin{proof}[Proof of Remark~\ref{rem:ls_dist_consist_rateT-unif}] 
Using Lemma~\ref{lem:fs_results}\ref{enum:fs_dist_conserv},
Equation~\eqref{eq:fs-dist_consist} states that $T(\betaAL - \beta_T)$
is given by
$$
\ind\left\{\betaAL \neq 0\right\} \left(\mathcal{Z}^c_T -  
\frac{1}{2\zeta_{vv,T}}\left(\lambda_T^{-1}\mathcal{Z}^c_T + \bar\beta_{0,T}\right)^{-1}\right) 
- \ind\left\{\betaAL = 0\right\}\beta_{0,T}.
$$
By Theorem~\ref{thm:ms-unif}\ref{enum:ms_consist-unif}, $\mP(\betaAL =
0)$ converges to one for $\tilde\beta_0 = 0$, to $0 < p < 1$ for $0 <
|\tilde \beta_0| < \infty$, and to zero for $|\tilde \beta_0|=
\infty$. Hence, for $\tilde\beta_0 = 0$, the first summand in the
above display is $o_\mP(1)$. Consequently, $T(\betaAL - \beta_T)
\Rightarrow  -\beta_0$. In case $0 < |\tilde \beta_0| < \infty$, note
that necessarily, $\beta_0 = \sign(\tilde\beta_0)\infty$ and
$\bar\beta_0 = 0$ hold. Since $\lambda_T^{-1}\mathcal{Z}^c_T =
o_\mP(1)$, the terms next to both indicator functions in the above
display tend to $-\sign(\tilde\beta_0)\infty$, allowing to deduce that
altogether $T(\betaAL - \beta_T)\Rightarrow
-\sign(\tilde\beta_0)\infty$. Finally, if $|\tilde\beta_0| = \infty$,
the second summand in the above display is $o_\mP(1)$, whereas in the
first summand, again, $\lambda_T^{-1}\mathcal{Z}^c = o_\mP(1)$, so that
$T(\betaAL - \beta_T) \Rightarrow \mathcal{Z}^c -
0.5(\zeta_{vv}^c\bar\beta_0)^{-1}$.
\end{proof}

\section{Proofs for Section~\ref{subsec:multi}} \label{app:proofs_multi}

For the proofs in the multivariate case, we introduce some additional
notation. Let $y \coloneqq [y_1,\ldots,y_T]' \in \mR^T$, $u \coloneqq
[u_1,\ldots,u_T]' \in \mR^T$, $X_j \coloneqq [x_{1,j},\ldots,x_{T,j}]'
\in \mR^T$, and $X=[X_1,\ldots,X_k] \in \mR^{T\times k}$, with
$x_{t,j}$ denoting the $j$-th element of $x_t$.


\begin{proof}[Proof of Theorem~\ref{thm:ms-multi}] 
To prove~\ref{enum:ms_conserv-multi}, note that, clearly, the
event $\{\betaALj = 0\}$ is a subset of the event $\{2|\betaALj -
\beta_{T,j}|>|\beta_{T,j}|\}$. Hence, 
$$
\mP\left(\betaALj = 0\right) \leq \mP\left(2|T(\betaALj-\beta_{T,j})|>|\beta_{0,T,j}|\right) \to 0,
$$
since $T(\betaALj-\beta_{T,j}) = O_\mP(1)$ by
Theorem~\ref{thm:param_consist-multi}\ref{enum:rate_conserv-multi} and
$|\beta_{0,T,j}| \to |\beta_{0,j}| = \infty$.

For~\ref{enum:ms_consist-multi}, we first show $\mP\left(\betaALj
= 0\right) \to 1$ if $\tilde\beta_{0,j} = 0$. It follows from the
Karush-Kuhn-Tucker optimality conditions that when $\betaALj \neq 0$ we have
$$
2X_j'\left(y - X\betaAL\right) = \lambda_T \frac{1}{|\hat\beta_j|}\sign\left(\betaALj\right).
$$
Multiplying both sides of the above display by $\lambda_T^{-1/2}T^{-1}$ yields
\begin{equation} \label{eq:KKT}
2\lambda_T^{-1/2}T^{-1} X_j'\left(y - X\betaAL\right) = \frac{1}{\lambda_T^{-1/2}T|\hat\beta_j|}
\sign\left(\betaALj\right).
\end{equation}
Using Assumption~\ref{ass:w1}, we can rewrite the left-hand side
of~\eqref{eq:KKT} as
$$
\lambda_T^{-1/2}T^{-1} X_j'u - (T^{-2}X_j'X) \lambda_T^{-1/2}T(\betaAL - \beta_T) = O_\mP(1),
$$
since $T^{-1} X_j'u = O_\mP(1)$, $T^{-2}X_j'X = O_\mP(1)$, and
$\lambda_T^{-1/2}T(\betaAL - \beta_T) = O_\mP(1)$ by
Theorem~\ref{thm:param_consist-multi}\ref{enum:rate_consist-multi}.
However, for the absolute value of the right-hand side
of~\eqref{eq:KKT} we get that 
$$
\frac{1}{\lambda_T^{-1/2}T|\hat \beta_j|} = 
\frac{1}{|\lambda_T^{-1/2}\mathcal{Z}^c_{T,j} + \tilde\beta_{0,T,j}|} 
\Rightarrow \frac{1}{|\tilde\beta_{0,j}|},
$$
which is unbounded if $\tilde\beta_{0,j} = 0$. Hence,
$\mP\left(\betaALj \neq 0\right) \to 0$ if $\tilde\beta_{0,j} = 0$. To
complete the proof, it remains to show that $\mP\left(\betaALj =
0\right) \to 0$ if $|\tilde\beta_{0,j}| = \infty$. Similar arguments
as used to prove~\ref{enum:ms_conserv-multi} yield 
$$
\mP\left(\betaALj = 0\right) \leq 
\mP\left(2|\lambda_T^{-1/2}T(\betaALj-\beta_{T,j})| > |\tilde\beta_{0,T,j}|\right) \to 0,
$$
since $\lambda_T^{-1/2}T(\betaALj-\beta_{T,j}) = O_\mP(1)$ by
Theorem~\ref{thm:param_consist-multi}\ref{enum:rate_consist-multi} and
$|\tilde\beta_{0,T,j}| \to |\tilde\beta_{0,j}| = \infty$.
\end{proof}


\begin{proof}[Proof of Remark~\ref{rem:ms_conserv-multi}] To prove~\ref{enum:ms_conserv_ge0-multi}, define $\cA_0 \coloneqq \{j
: \beta_{0,j} \neq 0\}$ and $\cA^c_0 \coloneqq \{j:\beta_{0,j}= 0\}$.
We assume,  without loss of generality, that the elements in $\beta_0$
are ordered such that the non-zero elements come first. We can
therefore partition $\beta_0 =
[\beta_{0,\cA_0}',\beta_{0,\cA_0^c}']'$, $\hat \beta =
[\hat\beta_{\cA_0}',\hat\beta_{\cA_0^c}']'$, $\betaAL =
[\betaALA',\betaALAc']'$, and $X = [X_{\cA_0},X_{\cA_0^c}]$. In
generic notation, for sets $\cI$ and $\cJ$, a vector $V$, and a
matrix $M$, we define $D_{\cI}(V) \coloneqq \diag\left(|V_j| : j
\in \cI\right)$, and $M_{[\cI,\cJ]}$ to be the $|\cI| \times
|\cJ|$-matrix containing the elements of $M$ with row indices in $\cI$
and column indices in $\cJ$ only. For $j \in \cA_0^c$, we clearly have
that
$$
\mP\left(\betaALj = 0\right) \geq \mP\left(\betaALAcz=0, \betaALi\neq 0 \text{ for all } i \in \cA_0\right).
$$
The corresponding Karush-Kuhn-Tucker optimality conditions can be written as
\begin{align} \label{eq:KKT_A}
	X_{\mathcal{A}_0}'X\left(\betaAL - \hat\beta\right) & \; = \;
	- \frac{\lambda_T}{2}D_{\mathcal{A}_0}^{-1}(\hat\beta)\sign(\betaALAz) \\
	\label{eq:KKT_Ac} 
	\left\|D_{\cA_0^c}(\hat\beta) X_{\cA_0^c}'X\left(\betaAL - \hat\beta\right)\right\|_\infty
	& \; \leq \; \frac{\lambda_T}{2}.
\end{align}
Since $\betaALAcz = 0$, we can rewrite \eqref{eq:KKT_A} and
\eqref{eq:KKT_Ac} as %
\begin{align}
	\label{eq:KKT_A2}
	& \betaALAz = \hat\beta_{\cA_0} + \left(X_{\cA_0}'X_{\cA_0}\right)^{-1}
	\left[X_{\cA_0}'X_{\cA_0^c}\hat\beta_{\cA_0^c} 
	- \frac{\lambda_T}{2} D_{\cA_0}^{-1}(\hat \beta)\sign(\betaALAz)\right] \\
	\label{eq:KKT_Ac2}
	& \left\|D_{\cA_0^c}(\hat\beta) X_{\cA_0^c}'
	\left[X_{\cA_0} \hat\beta_{\cA_0} + X_{\cA_0^c}\hat\beta_{\cA_0^c} - X_{\cA_0}\betaALAz\right]\right\|_\infty 
	\leq \frac{\lambda_T}{2}.
\end{align}
Plugging-in~\eqref{eq:KKT_A2} into~\eqref{eq:KKT_Ac2}, after some rearrangement, leads to
\begin{align*}
	\left\| D_{\cA_0^c}(\hat\beta) \left[\right.\right. & \left(\left(X'X\right)^{-1}_{[\cA_0^c,\cA_0^c]}\right)^{-1}
	\hat\beta_{\cA_0^c} \\[0.5ex]
	+ & \left.\left. \frac{\lambda_T}{2}\left(X_{\cA_0^c}'X_{\cA_0}\right)
	\left(X_{\cA_0}'X_{\cA_0}\right)^{-1} D_{\cA_0}^{-1}(\hat\beta) \sign(\betaALAz)\right]\right\|_\infty 
	\leq \frac{\lambda_T}{2},
\end{align*}
where $\left(X'X\right)^{-1}_{[\cA_0^c,\cA_0^c]}$ denotes the
bottom-right block-element of $(X'X)^{-1}$. Hence, 
\begin{align*}
	\liminf_{T \to \infty} \; & \mP\left(\betaALj = 0\right) 
	\geq \; \mP\left(\left\|D_{\cA_0^c}(\mathcal{Z}^c) \left[\left(\zeta_{vv[\cA_0^c,\cA_0^c]}^{-1}\right)^{-1}
	\mathcal{Z}^c_{\cA_0^c} \right.\right.\right. \\[0.5ex]
	& + \left.\left.\left. \frac{\lambda_0}{2} \zeta_{vv[\cA_0^c,\cA_0]}
	\left(\zeta_{vv[\cA_0,\cA_0]}\right)^{-1} D_{\cA_0}^{-1}(\mathcal{Z}^c + \beta_0)
	\sign(\beta_{0,\mathcal{A}_0})\right]\right\|_\infty \leq \frac{\lambda_0}{2}\right) > 0,
\end{align*}
as the random variable in the left-hand side of the above display has support $(0,\infty)$.

To prove~\ref{enum:ms_conserv_le1-multi}, note that $\mP\left(\hat\cA
= \cA \right) =  \mP\left(\betaALAc = 0, \betaALi \neq 0 \text{ for
	all } i \in \cA\right)$ and, for $T\beta_T = T\beta \to
\beta_0 \in \mRquer^k$, we have $\beta_{0,\cA} = \sign(\beta_\cA)\infty$ and $\beta_{0,\cA^c} = 0$, with
$\cA^c \coloneqq \{j: \beta_j = 0\}$. Therefore, similar calculations as above yield
$$
\limsup_{T \to \infty} \mP\left(\hat\cA = \cA\right) 
\leq \; \mP\left(\left\|D_{\cA^c}(\mathcal{Z}^c) \left[\left(\zeta_{vv[\cA^c,\cA^c]}^{-1}\right)^{-1}
\mathcal{Z}^c_{\cA^c} \right]\right\Vert_{\infty} \leq \frac{\lambda_0}{2}\right) < 1,
$$
as the random variable in the left-hand side of the above display has support $(0,\infty)$.
\end{proof}


To prove Theorem~\ref{thm:param_consist-multi}, we first need the following lemma.

\begin{lemma} \label{lem:KKT}
We have
$$
\left(\betaAL - \hat\beta\right)'\left(X'X\right)\left(\betaAL - \hat\beta\right) = 
\sum_{j=1}^k \left(\betaAL - \hat\beta\right)_j\left(X'X\left(\betaAL - \hat\beta\right)\right)_j 
\leq \frac{k}{2} \lambda_T,
$$
where the inequality holds surely, i.e., for all $\omega$ in the sample space of the underlying probability space.
\end{lemma}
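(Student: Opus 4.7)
The first equality is merely the componentwise expansion of the inner product $v'w$ with $v = \betaAL - \hat\beta$ and $w = X'X(\betaAL - \hat\beta)$, so all the content lies in the upper bound. My plan is to combine two structural facts. The OLS normal equations give $X'(y - X\hat\beta) = 0$, so
$$
X'X(\betaAL - \hat\beta) = -X'(y - X\betaAL),
$$
which lets me rewrite each summand as $-(\betaALj - \hat\beta_j)X_j'(y - X\betaAL)$. Next I would appeal to the Karush--Kuhn--Tucker (subgradient) optimality conditions for the objective in~\eqref{eq:ALASSO} evaluated at $\betaAL$: on the active set $\{j : \betaALj \neq 0\}$ these read $2X_j'(y - X\betaAL) = \lambda_T |\hat\beta_j|^{-1}\sign(\betaALj)$, while on the inactive set $\{j : \betaALj = 0\}$ they read $|2X_j'(y - X\betaAL)| \leq \lambda_T |\hat\beta_j|^{-1}$.

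Substituting these into the summand delivers a term-by-term bound by $\lambda_T/2$. For $j$ in the active set the summand equals
$$
\frac{\lambda_T}{2|\hat\beta_j|}\bigl[\hat\beta_j \sign(\betaALj) - |\betaALj|\bigr] \leq \frac{\lambda_T}{2}\left(1 - \frac{|\betaALj|}{|\hat\beta_j|}\right) \leq \frac{\lambda_T}{2},
$$
using the trivial inequality $\hat\beta_j \sign(\betaALj) \leq |\hat\beta_j|$ together with $|\betaALj| \geq 0$. For $j$ in the inactive set the summand collapses to $\hat\beta_j X_j'(y - X\betaAL)$, whose absolute value is at most $|\hat\beta_j|\cdot \tfrac{\lambda_T}{2|\hat\beta_j|} = \tfrac{\lambda_T}{2}$ by the inactive-set KKT inequality. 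Summing these bounds over $j = 1,\ldots,k$ then yields the claimed $\tfrac{k}{2}\lambda_T$.

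There is no deep obstacle here: the argument is essentially algebraic once the KKT conditions and the OLS normal equation are in hand. The only point requiring mild care is the sign bookkeeping on the active set, since $\hat\beta_j$ and $\betaALj$ may carry opposite signs; in particular the individual summands need not be non-negative, even though the left-hand side $(\betaAL - \hat\beta)'X'X(\betaAL - \hat\beta)$ is non-negative as the squared Euclidean norm of $X(\betaAL - \hat\beta)$.
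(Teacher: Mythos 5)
Your proposal is correct and follows essentially the same route as the paper: both rewrite the KKT conditions using the OLS normal equations $X'(y-X\hat\beta)=0$ and then bound each summand by $\lambda_T/2$, on the inactive set via the subgradient inequality and on the active set via the stationarity condition. The only (cosmetic) difference is that your single inequality $\hat\beta_j\,\sign(\betaALj)\le|\hat\beta_j|$ handles the active set in one step, whereas the paper splits that case into $|\betaALj-\hat\beta_j|\le|\hat\beta_j|$ and $|\betaALj-\hat\beta_j|>|\hat\beta_j|$ (where the summand is in fact nonpositive).
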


\begin{proof}
The proof is similar to the proof of Lemma~1 in
\cite{AmannSchneider23}. The starting point are the Karush-Kuhn-Tucker
optimality conditions, which can be written as 
\begin{align}
\label{eq:KKT_beta_neq0}
\left(X'X\left(\betaAL - \hat\beta\right)\right)_j & \; = \;
-\frac{1}{2}\frac{\lambda_T}{|\hat\beta_j|}\sign(\betaALj) & \text{if } \betaALj \neq 0\, \\
\label{eq:KKT_beta_eq0}
\left|\left(X'X\left(\betaAL - \hat\beta\right)\right)_j\right| & \; \leq \;
\frac{1}{2}\frac{\lambda_T}{|\hat\beta_j|} & \text{if } \betaALj = 0,
\end{align}
using $X'\hat u=0$, where $\hat u \coloneqq y - X\hat\beta$ denote the
OLS residuals. When $\betaALj = 0$, \eqref{eq:KKT_beta_eq0} yields 
$$
\left|\left(\betaAL - \hat\beta\right)_j\left(X'X\left(\betaAL - \hat\beta\right)\right)_j\right| 
\leq \frac{1}{2}\lambda_T.
$$
We now consider the case $\betaALj \neq 0$. If $|\betaALj - \hat\beta_j| \leq |\hat\beta_j|$, 
\eqref{eq:KKT_beta_neq0} implies that
$$
\left| \left(\betaAL - \hat\beta\right)_j\left(X'X\left(\betaAL - \hat\beta\right)\right)_j \right|
\leq |\hat\beta_j|\left|\left(X'X\left(\betaAL - \hat\beta\right)\right)_j \right|
= \frac{1}{2}\lambda_T.
$$
On the other hand, if $|\betaALj - \hat\beta_j| > |\hat\beta_j|$, we have
$\sign(\betaALj - \hat\beta_j) = \sign(\betaALj) \neq 0$. Therefore, since
$\lambda_T > 0$,
\begin{align*}
\left(\betaAL - \hat\beta\right)_j \left(X'X\left(\betaAL - \hat\beta\right)\right)_j & = 
-\frac{1}{2}\frac{\lambda_T}{|\hat\beta_j|}\sign(\betaALj)(\betaALj - \hat\beta_j) \\
& = -\frac{1}{2}\frac{\lambda_T}{|\hat\beta_j|}\sign(\betaALj - \hat\beta_j)(\betaALj - \hat\beta_j) \\
& = -\frac{1}{2}\lambda_T\frac{|\betaALj-\hat\beta_j|}{|\hat\beta_j|} \leq 0 \leq \frac{1}{2}\lambda_T,
\end{align*}
which completes the proof.
\end{proof}


\begin{proof}[Proof of Theorem~\ref{thm:param_consist-multi}]
Part~\ref{enum:param_consist-multi} follows directly
from~\ref{enum:rate_conserv-multi} and \ref{enum:rate_consist-multi}.
To prove these parts, let $\mu_{\min,T}$ denote the smallest
eigenvalue of $T^{-2}X'X = T^{-2}\sum_{t=1}^T x_t x_t'$. The
continuity of eigenvalues \cite[Proof of Lemma~5]{SaikkonenChoi04} and
the fact that the limit of $T^{-2}\sum_{t=1}^T x_t x_t'$ is positive
definite a.s.~by Assumption~\ref{ass:w1} ensure that
$\mu_{\min,T}^{-1} = O_\mP(1)$.

For~\ref{enum:rate_conserv-multi}, first note that
$$
T\left(\betaAL - \beta_T\right) = T\left(\betaAL - \hat\beta\right) + T\left(\hat\beta - \beta_T\right) 
= T\left(\betaAL - \hat\beta\right) + O_\mP(1).
$$
It thus remains to show that $T\left(\betaAL - \hat\beta\right) =
O_\mP(1)$, which follows from  $\mu_{\min,T}^{-1} = O_\mP(1)$ since
\begin{align*}
T^2\left\|\betaAL - \hat\beta\right\|^2 & = 
T^2 \left(\betaAL - \hat\beta\right)'\left(\betaAL - \hat\beta\right) \\[0.5ex]
& \leq \frac{1}{\mu_{\min,T}} T^2 \left(\betaAL - \hat\beta\right)' \frac{X'X}{T^2} 
\left(\betaAL - \hat\beta\right) \\[0.5ex]
& = \frac{1}{\mu_{\min,T}} \left(\betaAL - \hat\beta\right)'X'X\left(\betaAL - \hat\beta\right) \\[0.5ex]
& \leq \frac{k\lambda_T}{2\mu_{\min,T}} = O_\mP(1),
\end{align*}
where the last inequality follows from Lemma~\ref{lem:KKT}.

For~\ref{enum:rate_consist-multi}, analogously to the proof of 
\ref{enum:rate_conserv-multi}, we have 
\begin{align*}
\lambda_T^{-1/2}T\left(\betaAL - \beta_T\right) &= \lambda_T^{-1/2}T\left(\betaAL - \hat\beta\right) 
+ \lambda_T^{-1/2}T\left(\hat\beta - \beta_T\right) \\
&= \lambda_T^{-1/2}T\left(\betaAL-\hat\beta\right) + o_\mP(1).
\end{align*}
It therefore remains to show that $\lambda_T^{-1/2}T\left(\betaAL -
\hat\beta\right) = O_\mP(1)$, which follows again from
$\mu_{\min,T}^{-1} = O_\mP(1)$ since 
\begin{align*}
\lambda_T^{-1}T^2\left\|\betaAL - \hat\beta\right\|^2 
& = \lambda_T^{-1}T^2 \left(\betaAL - \hat\beta\right)'\left(\betaAL - \hat\beta\right) \\[0.5ex]
& \leq \frac{1}{\mu_{\min,T}} \lambda_T^{-1}T^2
\left(\betaAL - \hat\beta\right)'\frac{X'X}{T^2}\left(\betaAL - \hat\beta\right) \\[0.5ex]
& = \frac{1}{\mu_{\min,T}}\lambda_T^{-1} 
\left(\betaAL - \hat\beta\right)'X'X\left(\betaAL - \hat\beta\right)\\[0.5ex]
& \leq \frac{k}{2\mu_{\min,T}} = O_\mP(1),
\end{align*}
where the last inequality again follows from Lemma~\ref{lem:KKT}.
\end{proof}


\begin{proof}[Proof of Theorem~\ref{thm:ls_dist-multi}] For~\ref{enum:ls_dist_conserv-multi}, 
note the following: Since $\betaAL$ is the solution to the
minimization problem in \eqref{eq:ALASSO} (with $\gamma = 1$ and
$\hat\beta^0 = \hat\beta$), $T(\betaAL - \beta_T)$ is the minimizer of
$$
\Psi_T(z) \coloneqq \sum_{t=1}^T \left(y_t - x_t'(\beta_T + T^{-1}z)\right)^2
 + \lambda_T \sum_{j=1}^k \frac{|T^{-1}z_j + \beta_{T,j}|}{|\hat\beta_j|}.
$$
Therefore, $T(\betaAL - \beta_T)$ also minimizes
\begin{align*}
V_T(z) & \coloneqq \Psi_T(z) - \Psi_T(0) \\
& = z'\left(T^{-2}\sum_{t=1}^T x_tx_t'\right)z - 2z'\left(T^{-1}\sum_{t=1}^T x_tu_t\right) 
+ \lambda_T \sum_{j=1}^k \frac{|T^{-1}z_j + \beta_{T,j}|-|\beta_{T,j}|}{|\hat\beta_j|} \\
& = z'\left(T^{-2}\sum_{t=1}^T x_tx_t'\right)z - 2z'\left(T^{-1}\sum_{t=1}^T x_tu_t\right) 
+ \lambda_T \sum_{j=1}^k \frac{|z_j + \beta_{0,T,j}| - |\beta_{0,T,j}|}
{|\mathcal{Z}^c_{T,j} + \beta_{0,T,j}|}.
\end{align*}
Under Assumption~\ref{ass:w1}, we get that $V_T(z) \Rightarrow V_{\beta_0}^c(z)$
for all $z \in \mR^k$. Using Proposition~2.2 and Theorem~3.2 in
\cite{Geyer96TR}, we may deduce that also the minimizer of $V_T(z)$
converges weakly to the minimizer of $V_{\beta_0}^c(z)$, i.e., $T(\betaAL -
\beta_T)\Rightarrow \argmin_{z\in\mR^k} V_{\beta_0}^c(z)$.

For~\ref{enum:ls_dist_consist-multi}, analogously to the proof
of \ref{enum:ls_dist_conserv-multi}, we get that
$\lambda_T^{-1/2}T(\betaAL - \beta_T)$ minimizes 
$$
\tilde{\Psi}_T(z) \coloneqq \sum_{t=1}^T \left(y_t - x_t'(\beta_T + \lambda_T^{1/2}T^{-1}z)\right)^2
+ \lambda_T \sum_{j=1}^k \frac{|\lambda_T^{1/2}T^{-1}z_j + \beta_{T,j}|}{|\hat\beta_j|}.
$$
Therefore, $\lambda_T^{-1/2}T(\betaAL - \beta_T)$ also minimizes
\begin{align*}
\tilde{V}_T(z) & \coloneqq \lambda_T^{-1}\left(\tilde{\Psi}_T(z) - \tilde{\Psi}_T(0)\right) \\
& = z'\left(T^{-2}\sum_{t=1}^T x_tx_t'\right)z 
- 2z'\left(\lambda_T^{-1/2}T^{-1}\sum_{t=1}^T x_tu_t\right) 
+ \sum_{j=1}^k \frac{|\lambda_T^{1/2}T^{-1}z_j + \beta_{T,j}| - |\beta_{T,j}|}{|\hat\beta_j|} \\
& = z'\left(T^{-2}\sum_{t=1}^T x_tx_t'\right)z 
- 2z'\left(\lambda_T^{-1/2}T^{-1}\sum_{t=1}^T x_tu_t\right) 
+ \sum_{j=1}^k \frac{|z_j + \tilde\beta_{0,T,j}| - |\tilde\beta_{0,T,j}|}
{|\lambda_T^{-1/2}\mathcal{Z}^c_{T,j} + \tilde\beta_{0,T,j}|}.
\end{align*}
Under Assumption~\ref{ass:w1}, we get that $\tilde{V}_T(z) \Rightarrow
\tilde{V}_{\tilde\beta_0}^c(z)$ for all $z \in \mR^k$. However, since
$\tilde{V}_{\tilde\beta_0}^c(z)$ is not finite on an open subset of
$\mR^k$, we cannot use the same arguments as in
\ref{enum:ls_dist_conserv-multi} to deduce that also the minimizer of
$\tilde{V}_T(z)$ converges weakly to the minimizer of
$\tilde{V}_{\tilde\beta_0}^c(z)$. Nevertheless, the result follows from
similar arguments as used in \citet[Proof of
Theorem~7]{AmannSchneider23}.

For~\ref{enum:ls_dist_consist_rateT-multi}, note that we already
know from \ref{enum:ls_dist_conserv-multi} that $T(\betaAL - \beta_T)$
minimizes $V_T(z)$, which can also be written as 
$$
V_T(z) = z'\left(T^{-2}\sum_{t=1}^T x_tx_t'\right)z - 2z'\left(T^{-1}\sum_{t=1}^T x_tu_t\right) 
+ \sum_{j=1}^k \frac{|z_j + \beta_{0,T,j}|-|\beta_{0,T,j}|}
{|\lambda_T^{-1}\mathcal{Z}^c_{T,j} + \bar{\beta}_{0,T,j}|}.
$$
Under Assumption~\ref{ass:w1}, we get that $\tilde{V}_T(z) \Rightarrow
\bar{V}_{\bar\beta_0}^c(z)$ for all $z \in \mR^k$. To derive the limits $\bar
A_j(z_j,\beta_{0,j},\bar\beta_{0,j})$, a tedious case-by-case analysis
is necessary when $\bar\beta_{0,j} = 0$ and $z_j \neq 0$. When $0 <
|\bar\beta_{0,j}| < \infty$ and $z_j \neq 0$ (``otherwise''), note
that $\beta_{0,j} = \sign(\bar\beta_{0,j})\infty$ must hold. The
result then follows from analogous arguments as used in
\ref{enum:ls_dist_consist-multi}.
\end{proof}


\begin{proof}[Proof of Proposition~\ref{prop:M-argmin}]
Clearly, $m_j = 0$ must hold if $\tilde\beta_{0,j} = 0$, as the
objective function would be infinite otherwise. If
$|\tilde\beta_{0,j}| = \infty$ or  $\{0 < |\tilde\beta_{0,j}| < \infty$
and $m_j \neq -\tilde\beta_{0,j}\}$, the partial derivative of $\tilde
V_{\tilde\beta_0}$ with respect to $z_j$ exists and regular
first-order conditions yield the required result for these cases. For
$\{0 < |\tilde\beta_{0,j}| < \infty$ and $m_j = -\tilde\beta_{0,j}\}$, we
make use of the fact that $m$ is a minimizer if and only if zero is a
subgradient of $\tilde V_{\tilde\beta_0}$ at $m$.
\end{proof}


\begin{proof}[Proof of Theorem~\ref{thm:setM}] Let $m = \argmin_{z \in \mR^k} 
\tilde V_{\beta_0}^c(z)$. We need to show that $m \in \Mc$ also, i.e., that
$m(\omega)_j(\zeta_{vv}^c(\omega)m(\omega))_j \leq 1/2$ is
satisfied for all $\omega$ and for all $j = 1,\dots,k$. Using
Proposition~\ref{prop:M-argmin}, we get $m_j \equiv 0$ if
$\tilde\beta_0 = 0$, so that the required inequality surely holds. If
$|\tilde\beta_0| = \infty$, the same proposition yields
$(\zeta_{vv}^cm)_j \equiv 0$, again ensuring that the inequality
holds for all $\omega$. When $0 < |\tilde\beta_{0,j}| < \infty$, we
look at the following cases:

If $m_j(\omega) = - \beta_{0,j}$, Proposition~\ref{prop:M-argmin} shows that
$$
|(\zeta_{vv}^c(\omega)m(\omega))_j| \leq \frac{1}{2|\tilde\beta_{0,j}|}
$$
and therefore 
$$
m(\omega)_j(\zeta_{vv}^c(\omega)m(\omega))_j \leq |m(\omega)_j(\zeta_{vv}^c(\omega)m(\omega))_j| 
\leq \frac{|\tilde\beta_{0,j}|}{2|\tilde\beta_{0,j}|} = \frac{1}{2}.
$$

If $m_j(\omega) \neq - \beta_{0,j}$, again from
Proposition~\ref{prop:M-argmin}, we get
$$
(\zeta_{vv}^c(\omega)m(\omega))_j = -\frac{\sign(m(\omega)_j + \beta_{0,j})}{2|\tilde\beta_{0,j}|}.
$$
If $|m(\omega))_j| > |\beta_{0,j}|$, then $\sign(m(\omega)_j + \beta_{0,j}) = \sign(m(\omega)_j)$ 
and 
$$
m(\omega)_j(\zeta_{vv}^c(\omega)m(\omega))_j = - \frac{|m(\omega)_j|}{2|\tilde\beta_{0,j}|} 
\leq 0 < \frac{1}{2}, 
$$
whereas for $|m(\omega))_j| \leq |\beta_{0,j}|$, we have
$$
m(\omega)_j(\zeta_{vv}^c(\omega)m(\omega))_j \leq |m(\omega)_j(\zeta_{vv}^c(\omega)m(\omega))_j| =
\frac{|m(\omega)_j|}{2|\tilde\beta_{0,j}|} \leq \frac{1}{2}.
$$
\end{proof}


\section{Proofs for Section~\ref{subsec:ci}}

\begin{proof}[Proof of Theorem~\ref{thm:confM}] 
Let $g_T(\beta) = \mP_\beta(\beta \in \betaAL -
T^{-1}\lambda_T^{1/2}\Mhateps$ and $c_T = \inf_{\beta \in \mR^k}
g_T(\beta)$. We need to show that $c_T \to 1$ as $T \to \infty$. Since
$c_T$ are the infima of $g_T$, we can choose sequences
$(\beta_{T,n})_{n \in \mN} \subseteq \mR^k$ such that
$$
|c_T - g_T(\beta_{T,n})| \leq \frac{1}{n}
$$
for all $T, k \in \mN$. Now define $\breve\beta_T = \beta_{T,T}$.
Since $|c_T - g_T(\breve\beta_T)| \leq 1/T = o(1)$ as $T \to \infty$,
we can look at the limiting behavior of $g_T(\breve\beta_T)$ instead
of $c_T$. Now let $\tilde\beta_0 \in \mRquer^p$ such that
$T\lambda_T^{-1/2}\breve\beta_T \to \tilde\beta_0$.\footnote{If this quantity
does not converge, simply revert to a convergent subsequence.} Define
$\Mc(\eps) = \{m : m_j(\zeta_{vv}^cm)_j < \frac{1}{2} + \eps\}$
and note that for large enough $T$, we have $\Mc(\eps/2)
\subseteq \Mhateps$ for all $\omega$. We then
get
\begin{align*}
1 \geq \limsup_{T \to \infty} g_T(\breve\beta_T) & \geq 
\liminf_{T \to \infty} g_T(\breve\beta_T) = \liminf_{T \to \infty}
\mP_{\breve\beta_T}\left(T\lambda_T^{-1/2}(\betaAL - \breve\beta) \in 
\Mhateps \right) \\ & \geq
\liminf_{T \to \infty} \mP_{\breve\beta_T}\left(T\lambda_T^{-1/2}(\betaAL - \breve\beta) \in 
\Mc(\eps/2)\right) \\[0.5ex] & \geq 
\mP_{\tilde\beta_0}\left(\argmin_{z \in \mR^k} \tilde V_{\tilde\beta_0}^c(z) \in 
\Mc(\eps/2)\right) \\[1ex] & \geq 
\mP_{\tilde\beta_0}\left(\argmin_{z \in \mR^k} \tilde V_{\tilde\beta_0}^c(z) \in  
\Mc\right) 
= 1,
\end{align*}
where second-to-last inequality holds by
Theorem~\ref{thm:ls_dist-multi}\ref{enum:ls_dist_consist-unif} and the
Portmanteau Theorem, and the final equality by Theorem~\ref{thm:setM}
together with Remark~\ref{rem:skorohod}. We therefore get
$$
\lim_{T \to \infty} c_T = \lim_{T \to \infty} g_T(\breve\beta_T) = 1.
$$
\end{proof}

\newpage
\clearpage

\section{Additional Finite-Sample Results} \label{app:simulation}

\begin{figure}[ht]
\begin{center}
\caption*{$\lambda_T = T^{1/4}$}
\vspace{-1.5ex}
\begin{subfigure}{0.2\textwidth}
	\centering
	\caption*{$T = 25$}
	\vspace{-1.5ex}
	\includegraphics[trim={0cm 0cm 0.50cm 0.5cm},width=\textwidth,clip]
	{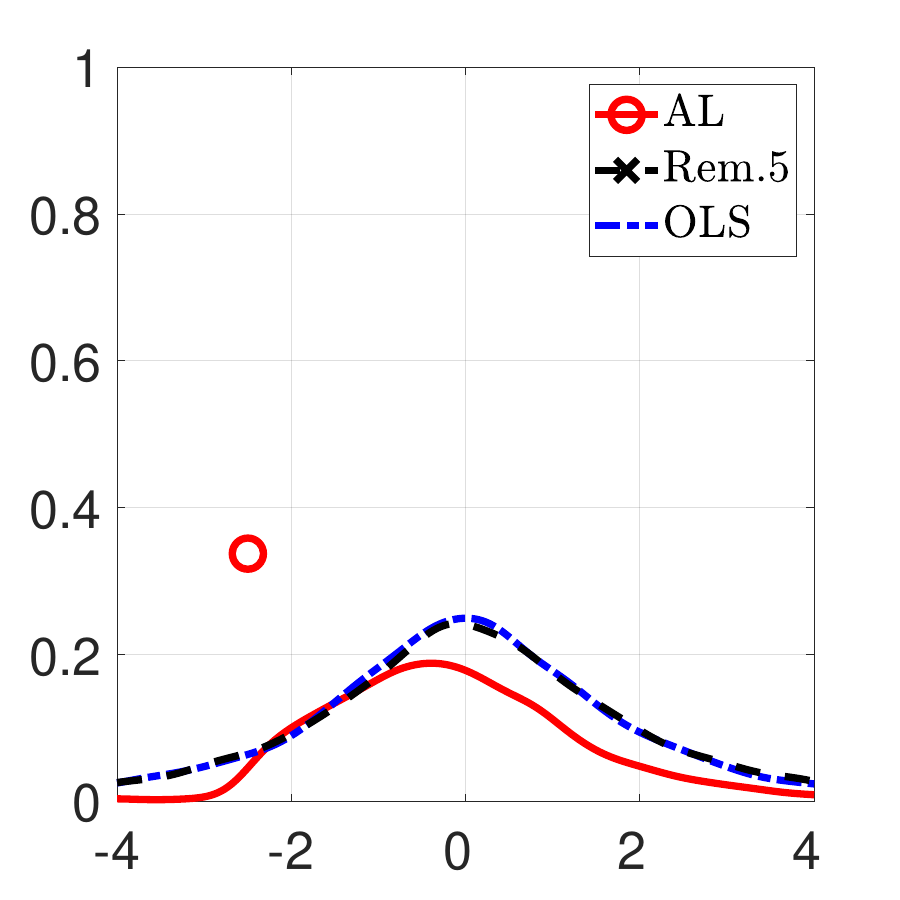}
\end{subfigure}\begin{subfigure}{0.2\textwidth}
	\centering
	\caption*{$T = 50$}
	\vspace{-1.5ex}
	\includegraphics[trim={0cm 0cm 0.50cm 0.5cm},width=\textwidth,clip]
	{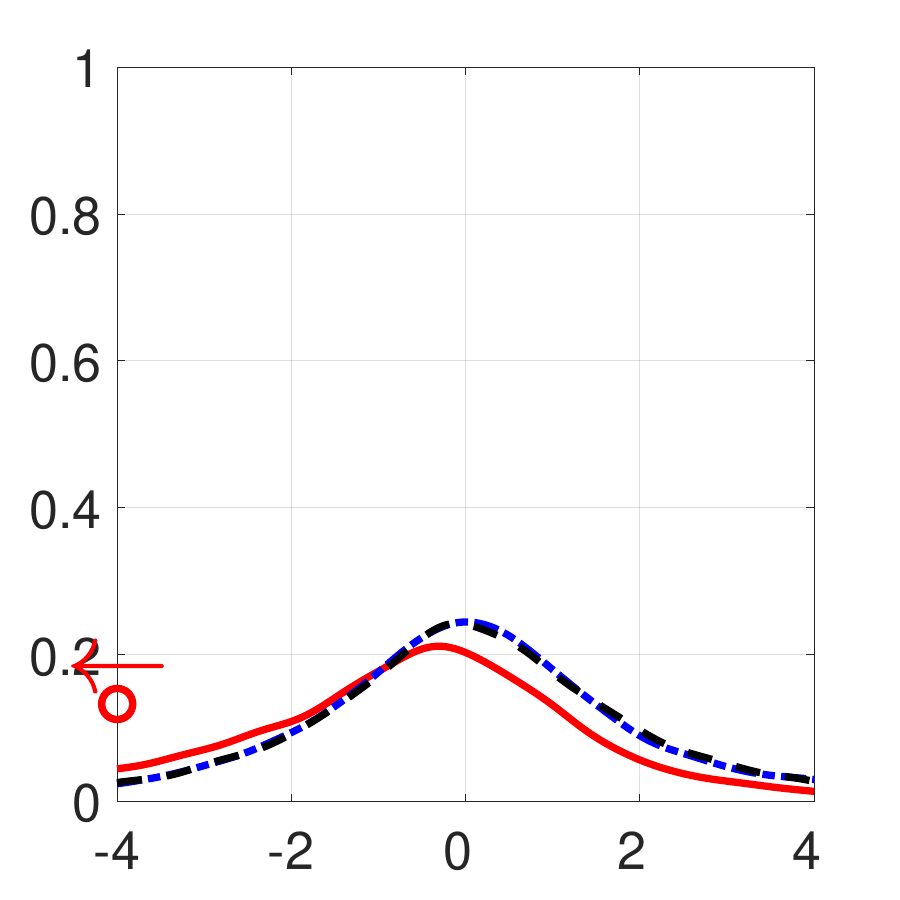}
\end{subfigure}\begin{subfigure}{0.2\textwidth}
	\centering
	\caption*{$T = 100$}
	\vspace{-1.5ex}
	\includegraphics[trim={0cm 0cm 0.50cm 0.5cm},width=\textwidth,clip]
	{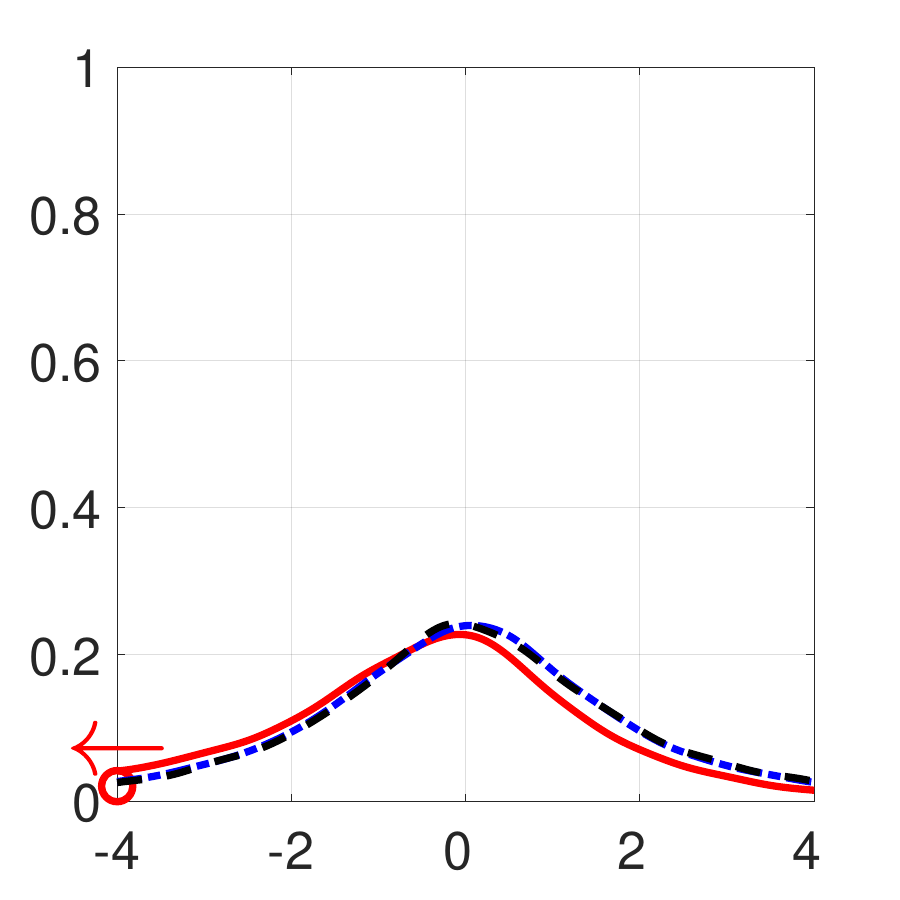}
\end{subfigure}\begin{subfigure}{0.2\textwidth}
	\centering
	\caption*{$T = 250$}
	\vspace{-1.5ex}
	\includegraphics[trim={0cm 0cm 0.50cm 0.5cm},width=\textwidth,clip]
	{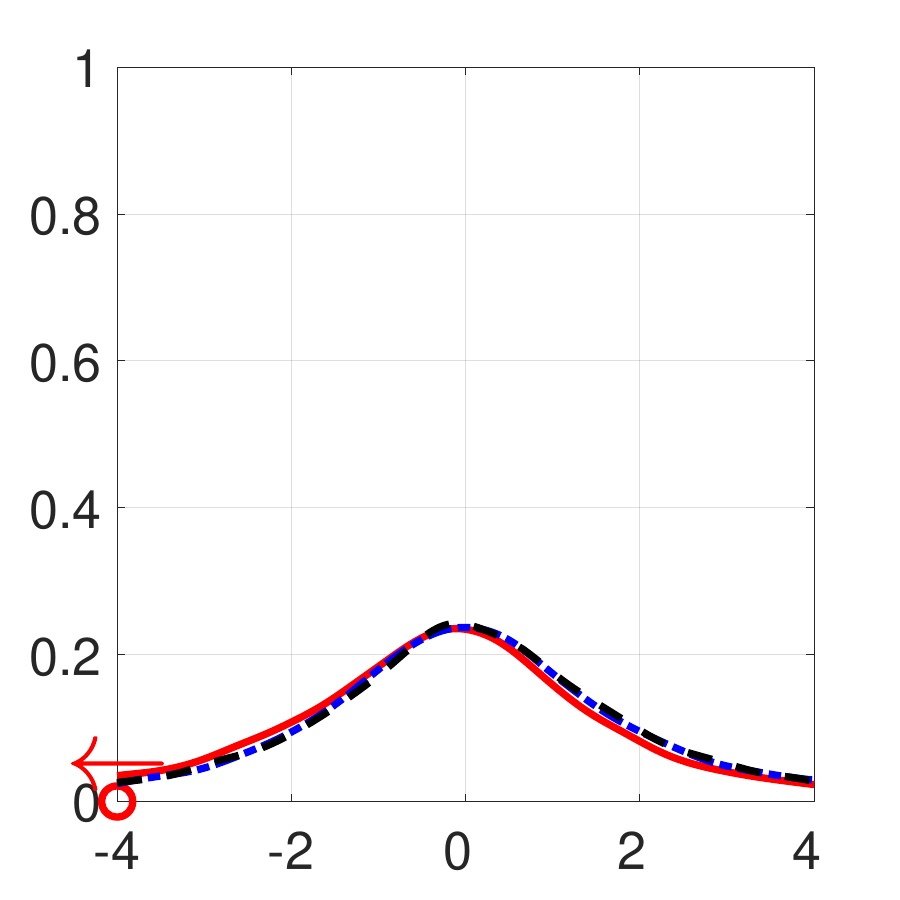}
\end{subfigure}\begin{subfigure}{0.2\textwidth}
	\centering
	\caption*{$T = 1000$}
	\vspace{-1.5ex}
	\includegraphics[trim={0cm 0cm 0.50cm 0.5cm},width=\textwidth,clip]
	{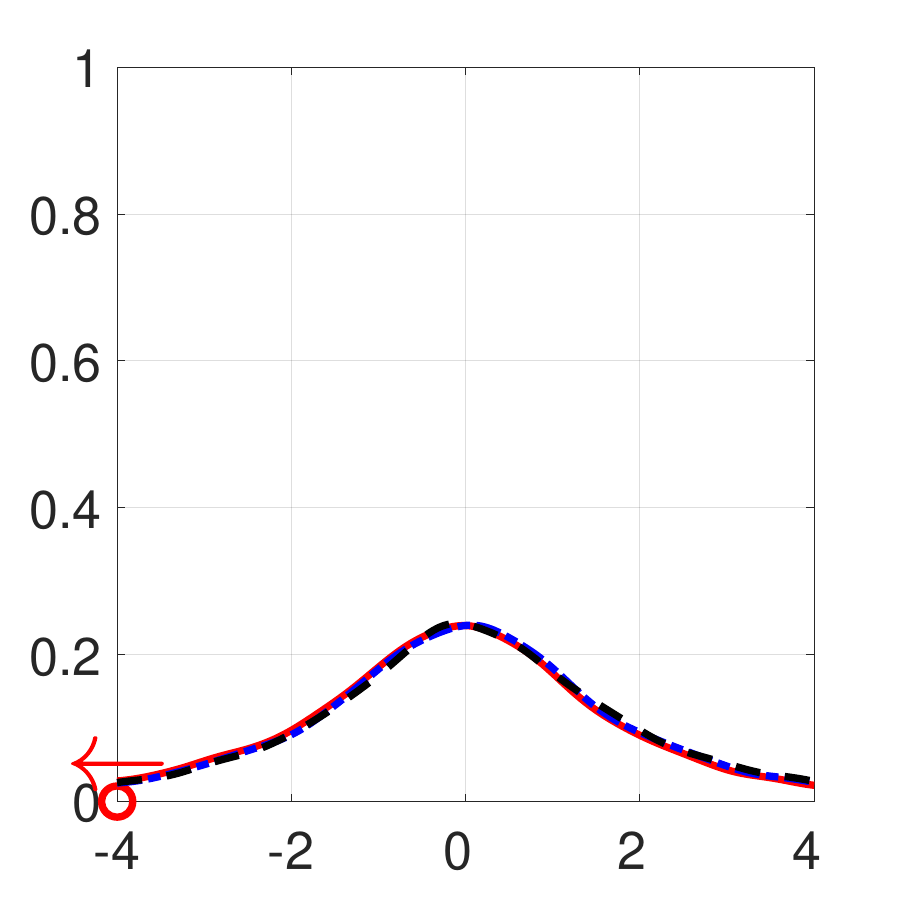}
\end{subfigure}

\caption*{$\lambda_T = T^{1/2}$}
\vspace{-1.5ex}
\begin{subfigure}{0.2\textwidth}
	\centering
	\includegraphics[trim={0cm 0cm 0.50cm 0.5cm},width=\textwidth,clip]
	{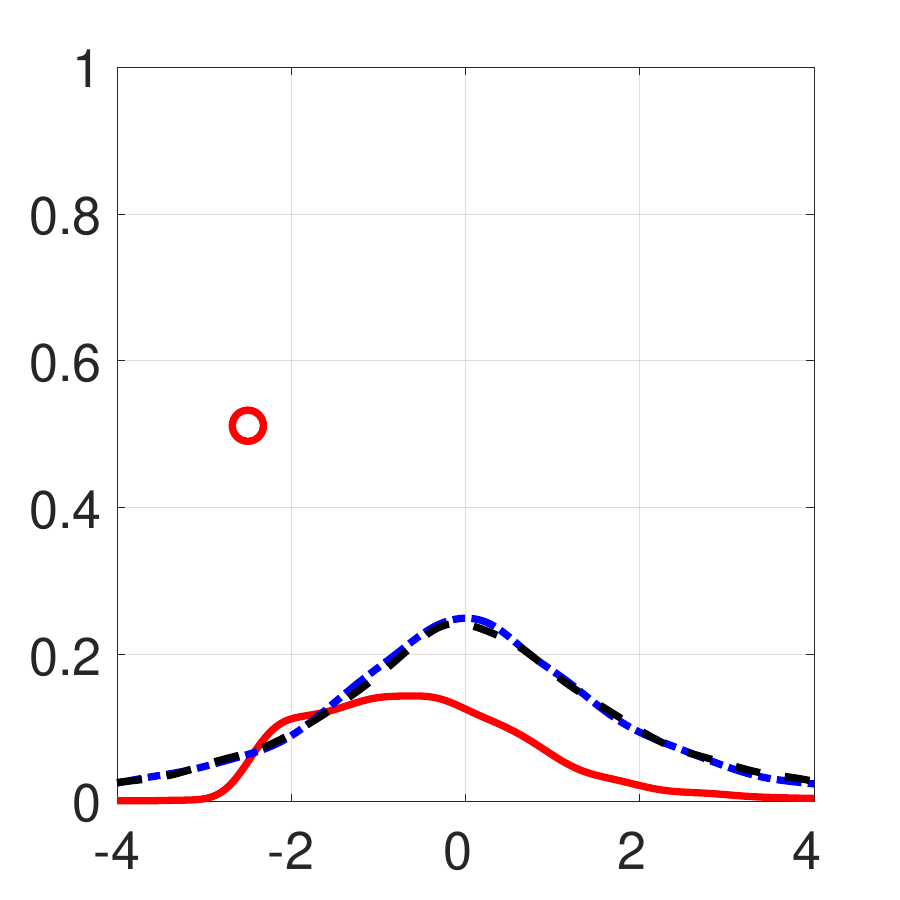}
\end{subfigure}\begin{subfigure}{0.2\textwidth}
	\centering
	\includegraphics[trim={0cm 0cm 0.50cm 0.5cm},width=\textwidth,clip]
	{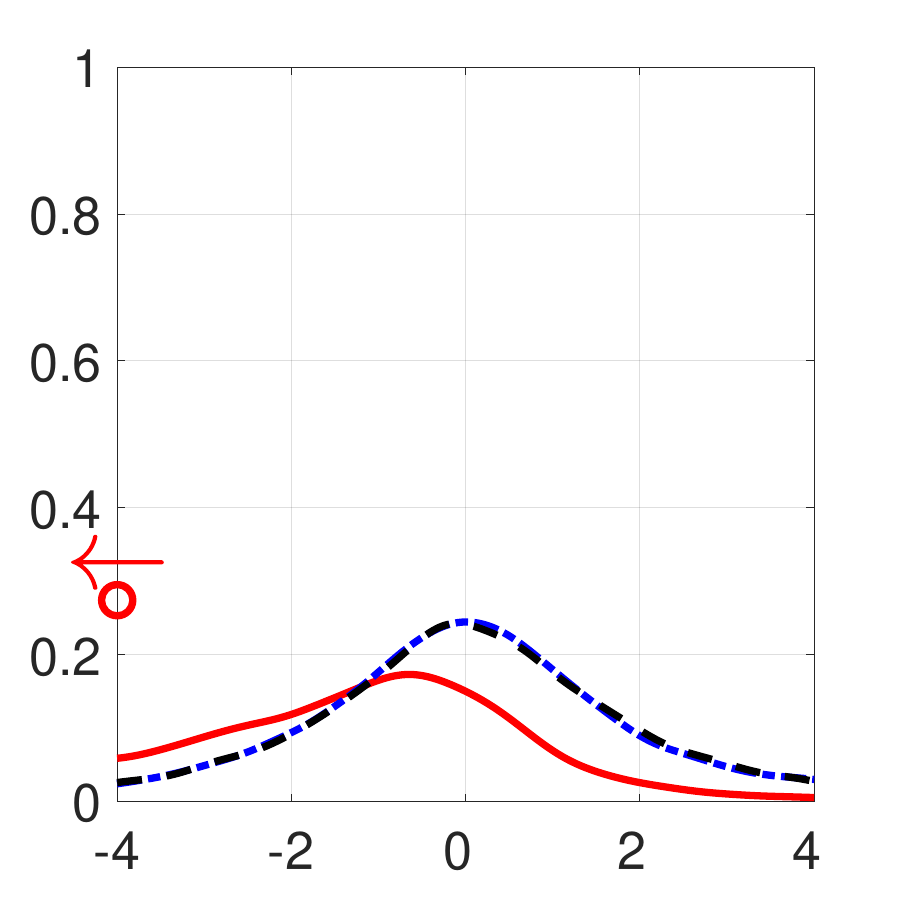}
\end{subfigure}\begin{subfigure}{0.2\textwidth}
	\centering
	\includegraphics[trim={0cm 0cm 0.50cm 0.5cm},width=\textwidth,clip]
	{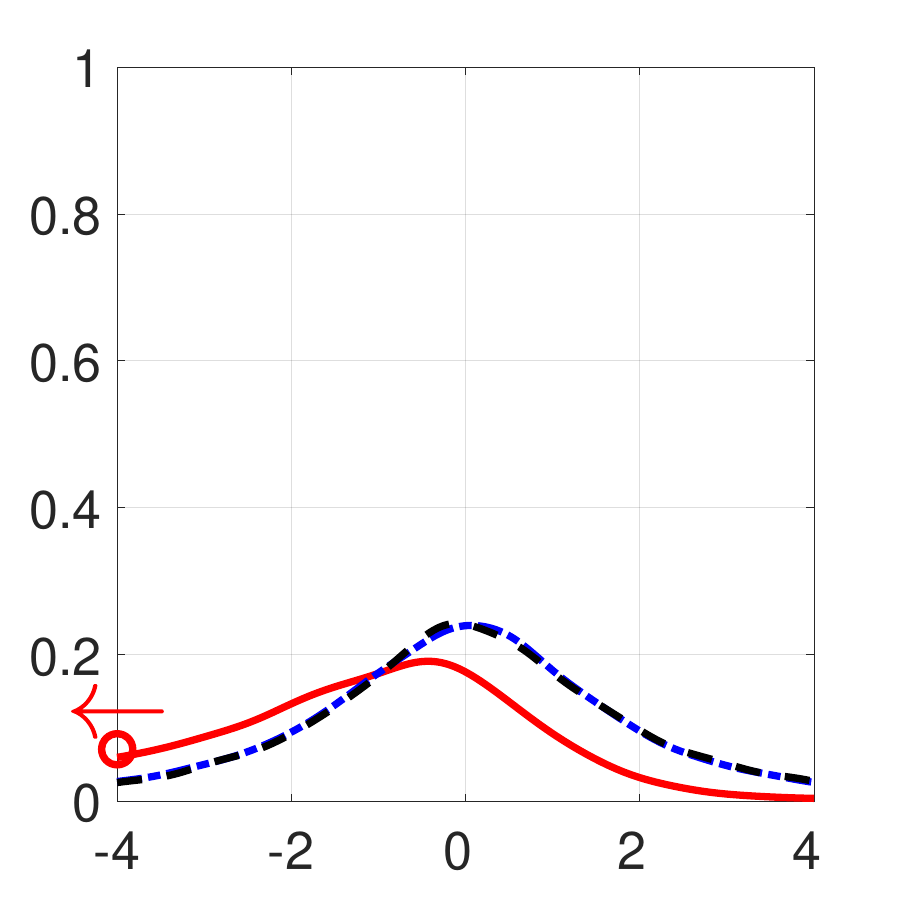}
\end{subfigure}\begin{subfigure}{0.2\textwidth}
	\centering
	\includegraphics[trim={0cm 0cm 0.50cm 0.5cm},width=\textwidth,clip]
	{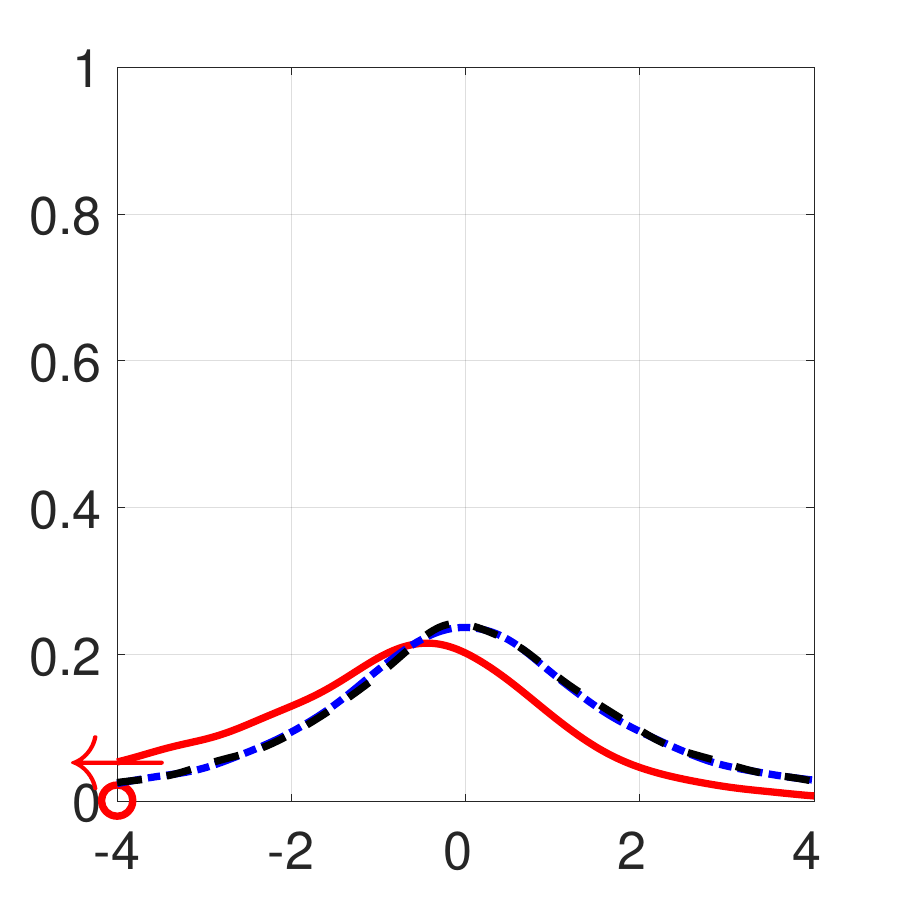}
\end{subfigure}\begin{subfigure}{0.2\textwidth}
	\centering
	\includegraphics[trim={0cm 0cm 0.50cm 0.5cm},width=\textwidth,clip]
	{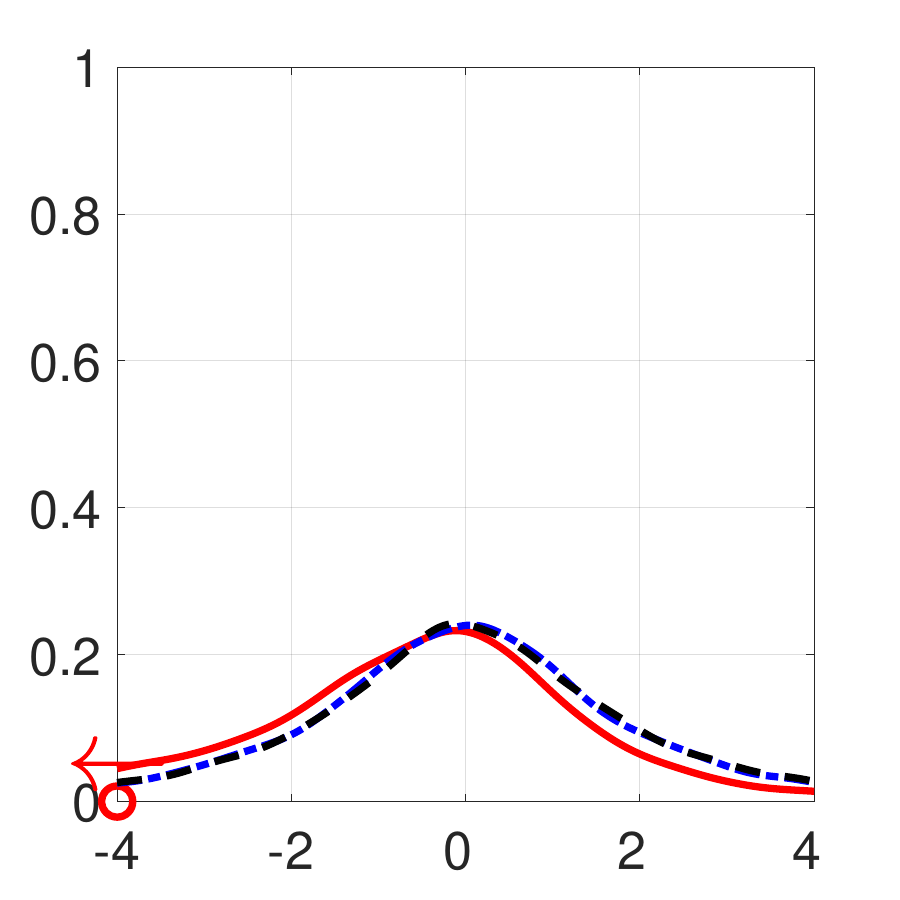}
\end{subfigure}

\caption*{$\lambda_T = T$}
\vspace{-1.5ex}
\begin{subfigure}{0.2\textwidth}
	\centering
	\includegraphics[trim={0cm 0cm 0.50cm 0.5cm},width=\textwidth,clip]
	{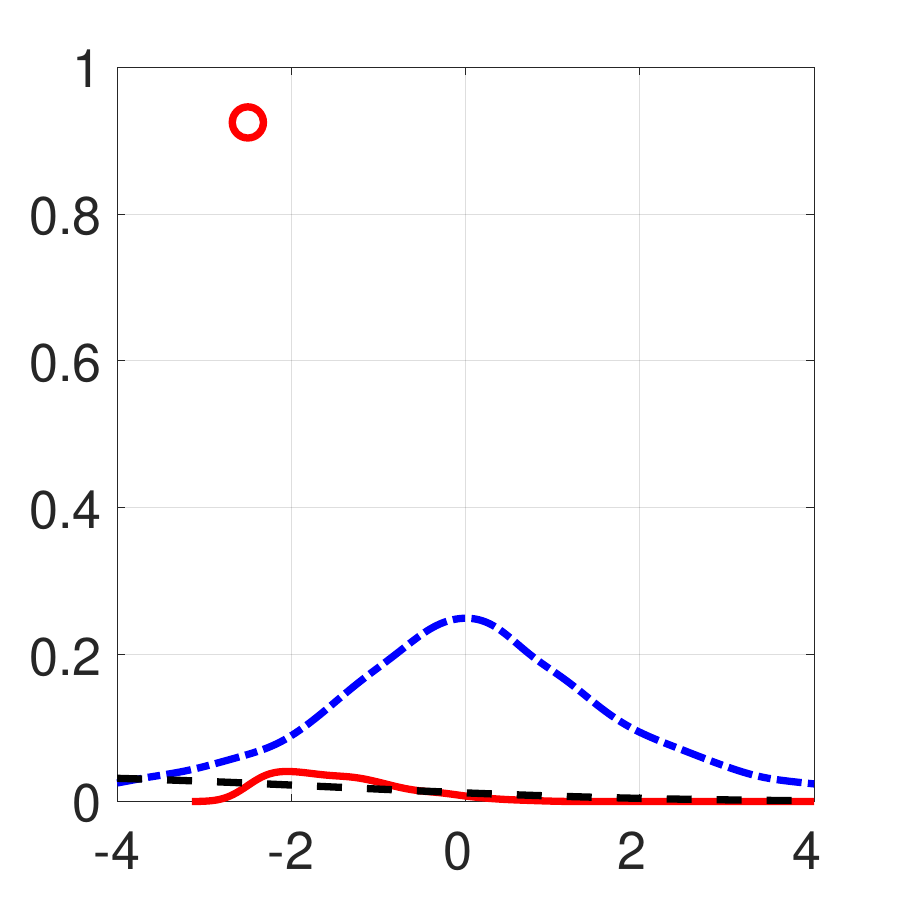}
\end{subfigure}\begin{subfigure}{0.2\textwidth}
	\centering
	\includegraphics[trim={0cm 0cm 0.50cm 0.5cm},width=\textwidth,clip]{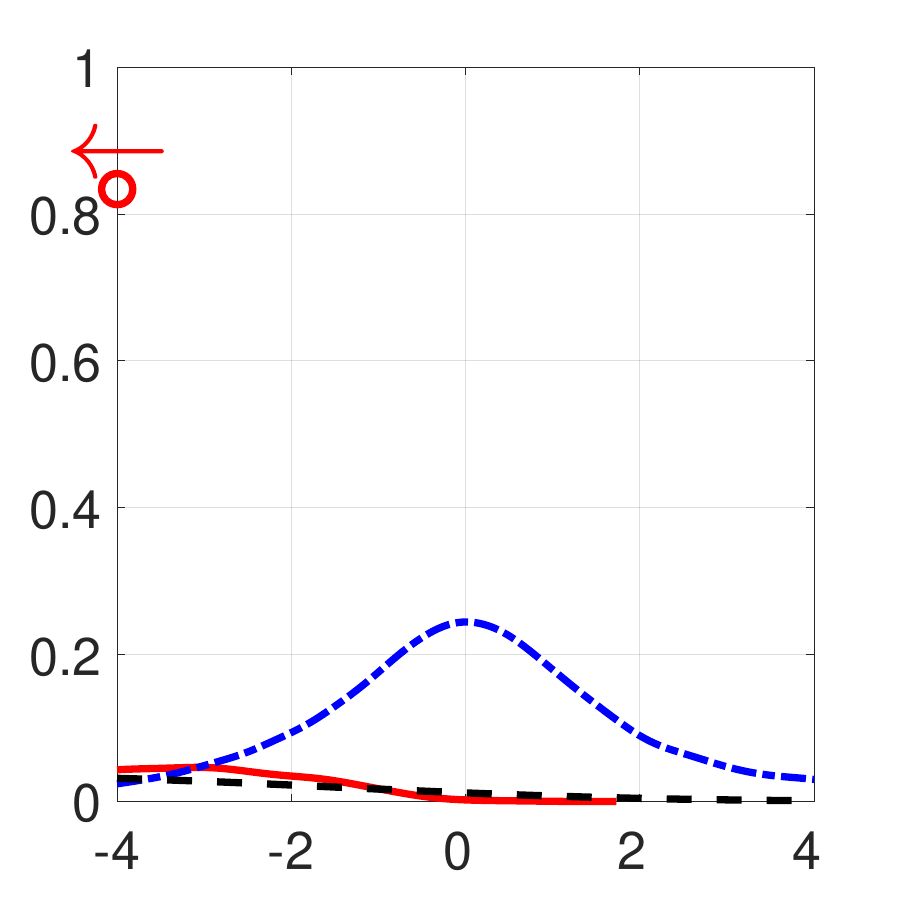}
\end{subfigure}\begin{subfigure}{0.2\textwidth}
	\centering
	\includegraphics[trim={0cm 0cm 0.50cm 0.5cm},width=\textwidth,clip]
	{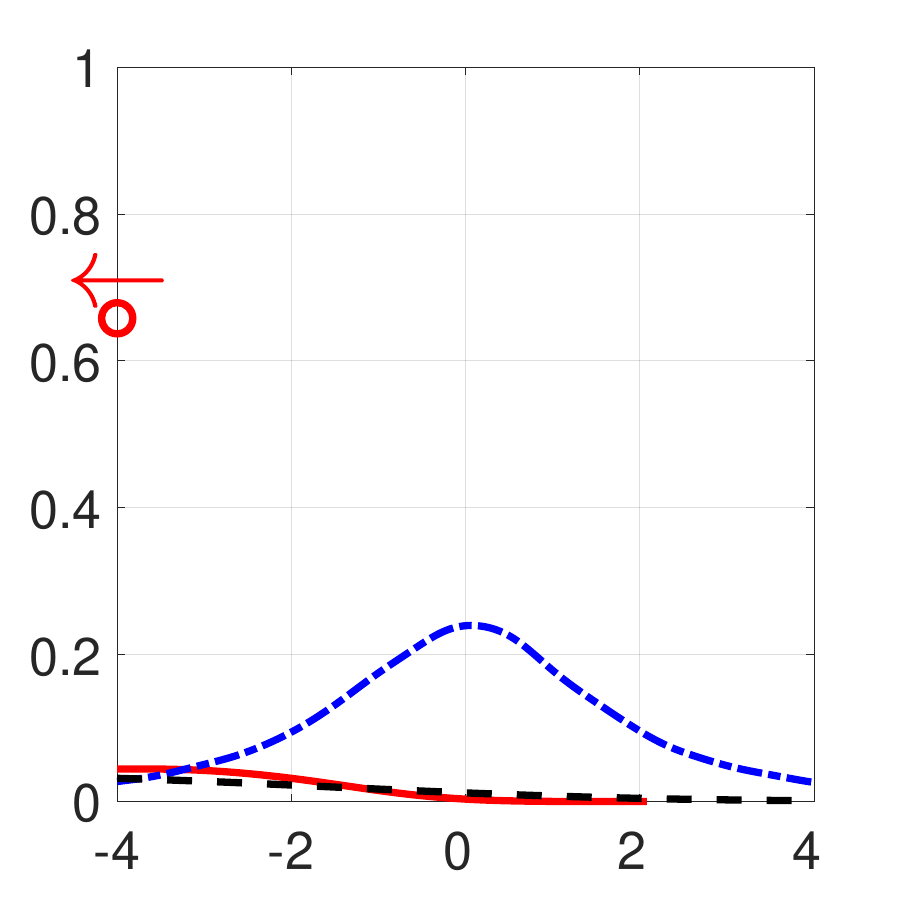}
\end{subfigure}\begin{subfigure}{0.2\textwidth}
	\centering
	\includegraphics[trim={0cm 0cm 0.50cm 0.5cm},width=\textwidth,clip]
	{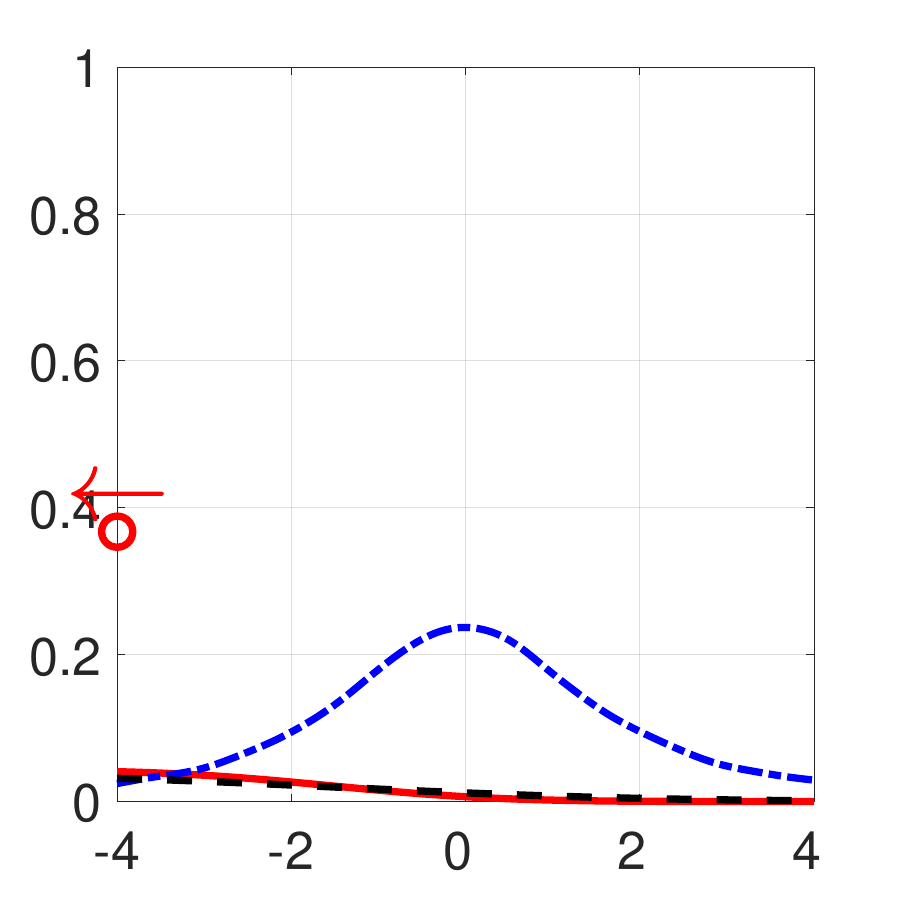}
\end{subfigure}\begin{subfigure}{0.2\textwidth}
	\centering
	\includegraphics[trim={0cm 0cm 0.50cm 0.5cm},width=\textwidth,clip]
	{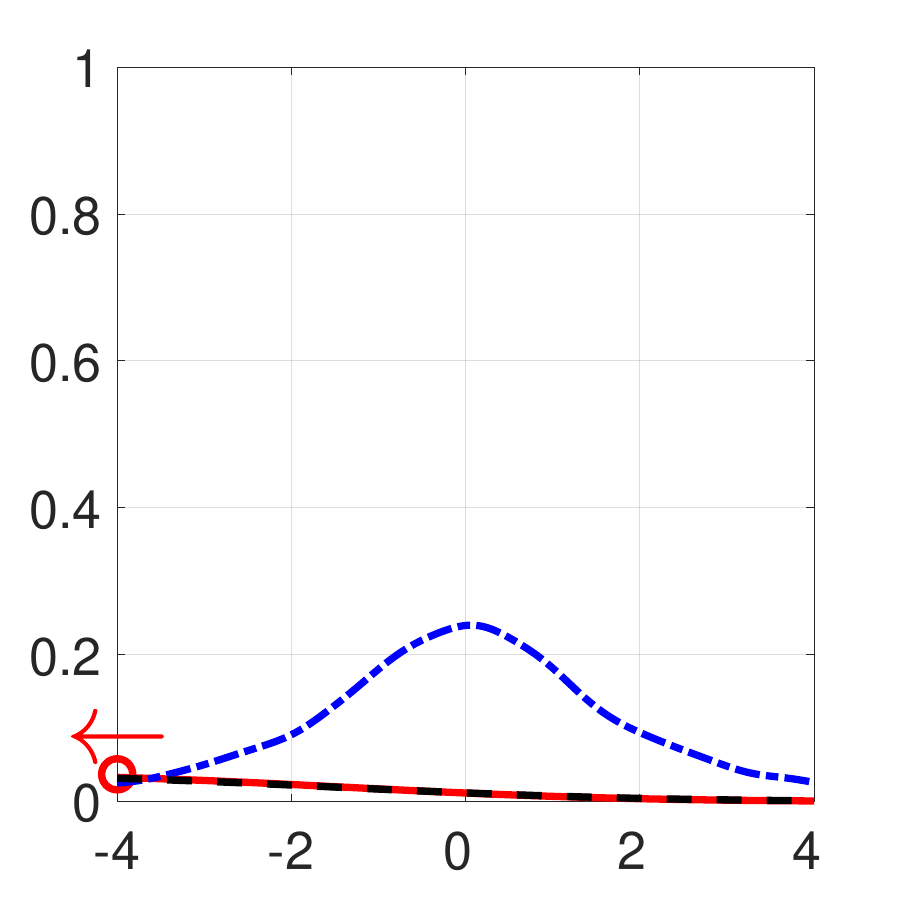}
\end{subfigure}

\end{center}

\vspace{-2ex} 

\caption{Finite-sample distribution of $T(\betaAL - \beta_T)$ under
consistent tuning (in all rows) in case $\beta_T \equiv 0.1\beta$
(labeled ``AL''), and limiting distribution from
Remark~\ref{rem:ls_dist_consist_rateT-unif} (labeled ``Rem.5'').
\emph{Notes}: See notes to Figure~\ref{fig:densities_thm3_1}.}

\label{fig:densities_rem5_1}

\end{figure}

\begin{figure}[ht]
\begin{center}
\caption*{$\lambda_T = T^{1/4}$}
\vspace{-1.5ex}
\begin{subfigure}{0.2\textwidth}
	\centering
	\caption*{$T = 25$}
	\vspace{-1.5ex}
	\includegraphics[trim={0cm 0cm 0.50cm 0.5cm},width=\textwidth,clip]
	{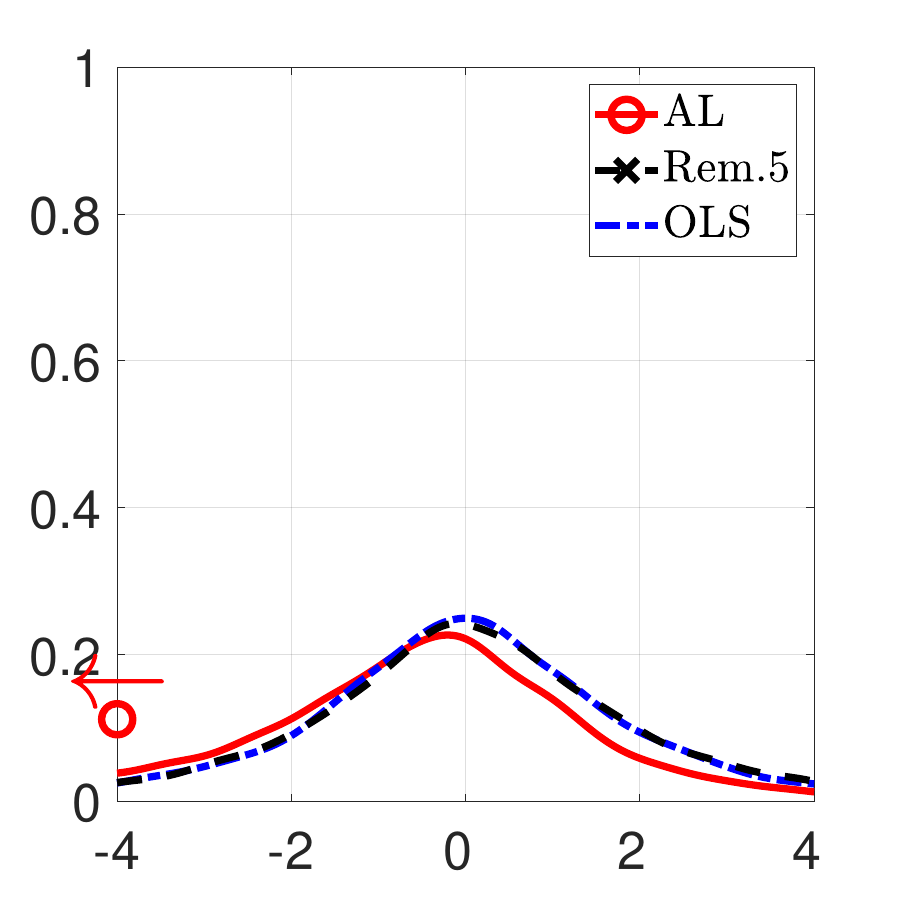}
\end{subfigure}\begin{subfigure}{0.2\textwidth}
	\centering
	\caption*{$T = 50$}
	\vspace{-1.5ex}
	\includegraphics[trim={0cm 0cm 0.50cm 0.5cm},width=\textwidth,clip]
	{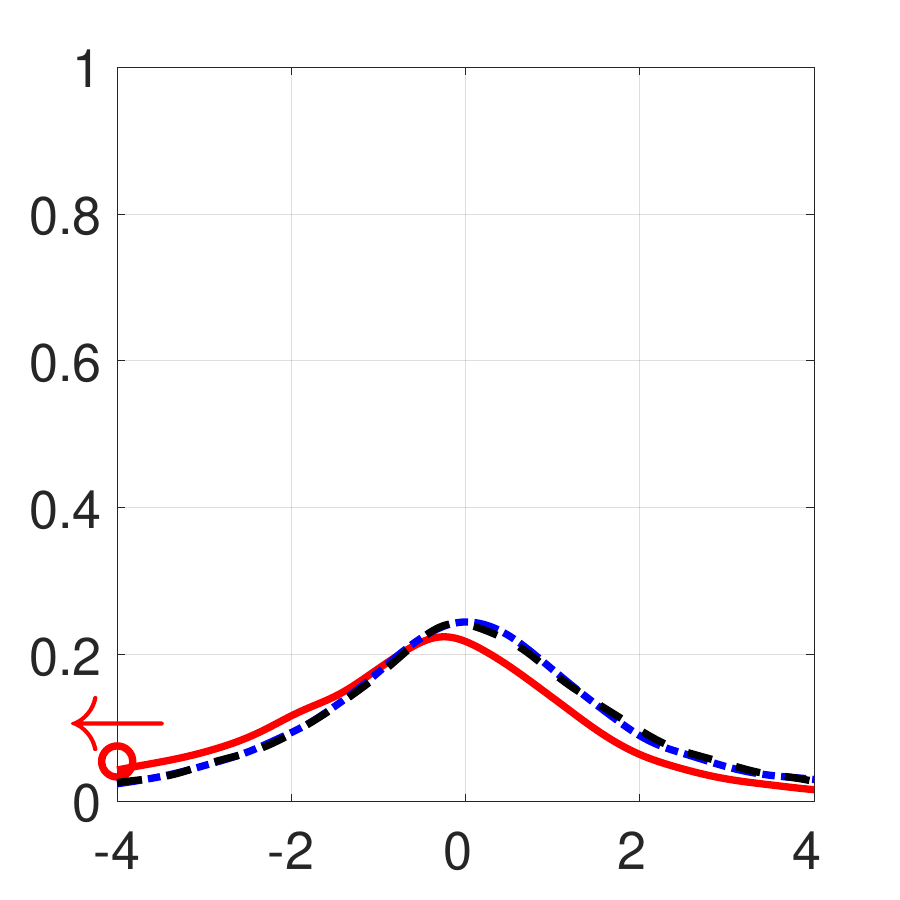}
\end{subfigure}\begin{subfigure}{0.2\textwidth}
	\centering
	\caption*{$T = 100$}
	\vspace{-1.5ex}
	\includegraphics[trim={0cm 0cm 0.50cm 0.5cm},width=\textwidth,clip]
	{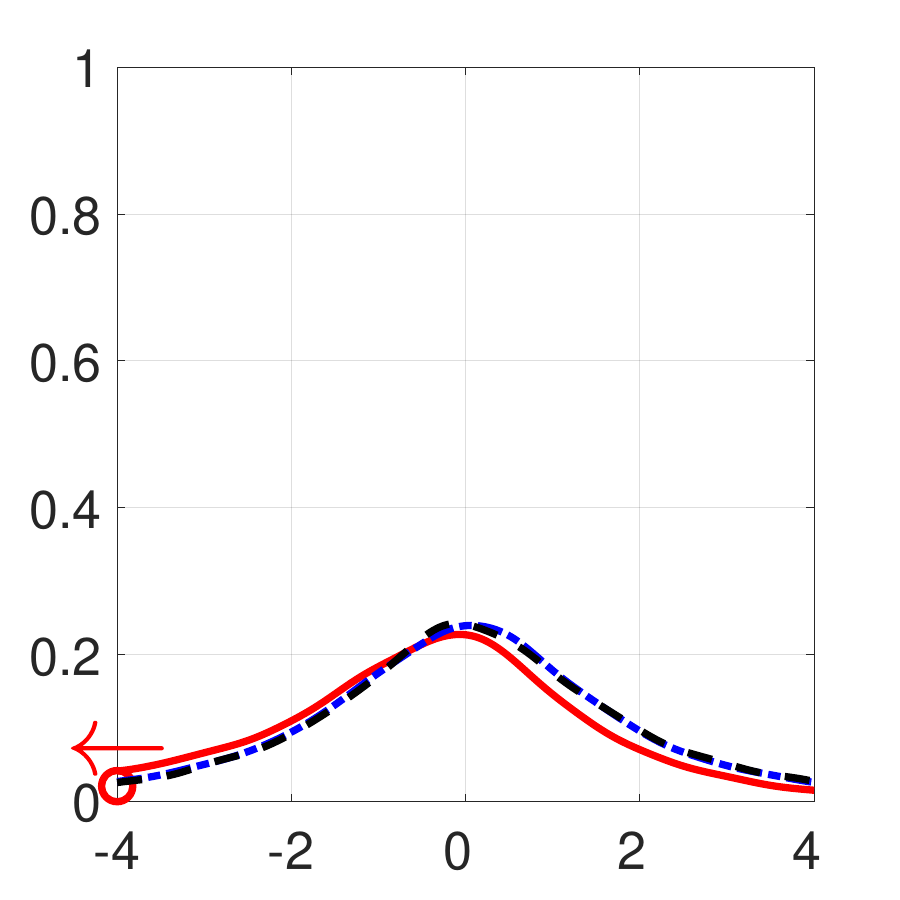}
\end{subfigure}\begin{subfigure}{0.2\textwidth}
	\centering
	\caption*{$T = 250$}
	\vspace{-1.5ex}
	\includegraphics[trim={0cm 0cm 0.50cm 0.5cm},width=\textwidth,clip]
	{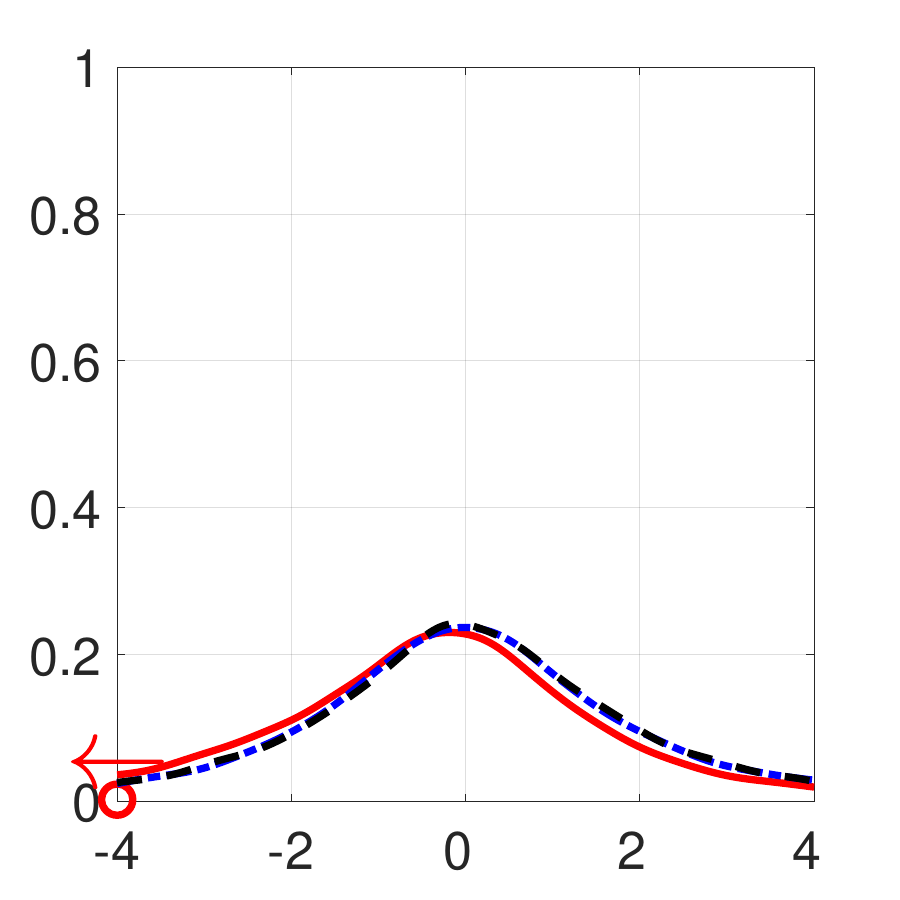}
\end{subfigure}\begin{subfigure}{0.2\textwidth}
	\centering
	\caption*{$T = 1000$}
	\vspace{-1.5ex}
	\includegraphics[trim={0cm 0cm 0.50cm 0.5cm},width=\textwidth,clip]
	{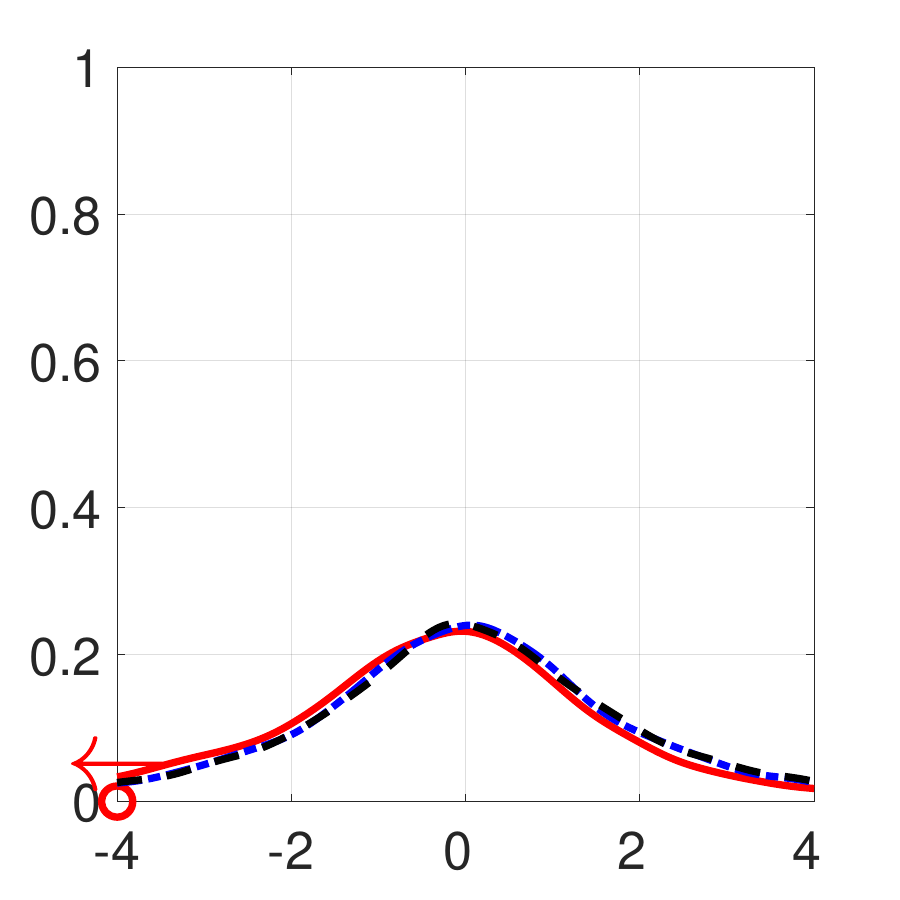}
\end{subfigure}

\caption*{$\lambda_T = T^{1/2}$}
\vspace{-1.5ex}
\begin{subfigure}{0.2\textwidth}
	\centering
	\includegraphics[trim={0cm 0cm 0.50cm 0.5cm},width=\textwidth,clip]
	{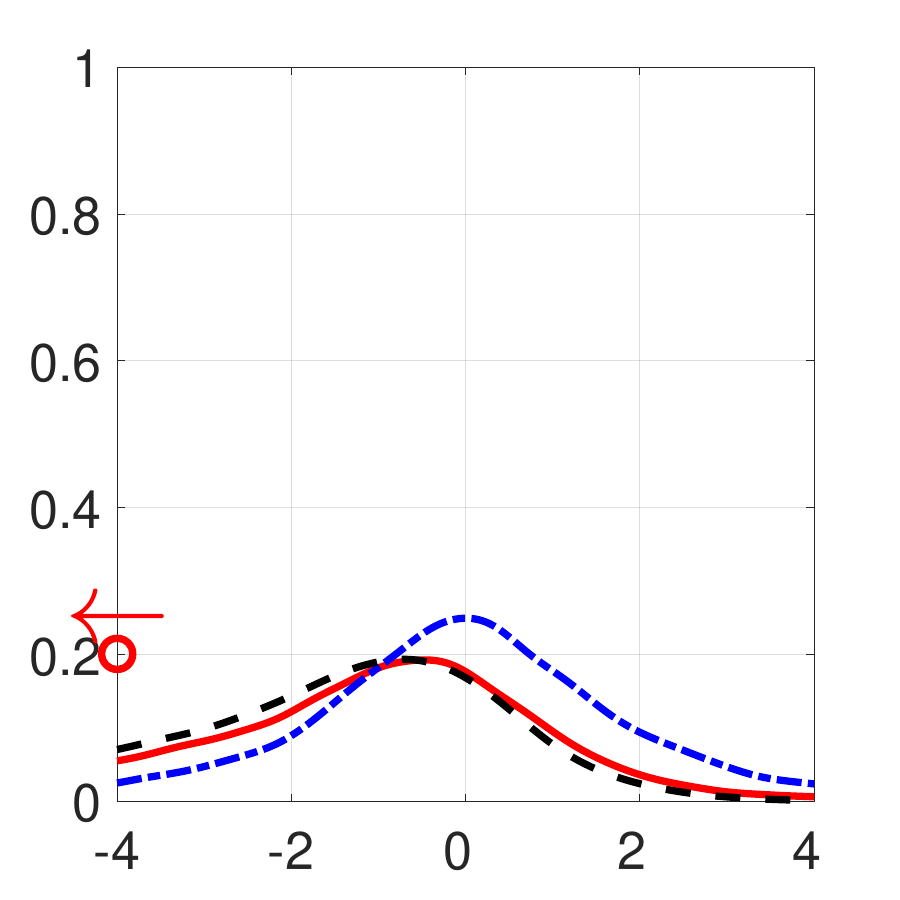}
\end{subfigure}\begin{subfigure}{0.2\textwidth}
	\centering
	\includegraphics[trim={0cm 0cm 0.50cm 0.5cm},width=\textwidth,clip]
	{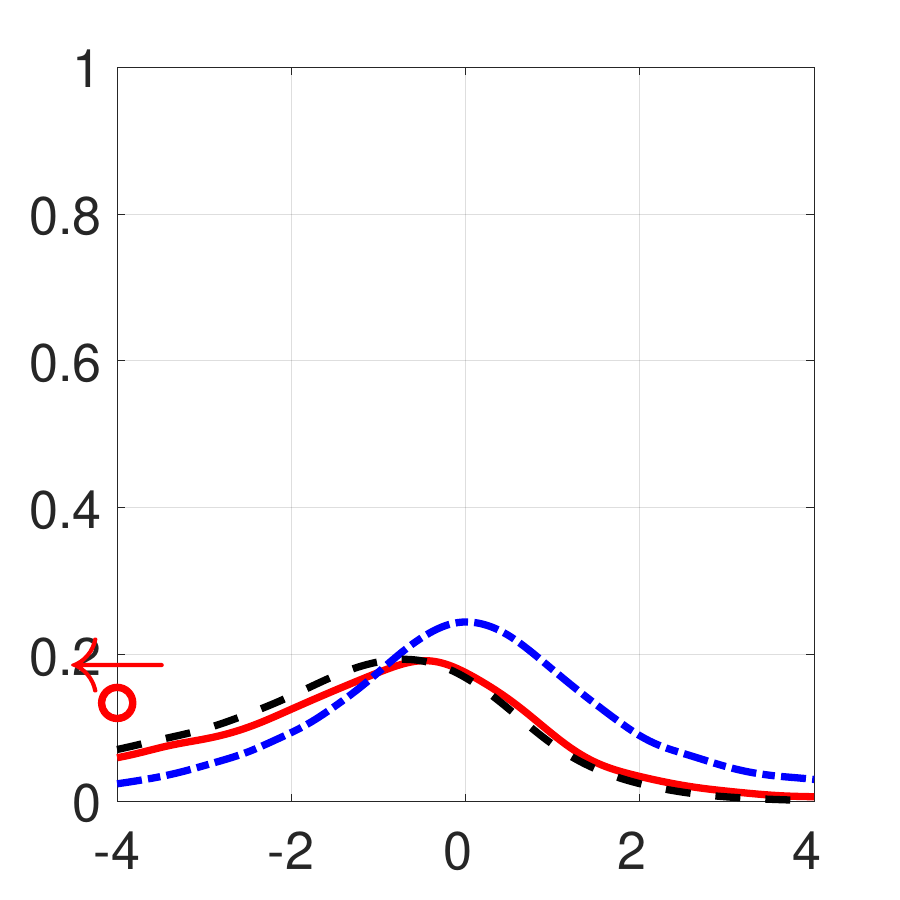}
\end{subfigure}\begin{subfigure}{0.2\textwidth}
	\centering
	\includegraphics[trim={0cm 0cm 0.50cm 0.5cm},width=\textwidth,clip]
	{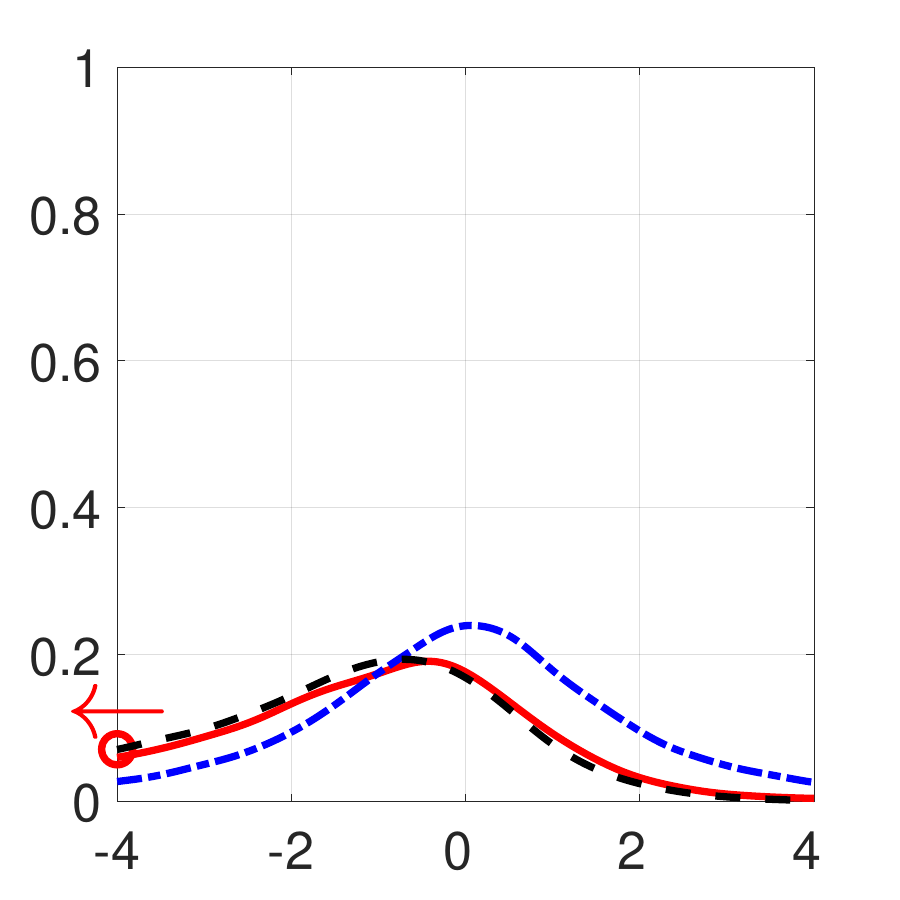}
\end{subfigure}\begin{subfigure}{0.2\textwidth}
	\centering
	\includegraphics[trim={0cm 0cm 0.50cm 0.5cm},width=\textwidth,clip]
	{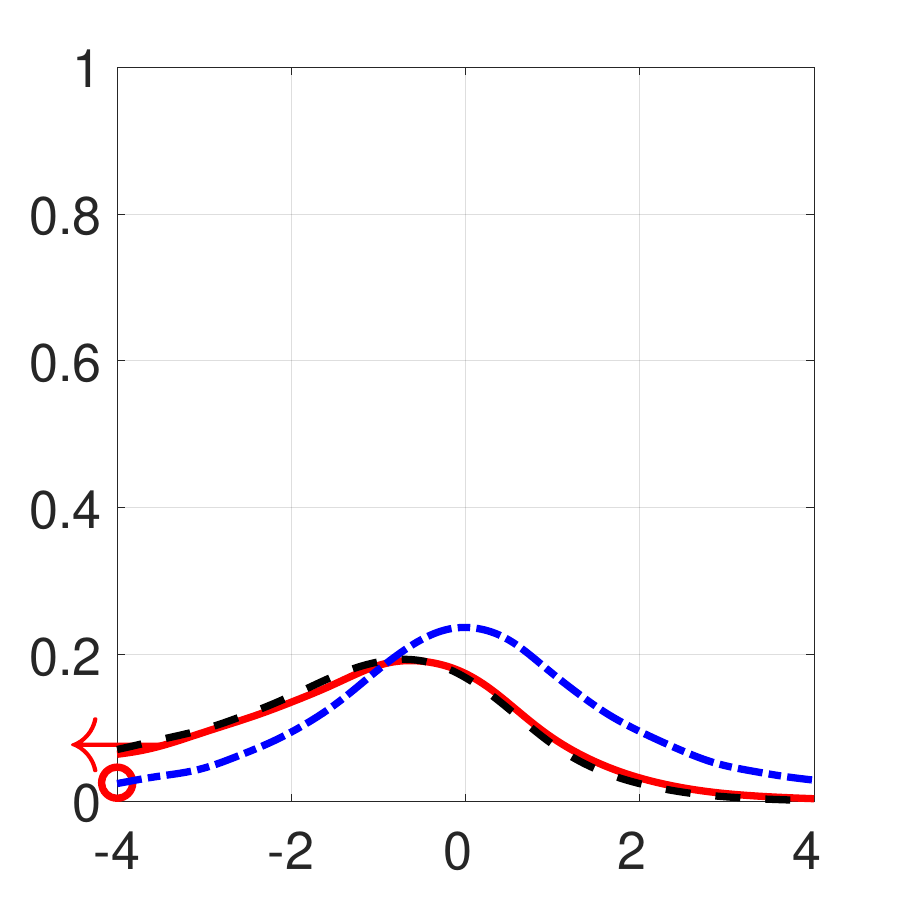}
\end{subfigure}\begin{subfigure}{0.2\textwidth}
	\centering
	\includegraphics[trim={0cm 0cm 0.50cm 0.5cm},width=\textwidth,clip]
	{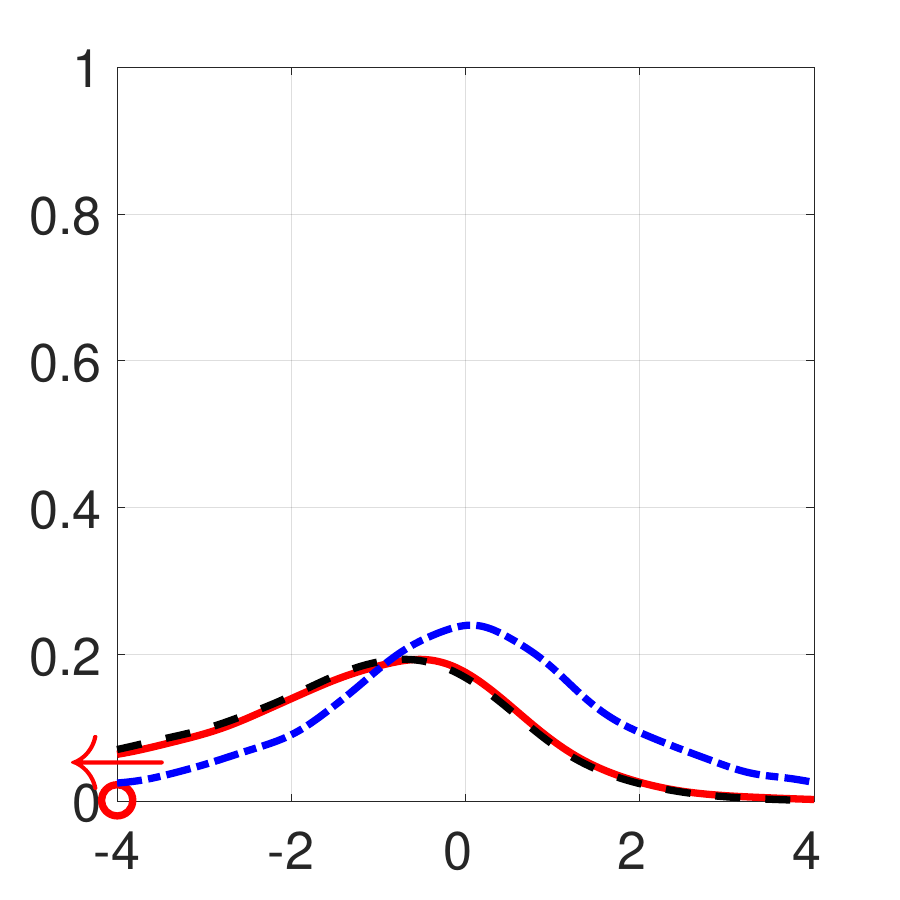}
\end{subfigure}

\caption*{$\lambda_T = T$}
\vspace{-1.5ex}
\begin{subfigure}{0.2\textwidth}
	\centering
	\includegraphics[trim={0cm 0cm 0.50cm 0.5cm},width=\textwidth,clip]
	{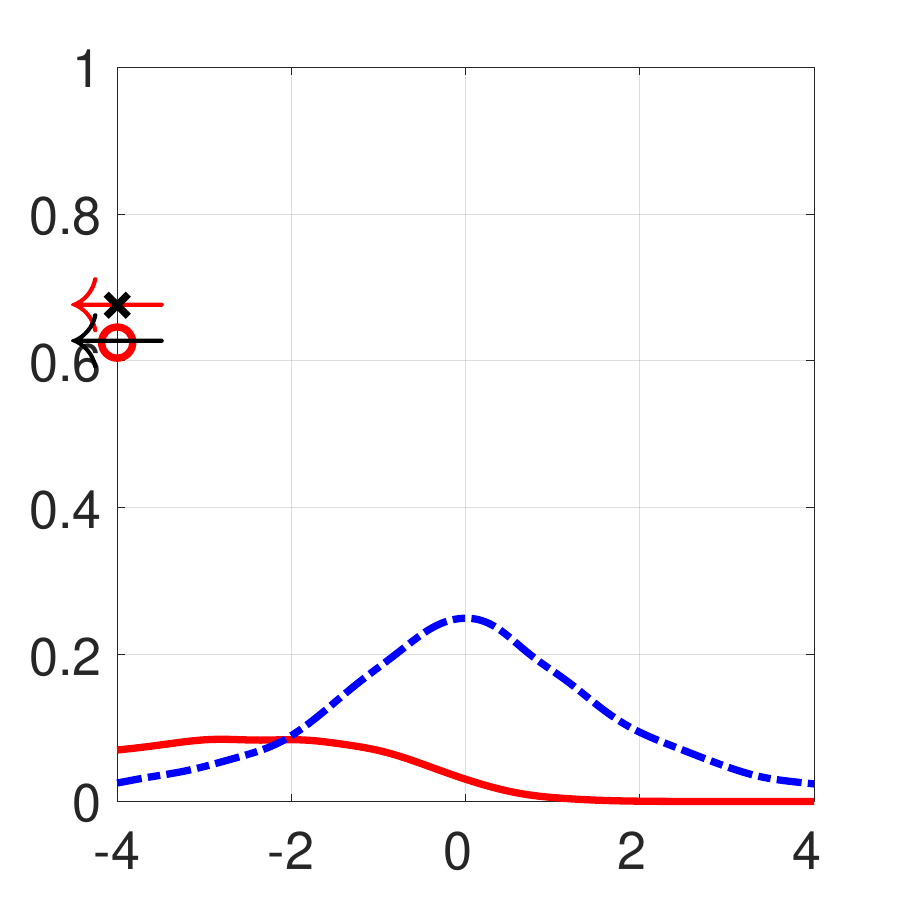}
\end{subfigure}\begin{subfigure}{0.2\textwidth}
	\centering
	\includegraphics[trim={0cm 0cm 0.50cm 0.5cm},width=\textwidth,clip]
	{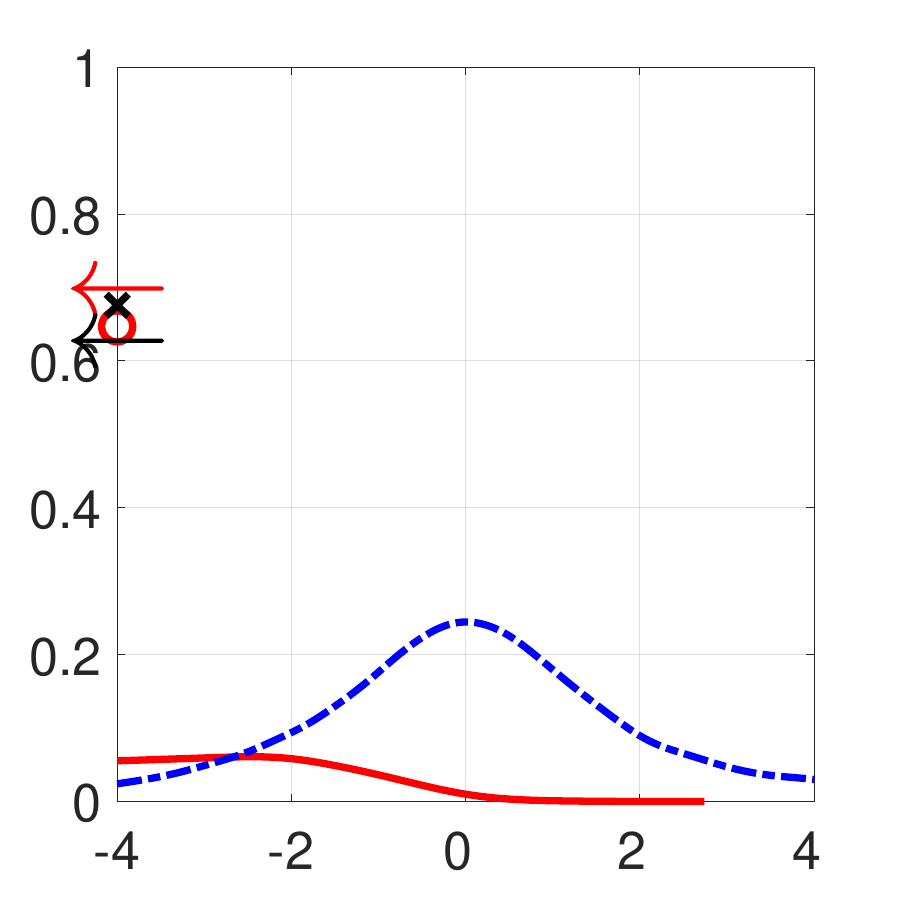}
\end{subfigure}\begin{subfigure}{0.2\textwidth}
	\centering
	\includegraphics[trim={0cm 0cm 0.50cm 0.5cm},width=\textwidth,clip]
	{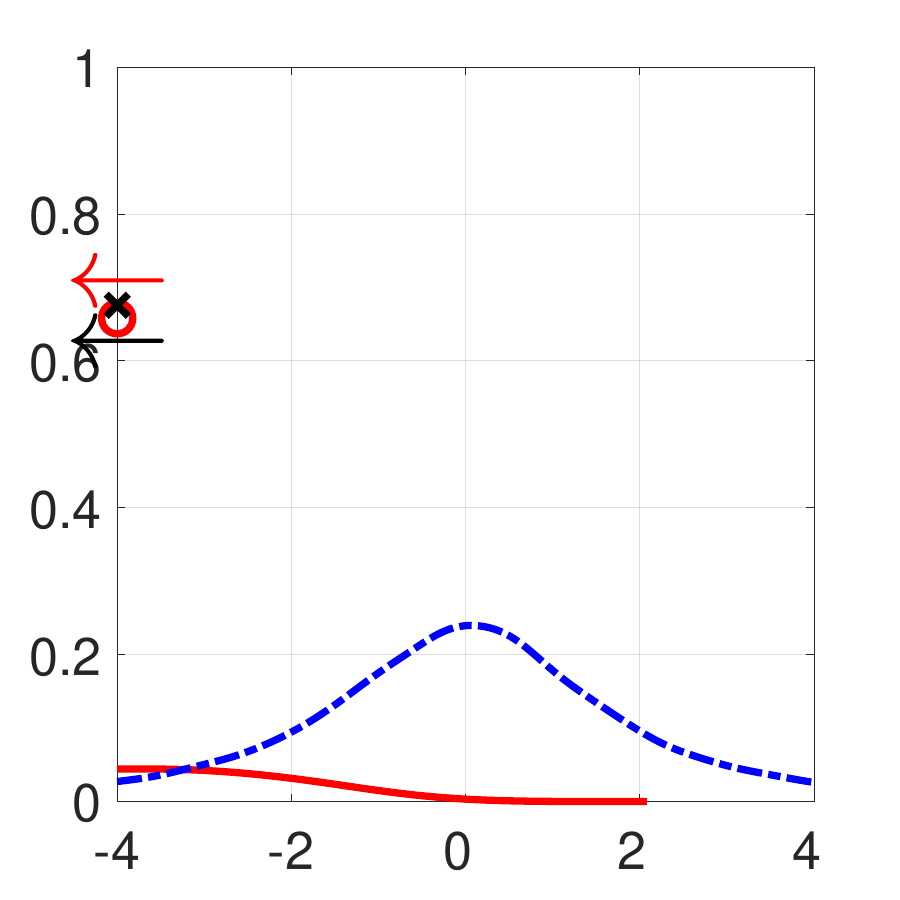}
\end{subfigure}\begin{subfigure}{0.2\textwidth}
	\centering
	\includegraphics[trim={0cm 0cm 0.50cm 0.5cm},width=\textwidth,clip]
	{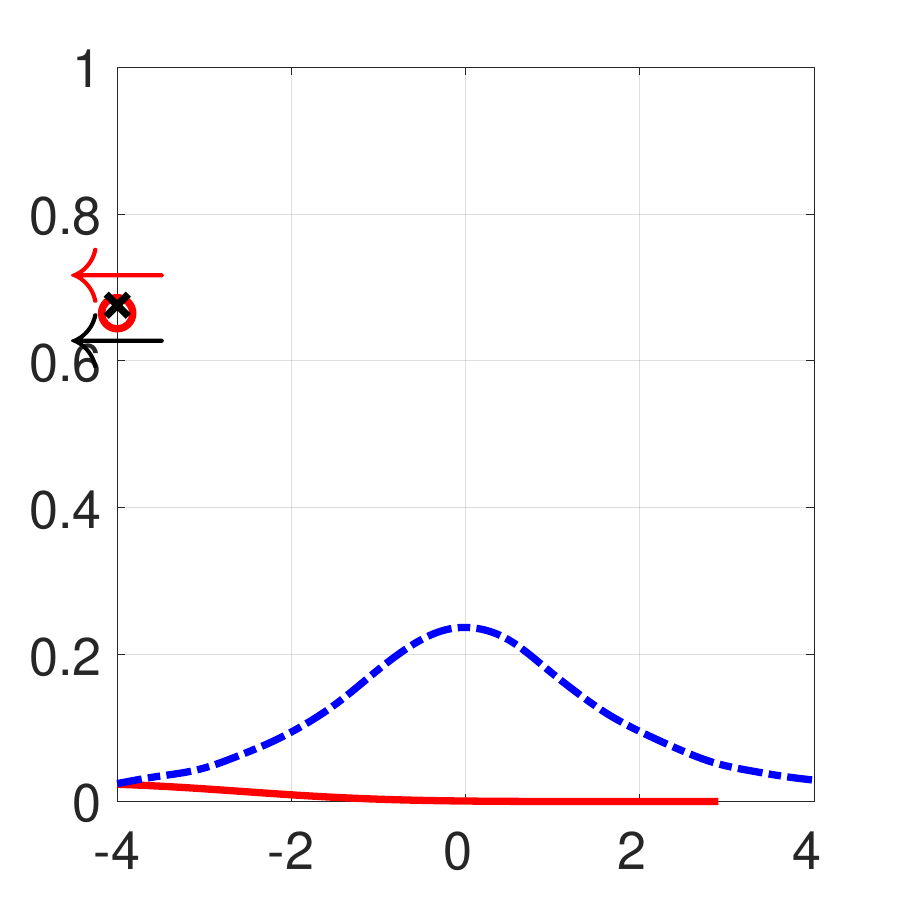}
\end{subfigure}\begin{subfigure}{0.2\textwidth}
	\centering
	\includegraphics[trim={0cm 0cm 0.50cm 0.5cm},width=\textwidth,clip]
	{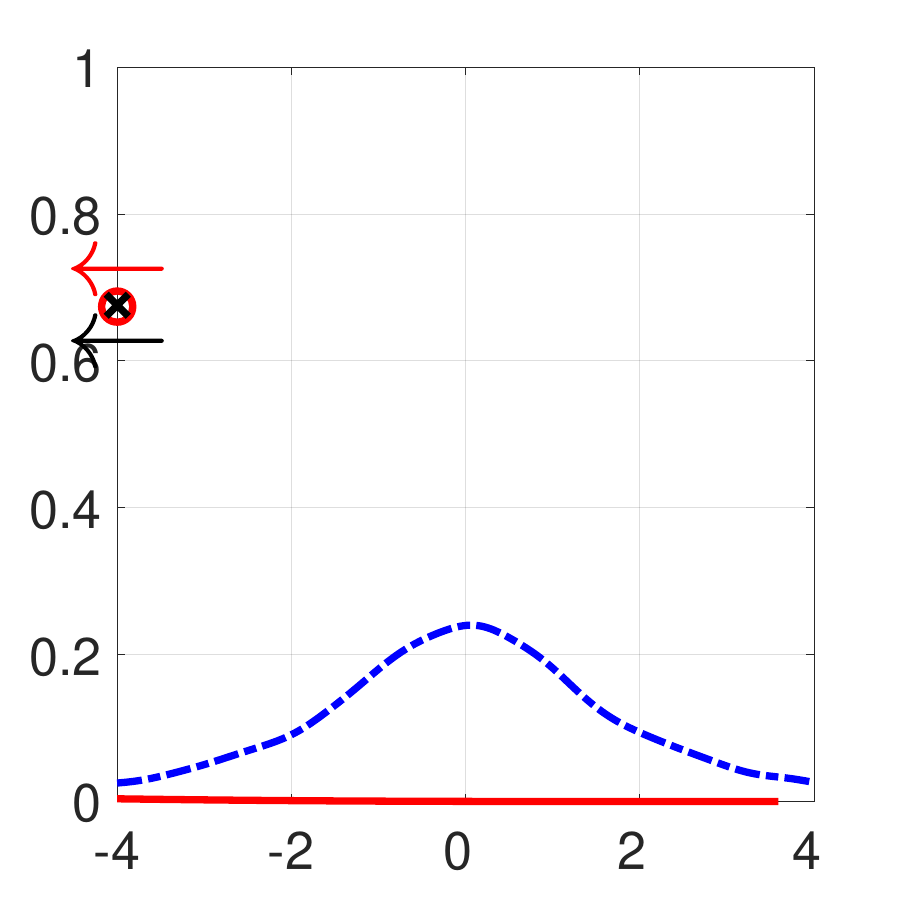}
\end{subfigure}

\end{center}

\vspace{-2ex} 

\caption{Finite-sample distribution of $T(\betaAL - \beta_T)$ under
consistent tuning (in all rows) in case $\beta_T = \beta/T^{1/2}$
(labeled ``AL''), and limiting distribution from
Remark~\ref{rem:ls_dist_consist_rateT-unif} (labeled ``Rem.5'').
\emph{Notes}: See notes to Figure~\ref{fig:densities_thm3_1}.}

\label{fig:densities_rem5_2}

\end{figure}

\begin{figure}[ht]
\begin{center}
\caption*{$\lambda_T = T^{1/4}$}
\vspace{-1.5ex}
\begin{subfigure}{0.2\textwidth}
	\centering
	\caption*{$T = 25$}
	\vspace{-1.5ex}
	\includegraphics[trim={0cm 0cm 0.50cm 0.5cm},width=\textwidth,clip]
	{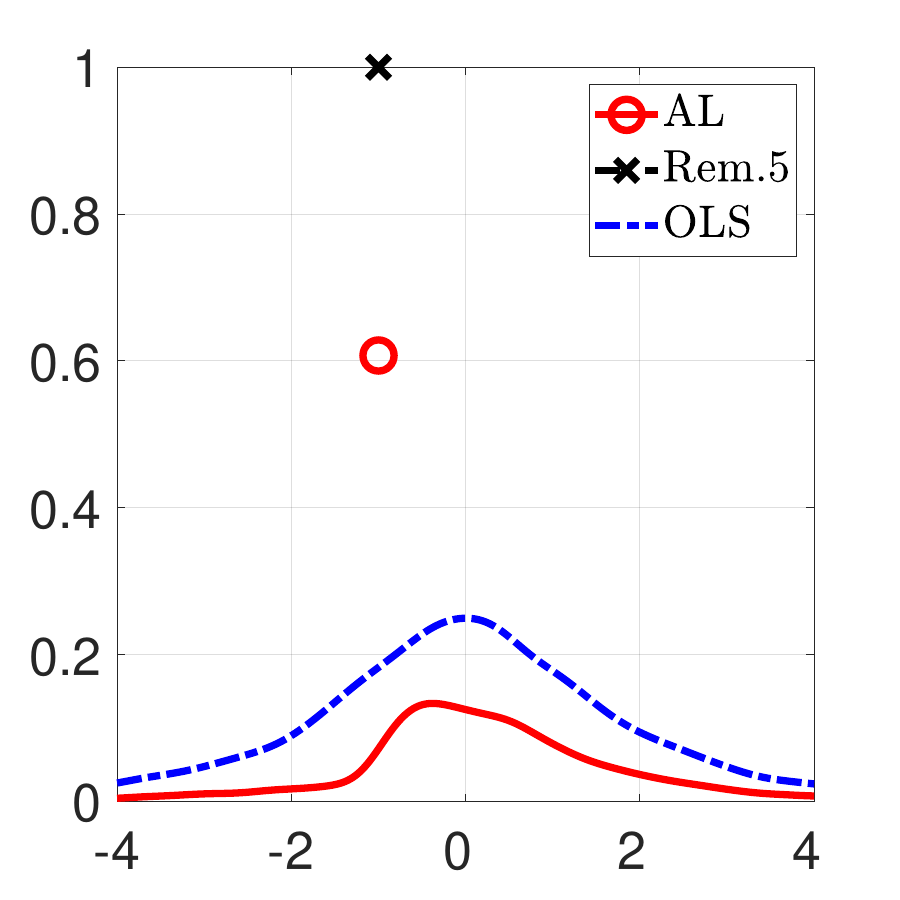}
\end{subfigure}\begin{subfigure}{0.2\textwidth}
	\centering
	\caption*{$T = 50$}
	\vspace{-1.5ex}
	\includegraphics[trim={0cm 0cm 0.50cm 0.5cm},width=\textwidth,clip]
	{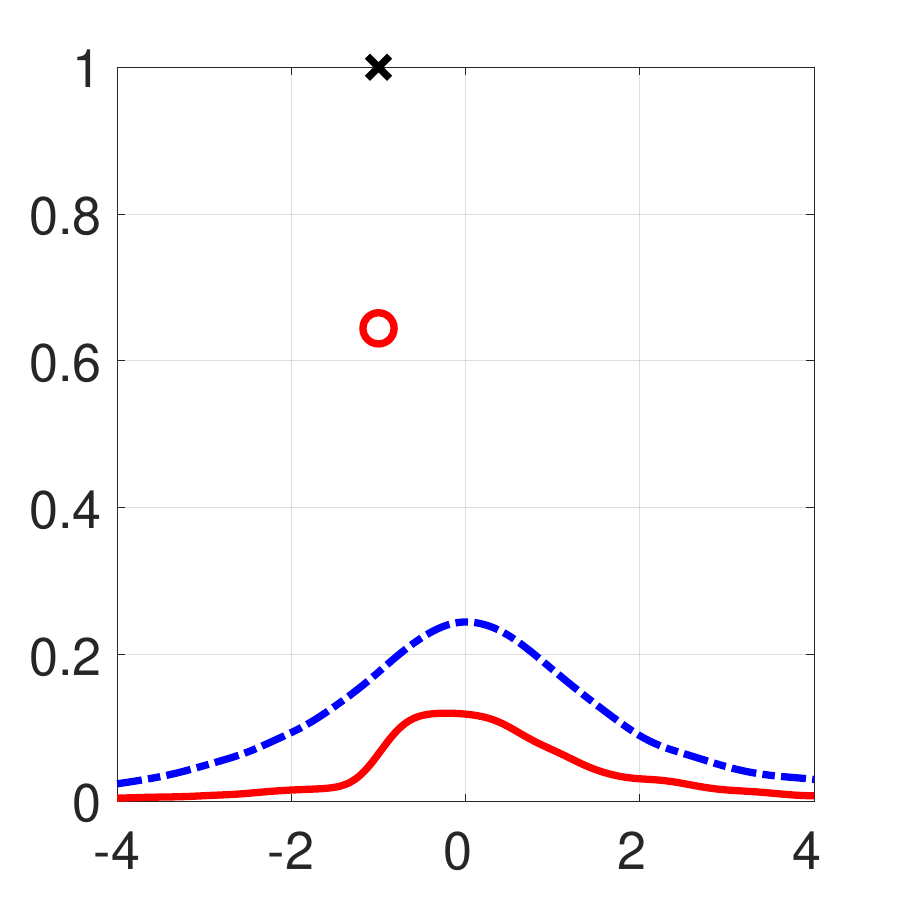}
\end{subfigure}\begin{subfigure}{0.2\textwidth}
	\centering
	\caption*{$T = 100$}
	\vspace{-1.5ex}
	\includegraphics[trim={0cm 0cm 0.50cm 0.5cm},width=\textwidth,clip]
	{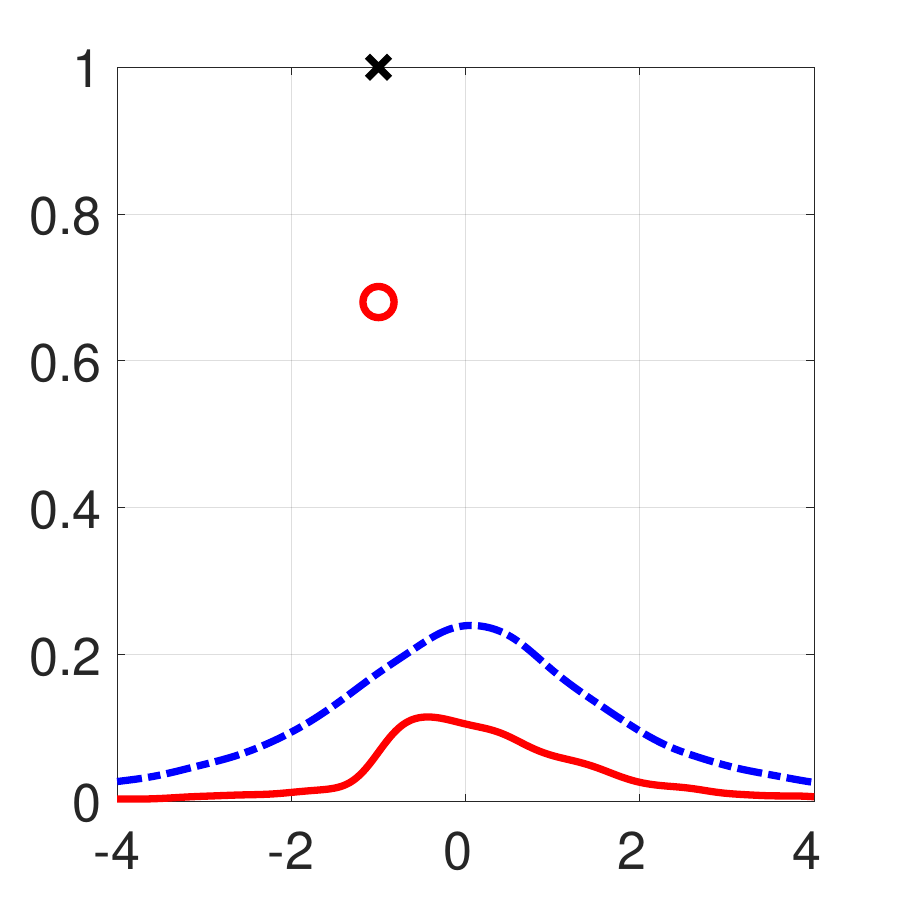}
\end{subfigure}\begin{subfigure}{0.2\textwidth}
	\centering
	\caption*{$T = 250$}
	\vspace{-1.5ex}
	\includegraphics[trim={0cm 0cm 0.50cm 0.5cm},width=\textwidth,clip]
	{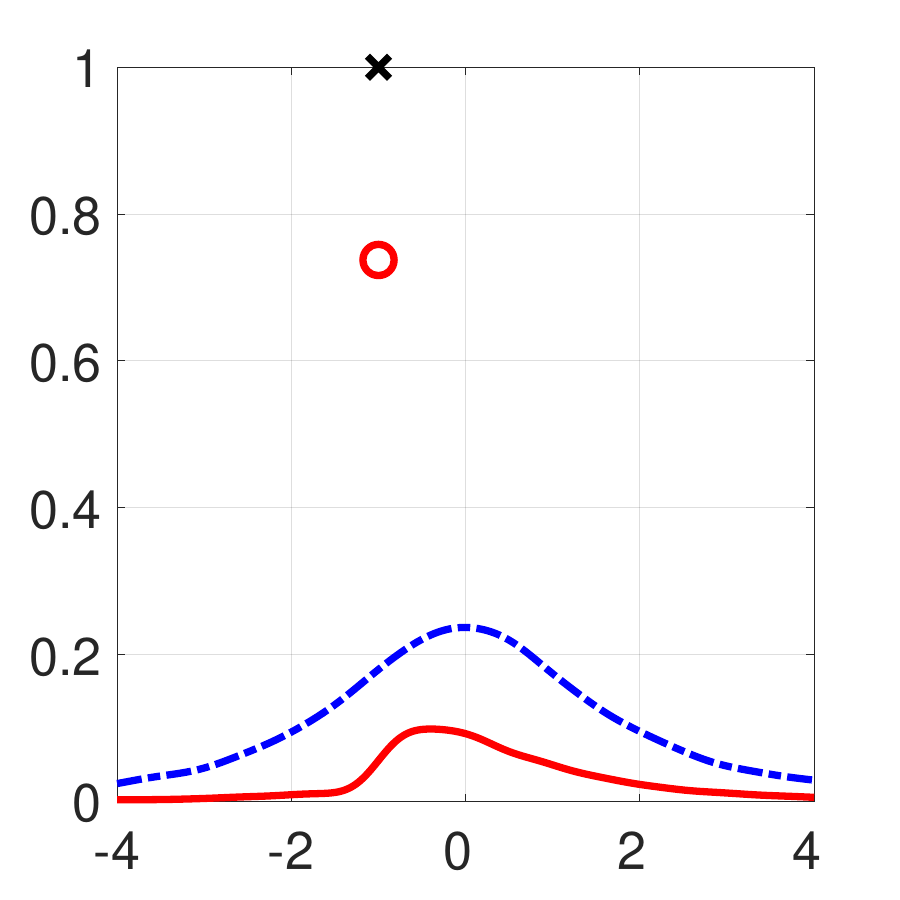}
\end{subfigure}\begin{subfigure}{0.2\textwidth}
	\centering
	\caption*{$T = 1000$}
	\vspace{-1.5ex}
	\includegraphics[trim={0cm 0cm 0.50cm 0.5cm},width=\textwidth,clip]
	{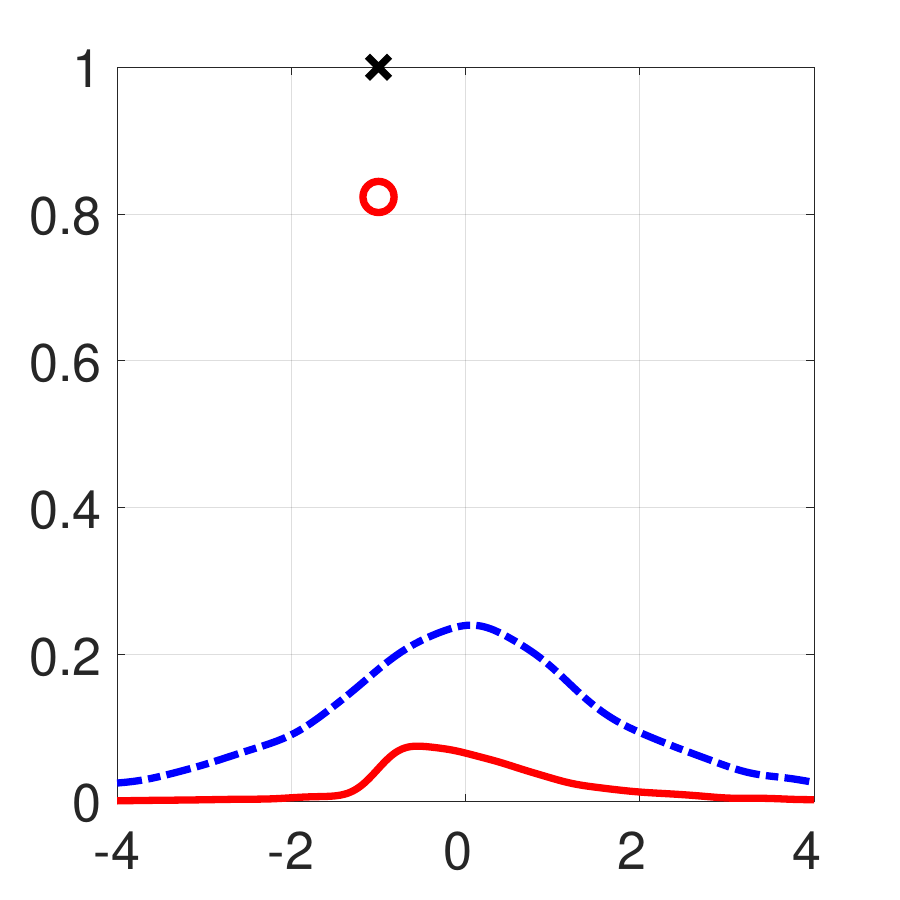}
\end{subfigure}

\caption*{$\lambda_T = T^{1/2}$}
\vspace{-1.5ex}
\begin{subfigure}{0.2\textwidth}
	\centering
	\includegraphics[trim={0cm 0cm 0.50cm 0.5cm},width=\textwidth,clip]
	{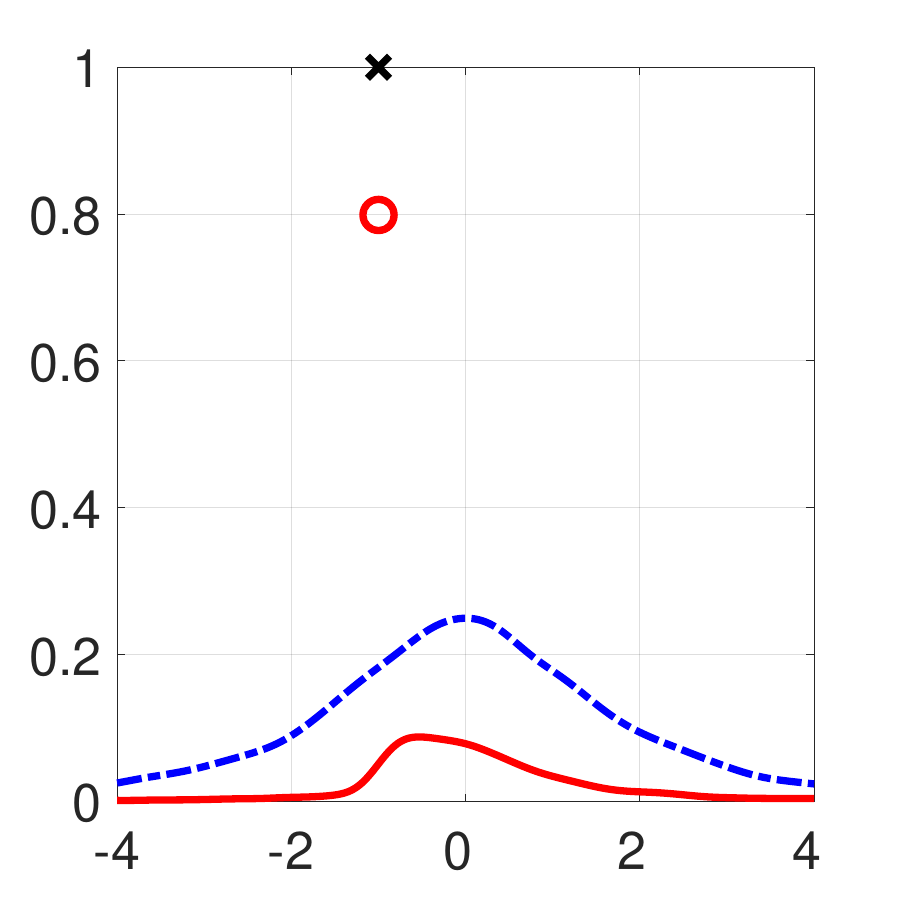}
\end{subfigure}\begin{subfigure}{0.2\textwidth}
	\centering
	\includegraphics[trim={0cm 0cm 0.50cm 0.5cm},width=\textwidth,clip]
	{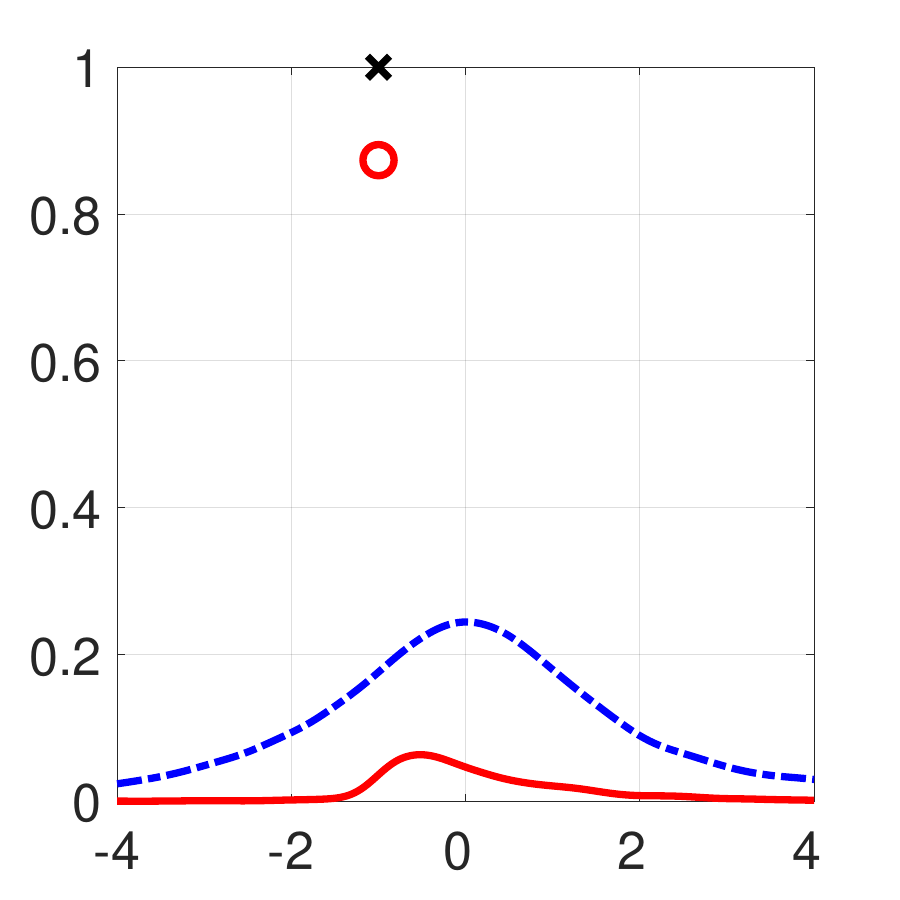}
\end{subfigure}\begin{subfigure}{0.2\textwidth}
	\centering
	\includegraphics[trim={0cm 0cm 0.50cm 0.5cm},width=\textwidth,clip]
	{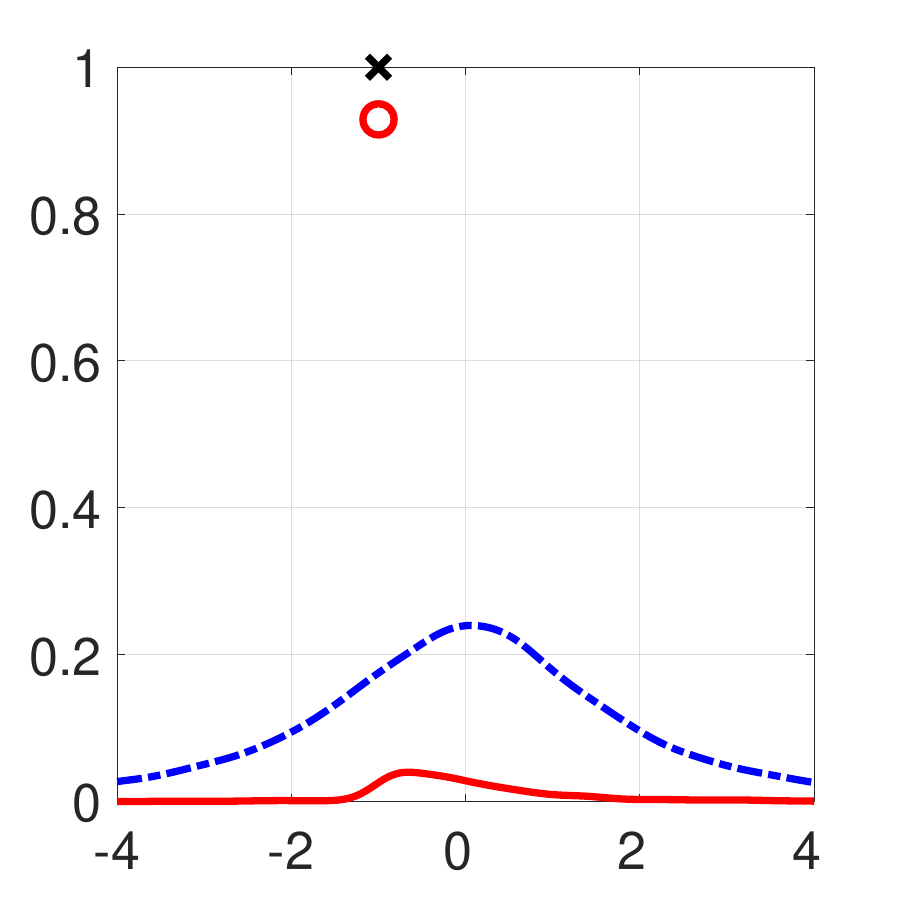}
\end{subfigure}\begin{subfigure}{0.2\textwidth}
	\centering
	\includegraphics[trim={0cm 0cm 0.50cm 0.5cm},width=\textwidth,clip]
	{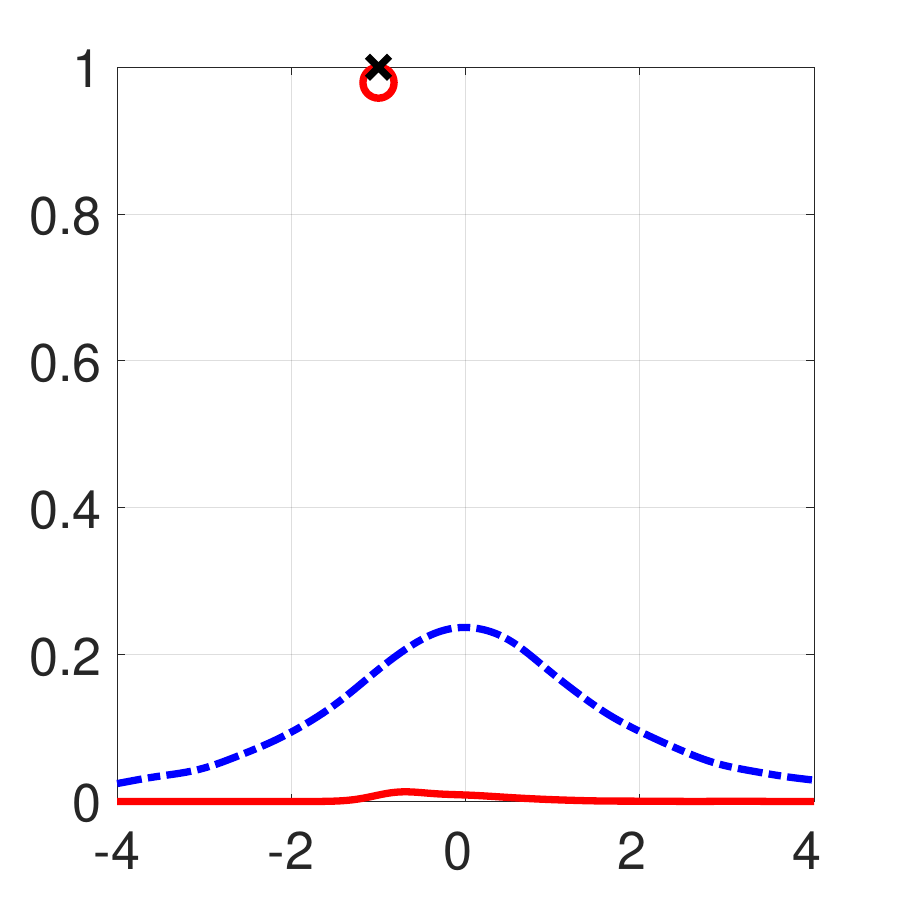}
\end{subfigure}\begin{subfigure}{0.2\textwidth}
	\centering
	\includegraphics[trim={0cm 0cm 0.50cm 0.5cm},width=\textwidth,clip]
	{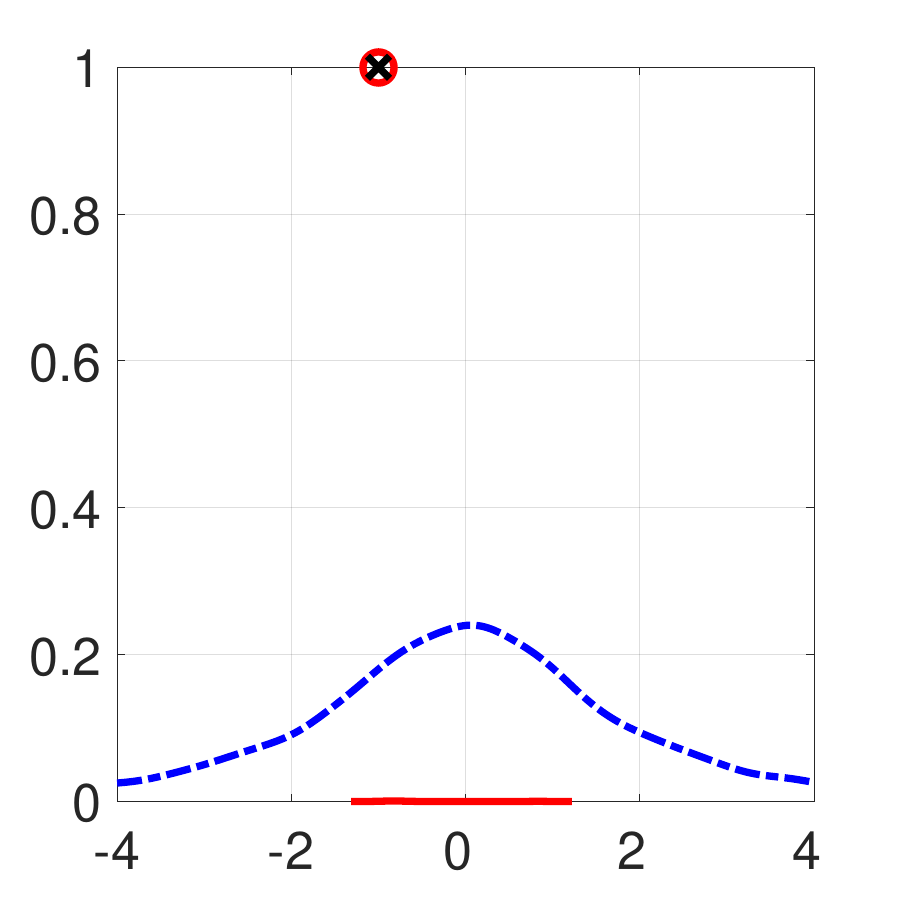}
\end{subfigure}

\caption*{$\lambda_T = T$}
\vspace{-1.5ex}
\begin{subfigure}{0.2\textwidth}
	\centering
	\includegraphics[trim={0cm 0cm 0.50cm 0.5cm},width=\textwidth,clip]
	{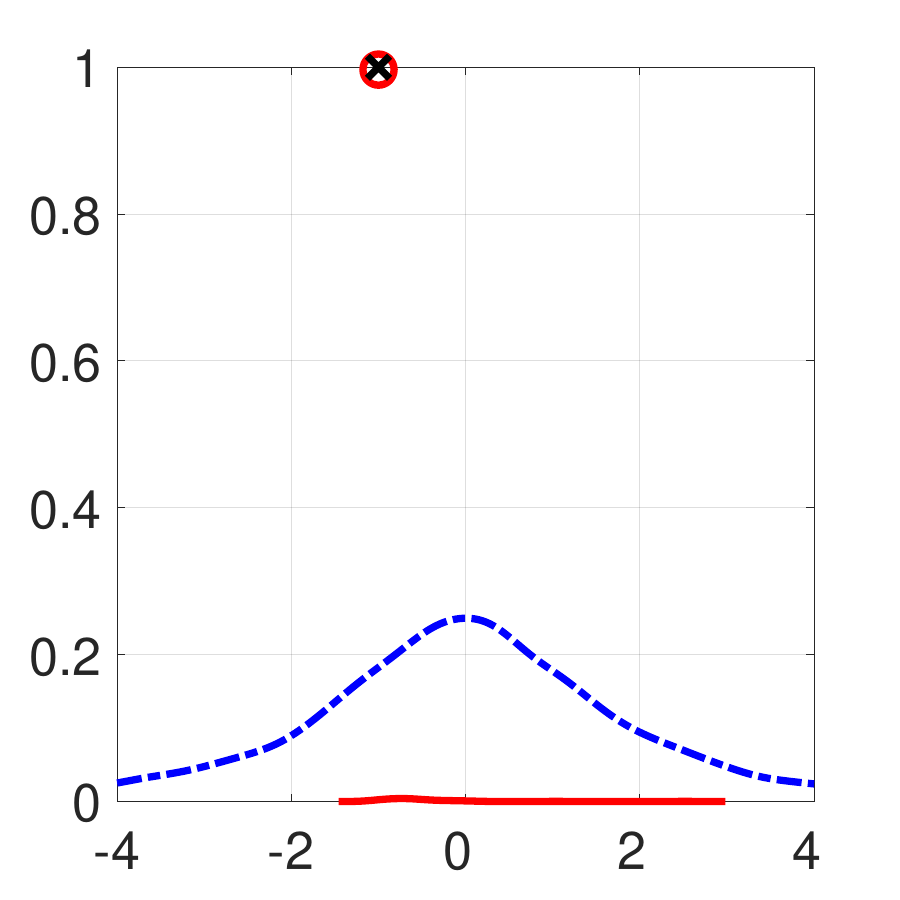}
\end{subfigure}\begin{subfigure}{0.2\textwidth}
	\centering
	\includegraphics[trim={0cm 0cm 0.50cm 0.5cm},width=\textwidth,clip]
	{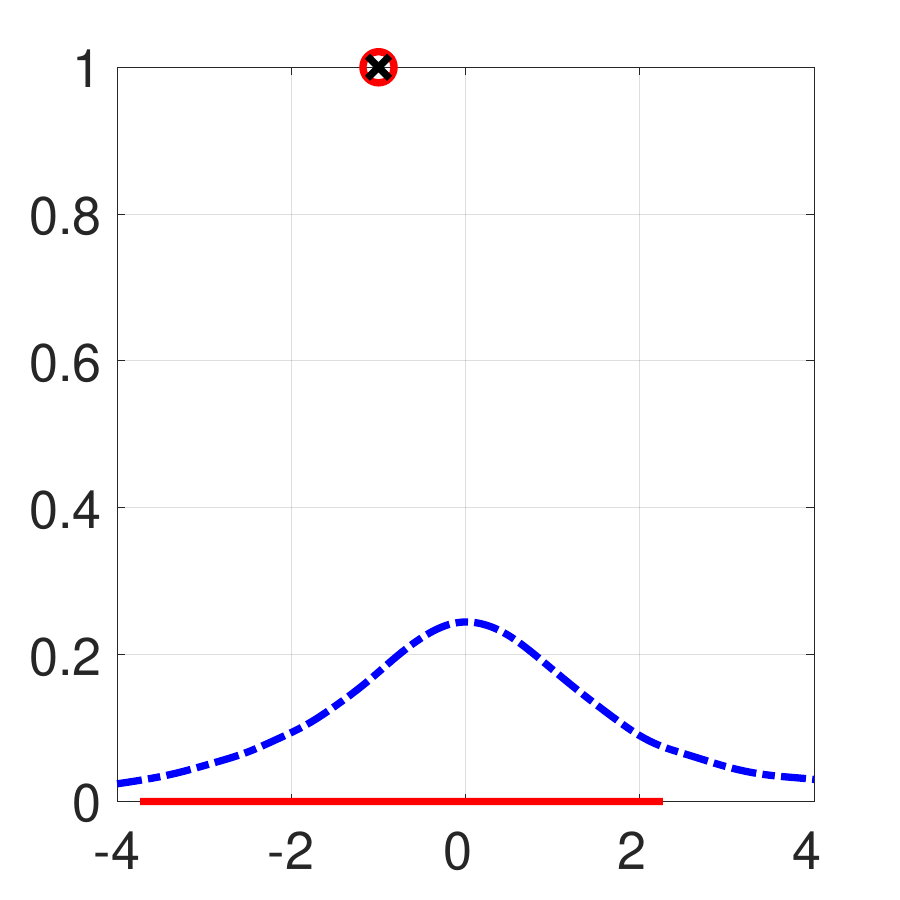}
\end{subfigure}\begin{subfigure}{0.2\textwidth}
	\centering
	\includegraphics[trim={0cm 0cm 0.50cm 0.5cm},width=\textwidth,clip]
	{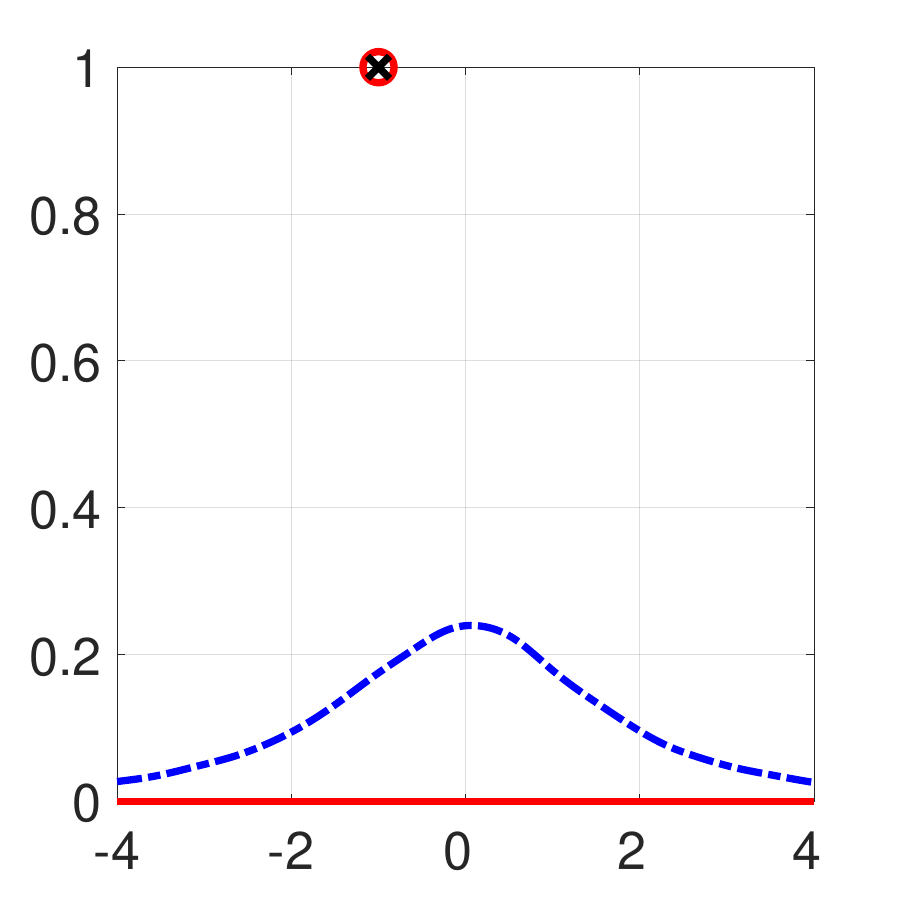}
\end{subfigure}\begin{subfigure}{0.2\textwidth}
	\centering
	\includegraphics[trim={0cm 0cm 0.50cm 0.5cm},width=\textwidth,clip]
	{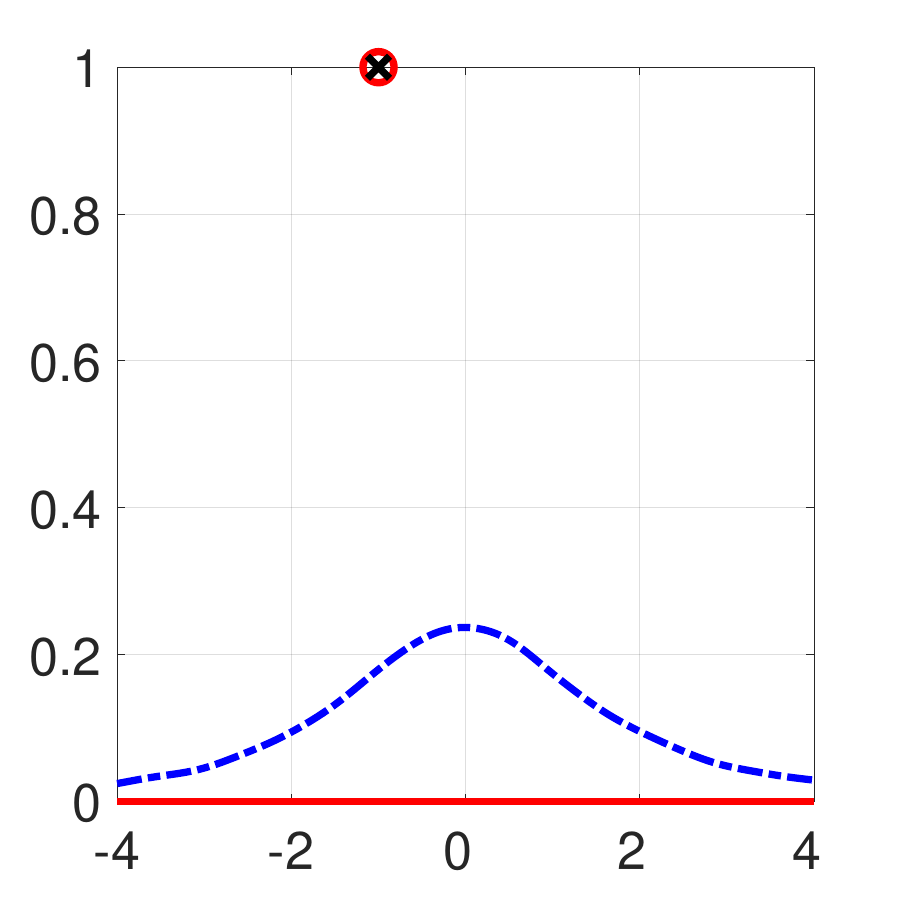}
\end{subfigure}\begin{subfigure}{0.2\textwidth}
	\centering
	\includegraphics[trim={0cm 0cm 0.50cm 0.5cm},width=\textwidth,clip]
	{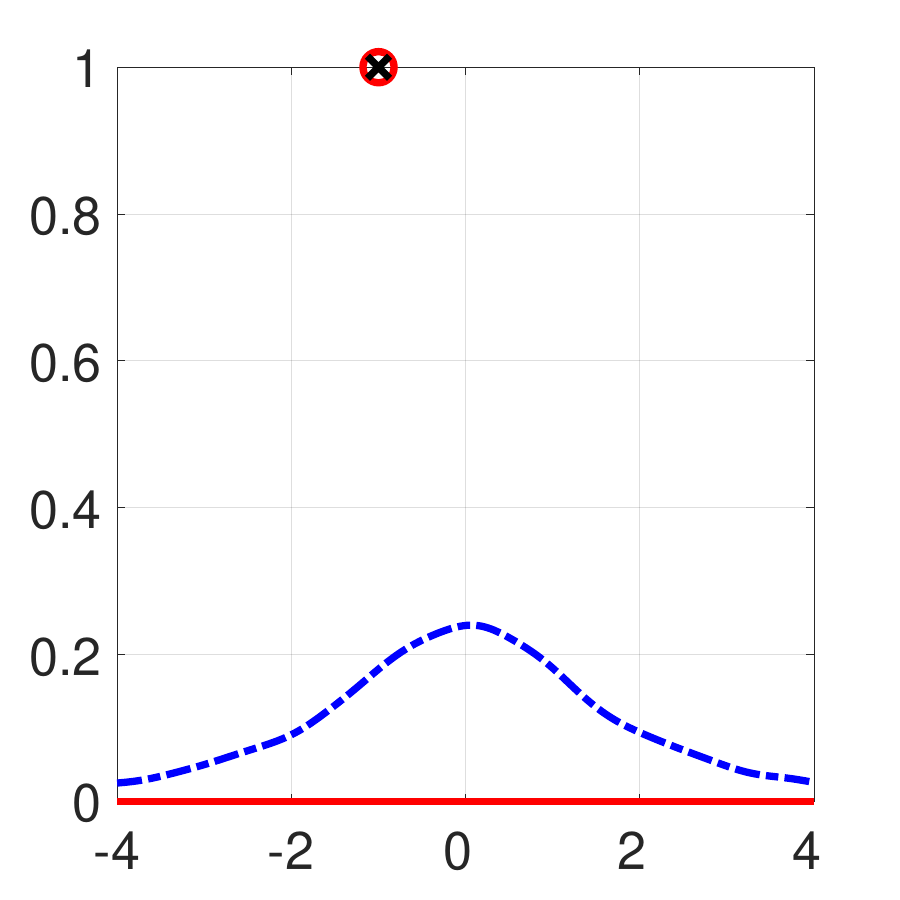}
\end{subfigure}

\end{center}

\vspace{-2ex} 

\caption{Finite-sample distribution of $T(\betaAL - \beta_T)$ under
consistent tuning (in all rows) in case $\beta_T = \beta/T$ (labeled
``AL''), and limiting distribution from
Remark~\ref{rem:ls_dist_consist_rateT-unif} (labeled ``Rem.5'').
\emph{Notes}: See notes to Figure~\ref{fig:densities_thm3_1}.}

\label{fig:densities_rem5_3}

\end{figure}

\begin{figure}[ht]
\begin{center}
\caption*{$\lambda_T = T^{1/4}$}
\vspace{-1.5ex}
\begin{subfigure}{0.2\textwidth}
	\centering
	\caption*{$T = 25$}
	\vspace{-1.5ex}
	\includegraphics[trim={0cm 0cm 0.50cm 0.5cm},width=\textwidth,clip]
	{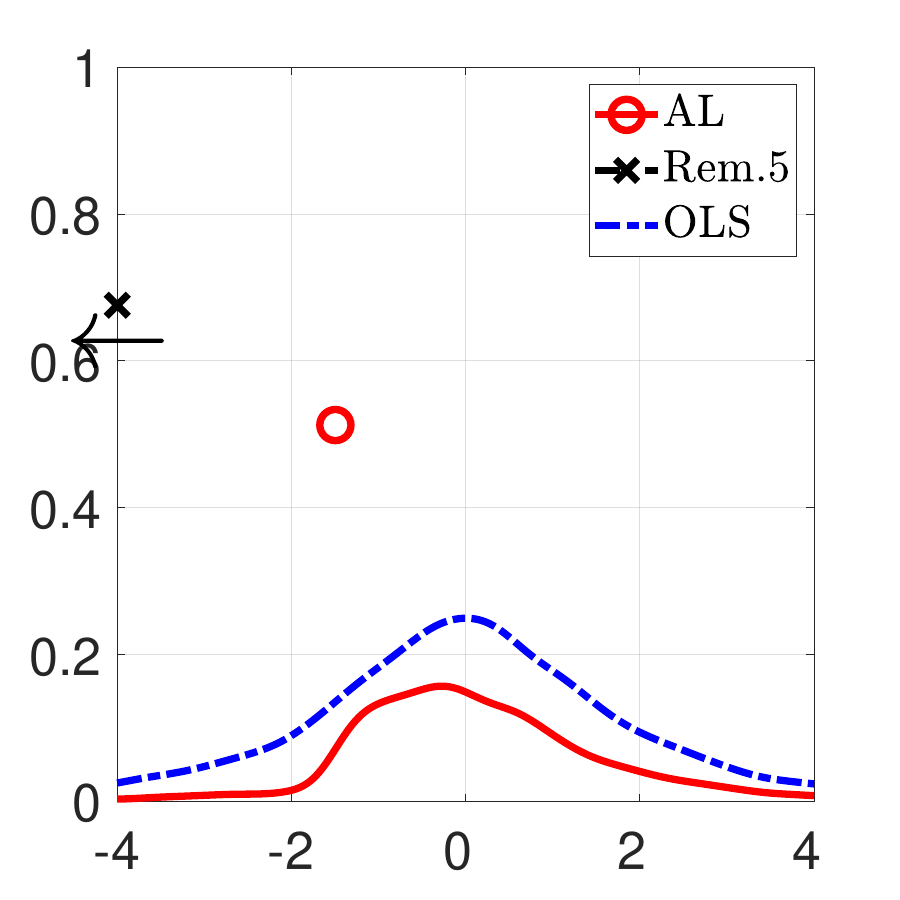}
\end{subfigure}\begin{subfigure}{0.2\textwidth}
	\centering
	\caption*{$T = 50$}
	\vspace{-1.5ex}
	\includegraphics[trim={0cm 0cm 0.50cm 0.5cm},width=\textwidth,clip]
	{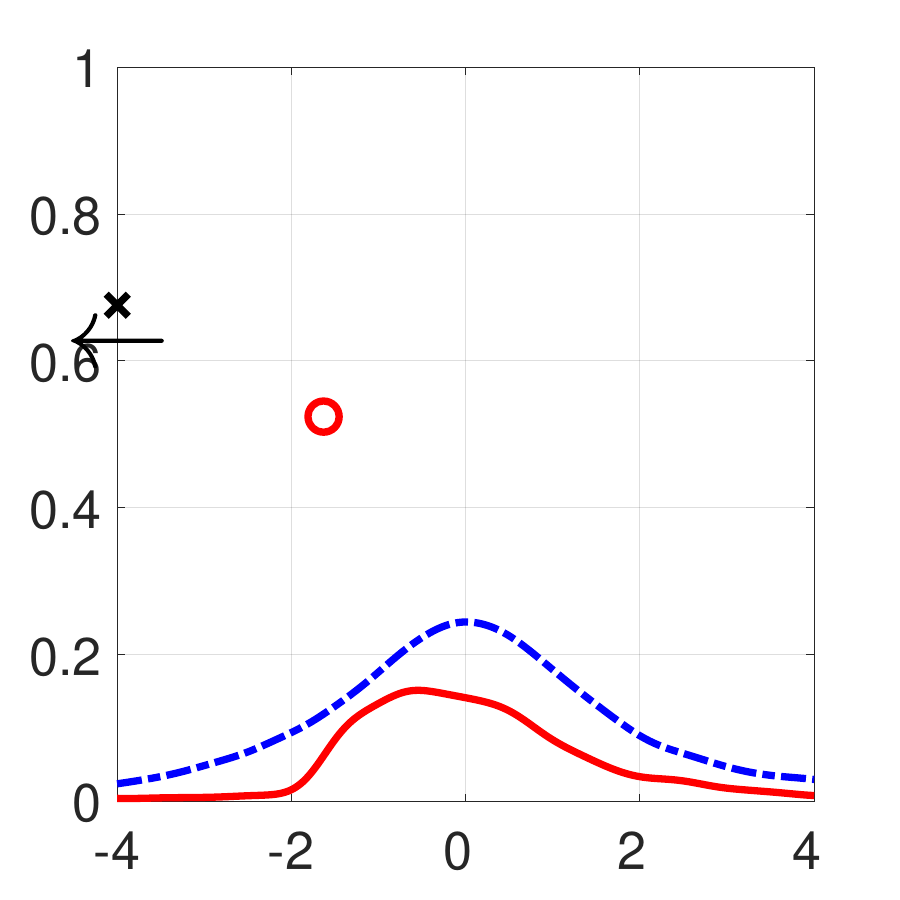}
\end{subfigure}\begin{subfigure}{0.2\textwidth}
	\centering
	\caption*{$T = 100$}
	\vspace{-1.5ex}
	\includegraphics[trim={0cm 0cm 0.50cm 0.5cm},width=\textwidth,clip]
	{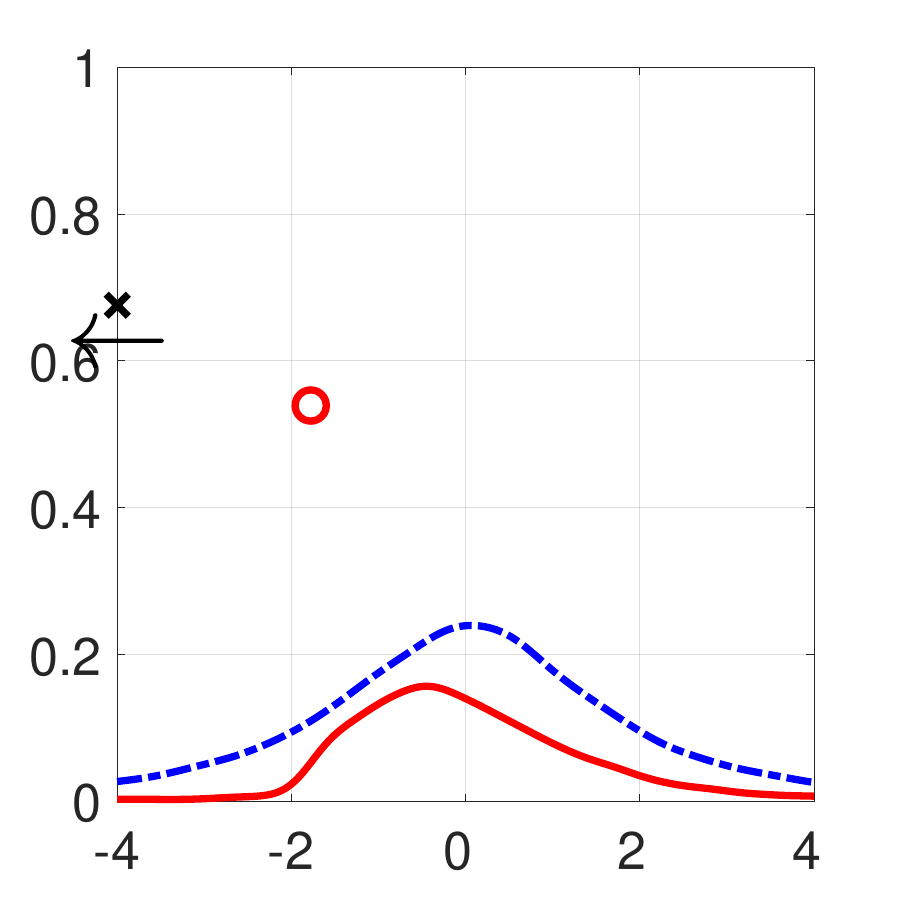}
\end{subfigure}\begin{subfigure}{0.2\textwidth}
	\centering
	\caption*{$T = 250$}
	\vspace{-1.5ex}
	\includegraphics[trim={0cm 0cm 0.50cm 0.5cm},width=\textwidth,clip]
	{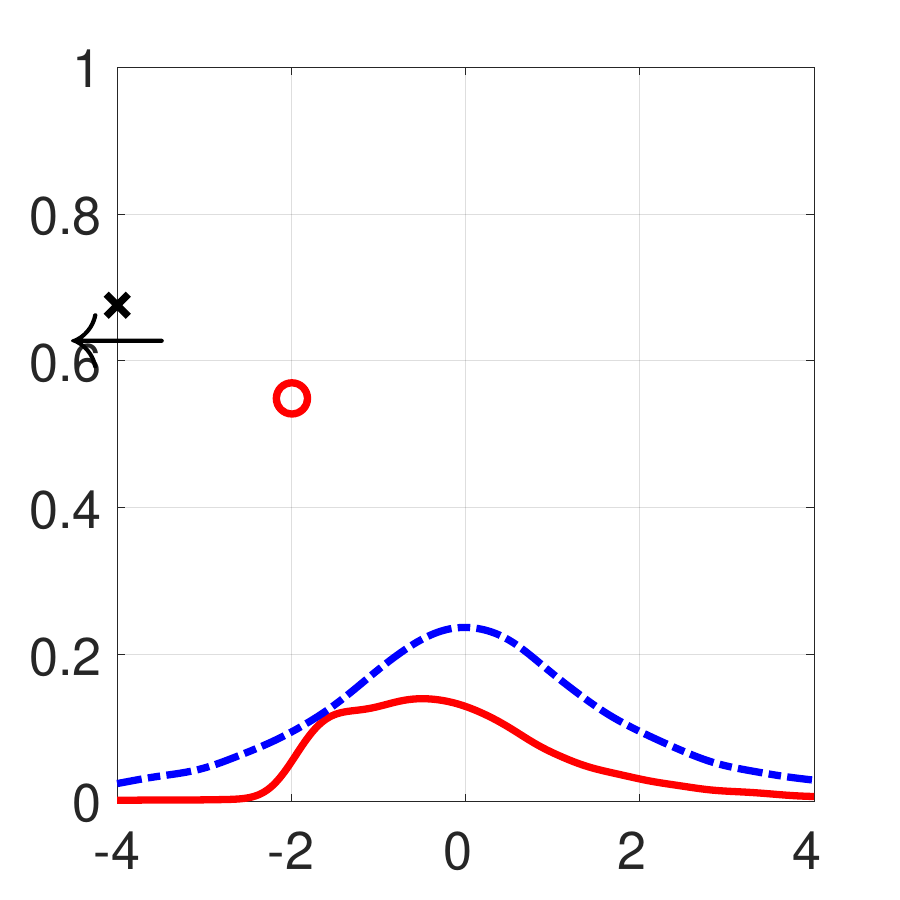}
\end{subfigure}\begin{subfigure}{0.2\textwidth}
	\centering
	\caption*{$T = 1000$}
	\vspace{-1.5ex}
	\includegraphics[trim={0cm 0cm 0.50cm 0.5cm},width=\textwidth,clip]
	{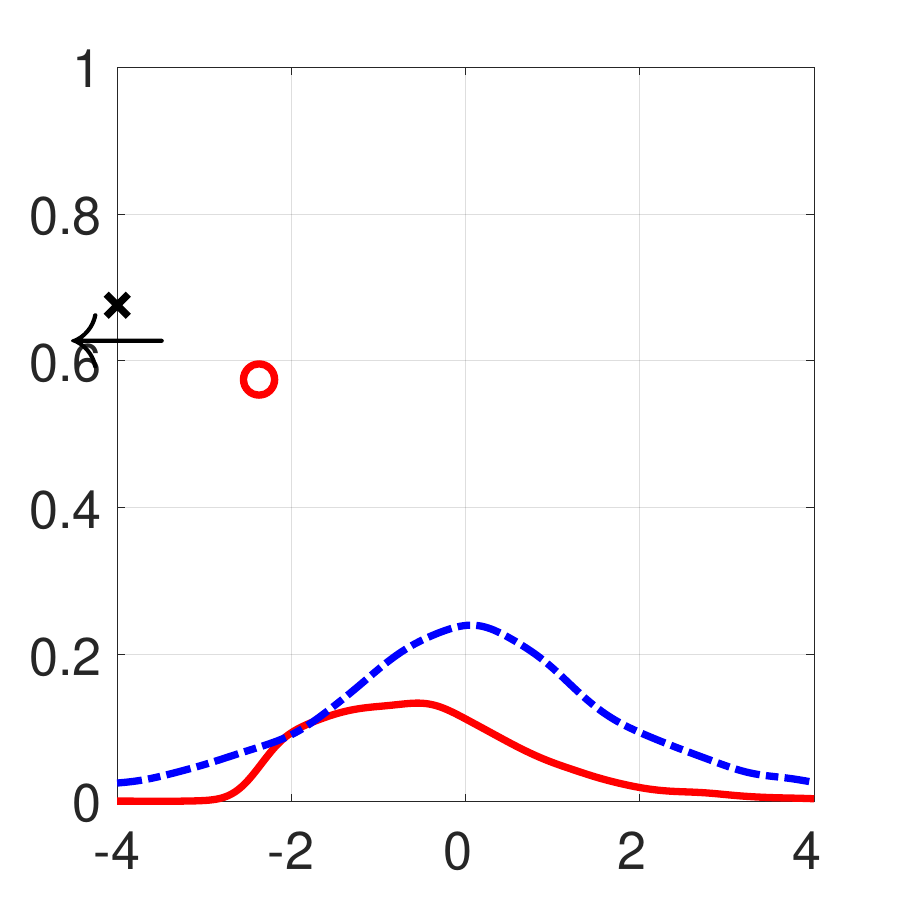}
\end{subfigure}

\caption*{$\lambda_T = T^{1/2}$}
\vspace{-1.5ex}
\begin{subfigure}{0.2\textwidth}
	\centering
	\includegraphics[trim={0cm 0cm 0.50cm 0.5cm},width=\textwidth,clip]
	{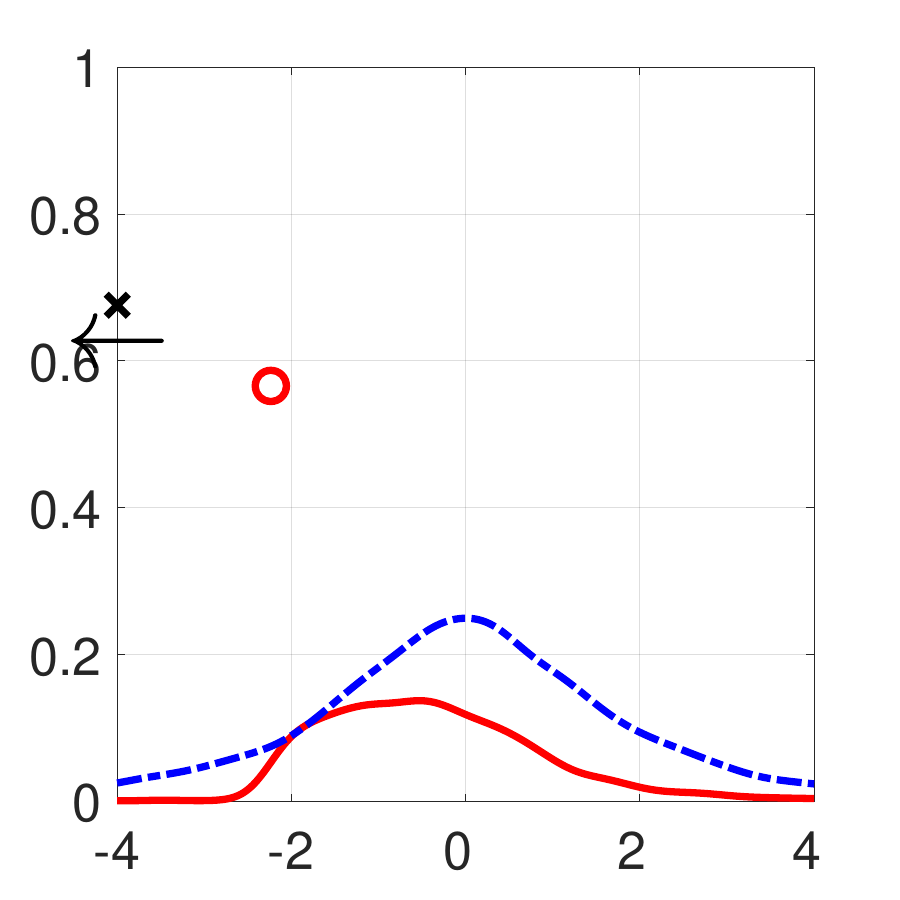}
\end{subfigure}\begin{subfigure}{0.2\textwidth}
	\centering
	\includegraphics[trim={0cm 0cm 0.50cm 0.5cm},width=\textwidth,clip]
	{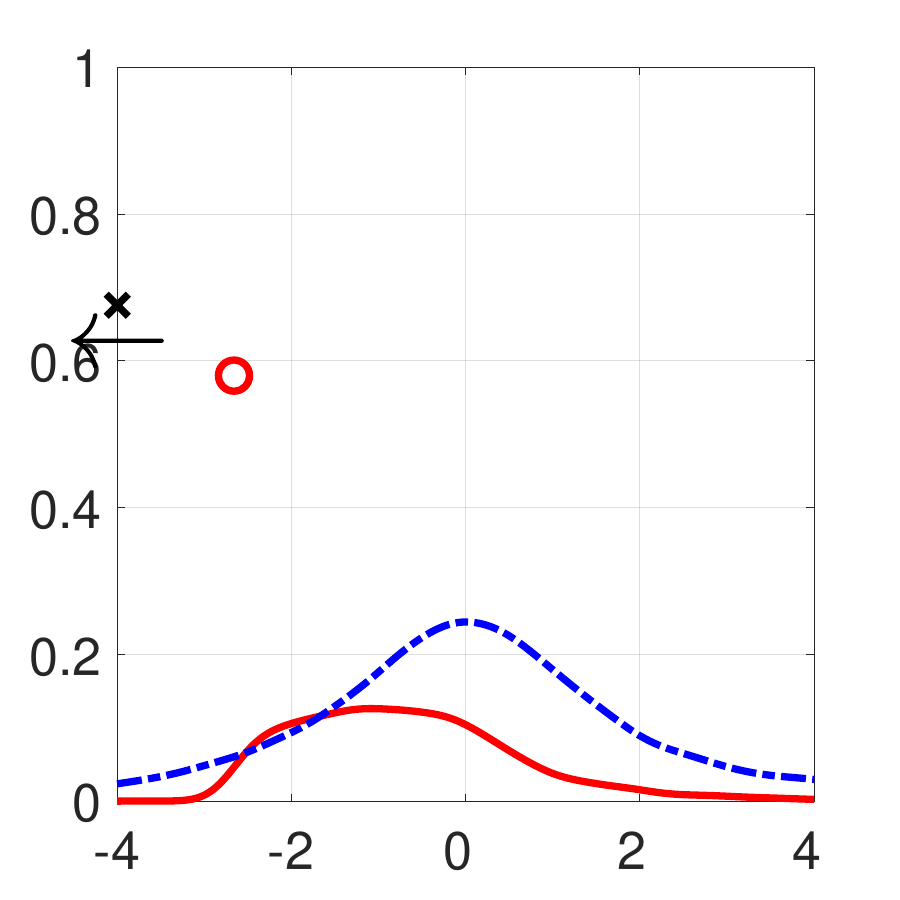}
\end{subfigure}\begin{subfigure}{0.2\textwidth}
	\centering
	\includegraphics[trim={0cm 0cm 0.50cm 0.5cm},width=\textwidth,clip]
	{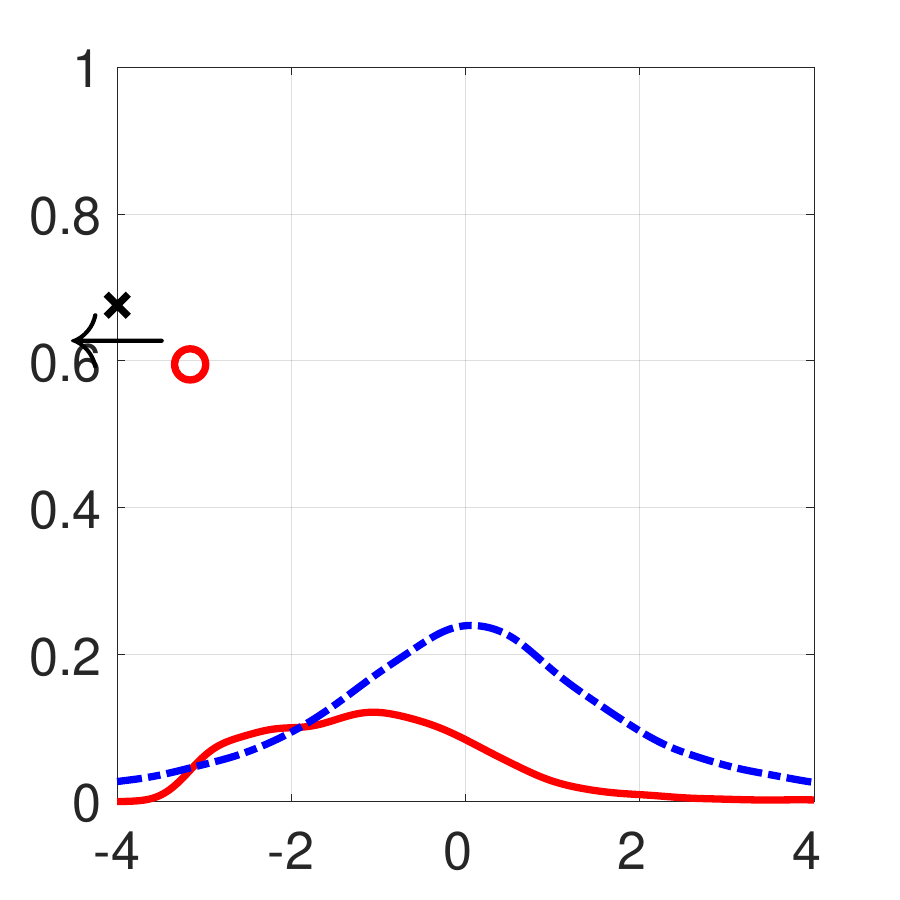}
\end{subfigure}\begin{subfigure}{0.2\textwidth}
	\centering
	\includegraphics[trim={0cm 0cm 0.50cm 0.5cm},width=\textwidth,clip]
	{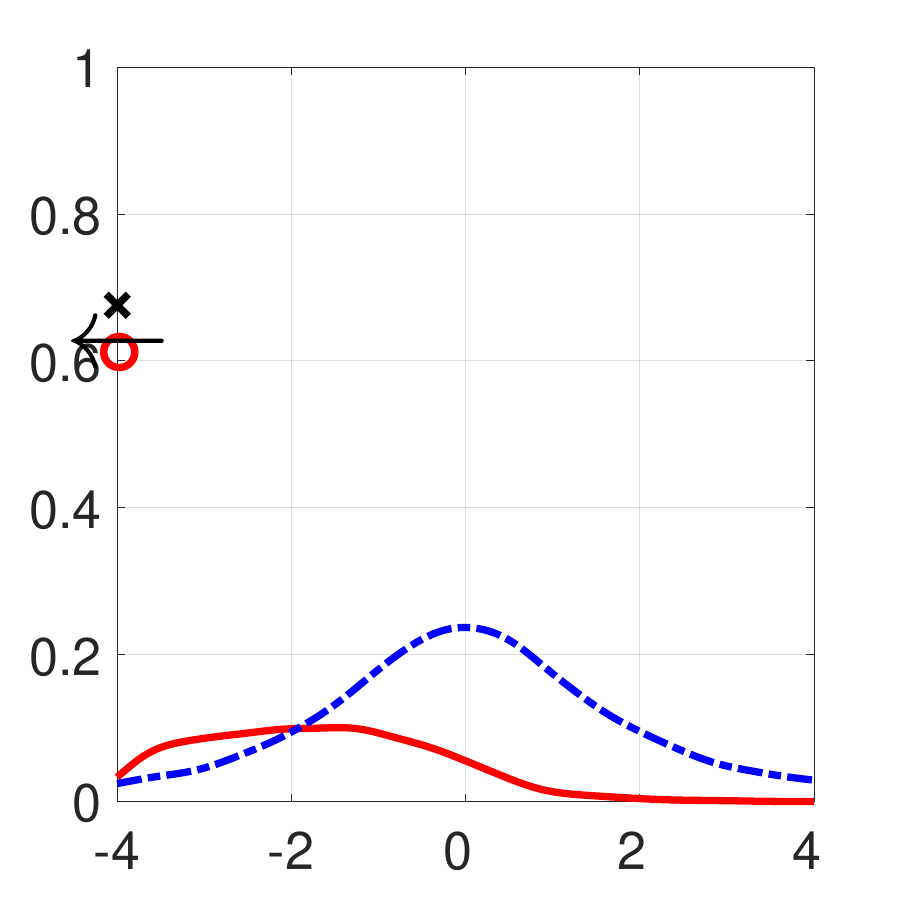}
\end{subfigure}\begin{subfigure}{0.2\textwidth}
	\centering
	\includegraphics[trim={0cm 0cm 0.50cm 0.5cm},width=\textwidth,clip]
	{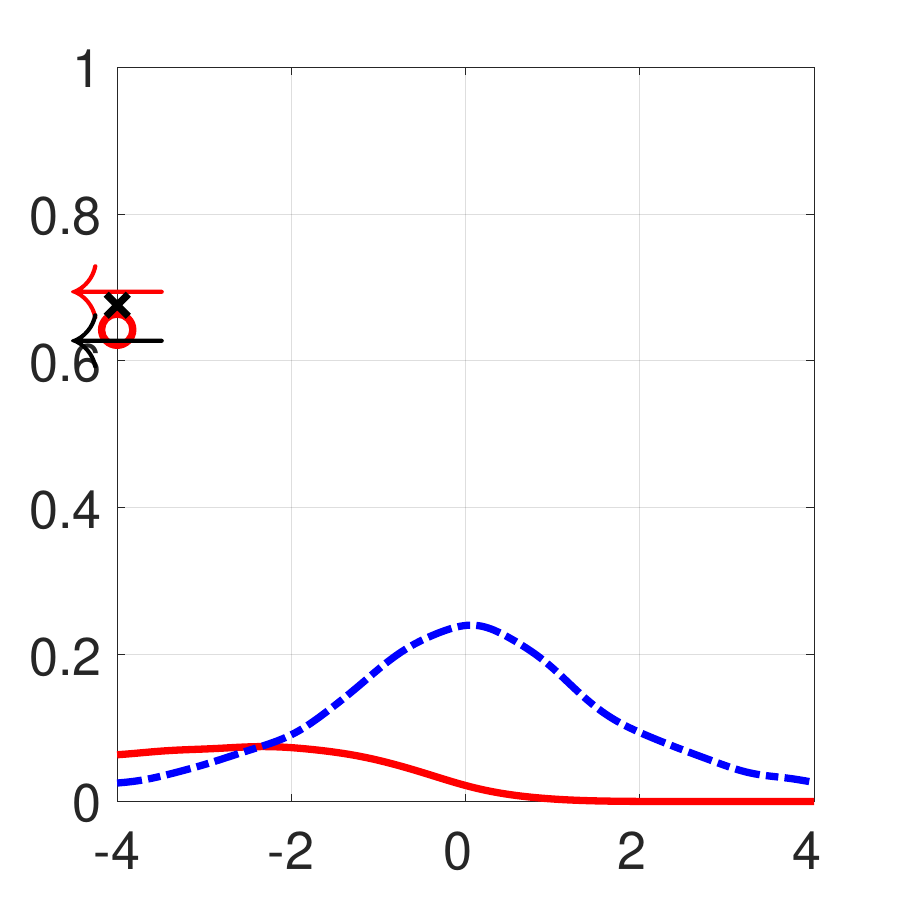}
\end{subfigure}

\caption*{$\lambda_T = T$}
\vspace{-1.5ex}
\begin{subfigure}{0.2\textwidth}
	\centering
	\includegraphics[trim={0cm 0cm 0.50cm 0.5cm},width=\textwidth,clip]
	{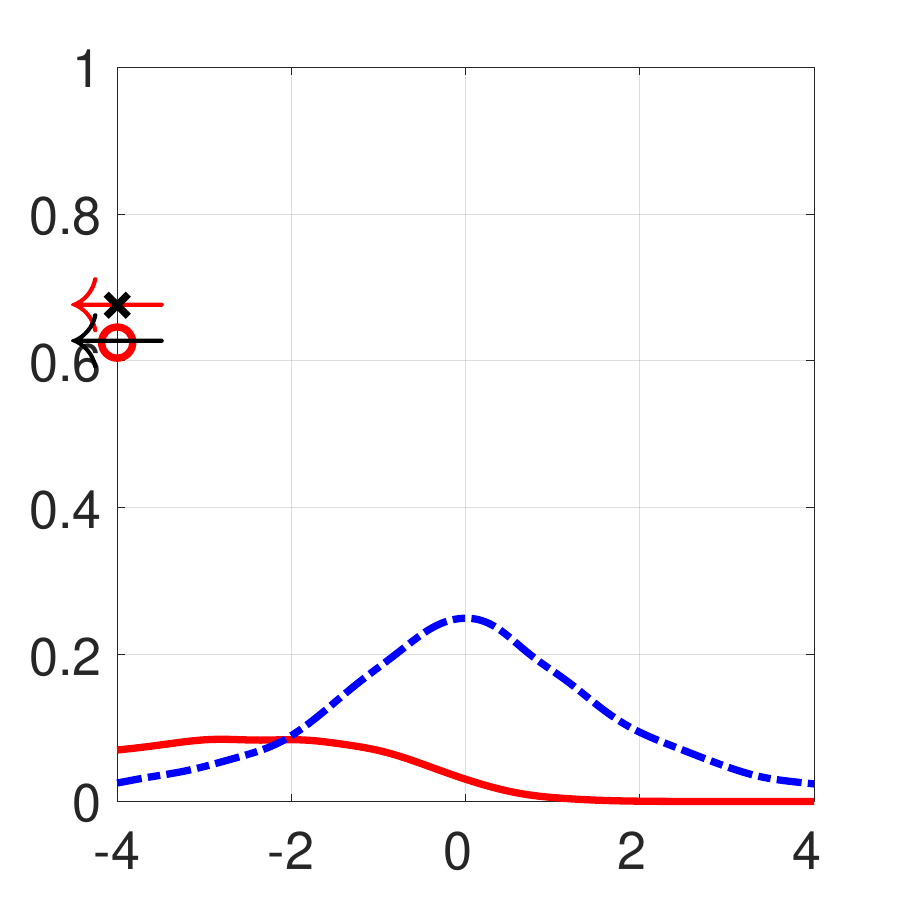}
\end{subfigure}\begin{subfigure}{0.2\textwidth}
	\centering
	\includegraphics[trim={0cm 0cm 0.50cm 0.5cm},width=\textwidth,clip]
	{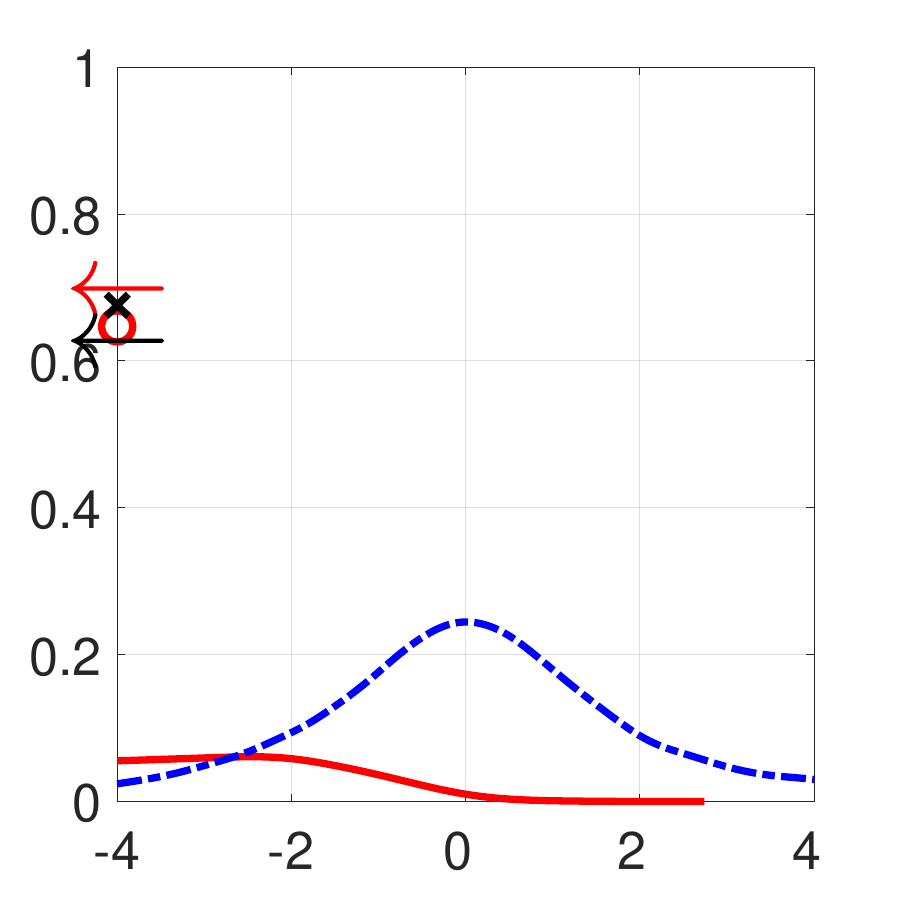}
\end{subfigure}\begin{subfigure}{0.2\textwidth}
	\centering
	\includegraphics[trim={0cm 0cm 0.50cm 0.5cm},width=\textwidth,clip]
	{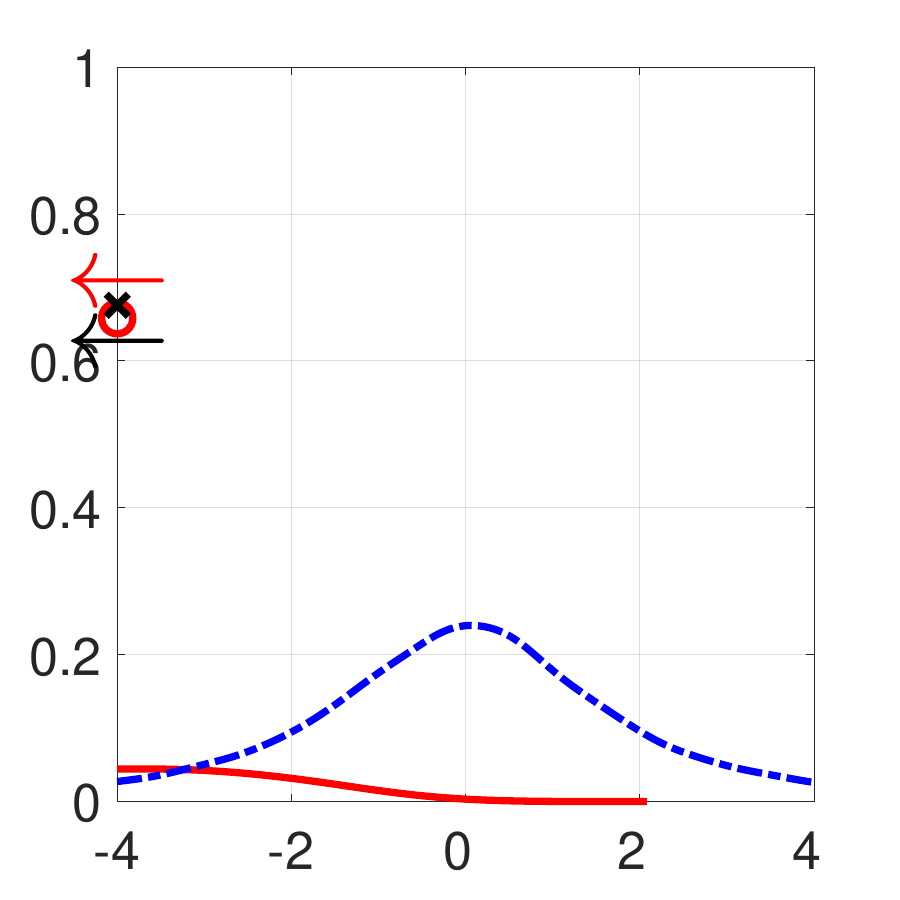}
\end{subfigure}\begin{subfigure}{0.2\textwidth}
	\centering
	\includegraphics[trim={0cm 0cm 0.50cm 0.5cm},width=\textwidth,clip]
	{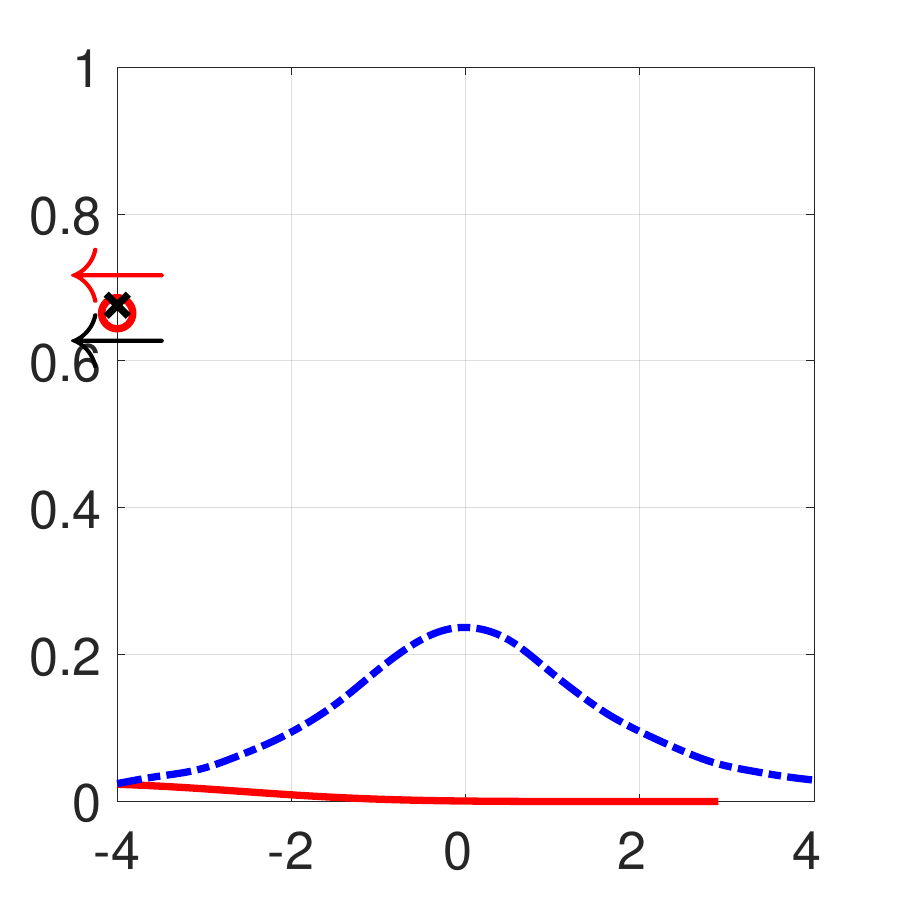}
\end{subfigure}\begin{subfigure}{0.2\textwidth}
	\centering
	\includegraphics[trim={0cm 0cm 0.50cm 0.5cm},width=\textwidth,clip]
	{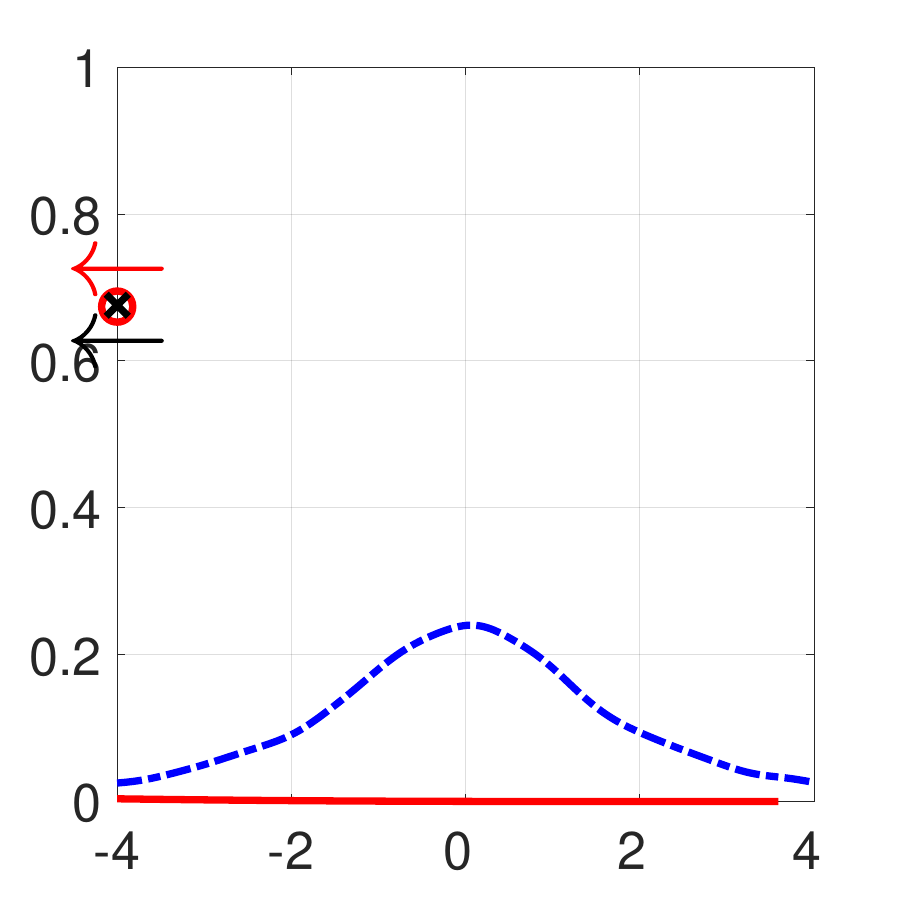}
\end{subfigure}

\end{center}

\vspace{-2ex} 

\caption{Finite-sample distribution of $T(\betaAL - \beta_T)$ under
consistent tuning (in all rows) in case $\beta_T =
\sqrt{\lambda_T}\beta/T$ (labeled ``AL''), and limiting
distribution from Remark~\ref{rem:ls_dist_consist_rateT-unif} (labeled
``Rem.5''). \emph{Notes}: See notes to
Figure~\ref{fig:densities_thm3_1}.}

\label{fig:densities_rem5_4}

\end{figure}

\begin{table}[t] 
\centering
\adjustbox{max width=\textwidth}{\begin{threeparttable}
\caption{Length of confidence intervals corresponding to Figures~\ref{fig:coverage_lambda}--\ref{fig:coverage_4lambda}}
\label{tab:CI_length_iid}
\begin{tabular}{clcccccc}
	\toprule[1pt]\midrule[0.3pt]
	\multicolumn{1}{c}{}&\multicolumn{1}{c}{}&\multicolumn{4}{c}{Uniform CI}&\multicolumn{2}{c}{Oracle CI}\\
	\cmidrule(lr){3-6}
	\cmidrule(lr){7-8}
	T & $\lambda_T$ & min. & median & mean & max. & $95\%$ & $99\%$ \\
	\midrule
	25 & $T^{1/4}$ & 0.032 & 0.153 & 0.172 & 0.764 & 0.400 & 0.627 \\
	& $T^{1/2}$ & 0.047 & 0.228 & 0.257 & 1.143 & 0.400 & 0.627 \\
	& $T$ & 0.106 & 0.510 & 0.574 & 2.555 & 0.400 & 0.627 \\
	\midrule
	50 & $T^{1/4}$ & 0.017 & 0.084 & 0.095 & 0.391 & 0.200 & 0.313 \\
	& $T^{1/2}$ & 0.027 & 0.138 & 0.156 & 0.637 & 0.200 & 0.313 \\
	& $T$ & 0.073 & 0.366 & 0.414 & 1.694 & 0.200 & 0.313 \\
	\midrule
	100 & $T^{1/4}$ & 0.010 & 0.046 & 0.052 & 0.188 & 0.100 & 0.157 \\
	& $T^{1/2}$ & 0.018 & 0.082 & 0.092 & 0.334 & 0.100 & 0.157 \\
	& $T$ & 0.057 & 0.258 & 0.292 & 1.055 & 0.100 & 0.157 \\
	\midrule
	250 & $T^{1/4}$ & 0.005 & 0.021 & 0.023 & 0.109 & 0.040 & 0.063 \\
	& $T^{1/2}$ & 0.009 & 0.041 & 0.047 & 0.217 & 0.040 & 0.063 \\
	& $T$ & 0.037 & 0.165 & 0.186 & 0.864 & 0.040 & 0.063 \\
	\midrule
	1000 & $T^{1/4}$ & 0.001 & 0.006 & 0.007 & 0.028 & 0.010 & 0.016 \\
	& $T^{1/2}$ & 0.003 & 0.015 & 0.017 & 0.066 & 0.010 & 0.016 \\
	& $T$ & 0.017 & 0.082 & 0.093 & 0.371 & 0.010 & 0.016 \\
	\midrule[1pt]
	25 & $4\times T^{1/4}$ & 0.063 & 0.305 & 0.343 & 1.528 & 0.400 & 0.627 \\
	& $4\times T^{1/2}$ & 0.095 & 0.457 & 0.513 & 2.285 & 0.400 & 0.627 \\
	& $4\times T$ & 0.212 & 1.021 & 1.148 & 5.110 & 0.400 & 0.627 \\
	\midrule
	50 & $4\times T^{1/4}$ & 0.034 & 0.169 & 0.191 & 0.781 & 0.200 & 0.313 \\
	& $4\times T^{1/2}$ & 0.055 & 0.275 & 0.311 & 1.274 & 0.200 & 0.313 \\
	& $4\times T$ & 0.146 & 0.732 & 0.827 & 3.388 & 0.200 & 0.313 \\
	\midrule
	100 & $4\times T^{1/4}$ & 0.020 & 0.092 & 0.104 & 0.375 & 0.100 & 0.157 \\
	& $4\times T^{1/2}$ & 0.036 & 0.163 & 0.185 & 0.667 & 0.100 & 0.157 \\
	& $4\times T$ & 0.114 & 0.516 & 0.584 & 2.110 & 0.100 & 0.157 \\
	\midrule
	250 & $4\times T^{1/4}$ & 0.009 & 0.042 & 0.047 & 0.218 & 0.040 & 0.063 \\
	& $4\times T^{1/2}$ & 0.019 & 0.083 & 0.093 & 0.434 & 0.040 & 0.063 \\
	& $4\times T$ & 0.074 & 0.330 & 0.371 & 1.727 & 0.040 & 0.063 \\
	\midrule
	1000 & $4\times T^{1/4}$ & 0.003 & 0.012 & 0.014 & 0.056 & 0.010 & 0.016 \\
	& $4\times T^{1/2}$ & 0.006 & 0.029 & 0.033 & 0.132 & 0.010 & 0.016 \\
	& $4\times T$ & 0.035 & 0.165 & 0.186 & 0.742 & 0.010 & 0.016 \\
	\midrule[0.3pt]\bottomrule[1pt]
\end{tabular}
\begin{tablenotes}
	\item Notes: The length of the Uniform CI depends on $x_t$,
	$t=1,\ldots,T$. The table presents the minimum, median, mean, and
	maximum length of the Uniform CI across $10{,}000$ Monte Carlo
	replications for different values of $T$ and $\lambda_T$. The length
	of the Oracle CI is constant across Monte Carlo replications as it only
	depends on the nominal size and $T$.
\end{tablenotes}
\end{threeparttable}}
\end{table}

\begin{table}[t] 
\centering
\adjustbox{max width=\textwidth}{\begin{threeparttable}
\caption{Length of confidence intervals corresponding to Figure~\ref{fig:coverage_4lambda_rho6}}
\label{tab:CI_length_corr}
\begin{tabular}{clcccccc}
	\toprule[1pt]\midrule[0.3pt]
	\multicolumn{1}{c}{}&\multicolumn{1}{c}{}&\multicolumn{4}{c}{Uniform CI}&\multicolumn{2}{c}{Oracle CI}\\
	\cmidrule(lr){3-6}
	\cmidrule(lr){7-8}
	T & $\lambda_T$ & min. & median & mean & max. & $95\%$ & $99\%$ \\
	\midrule
	25 & $4\times T^{1/4}$ & 0.070 & 0.314 & 0.361 & 1.978 & 0.737 & 1.142 \\
	& $4\times T^{1/2}$ & 0.105 & 0.469 & 0.539 & 2.958 & 0.737 & 1.142 \\
	& $4\times T$ & 0.234 & 1.048 & 1.205 & 6.614 & 0.737 & 1.142 \\
	\midrule
	50 & $4\times T^{1/4}$ & 0.034 & 0.171 & 0.195 & 0.839 & 0.369 & 0.571 \\
	& $4\times T^{1/2}$ & 0.056 & 0.279 & 0.318 & 1.368 & 0.369 & 0.571 \\
	& $4\times T$ & 0.149 & 0.742 & 0.845 & 3.639 & 0.369 & 0.571 \\
	\midrule
	100 & $4\times T^{1/4}$ & 0.020 & 0.093 & 0.105 & 0.382 & 0.184 & 0.285 \\
	& $4\times T^{1/2}$ & 0.036 & 0.165 & 0.187 & 0.679 & 0.184 & 0.285 \\
	& $4\times T$ & 0.114 & 0.523 & 0.591 & 2.147 & 0.184 & 0.285 \\
	\midrule
	250 & $4\times T^{1/4}$ & 0.009 & 0.042 & 0.047 & 0.216 & 0.074 & 0.114 \\
	& $4\times T^{1/2}$ & 0.018 & 0.083 & 0.094 & 0.432 & 0.074 & 0.114 \\
	& $4\times T$ & 0.073 & 0.330 & 0.372 & 1.716 & 0.074 & 0.114 \\
	\midrule
	1000 & $4\times T^{1/4}$ & 0.003 & 0.012 & 0.014 & 0.053 & 0.018 & 0.029 \\
	& $4\times T^{1/2}$ & 0.006 & 0.029 & 0.033 & 0.126 & 0.018 & 0.029 \\
	& $4\times T$ & 0.035 & 0.165 & 0.187 & 0.709 & 0.018 & 0.029 \\
	\midrule[0.3pt]\bottomrule[1pt]
\end{tabular}
\begin{tablenotes}
	\item Notes: See notes to Table~\ref{tab:CI_length_iid}.
\end{tablenotes}
\end{threeparttable}}
\end{table}

\newpage
\clearpage

\section{Additional Empirical Results} \label{app:emp}

\begin{figure}[ht]
\begin{center}
	\begin{subfigure}{0.4\textwidth}
		\centering
		\includegraphics[trim={0cm 0cm 0.50cm 0.5cm},width=\textwidth,clip]
		{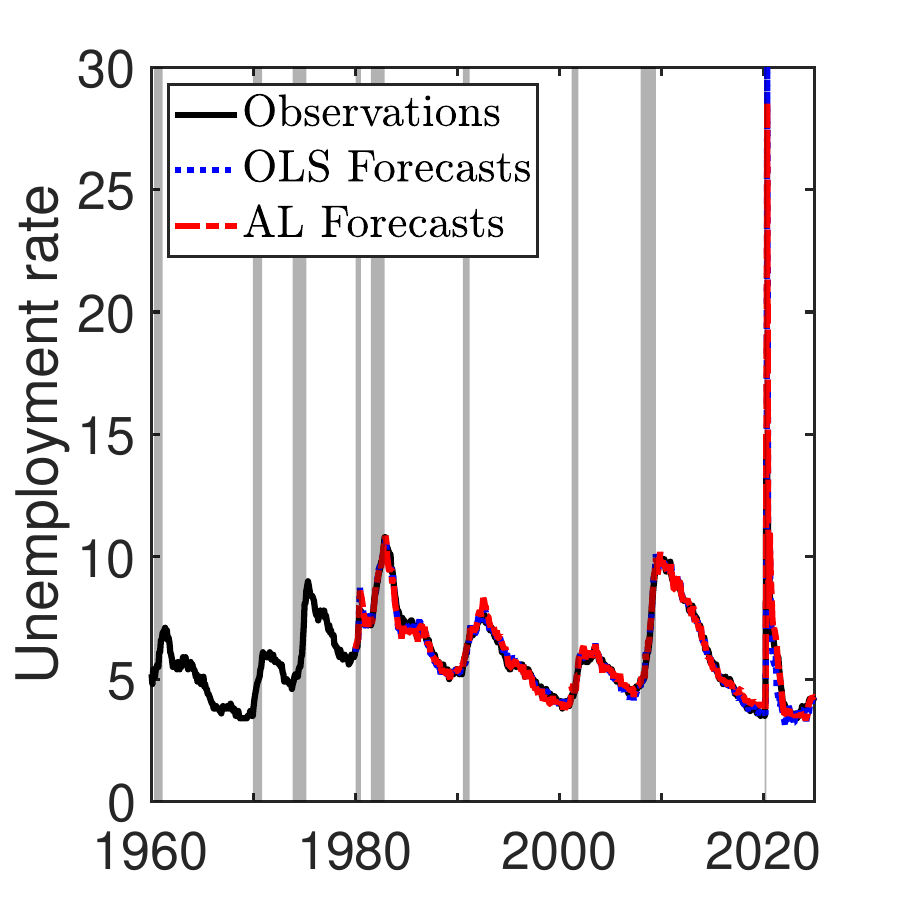}
	\end{subfigure}\begin{subfigure}{0.4\textwidth}
		\centering
		\includegraphics[trim={0cm 0cm 0.50cm 0.5cm},width=\textwidth,clip]
		{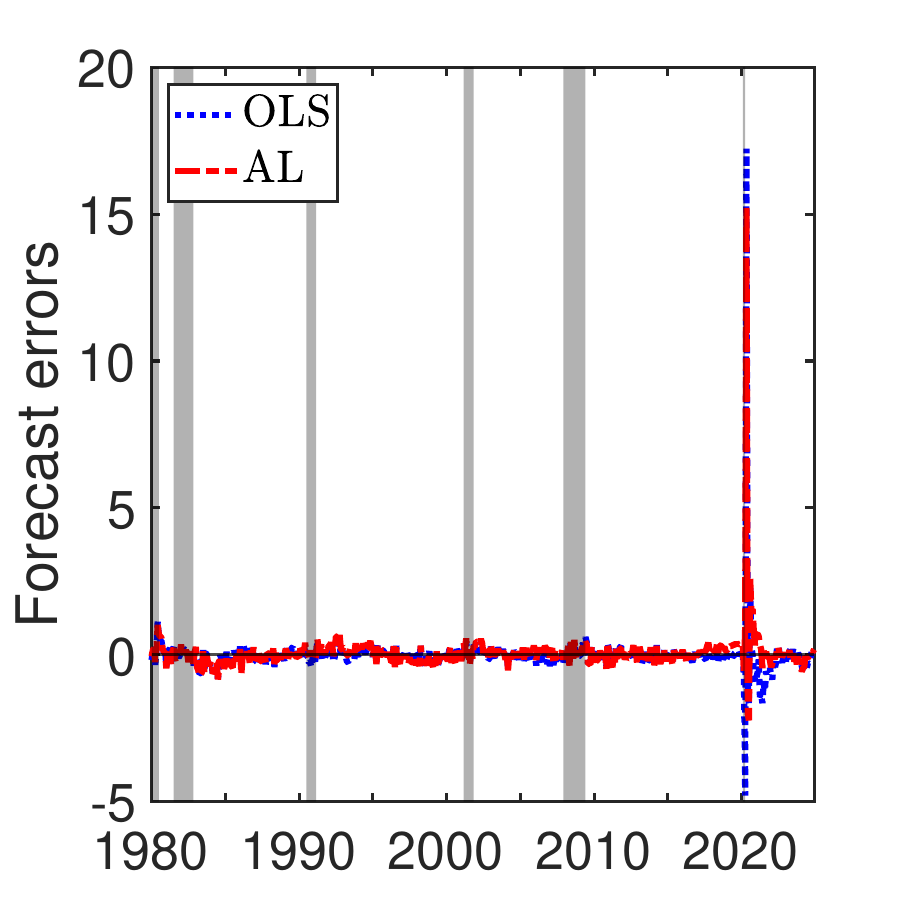}
	\end{subfigure}
\end{center}
\vspace{-2ex} 
\caption{One-month-ahead forecasts (left) and corresponding forecast errors (right) based on a 20-year rolling window.}
\label{fig:emp_forecasts}
\end{figure}

\begin{figure}[ht]
	\begin{center}
		\centering
		\includegraphics[trim={0cm 0cm 0.50cm 0.5cm},width=0.4\textwidth,clip]
		{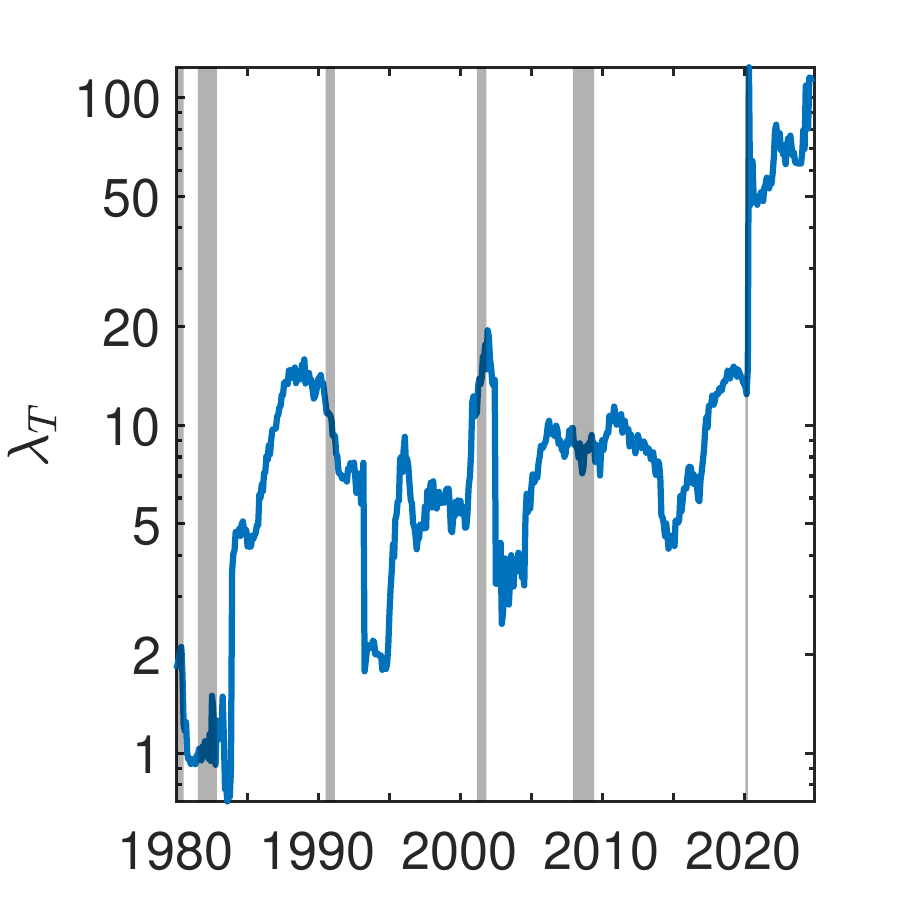}
	\end{center}
	\vspace{-2ex} 
	\caption{Values of the adaptive LASSO penalty parameter $\lambda_T$ selected via time-series cross-validation, plotted on a logarithmic scale.}
	\label{fig:emp_lambda}
\end{figure}

\begin{figure}[ht]
\begin{center}
	\begin{subfigure}{0.3\textwidth}
		\centering
		\includegraphics[trim={0cm 0cm 0.50cm 0cm},width=\textwidth,clip]
		{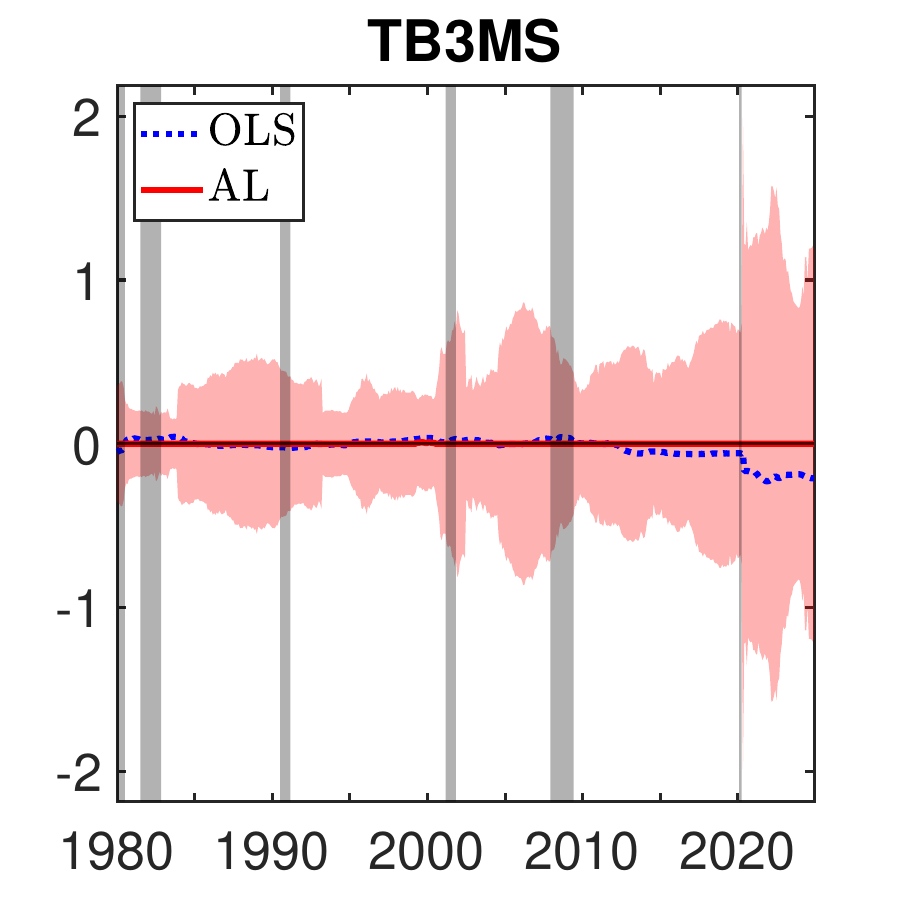}
	\end{subfigure}\begin{subfigure}{0.3\textwidth}
		\centering
		\includegraphics[trim={0cm 0cm 0.50cm 0cm},width=\textwidth,clip]
		{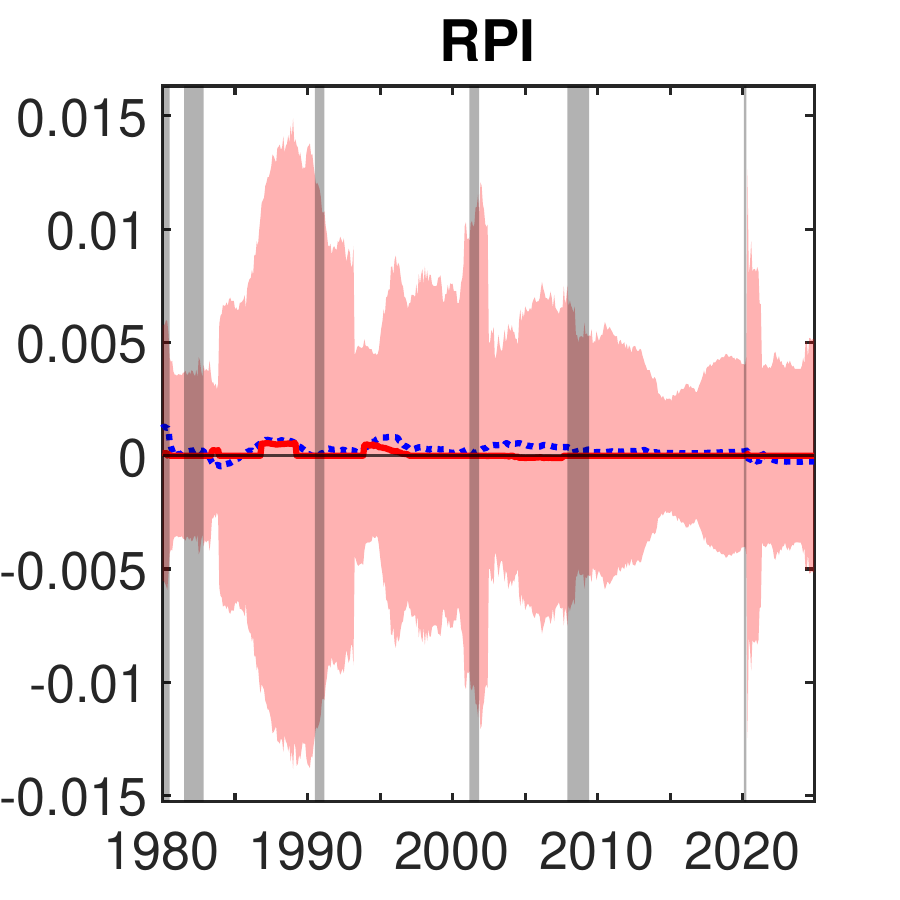}
	\end{subfigure}\begin{subfigure}{0.3\textwidth}
		\centering
		\includegraphics[trim={0cm 0cm 0.50cm 0cm},width=\textwidth,clip]
		{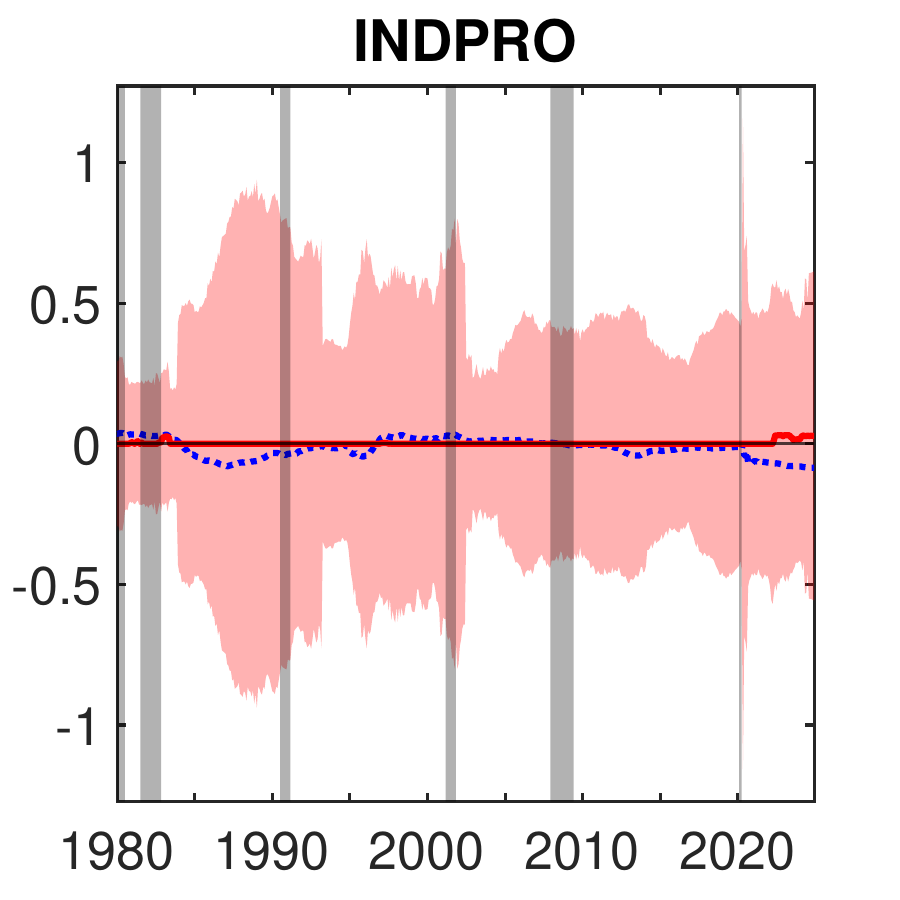}
	\end{subfigure}
	
	\begin{subfigure}{0.3\textwidth}
		\centering
		\includegraphics[trim={0cm 0cm 0.50cm 0cm},width=\textwidth,clip]
		{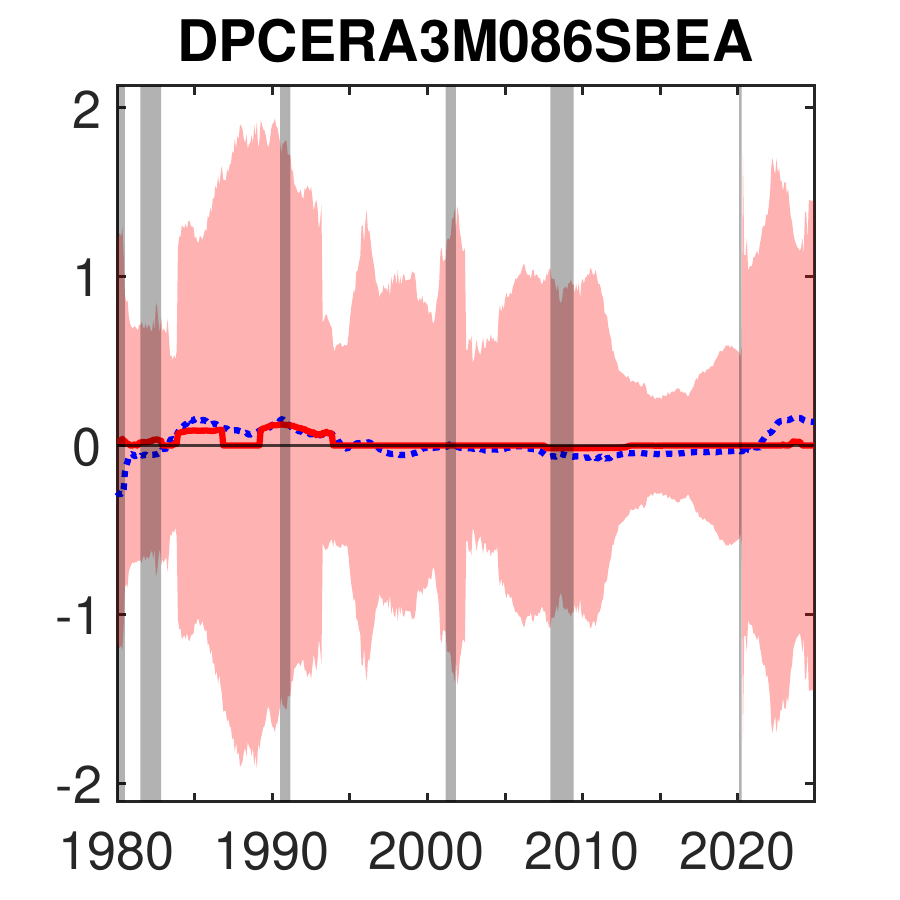}
	\end{subfigure}\begin{subfigure}{0.3\textwidth}
		\centering
		\includegraphics[trim={0cm 0cm 0.50cm 0cm},width=\textwidth,clip]
		{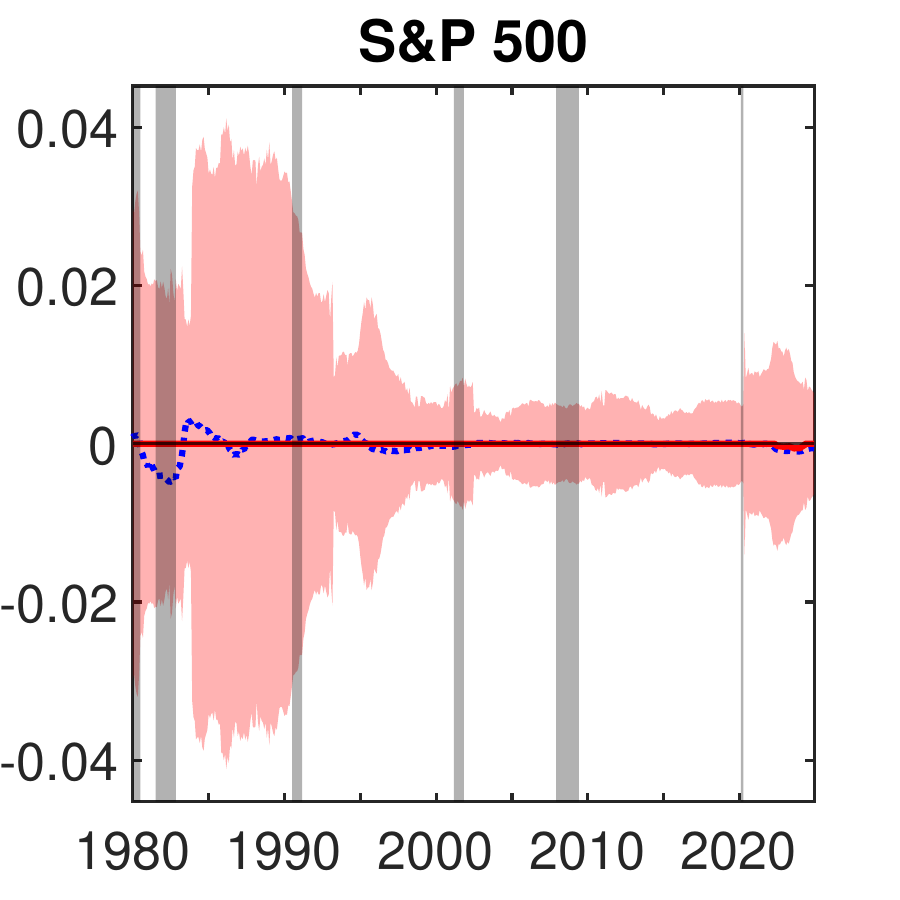}
	\end{subfigure}\begin{subfigure}{0.3\textwidth}
		\centering
		\includegraphics[trim={0cm 0cm 0.50cm 0cm},width=\textwidth,clip]
		{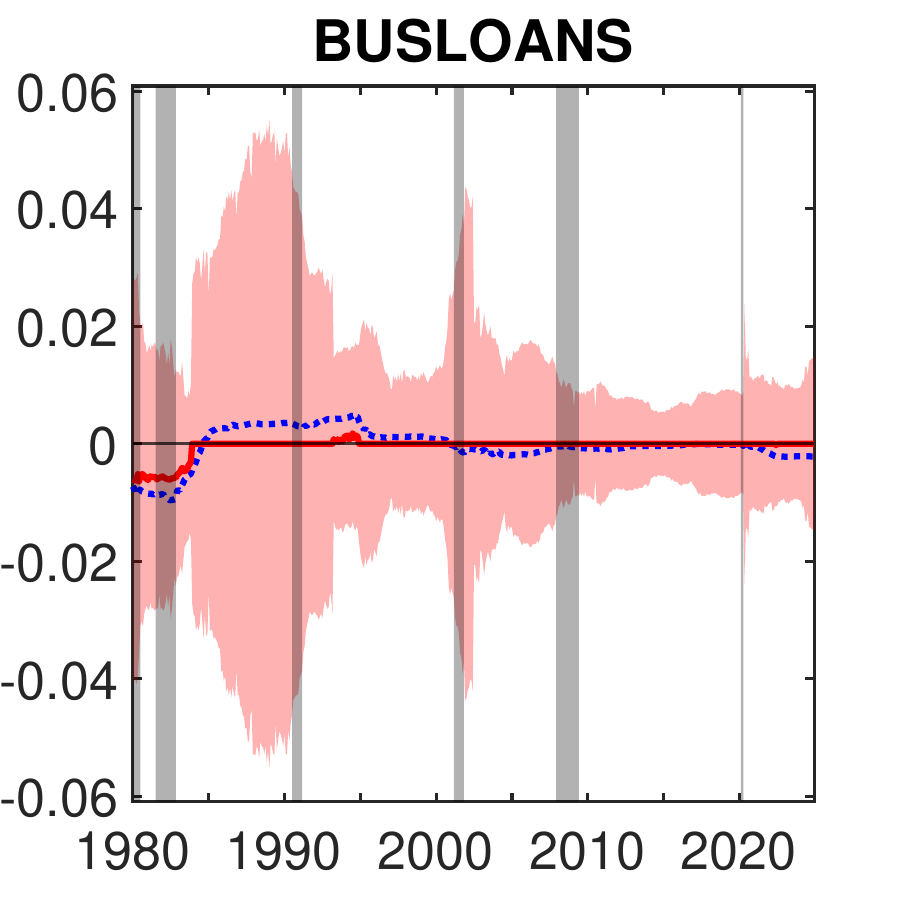}
	\end{subfigure}
	
	\begin{subfigure}{0.3\textwidth}
		\centering
		\includegraphics[trim={0cm 0cm 0.50cm 0cm},width=\textwidth,clip]
		{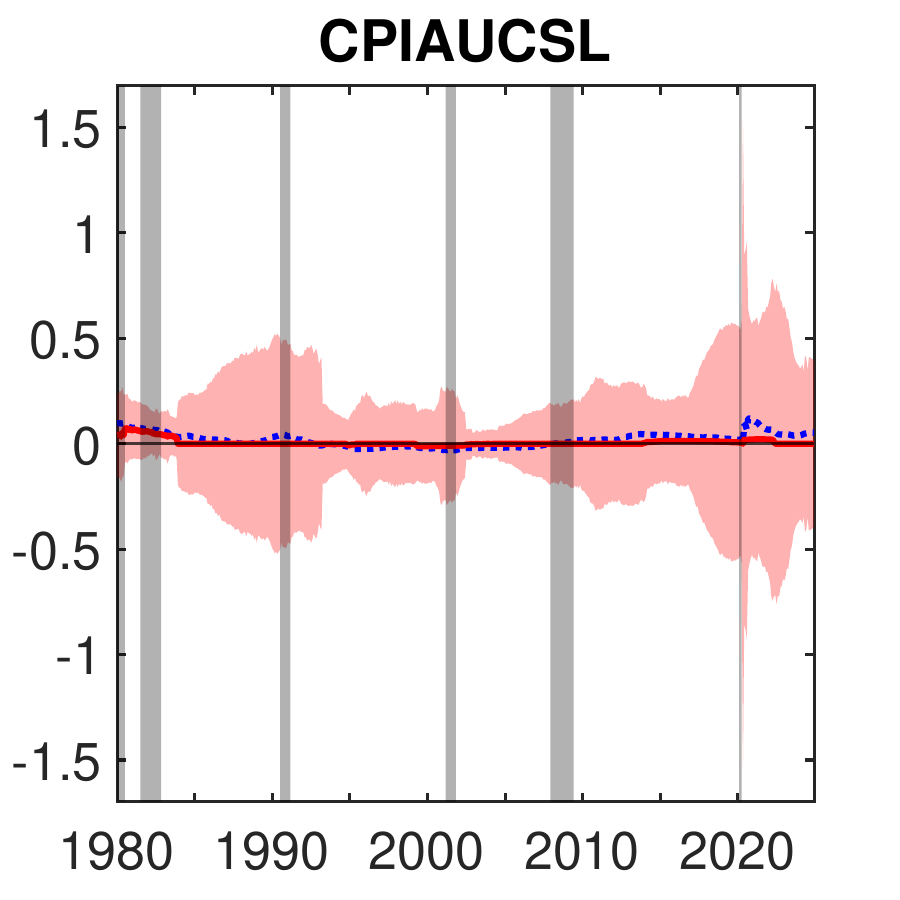}
	\end{subfigure}\begin{subfigure}{0.3\textwidth}
		\centering
		\includegraphics[trim={0cm 0cm 0.50cm 0cm},width=\textwidth,clip]
		{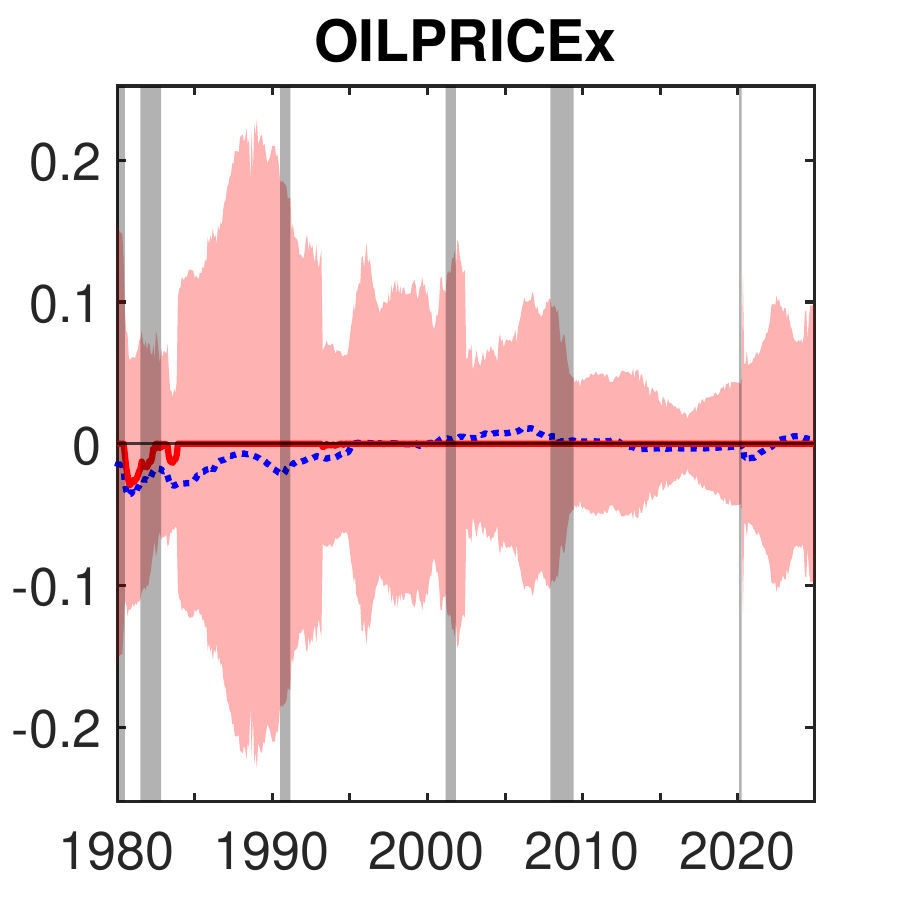}
	\end{subfigure}\begin{subfigure}{0.3\textwidth}
		\centering
		\includegraphics[trim={0cm 0cm 0.50cm 0cm},width=\textwidth,clip]
		{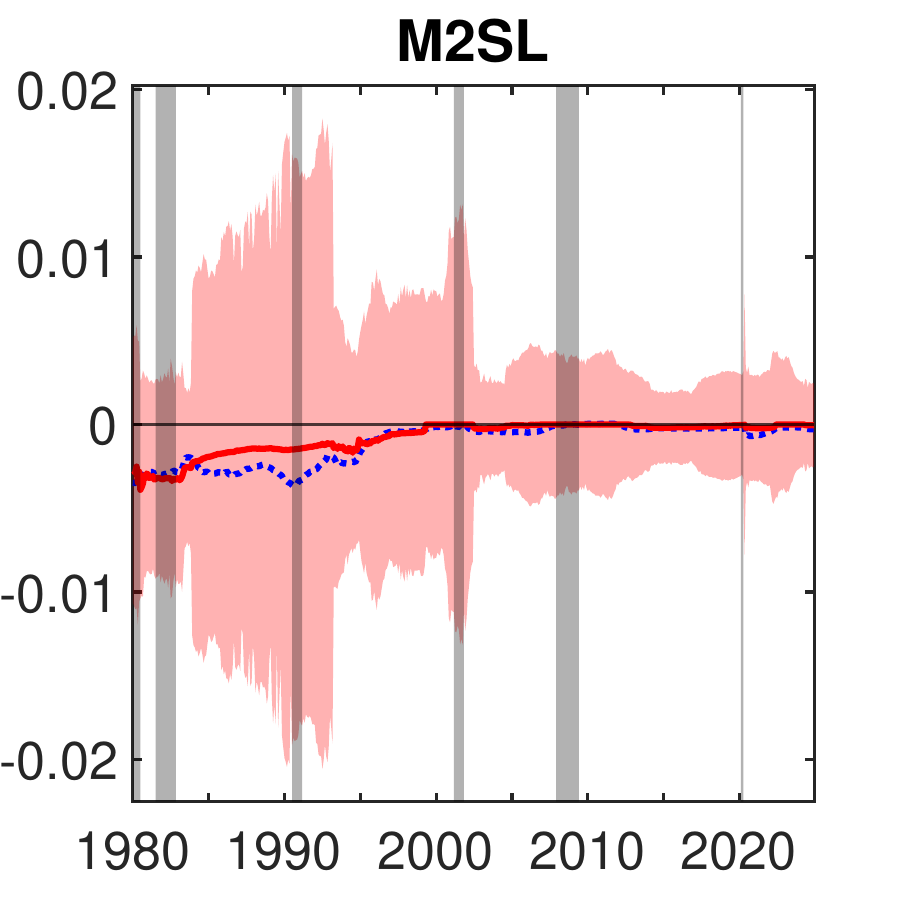}
	\end{subfigure}
\end{center}
\vspace{-2ex} 
\caption{20-year rolling window coefficient estimates and adaptive LASSO-based confidence intervals for all remaining variables.}
\label{fig:emp_coeffs2}
\end{figure}

\end{appendices}

\end{document}